\documentclass[11pt]{article}

\usepackage{booktabs} 
\usepackage[ruled]{algorithm2e} 

\SetAlFnt{\small}
\SetAlCapFnt{\small}
\SetAlCapNameFnt{\small}
\SetAlCapHSkip{0pt}
\IncMargin{-\parindent}

 \usepackage[margin=1in]{geometry}

\usepackage{bbding}
\usepackage[nointegrals]{wasysym}
\usepackage[utf8]{inputenc}
\usepackage{xspace}
\usepackage{mathtools}
\usepackage{url}
\usepackage{multirow,multicol}
\usepackage{subcaption}
\usepackage{tikz}
\usepackage{hyperref}
\usetikzlibrary{arrows, fit, positioning}
\usetikzlibrary{backgrounds,automata,calc}
\usetikzlibrary{decorations.pathreplacing}
\usepackage{csquotes}
\usepackage{amsmath,amsfonts,amsthm}
\allowdisplaybreaks
\usepackage{bm,bbm}
\usepackage{pgfplots}
\pgfplotsset{compat=1.17}

\usepackage{enumerate}
\usepackage{paralist}
\usepackage[shortlabels]{enumitem}
\usepackage{appendix}
\usepackage[capitalize]{cleveref}
\usepackage{thmtools}
\usepackage{mathtools}
\usepackage{natbib}
\usepackage{changepage}
 \usepackage{authblk}
\usepackage{commath}
\usepackage{makecell}

\newcommand{\NSW}{\normalfont{\texttt{NSW}}\xspace}
\newcommand{\logNSW}{\normalfont{\texttt{logNSW}}\xspace}

\newcommand{\vare}{\varepsilon}

\newcommand{\R}{\mathbb{R}}

\DeclareMathOperator*{\E}{\mathbb{E}}

\renewcommand{\cal}[1]{\mathcal{#1}}

\newcommand{\x}{\mathbf{x}}

\newcommand{\z}{\mathbf{z}}
\newcommand{\h}{\mathbf{h}}

\newcommand{\ST}{\texttt{ST}}

\usepackage{mathtools}

\setlist{topsep=0.5ex,itemsep=0.1ex}

\newtheorem{theorem}{Theorem}[section]

\newtheorem{lemma}[theorem]{Lemma}
\newtheorem{ass}[theorem]{Assumption}

\newtheorem{claim}[theorem]{Claim}

\newtheorem{corr}[theorem]{Corollary}
\newtheorem{fact}[theorem]{Fact}
\newtheorem{obs}[theorem]{Observation}
\newtheorem*{theorem*}{Theorem}

\theoremstyle{definition}
\newtheorem{remark}[theorem]{Remark}
\newtheorem{definition}[theorem]{Definition}

\makeatletter
\newcommand{\descitem}[2]{\item[#1]\def\@currentlabel{#1}\label{#2}}
\makeatother

\title{A Better-than-$e^{1/e}$ Approximation Algorithm for Nash Social Welfare under Additive Valuations}

\author{Vignesh Viswanathan}
\affil{University of Massachusetts, Amherst \\ \texttt{vviswanathan@umass.edu}}
\date{}

\begin{document}

\maketitle

\begin{abstract}
We present an $(e^{1/e} - c)$-approximation algorithm for maximizing Nash social welfare under additive valuations, for some constant $c > 0$. This result improves upon the previous best-known approximation factor of $e^{1/e}$ \cite{Barman2018FindingFA}.
\end{abstract}

\thispagestyle{empty}
\newpage

\tableofcontents
\thispagestyle{empty}
\newpage

\section{Introduction}
We study the problem of maximizing Nash social welfare when dividing a set of indivisible goods $G$ among a set of $n$ agents $N$. Each agent $i \in N$ is associated with an additive valuation function $v_i: 2^G \rightarrow \R_{\ge 0}$. 
The goal of the problem is to find an allocation of the goods $X = (X_1, \dots, X_n)$ that maximizes the Nash social welfare --- which is defined as the geometric mean of agent utilities $\left (\prod_{i \in N} v_i(X_i) \right )^{\frac{1}{n}}$.

Designing algorithms to compute fair and efficient allocations of indivisible goods is a fundamental problem in theoretical computer science. There are many different objectives one can choose to optimize for when computing allocations. Utilitarian social welfare maximizing allocations\footnote{The utilitarian social welfare of an allocation $X$ is defined as the sum of agent utilities $\sum_{i \in N}v_i(X_i)$.}, for example, are guaranteed to be efficient but may be unfair to some agents. Egalitarian welfare maximizing allocations\footnote{The egalitarian welfare of an allocation $X$ is defined as the value $\min_{i \in N}v_i(X_i)$.}, on the other hand, are guaranteed to be fair but may not be very efficient. Nash social welfare (or simply, Nash welfare) is an objective that perfectly balances fairness and efficiency. To quote the Kalai prize winning work of \cite{Caragiannis2019MNW}, Nash welfare is 
\begin{center}
\textit{``\dots arguably the ultimate solution --- for the division of indivisible goods under additive valuations.''}
\end{center}

Naturally, there has been significant interest in finding efficient approximation algorithms for maximizing Nash welfare. Despite this, the problem remains largely open, with a large gap between the best approximation algorithm and the best inapproximability result.
The previous best approximation algorithm for Nash social welfare achieves an approximation ratio of $e^{1/e}$ \cite{Barman2018FindingFA} using a market equilibrium based approach. Prior to their work, the main technique used was rounding a spending restricted market equilibrium \cite{cole2015nash,Cole2017nashwelfare} to achieve a constant factor approximation. In contrast, the best known hardness results prove that the max Nash welfare is inapproximable by a factor of 1.069 \cite{garg2018budgetadditive} unless P = NP and 1.076 unless the unique games conjecture is false \cite{Viswanathan2026UGCAdditive}. 

In this paper, we prove the following result, making the first improvement in approximation factor in over eight years.

\begin{restatable}{theorem}{thmnash}\label{thm:nash}
There exists an efficient randomized $(e^{1/e} - c)$-approximation algorithm for maximizing Nash social welfare when agents have additive valuations, for some constant $c > 0$.
\end{restatable}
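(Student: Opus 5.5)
We will improve on the market-equilibrium approach of \cite{Barman2018FindingFA} by analyzing where its $e^{1/e}$ bound is tight and showing that those worst cases can be handled by a second, complementary algorithm. We then output the better of the two allocations, or a random mixture of them.

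\textbf{Step 1: structural analysis of the tight case.} We run the Barman et al.\ algorithm, which computes an (approximately) EF1, Fisher-market-equilibrium allocation with prices $p$. Their analysis bounds $\NSW(X^*)/\NSW(X)$ by comparing $v_i(X^*_i)$ against the market prices and applying the inequality $\prod_i (p(X^*_i)/p(X_i))^{1/n} \le$ (AM-GM), with a loss of $e^{1/e}$. That loss arises from optimizing $x^{1/x}$-type expressions when a single good's price is subtracted to obtain EF1. We will make this precise by expressing the ratio as a product of per-agent factors $f(\alpha_i)$. Here $\alpha_i$ measures how much of agent $i$'s bundle is carried by one high-priced good, that is, how far $p(X_i)$ exceeds $p(X_i\setminus g)$. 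We will show that the ratio exceeds $e^{1/e}-c$ only if a constant fraction of agents (weighted logarithmically) sit near the extremal $\alpha$. Concretely, these are agents whose bundles are dominated by a single "large" good of price about $e$ times the minimum spending level, alongside many low-spending agents.

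\textbf{Step 2: exploit the structure.} In this near-tight regime the instance essentially decomposes into large goods, each matched to a single agent, and divisible-like small goods. For such instances we will use a matching-plus-rounding algorithm. We guess (or compute via LP) an assignment of large goods to agents by solving a max-weight matching with weights $\log v_i(g)$. We then allocate the small goods by rounding a spending-restricted equilibrium or a convex program (the Eisenberg--Gale program restricted to the residual goods), in the style of \cite{cole2015nash,Cole2017nashwelfare}. Because the residual goods are each small relative to the bundles, rounding loses only $1+o(1)$ on those agents. The matching step is exact for the large goods. On the tight instances from Step 1 this should therefore give a factor strictly below $e^{1/e}$. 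To combine the two, we will randomize or take the max and argue via a potential: either the fraction of extremal agents is below $\delta$, in which case algorithm one gets $e^{1/e}-c_1(\delta)$, or it is at least $\delta$, in which case algorithm two gains on those agents while losing at most a constant on the rest. The randomness in the statement will enter through randomized rounding of the small goods or a random choice of threshold for "large".

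\textbf{Main obstacle.} The hardest step is the robust stability claim in Step 1. We must show that being within $c$ of $e^{1/e}$ forces the extremal structure in a way that algorithm two can exploit quantitatively. We must also ensure that algorithm two's losses on non-extremal agents do not wipe out its gains, since the two algorithms' guarantees hold on different agent sets. We would first try a careful $\logNSW$ accounting: write $\log$ of each algorithm's ratio as a sum over agents, bound the per-agent terms, and choose the mixing weights via an LP duality argument. That argument should give a convex combination of the two per-agent bounds which is uniformly at most $\log(e^{1/e})-c'$.
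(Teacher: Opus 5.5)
\textbf{There is a genuine gap: the combination argument, as set up, cannot work.} Nash welfare is a geometric mean. An algorithm that gains on a $\delta$ fraction of agents but loses a constant factor on the other $1-\delta$ fraction is easily worse than $e^{1/e}$ overall when $\delta$ is small. So the stability statement you aim for (``near-tightness forces a constant fraction of agents to be extremal'') is too weak to feed into it. You would need the much stronger statement that near-tightness forces \emph{all but a tiny fraction} of agents into the extremal structure. You would also need a second algorithm analyzed against the \emph{same} per-agent benchmark, so that its gains and losses add up, with large losses allowed only on that tiny remainder.

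Barman et al.'s analysis does not supply such a benchmark. After the maximum-bang-per-buck (MBB) step it is a global AM--GM bound on $\prod_i p(X^*_i)$ in price space, not a product of per-agent factors $f(\alpha_i)$. Your description of the tight structure is also off. There the large goods are worth far more than the minimum level; the factor $e$ is the ratio between what an optimum gives the agents without a large good and that minimum level. The natural market relaxation cannot serve as the benchmark either: the spending-restricted equilibrium has integrality gap $e^{1/e}$. You never say what upper bound on the optimum algorithm two is measured against.

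The paper resolves exactly this by rounding the configuration LP. Its per-agent value $\sum_S y_{i,S}\ln v_i(S)$ is the common benchmark, and every variant is lower-bounded relative to it agent by agent, through Lemma~\ref{lem:w-worst} and the round-robin analysis. The dichotomy is then at least $\gamma^3 n$ non-well-behaved agents (plain ST rounding wins) versus at most $\gamma^3 n$, where those few agents may lose $\frac1e+2$.

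\textbf{Algorithm two fails on exactly the instances it is meant to rescue.} The choice of who gets a large good cannot be made independently of the small goods. Take $(1-\frac1e)n$ agents of type $A$ and $\frac{n}{e}$ of type $B$. Let there be $(1-\frac1e)n$ large goods, worth $H(1+\varepsilon)$ to type $A$ and $H$ to type $B$, with $H\gg\mu$. Add two pools of tiny goods: one worth $\mu$ per type-$A$ agent and valued only by type $A$, one worth $\mu$ per type-$B$ agent and valued only by type $B$.
\begin{itemize}
\item The max-weight matching with weights $\log v_i(g)$ gives every large good to type $A$, so the unmatched type-$B$ agents share only their own pool and get $\mu$ each.
\item Assigning a $(1-\frac1e)$ fraction of the large goods within each type instead gives every unmatched agent $e\mu$.
\item The ratio tends to $e^{1/e}$ as $\varepsilon\to 0$, although rounding the tiny goods is essentially lossless.
\end{itemize}
This instance is in the tight regime: with prices equal to the type-$A$ values, the all-to-$A$ allocation is MBB and price-EF1, and is itself $e^{1/e}$ from optimal.

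Two further assumptions in your Step 2 do not hold in general:
\begin{itemize}
\item The clean large/small decomposition is not given, because largeness is agent-dependent. Instances where many goods are large for some well-behaved agents but held as small goods (or by non-well-behaved agents) elsewhere form a separate case in the paper (Theorem~\ref{thm:no-large-good-consistency}).
\item Small goods are bounded only by $\frac{\Delta\mu}{\Delta-\ell}\approx e\mu$. They need not be small relative to bundles of value $\Theta(\mu)$, so rounding them is not $1+o(1)$.
\end{itemize}

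\textbf{The missing idea is how to decide who gets a large good.} The paper uses dependent partial rounding, which preserves the LP marginals (D1), is negatively correlated (D2), and is nearly integral in the sense of (D4). Small goods are then removed in proportion to an agent's large-good amount and redistributed in proportion to $x_{ig}$. Negative correlation guarantees that, in expectation, the freed small goods reach agents who did not get a large good. That expected gain, certified against the configuration LP, is where the constant $c$ comes from.
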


Our algorithm rounds the configuration LP for this problem. It builds on recent work \cite{Feng2025weightednashadditive} which uses the configuration LP to design an $e^{1/e}$-approximation algorithm for maximizing {\em weighted} Nash welfare. 

Aside from improving the approximation factor, our result has two interesting implications for the problem of maximizing Nash welfare. First, it shows that the configuration LP has a strictly better integrality gap than the spending restricted market equilibrium, which was shown to have an integrality gap of $e^{1/e}$ \cite{Cole2017nashwelfare}. Second, it proves a separation between weighted Nash welfare and unweighted Nash welfare with respect to the configuration LP. Specifically, in recent work, \cite{Bei2026MNW} show that the configuration LP has an integrality gap of $e^{1/e}$ for weighted Nash welfare. Our result implies that the configuration LP has integrality gap strictly better than $e^{1/e}$ for unweighted Nash welfare. 

\subsection{Overview of Proof}
As mentioned above, our algorithm rounds the solution of the configuration LP. The configuration LP for the max Nash welfare problem is defined as follows:
\begin{alignat}{3}
  &\max  \quad && \frac1n \sum_{i \in N} \sum_{S \subseteq G} y_{i, S} \ln{v_i(S)} \notag \\
  &\text{s.t.} \quad
               && \sum_{S \subseteq G} y_{i, S} = 1,
                   &&\quad \forall\, i \in N, \notag \\
  &            && \sum_{i \in N}\; \sum_{\substack{S \subseteq G: g \in S}} y_{i, S} = 1,
                   &&\quad \forall\, g \in G, \notag \\
  &            && y_{i,S} \geq 0,
                   &&\quad \forall\, i \in N,\ S \subseteq G. \notag
\end{alignat}
The variable $y_{i, S}$ can be interpreted as an indicator variable for whether agent $i$ is allocated the bundle $S$ in the optimal allocation.  The configuration LP solution $y^*$ implies a natural fractional allocation $\x$\footnote{A fractional allocation $\x$ is an $n \times m$ matrix where each row represents an agent and each column represents a good. The value $x_{ig}$ denotes the amount of good $g$ allocated to agent $i$. Naturally, $x_{ig} \in [0, 1]$ and $\sum_{i \in N} x_{ig} \le 1$ for each $g\in G$.}, where $x_{ig} = \sum_{S: g\in S} y^*_{i, S}$ for each agent $i$ and good $g$. In our algorithm, we either round the fractional allocation $\x$ or a modified version of $\x$ using the Shmoys-Tardos rounding method \cite{ShmoysTardos1993GAP}. Let $X$ denote the final integral allocation that the rounding method outputs. 

Our goal is to show that 
\begin{align}
\frac1n \sum_{i \in N} \E[\ln v_i(X_i)] \ge \frac1n \sum_{i \in N, S \subseteq G} y^*_{i, S}\ln{v_i(S)} - \frac1e + c^*, \label{eq:goal}
\end{align}
for some constant $c^*$. It is easy to see that this implies $X$ (in expectation) is strictly better than an $e^{1/e}$-approximation of the max Nash welfare. 

\cite{Feng2025weightednashadditive} show that if we round the fractional allocation $\x$ implied by the configuration LP solution $y^*$ using the Shmoys-Tardos rounding method, we output an integral allocation $X$ that satisfies, for each agent $i \in N$,
\begin{align*}
\E[\ln v_i(X_i)] \ge \sum_{S \subseteq G} y_{i, S}\ln{v_i(S)} - \frac1e.
\end{align*}

A close examination of their proof shows that equality holds in the above inequality only when the fractional bundle $\x_i$ has a very specific structure. That is, for equality to hold, the agent $i$ must have a set of {\em large} goods $L_i$ that they value very highly, and all the remaining goods must have value significantly less than the value of the large goods. We label all the non-large goods as {\em small}. Additionally, the total fractional amount of large goods in the bundle $\x_i$ must be equal to $1-1/e$. This is illustrated in Figure \ref{fig:well-behaved}.  

\begin{figure}[ht]
\centering
\begin{tikzpicture}

\def\Le{6.32}      
\def\xend{10.6}    

\draw[thick, fill=gray!35] (0,0)    rectangle (2.2,3.4);   
\draw[thick, fill=gray!55] (2.2,0)  rectangle (3.9,2.9);   
\draw[thick, fill=gray!30] (3.9,0)  rectangle (5.3,2.45);  
\draw[thick, fill=gray!60] (5.3,0)  rectangle (\Le,2.1);   

\draw[thick, fill=gray!40] (\Le,0)  rectangle (7.25,0.6);
\draw[thick, fill=gray!50] (7.25,0) rectangle (8.05,0.6);
\draw[thick, fill=gray!30] (8.05,0) rectangle (8.75,0.6);
\draw[thick, fill=gray!60] (8.75,0) rectangle (9.35,0.6);
\draw[thick, fill=gray!45] (9.35,0) rectangle (9.85,0.6);
\node at (10.25,0.3) {$\cdots$};

\node at (1.1, 1.7) {$g_1$};
\node at (3.05,1.45){$g_2$};
\node at (4.6, 1.2) {$g_3$};

\draw[->,>=latex,thick] (0,0) -- (0,4.0) node[above left]{$v_i(g)$};
\draw[->,>=latex,thick] (0,0) -- (\xend,0) node[right]{size};

\draw[dashed] (\Le,0) -- (\Le,3.7);

\draw[thick] (\Le,0) -- (\Le,-0.12);
\node[anchor=north] at (\Le,-0.14) {$\left(1-\frac1e\right)$};

\draw[<->,>=latex,thick] (0,4.45)   -- (2.2,4.45) node[midway,above]{$x_{ig_1}$};
\draw[<->,>=latex,thick] (2.2,4.45) -- (3.9,4.45) node[midway,above]{$x_{ig_2}$};
\draw[<->,>=latex,thick] (3.9,4.45) -- (5.3,4.45) node[midway,above]{$x_{ig_3}$};

\draw[decorate,decoration={brace,mirror,amplitude=8pt},thick]
      (0,-1.05) -- (\Le,-1.05) node[midway,below=10pt]{\large large goods $L_i$};
\draw[decorate,decoration={brace,mirror,amplitude=8pt},thick]
      (\Le,-1.05) -- (\xend,-1.05) node[midway,below=10pt]{\large small goods};

\end{tikzpicture}
\caption{Structure of the fractional bundle $\x_i$ of a well-behaved agent $i$. Each good $g$ is drawn as a bar whose width is the fractional amount $x_{ig}$ and whose height is the value $v_i(g)$ (assuming $v_i(g_1) \ge v_i(g_2) \ge \dots \ge v_i(g_m)$). The first few goods are large goods, valued significantly higher than the small goods, and the total fractional amount of large goods is exactly $1 - 1/e$.}
\label{fig:well-behaved}
\end{figure}

If a fractional bundle $\x_i$ satisfies this structure, we call the agent $i$ {\em well-behaved}. We show that if an agent $i$ is far from well-behaved, then applying Shmoys-Tardos rounding on the fractional allocation $\x$ results in an integral allocation $X$ that satisfies:
\begin{align*}
\E[\ln v_i(X_i)] \ge \sum_{S \subseteq G} y_{i, S}\ln{v_i(S)} - \frac1e + c^*_1, 
\end{align*}
for some constant $c^*_1 > 0$.
Therefore, if a constant fraction of agents are far from being well-behaved, we achieve our goal of \eqref{eq:goal} by rounding the fractional allocation $\x$ implied by the configuration LP solution $y^*$.

It therefore suffices to consider the case where almost all agents are well-behaved. In this case, we closely analyze the large goods of each well-behaved agent $i$. If it is the case that a good $g$ is large according to agent $i$ and small according to agent $j$, then moving a fraction of $g$ from $j$'s fractional bundle $\x_j$ to $i$'s fractional bundle $\x_i$ improves the Nash welfare of the final allocation. Intuitively, this is because a small good has very small value and therefore removing it results in a small drop in $\E[\ln v_j(X_j)]$ but adding a fraction of a large good to $\x_i$ significantly increases $\E[\ln v_i(X_i)]$. If we are able to do this with a constant fraction of the goods, then we achieve \eqref{eq:goal}.

Therefore, we may assume that most goods are either large for most agents or small for most agents. We call this the {\em large good consistency property}. For the case where most agents are well-behaved and the large good consistency property is satisfied, we present our main algorithm. This is arguably the most technical part of the paper. 

Our main algorithm is a two phase algorithm, where in the first phase we allocate large goods, and in the second phase, we allocate the small goods. To gain intuition about how such a two phase procedure can be implemented to improve the Nash welfare guarantees, we first describe a similar procedure for the case where {\em all} agents are well-behaved and {\em all} goods are either large for all the agents or small for all the agents. 

Since large goods have significantly higher value than small goods, we first allocate the large goods using dependent rounding \cite{gandhi2006dependent} such that the probability that each agent $i$ receives large good $g$ is exactly $x_{ig}$. This is Phase 1. If an agent $i$ receives a large good $g$ in Phase 1, then the small goods do not add much value to the agent's bundle. Therefore, we modify the fractional allocation $\x_i$ by removing all fractions of small goods in the bundle $\x_i$ and keeping only the large good $g$. Formally, we update $\x_i$ by setting $x_{ig} = 1$ and $x_{ig'} = 0$ for all $g' \ne g$. We re-allocate these small goods (which were removed from $\x_i$) to agents who do not receive large goods. That is, if we remove $\delta_g$ amount of small good $g$,  we update the fractional allocation $\x$ using $x_{jg} \gets x_{jg}(1 + \delta_g)$ for all agents $j$ who do not receive a large good in Phase 1\footnote{We note that the total amount of small goods that get removed does not get re-allocated by this update rule. We present this update rule anyway since it is simple and very similar to the update rule we end up using in our main algorithm.}. We then round the modified allocation $\x$ using the Shmoys-Tardos rounding procedure. This rounding does not change the allocation of the large goods from what was decided in Phase 1.  Using the negative correlation properties of dependent rounding, we can show that if an agent $i$ does not receive a large good, its fractional bundle significantly improves during the small good re-allocation procedure. Using this idea, we can show that (in expectation) \eqref{eq:goal} is satisfied. 

The above approach works in an idealized setting where all agents are well-behaved and we can perfectly classify each good as large or small. However, this may not be the case in the instances we work with. Therefore, we cannot perfectly round large goods since some of them may be allocated as small goods to a few agents, or may be allocated to non-well-behaved agents. To deal with this, we design a modification to the dependent rounding procedure specifically for our problem. We call it {\em dependent partial rounding}. The key idea of dependent partial rounding is to apply dependent rounding to the best of our abilities till we cannot anymore. It may not be possible to perfectly round even a single large good, but dependent partial rounding will modify the fractional allocation of the large goods such that most agents get a fractional amount of large goods either close to $1$ or close to $0$. What we end up with is a new fractional allocation $\z$ which differs from $\x$ only in the allocation of the large goods. We then employ a small good re-distribution scheme where for each agent $i$, the amount of small goods removed from $\z_i$ is proportional to the amount of large goods it receives from the dependent partial rounding procedure. We give these removed fractions of small goods to agents who have a small amount of large goods after the dependent partial rounding procedure. Using an analogous negative correlation property, we show that applying Shmoys-Tardos rounding to the final fractional allocation $\z$ results in an integral allocation that satisfies \eqref{eq:goal}.

\noindent \textbf{Lower Bounding $\E[\ln v_i(X_i)]$}. At several points in the above description, we have made claims like `removing a small fraction of small good from $\x_i$ reduces $\E[\ln v_i(X_i)]$ by a small amount' or `adding fractions of goods to $\x_i$ increases $\E[\ln v_i(X_i)]$'. It is not immediately clear how to quantify this change, or how to analyze $\E[\ln v_i(X_i)]$ in general. \cite{Feng2025weightednashadditive} present an elegant solution for this issue. They construct an instance $\cal I^{i, \x}(\Delta)$ with $\Delta$ agents and $\Delta x_{ig}$ copies of each good $g$, where $\Delta$ is suitably chosen such that $\Delta x_{ig}$ is an integer for all $g \in G$. They show that, assuming all $\Delta$ agents have the valuation function $v_i$, $\E[\ln v_i(X_i)]$ corresponds to the log of the Nash welfare of a specific envy-free up to one good allocation $W$ in the instance $\cal I^{i, \x}(\Delta)$. We build on this result by showing that $W$ can be assumed to be the round robin allocation of the instance.

Using this trick, any change to the fractional bundle $\x_i$ can be interpreted as a modification to the instance $\cal I^{i, \x}(\Delta)$. Therefore, if we want to quantify the effect of removing goods from or adding goods to $\x_i$, it suffices to bound the change to the Nash welfare of the round robin allocation when goods are added or removed from an instance. Round robin allocations, especially when agents have identical valuations, have a structure that is intuitive and easy to analyze. This is the key technique we use to lower bound $\E[\ln v_i(X_i)]$.

\subsection{Other Related Work}
Our work is heavily inspired by the two papers \cite{Kalaitzis2016ImprovedBudgetedAllocation} and \cite{Kalaitzis2015ConfigurationLPBudgetedAllocation}, which together present an improved rounding method for the configuration LP of the budgeted allocation problem. Specifically, our idea to use dependent rounding is inspired by the use of dependent rounding in \cite{Kalaitzis2015ConfigurationLPBudgetedAllocation}. We note that budgeted allocation is, in some sense, easier than Nash welfare, particularly when it comes to applying dependent rounding. In \cite{Kalaitzis2015ConfigurationLPBudgetedAllocation}, the algorithm is able to perfectly round all the large goods essentially by removing non-well-behaved agents and inconsistent large goods from the problem instance. This cannot be done with Nash welfare since it could lead to an unbounded decrease in the Nash welfare of the final allocation --- setting one agent's utility to $0$ sets the entire Nash welfare to $0$. In a similar vein, \cite{Kalaitzis2015ConfigurationLPBudgetedAllocation} are able to analyze the re-allocation of small goods after dependent rounding using the assignment LP which, in the case of budgeted allocation, has an integrality gap of $4/3$. This again cannot be done with Nash welfare since the assignment LP has an unbounded integrality gap. Both issues require novel ideas to solve, which we provide in the design of the dependent partial rounding method, and a connection between the output of the Shmoys-Tardos rounding algorithm and round robin allocations. 

Negatively correlated rounding has also been used in other fair allocation-adjacent problems like minimizing weighted completion time in job scheduling \cite{Harris2025Scheduling,Bansal2021Scheduling} and submodular welfare maximization with a demand oracle \cite{Feige2006GAPImprovement}. Our paper adds to this rich body of literature showcasing the power of dependent rounding in algorithm design.

Finally, our work adds to the ever growing body of literature on approximation algorithms for Nash welfare. There have been several breakthrough results on this problem in the last few years. Notably, we now have a constant factor approximation algorithm for unweighted \cite{li2021mnw,garg2023mnw} and weighted Nash welfare \cite{Feng2025WeightedNashSubmodular,Bei2026MNW} under submodular valuations. There is also a constant factor approximation algorithm for unweighted Nash welfare under subadditive valuations (assuming access to a demand oracle) \cite{dobzinski2024subadditive}.

\subsection{Organization}
Section \ref{sec:prelims} establishes the preliminaries. Sections \ref{sec:round-robin-guarantees} and \ref{sec:add-remove} analyze the Nash welfare guarantees of the round robin allocation when agents have identical valuations. Section \ref{sec:suboptimal} presents an important property of Nash optimal allocations in instances with identical valuations. Section \ref{sec:st-rounding} shows that we achieve our goal when a constant fraction of agents are not well-behaved. Section \ref{sec:large-good-consistency} formally defines the large good consistency property and shows that we achieve our goal if it is not satisfied by an instance. Section \ref{sec:main-algo} presents our main algorithm for the problem and analyzes its correctness. Section \ref{sec:main-theorem} uses all the results from the previous sections to prove Theorem \ref{thm:nash}.

\section{Preliminaries}\label{sec:prelims}
We use $[k]$ to denote the set $\{1, 2, \dots, k\}$. A fair allocation instance $\cal I$ is defined by a set of $n$ agents $N = [n]$ and a set of $m$ goods $G = \{g_1, \dots, g_m\}$. Each agent $i \in N$ has a {\em valuation function} $v_i: 2^G \rightarrow \R_{\ge 0}$; $v_i(S)$ denotes how much value agent $i$ has for the bundle $S$. Throughout this paper, we assume agent valuations are additive; that is, $v_i(S) = \sum_{g \in S} v_i(\{g\})$. For an agent $i$ and good $g$, we often use $v_i(g)$ to denote $v_i(\{g\})$.

An {\em allocation} $X = (X_1, \dots, X_n)$ is an $n$-subpartition of the set of goods. Each agent $i$ is allocated the bundle $X_i$ and receives {\em utility} $v_i(X_i)$. 

While our goal is to compute integral allocations, our algorithms heavily use fractional allocations. A fractional allocation $\x$ is an $n \times m$ matrix where each row represents an agent and each column represents a good. Additionally, $x_{ig} \in [0, 1]$ for each $i \in N$, $g \in G$, and $\sum_{i \in N} x_{ig} \le 1$ for each $g \in G$. The value of $x_{ig}$ denotes the amount of good $g$ allocated to agent $i$. We denote the allocated bundle to agent $i$ using $\mathbf{x}_{i} \in [0, 1]^G$. We also define the {\em size} of a fractional bundle $\mathbf{x}_i$ as the value $\sum_{g \in G} x_{ig}$. 

The Nash social welfare (or simply, the Nash welfare) of an allocation $X$ (denoted $\NSW(X)$) is defined as the geometric mean of agent utilities; formally, $\NSW(X) = (\prod_{i \in N} v_i(X_i))^{\frac1n}$. An allocation which maximizes the Nash social welfare is referred to as a max Nash welfare allocation. We denote the optimal Nash welfare of an instance $\cal I$ using $\NSW(\cal I)$.
An allocation $X$ is said to be an $\alpha$-approximation $(\alpha \ge 1)$ of the max Nash welfare for some instance $\cal I$ if $\NSW(X) \ge \frac1{\alpha}\NSW(\cal I)$.

We will need the following simple lemma about Nash welfare.
\begin{lemma}\label{lem:nash-welfare-swap}
Given a fair allocation instance where all agents have the same valuation function $v$, let $X$ be an allocation and let $i, j \in N$ be two agents. Let $S_i \subseteq X_i$ and $S_j \subseteq X_j$ be two sets of goods such that $v(S_j) \ge v(S_i)$ and $v(X_i \setminus S_i) \ge v(X_j \setminus S_j)$. Let $Y$ be an allocation constructed by starting at $X$ and swapping the set of goods $S_i$ and $S_j$; that is, $Y_i = (X_i \setminus S_i) \cup S_j$ and $Y_j = (X_j \setminus S_j) \cup S_i$. Then, $v(Y_i)v(Y_j) \le v(X_i)v(X_j)$. 

Moreover, equality holds if and only if $v(S_j) = v(S_i)$ or $v(X_i \setminus S_i) = v(X_j \setminus S_j)$. 
\end{lemma}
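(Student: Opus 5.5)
Set $a = v(X_i \setminus S_i)$, $b = v(X_j \setminus S_j)$, $s = v(S_i)$ and $t = v(S_j)$. The hypotheses then say $t \ge s$ and $a \ge b$. Because the goods are disjoint and $v$ is additive, we have $v(X_i) = a + s$, $v(X_j) = b + t$, $v(Y_i) = a + t$ and $v(Y_j) = b + s$. The whole statement therefore reduces to comparing two products of nonnegative reals.

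The key step is to expand the difference:
\begin{align*}
v(X_i)v(X_j) - v(Y_i)v(Y_j) &= (a+s)(b+t) - (a+t)(b+s) \\
&= at + sb - ab - ss' \text{-free terms} \\
&= (a - b)(t - s).
\end{align*}
To check this, expand both products. The first is $ab + at + sb + st$ and the second is $ab + as + tb + ts$, so their difference is $at + sb - as - tb = (a-b)(t-s)$. Both factors are nonnegative by hypothesis, so $v(Y_i)v(Y_j) \le v(X_i)v(X_j)$. For the equality clause, the difference is zero exactly when $(a-b)(t-s) = 0$, that is, when $a = b$ or $s = t$. In the original notation this is $v(X_i \setminus S_i) = v(X_j \setminus S_j)$ or $v(S_j) = v(S_i)$, which is the claimed characterization.

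There is no real obstacle here. The only point needing care is to make sure that additivity is applied to disjoint sets. This holds because $S_i \subseteq X_i$ and $S_j \subseteq X_j$, and $X_i, X_j$ are disjoint bundles of an allocation. Hence each of $Y_i$ and $Y_j$ is a disjoint union of two of the four sets, and its value is the corresponding sum.

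(In the display above, the middle line should simply be omitted; the identity is $(a+s)(b+t)-(a+t)(b+s) = (a-b)(t-s)$.)
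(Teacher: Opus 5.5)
Your proof is correct and is essentially the paper's argument: both reduce to the identity $(a+s)(b+t)-(a+t)(b+s)=(a-b)(t-s)$ with both factors nonnegative, and read off the equality case from when this product vanishes. As you already note, just delete the garbled middle line of your display.
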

\begin{proof}
We simplify the notation a little further to make the proof readable. Define $a = v(X_i \setminus S_i)$, $b = v(X_j \setminus S_j)$, $c = v(S_j)$ and $d = v(S_i)$. The lemma requires that $a \ge b$ and $c \ge d$. 
\begin{align*}
v(Y_i)v(Y_j) &= (v(X_i \setminus S_i) + v(S_j))(v(X_j \setminus S_j) + v(S_i)) \\
&= (a + c)(b + d) \\
&= (a + d)(b + c) + (a-b)(d-c) \\
&\le (a+d)(b+c) \\
&= v(X_i)v(X_j).
\end{align*} 
In the inequality, we use the fact that $(a - b)(d -c) \le 0$ since $a \ge b$ and $c \ge d$. Note that equality holds if and only if $(a-b)(d-c) = 0$; that is either $a = b$ or $c = d$ or both. 
\end{proof}

We will often use the log Nash welfare since it makes our proofs slightly easier to read. Formally, the log Nash welfare of an allocation $X$ (denoted $\logNSW(X)$) is defined as the value $\frac1n \sum_{i \in N} \ln(v_i(X_i))$. Similar to the Nash social welfare, we use $\logNSW(\cal I)$ to denote the optimal log Nash welfare of the instance $\cal I$. We require the following simple fact.

\begin{fact}\label{fact:log-nash-welfare}
Let $X$ be any (random) allocation and $t \in \R$ be any real value. If $\E[{\logNSW(X)}] \ge t$, then $\E[\NSW(X)] \ge e^{t}$.
\end{fact}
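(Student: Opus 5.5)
The plan is to apply Jensen's inequality to the convex function $s \mapsto e^{s}$. By definition, $\NSW(X) = \left(\prod_{i \in N} v_i(X_i)\right)^{1/n} = \exp\left(\frac1n \sum_{i \in N} \ln v_i(X_i)\right) = e^{\logNSW(X)}$. This holds pointwise for every realization of the random allocation $X$. We adopt the usual conventions $\ln 0 = -\infty$ and $e^{-\infty} = 0$, so realizations in which some agent has zero utility are covered consistently.

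The key step is as follows. Since $e^{s}$ is convex, Jensen's inequality gives
\[
\E[\NSW(X)] = \E\left[e^{\logNSW(X)}\right] \ge e^{\E[\logNSW(X)]} \ge e^{t}.
\]
The last inequality holds because $s \mapsto e^{s}$ is monotone increasing and $\E[\logNSW(X)] \ge t$ by hypothesis.

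The only point needing care is well-definedness: $\E[\logNSW(X)]$ might be $-\infty$ or involve infinite values. However, the hypothesis $\E[\logNSW(X)] \ge t > -\infty$ forces $\logNSW(X) > -\infty$ almost surely. So $\logNSW(X)$ is an a.s. finite random variable with finite expectation bounded below, and Jensen applies. Because the allocations come from a finite space (finitely many goods and agents), the expectation is a finite sum, and the argument reduces to the weighted AM–GM inequality. I expect no real obstacle here.
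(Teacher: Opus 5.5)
Your proposal is correct and is exactly the paper's argument. You apply Jensen's inequality to the convex function $s \mapsto e^{s}$ via $\E[\NSW(X)] = \E[e^{\logNSW(X)}] \ge e^{\E[\logNSW(X)]} \ge e^{t}$. Your additional well-definedness remarks are sound but not needed: the set of allocations is finite, and the hypothesis rules out $\logNSW(X) = -\infty$ with positive probability.
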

\begin{proof}
The proof follows using Jensen's inequality.
\begin{equation*}
\E[\NSW(X)] = \E[e^{\logNSW(X)}] \ge e^{\E[\logNSW(X)]} \ge e^{t}. \qedhere
\end{equation*}
\end{proof}

We often need the following simple facts.
\begin{fact}\label{fact:x-onebyx}
The function $f: \R_{> 0} \rightarrow \R_{> 0}$ defined by $f(x) = x^{1/x}$ is maximized at $x = e$. 
\end{fact}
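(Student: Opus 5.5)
The plan is to work with the logarithm, since $x^{1/x}$ is positive on $\R_{>0}$ and $\ln$ is strictly increasing. Maximizing $f(x) = x^{1/x}$ is therefore equivalent to maximizing
\[
h(x) = \ln f(x) = \frac{\ln x}{x}, \qquad x > 0.
\]

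First, compute the derivative:
\[
h'(x) = \frac{1 - \ln x}{x^2}.
\]
The denominator is positive on $\R_{>0}$, so the sign of $h'(x)$ equals the sign of $1 - \ln x$. Hence $h'(x) > 0$ for $0 < x < e$, $h'(e) = 0$, and $h'(x) < 0$ for $x > e$.

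It follows that $h$ is strictly increasing on $(0, e]$ and strictly decreasing on $[e, \infty)$. So $h(x) \le h(e) = 1/e$ for all $x > 0$, with equality only at $x = e$. Exponentiating gives $f(x) \le e^{1/e}$ with equality exactly at $x = e$, so $f$ attains its (unique) maximum at $x = e$.

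There is no real obstacle; the only point requiring care is that the domain is open, so global maximality must come from the monotonicity of $h$ on each side of $e$ rather than from a compactness argument.
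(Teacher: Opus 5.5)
Your proof is correct and complete. The paper states this as a standard fact without proof, and your sign analysis of $h'(x) = (1-\ln x)/x^2$ is exactly the standard justification, including the correct handling of the open domain via monotonicity. The bound $\frac{\ln x}{x} \le \frac{1}{e}$ you derive is precisely the form in which the paper uses the fact. For example, it appears as $t \ln t \ge -\frac{1}{e}$ for $t > 0$ after substituting $x = 1/t$.
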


\begin{fact}\label{fact:identity-inequality}
For any $x > -1$, $\ln(1 + x) \ge \frac{x}{1 + x}$. 
\end{fact}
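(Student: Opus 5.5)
The statement to prove is Fact~\ref{fact:identity-inequality}: for every $x > -1$, $\ln(1+x) \ge \frac{x}{1+x}$. My plan is the standard one-variable calculus argument. Define $h:(-1,\infty)\to\R$ by
\begin{align*}
h(x) = \ln(1+x) - \frac{x}{1+x} = \ln(1+x) - 1 + \frac{1}{1+x}.
\end{align*}
I will show that $h$ attains its global minimum at $x = 0$, where $h(0) = 0$; this gives $h(x) \ge 0$ on the whole domain.

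The key step is the derivative:
\begin{align*}
h'(x) = \frac{1}{1+x} - \frac{1}{(1+x)^2} = \frac{x}{(1+x)^2}.
\end{align*}
Since $(1+x)^2 > 0$ on $(-1,\infty)$, we have $h'(x) < 0$ for $x \in (-1, 0)$ and $h'(x) > 0$ for $x > 0$. So $h$ is strictly decreasing on $(-1,0]$ and strictly increasing on $[0,\infty)$. Its minimum is therefore $h(0) = \ln 1 - 0 = 0$, which yields the inequality. Equality holds exactly at $x = 0$.

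As a cross-check, there is a derivative-free route. Substitute $t = \frac{1}{1+x} > 0$. The claim becomes $-\ln t \ge 1 - t$, that is, $\ln t \le t - 1$, which is the standard tangent-line bound for the concave function $\ln$ at $t = 1$.

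There is no real obstacle here. The only point needing care is the domain restriction $x > -1$. It guarantees that $\ln(1+x)$ is defined and that $1+x > 0$, so the sign of $h'$ is determined by the sign of $x$ alone.
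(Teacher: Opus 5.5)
Your proof is correct. The derivative $h'(x) = x/(1+x)^2$ gives a global minimum $h(0)=0$ on $(-1,\infty)$, and the substitution $t = 1/(1+x)$ correctly reduces the claim to $\ln t \le t-1$. The paper states this fact without proof, treating it as standard, so either of your arguments fills that gap. The substitution is the shortest, since it is just the tangent-line bound for the concave function $\ln$ at $t=1$.
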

\subsection{The configuration LP}
The configuration LP for the max Nash welfare problem is defined as follows:
\begin{alignat}{3}
  &\max  \quad && \frac1n \sum_{i \in N} \sum_{S \subseteq G} y_{i, S} \ln{v_i(S)} \notag \\
  &\text{s.t.} \quad
               && \sum_{S \subseteq G} y_{i, S} = 1,
                   &&\quad \forall\, i \in N, \label{eq:agent-cover} \\
  &            && \sum_{i \in N}\; \sum_{\substack{S \subseteq G: g \in S}} y_{i, S} = 1,
                   &&\quad \forall\, g \in G, \label{eq:item-capacity} \\
  &            && y_{i,S} \geq 0,
                   &&\quad \forall\, i \in N,\ S \subseteq G. \label{eq:nonnegativity}
\end{alignat}

We have an indicator variable $y_{i, S}$ that denotes whether agent $i$ is allocated bundle $S$. \eqref{eq:agent-cover} ensures that each agent is allocated one bundle, and \eqref{eq:item-capacity} ensures each good is allocated exactly once. It is easy to see that this LP is a fractional relaxation of the max log Nash welfare problem, and therefore the max Nash welfare problem.


\cite[Theorem 2.1]{Feng2025weightednashadditive} show that this LP can be efficiently solved within an additive error of $\ln{(1 + \vare)}$ for any constant $\vare > 0$. 

\begin{theorem}[\cite{Feng2025weightednashadditive}]\label{thm:configuration-lp}
For any $\vare > 0$, there is an algorithm that outputs a valid solution $\{y_{i, S} \in \mathbb{Q}_{\ge 0}\}_{i \in N, S \subseteq G}$ whose value is at least the optimum value of the configuration LP minus $\ln{(1 + \vare)}$, represented using a list of the non-zero entries. The running time of the algorithm is polynomial in the input size and $\frac1{\vare}$.
\end{theorem}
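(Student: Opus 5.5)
The plan follows the standard ellipsoid-on-the-dual approach for configuration LPs, adapted to the concave objective $\ln v_i(S)$. The dual of the configuration LP has a variable $u_i$ for each agent constraint \eqref{eq:agent-cover} and a price $p_g$ for each good constraint \eqref{eq:item-capacity}. It reads: minimize $\frac1n\bigl(\sum_i u_i + \sum_g p_g\bigr)$ subject to $u_i + \sum_{g\in S} p_g \ge \ln v_i(S)$ for all $i,S$. Here I absorb the factor $\frac1n$ by scaling the objective, and allow prices of either sign since \eqref{eq:item-capacity} is an equality.

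The separation problem for agent $i$ is therefore to find $\max_S \bigl(\ln v_i(S) - p(S)\bigr)$. I would not solve this exactly. Instead I would give an approximate separation oracle, which returns a set $S$ with
\[
\ln v_i(S) - p(S) \ge \max_T \bigl(\ln v_i(T) - p(T)\bigr) - \ln(1+\vare),
\]
or certifies feasibility up to this additive slack.

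To build the oracle, I fix a guess $V$ of $v_i(T^*)$ for the optimal set $T^*$, rounded to a power of $(1+\vare)$. Polynomially many guesses suffice, since values lie in $[\min_g v_i(g), v_i(G)]$ and we may restrict to goods with positive value.

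For each guess $V$, I minimize $p(S)$ subject to $v_i(S)\ge V/(1+\vare)$. This is a min-cost knapsack covering problem. Goods with $p_g \le 0$ can be taken for free, so they are always included. For the remaining goods, a standard FPTAS via value rounding gives either an exact or a $(1+\vare)$-bounded cover.

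The delicate point is that prices enter additively, not multiplicatively, so a multiplicative approximation on the price is not enough. I would handle this by swapping roles: round the values rather than the prices, and solve the resulting knapsack exactly by dynamic programming over the rounded values. The loss then appears only inside $\ln v_i$, as a $\ln(1+\vare)$ term, which is exactly the allowed additive slack.

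With this oracle, I run the ellipsoid method on the dual, treating the objective as a constraint and binary searching on its value. The result is a dual value $D$ that is approximately feasible: scaling by the slack, $(u_i+\ln(1+\vare), p)$ is feasible. It is also certified optimal against a polynomial set of constraints encountered during the run.

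Restricting the primal to the polynomially many configurations generated by the oracle, I solve that restricted primal LP exactly. By LP duality its value is at least the optimal value minus $\ln(1+\vare)$. Its basic optimal solution is rational and has polynomially many nonzero entries, which gives the list representation.

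I expect the oracle to be the main obstacle. Two issues need care there. First, the dynamic program must run in time polynomial in $1/\vare$, which requires the value rounding to be relative to $V$. Second, boundary issues such as $v_i(S)=0$ and unbounded negative prices need a bounding box for the ellipsoid method. I would address the second by showing that optimal duals can be taken with $|p_g|,|u_i|$ bounded by a polynomial in the input bit length.
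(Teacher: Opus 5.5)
The paper gives no proof of this statement; it imports it from \cite{Feng2025weightednashadditive}, so there is no in-paper argument to compare against. Your plan is the standard way to solve such configuration LPs, and I see no genuine gap. Its three pieces are:
\begin{itemize}
\item the ellipsoid method on the dual, with an approximate demand oracle;
\item a binary search on the dual objective;
\item solving the primal restricted to the configurations the oracle generated.
\end{itemize}
Your key observation is also correct. An \emph{additive} loss requires an oracle that is bicriteria: exact in price and approximate only in value.

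A few points to tighten when you write it out.

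First, state the oracle's guarantee cleanly. For each guess $V$, run the dynamic program over values rounded down to multiples of $\vare V/m$ and capped at $V$. It returns $S$ with $p(S)\le\min\{p(T): v_i(T)\ge V\}$ and $v_i(S)\ge(1-\vare)V$. Drop the sentence about a price-side FPTAS giving a ``$(1+\vare)$-bounded cover''; as you note yourself, it is the wrong tool.

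Second, the constraint data $\ln v_i(S)$ is irrational, which has three consequences.
\begin{itemize}
\item The ellipsoid's inner-ball guarantee should come from a uniform-slack argument, not from encoding-length bounds. Raising every $u_i$ by $\eta$ puts a ball of radius $\Omega(\eta/(n+m))$ inside the dual region at objective level $\mathrm{OPT}+O(\eta)$.
\item The oracle must compare rational approximations of the logarithms with a small margin, so that every cut it returns is genuinely valid.
\item The restricted primal cannot be solved ``exactly'', only to within a small additive error after rationally approximating its objective coefficients. The output is still a rational vertex with polynomially many nonzero entries, because the feasible region is rational.
\end{itemize}

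Third, as in the Remark following the theorem, it is simpler to use the relaxation $\sum_i\sum_{S\ni g}y_{i,S}\le 1$. Its dual has $p\ge 0$, which removes your negative-price issue entirely. Completing the solution afterwards costs nothing, because $v_i$ is monotone.

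Finally, three losses add up: the guessing loss $\ln(1+\vare')$, the rounding loss $\ln\frac{1}{1-\vare'}$, and the binary-search/ellipsoid tolerance. Run everything with $\vare'=\Theta(\vare)$ so the total is at most $\ln(1+\vare)$.
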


\begin{remark}
The algorithm of \cite{Feng2025weightednashadditive} solves the configuration LP with a relaxation of \eqref{eq:item-capacity} where they assume $\sum_{i \in N}\sum_{\substack{S \subseteq G: g \in S}} y_{i, S} \le 1$ for each 
$g\in G$ as opposed to the equality we place in \eqref{eq:item-capacity}. Any such configuration LP solution can be trivially `completed' to enforce \eqref{eq:item-capacity} without reducing the objective value. Therefore the condition \eqref{eq:item-capacity} can be assumed without loss of generality.
\end{remark}

\begin{remark}\label{rem:well-defined-objective}
We assume without loss of generality that the optimal Nash welfare of the input instance is positive. The problem of deciding whether the optimal Nash welfare is positive can be easily checked with a max weight matching algorithm. A positive optimal Nash welfare crucially implies the existence of a configuration LP solution $y$ with finite objective value. This means that for any positive variable $y_{i, S}$, we have $v_i(S) > 0$ and $\ln v_i(S)$ is well-defined. 
\end{remark}

\subsection{Round Robin Algorithm}
Given an instance of the fair division problem, the round robin algorithm takes as input a permutation $\pi: N \rightarrow N$ (referred to as the picking sequence) and outputs an allocation. It is defined as follows: we start with an empty allocation, where no agent receives any good. Then the agents take turns according to the permutation $\pi$ and add their highest valued unallocated good to their bundle. Agent $\pi(1)$ goes first followed by $\pi(2)$ and so on till $\pi(n)$, after which we start with agent $\pi(1)$ again. The process stops when there are no goods left to allocate. An allocation $X$ computed using the round robin algorithm with the picking sequence $\pi$ is referred to as the round robin allocation with picking sequence $\pi$. When $\pi$ is the identity permutation, we refer to $X$ simply as the round robin allocation. We use the following simple facts about round robin allocations.

\begin{fact}\label{fact:round-robin-valuations}
Let $X$ be the round robin allocation with picking sequence $\pi$ in an instance where all agents have the same valuation function $v$. Then, $v(X_{\pi(1)}) \ge v(X_{\pi(2)})\ge \dots \ge v(X_{\pi(n)})$.
\end{fact}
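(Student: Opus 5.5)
We prove the claim by comparing two consecutive agents in the picking sequence, using a good-by-good exchange argument. Since all agents share the valuation $v$, every agent in each round simply takes a good of maximum $v$-value among those remaining. Fix $k \in [n-1]$, and write $a = \pi(k)$ and $b = \pi(k+1)$. Let $a_1, a_2, \dots$ be the goods $a$ picks in rounds $1, 2, \dots$, and let $b_1, b_2, \dots$ be the goods $b$ picks. It suffices to show $v(X_a) \ge v(X_b)$; the full chain then follows by transitivity.

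The key step is a per-round comparison. In round $r$, agent $a$ picks immediately before agent $b$, with no intervening turn. So when $a$ picks $a_r$, the good $b_r$ is still unallocated, because $b$ takes it right after. Since $a$ takes a highest-valued remaining good, we get $v(a_r) \ge v(b_r)$ for every round $r$ in which $b$ picks.

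Next we handle the number of picks. Agent $a$ precedes $b$ in every round, so whenever $b$ receives a good in round $r$, agent $a$ also received one in round $r$. Hence $|X_a| \ge |X_b|$. Summing the per-round inequality over the rounds in which $b$ picks, and using nonnegativity of $v$ for any extra goods $a$ holds, gives
\[
v(X_a) = \sum_r v(a_r) \ge \sum_{r : b_r \text{ exists}} v(b_r) = v(X_b).
\]

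There is no real obstacle in this argument. The only point needing care is the boundary case where goods run out mid-round, so that $a$ picks in the last round but $b$ does not. Nonnegativity of $v$ covers this case. Ties in the choice of best good do not matter, since each per-round inequality holds for any tie-breaking rule.
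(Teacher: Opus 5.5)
Your proof is correct and complete: the per-round comparison $v(a_r) \ge v(b_r)$ for consecutive agents $a=\pi(k)$, $b=\pi(k+1)$, together with $|X_a| \ge |X_b|$, additivity and nonnegativity of $v$, gives the chain. The paper states this as a simple fact without proof. It implicitly relies on the indexed description $X_{\pi(j)} = \{g_j, g_{n+j}, g_{2n+j}, \dots\}$ for goods sorted by $v$ with ties broken toward lower index (used explicitly in Section~3), from which the claim follows by the termwise comparison $v(g_{tn+j}) \ge v(g_{tn+j+1})$. That is the same comparison as yours, although your version has the mild advantage of not fixing a tie-breaking rule.
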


Our second fact concerns the envy-freeness of round robin allocations. An allocation $X$ is said to be envy-free up to one good (EF1) if for any two agents $i, j \in N$, there exists a good $g \in X_j$ such that $v_i(X_i) \ge v_i(X_j \setminus \{g\})$.

\begin{fact}\label{fact:round-robin-ef1}
Let $X$ be the round robin allocation with picking sequence $\pi$. When all agents have additive valuations, $X$ is envy-free up to one good (EF1). Specifically, this implies that for any two agents $i, j \in N$, $v_i(X_j) \le v_i(X_i) + \max_{g \in X_j} v_i(g)$.
\end{fact}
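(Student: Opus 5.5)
We prove the two assertions of Fact \ref{fact:round-robin-ef1} separately.

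\textbf{EF1.} Fix agents $i \ne j$ and write the round robin picks in rounds. In round $r$ every agent picks once in the order $\pi(1),\dots,\pi(n)$; in the last round some agents may find nothing left to pick. Let $g^j_1, g^j_2, \dots, g^j_{k_j}$ be the goods picked by $j$, in order, and $g^i_1, \dots, g^i_{k_i}$ those picked by $i$. The argument splits on who picks first in each round.

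\emph{Case $\pi^{-1}(i) < \pi^{-1}(j)$.} In every round $i$ picks before $j$. When $i$ makes its $r$-th pick, the good $g^j_r$ is still unallocated. Since $i$ takes its highest valued remaining good, $v_i(g^i_r) \ge v_i(g^j_r)$. This also gives $k_i \ge k_j$: if $j$ picks in round $r$, then at least one good remained when $i$ picked earlier in that round. Summing over $r$ yields $v_i(X_i) \ge v_i(X_j)$.

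\emph{Case $\pi^{-1}(j) < \pi^{-1}(i)$.} Remove $j$'s first pick $g^j_1$. For $r \ge 1$, $i$'s $r$-th pick occurs before $j$'s $(r+1)$-th pick. At that moment $g^j_{r+1}$ is still available, so $v_i(g^i_r) \ge v_i(g^j_{r+1})$. As before, whenever $j$ makes an $(r+1)$-th pick, $i$ made an $r$-th pick, so $k_i \ge k_j - 1$. Summing over $r$ gives $v_i(X_i) \ge v_i(X_j \setminus \{g^j_1\})$.

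\textbf{The stated inequality.} In either case there is a good $g \in X_j$ with $v_i(X_j \setminus\{g\}) \ge v_i(X_j) - \max_{g'\in X_j} v_i(g')$ and $v_i(X_i) \ge v_i(X_j\setminus\{g\})$; if $X_j$ is empty the inequality is trivial. By additivity, $v_i(X_j) = v_i(X_j\setminus\{g\}) + v_i(g) \le v_i(X_i) + \max_{g'\in X_j} v_i(g')$.

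The only delicate step is the index bookkeeping when the last round is incomplete, namely checking that the pick of $i$ we compare against actually exists. In both cases this follows from the observation that goods are available at $i$'s turn whenever they are available at a later turn of $j$.
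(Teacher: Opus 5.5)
Your proof is correct. Comparing picks round by round gives $v_i(g^i_r)\ge v_i(g^j_r)$ when $i$ precedes $j$ in $\pi$, and $v_i(g^i_r)\ge v_i(g^j_{r+1})$ after discarding $j$'s first pick otherwise; together with your check that the needed picks of $i$ exist, this is the standard folklore argument, and the paper states the fact without proof, so there is no other route to compare against.
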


\subsection{The constant $\gamma$}

Throughout this paper, we will use $\gamma$ to denote the value $10^{-10}$. It is useful to think of $\gamma$ as a very small rational constant close to $0$.

\section{Nash Welfare Guarantees of the Round Robin Allocation}\label{sec:round-robin-guarantees}
In this section, we prove Nash welfare guarantees for the round robin allocation when agents have identical valuations. Specifically, let $\cal I$ be any instance of the fair division problem with $\Delta$ agents, all with the same valuation function $v$ over a set of goods $G = \{g_1, \dots, g_m\}$; assume that $v(g_1) \ge v(g_2) \ge \dots \ge v(g_m)$. Let $W$ be the round robin allocation in this instance. To make notation slightly easier, we define $W$ as the allocation where each agent $i$ receives the bundle $W_i = \{g_i, g_{\Delta + i}, \dots\}$ --- this is the round robin allocation where ties are broken in favor of goods with a lower index.  

We have the following result due to \cite{Barman2018FindingFA}.

\begin{lemma}\label{lem:barman-eonebye}
$W$ is an $e^{1/e}$-approximation of the max Nash welfare in the instance $\cal I$. 
\end{lemma}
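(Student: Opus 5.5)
The plan is to compare $W$ against an optimal allocation $X^*$ of $\cal I$ good by good, using the index structure of the round robin bundles. Since all agents share $v$ and $W_i = \{g_i, g_{\Delta+i}, g_{2\Delta+i},\dots\}$ with $v(g_1)\ge v(g_2)\ge\cdots$, agent $i$'s bundle dominates, item by item, the bundle $W_{i+1}\setminus\{g_{i+1}\}$ shifted by one round. Hence
\[
v(W_i) \;\ge\; v(W_\Delta) \;\ge\; \tfrac{1}{\Delta}\sum_{k>\Delta} v(g_k) \;=:\; \tfrac{1}{\Delta}\, v(G_{\mathrm{small}}),
\]
where $G_{\mathrm{large}}=\{g_1,\dots,g_\Delta\}$ and $G_{\mathrm{small}}=G\setminus G_{\mathrm{large}}$. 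More precisely, $v(W_i)\ge v(g_i)+ \frac1\Delta v(G_{\mathrm{small}})$ fails in general, so the first step is to record the cleaner pair of bounds $v(W_i)\ge v(g_i)$ and $v(W_i)\ge s:=\frac1\Delta v(G_{\mathrm{small}})$. In fact the rounds give $v(W_i)\ge v(g_i)+\sum_{r\ge1}v(g_{r\Delta+\Delta})$ and $\sum_{r\ge 1} v(g_{r\Delta+\Delta})\ge s - \frac{1}{\Delta}\sum_{r}(\text{round-}r\text{ spread})$. I expect to use $v(W_i)\ge \max(v(g_i), s)$ together with $\sum_i v(W_i)=v(G)$.

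Next I upper bound $\NSW(X^*)$. Order the optimal bundles so that agents $1,\dots,k$ hold the goods of $G_{\mathrm{large}}$ that give them large value. The key relaxation is that in any allocation, at most $j$ bundles can contain one of $g_1,\dots,g_j$, so after sorting, the $j$-th bundle's "top good" is at most $v(g_j)$. Thus each agent's value is at most $v(g_{\sigma(i)}) + (\text{share of } G_{\mathrm{small}})$, and by AM--GM the small goods are best spread evenly, giving
\[
\NSW(X^*)^\Delta \le \prod_{i=1}^{\Delta}\bigl(a_i + t_i\bigr),\qquad a_i\le v(g_i),\ \textstyle\sum_i t_i \le \Delta s.
\]
By concavity of $\ln$ the optimum over $t$ is a water-filling: there is a level $\lambda$ such that $a_i+t_i=\max(a_i,\lambda)$.

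Third, compare term by term. Agents with $a_i\ge\lambda$ get at least $v(g_i)=a_i$ in $W$, so their ratio is $1$. For the $\ell$ agents filled to level $\lambda$, the small value poured into them is at most $\Delta s$, so $\ell\lambda\le \Delta s+\sum a_i\le$ roughly $\Delta s + \ell\cdot(\text{their own } v(g_i))$, while in $W$ each has at least $\max(v(g_i),s)$. The ratio bound reduces to showing $\prod_{i\in L}\lambda/\max(v(g_i),s)\le e^{\Delta/e}$, i.e. a one-variable optimization of the form $(\Delta/\ell)^{\ell}$-type expressions; maximizing $(x)^{\Delta/x}$-like quantities uses Fact~\ref{fact:x-onebyx}, which is exactly where $e^{1/e}$ comes from (the worst case $\ell=\Delta/e$ small-good recipients).

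The main obstacle is the third step: making the bookkeeping between $\lambda$, the $\ell$ filled agents, and the round robin lower bounds tight enough that the final inequality is exactly $(\Delta/\ell)^{\ell/\Delta}\le e^{1/e}$, rather than losing an extra constant. I would first try the cleanest bound $v(W_i)\ge s$ for all filled agents with $\ell\lambda\le\Delta s+\sum_{i\in L}v(g_i)$ and handle the $v(g_i)$ terms via $\lambda\ge v(g_i)$, hoping this yields $\lambda/s\le \Delta/\ell$ up to harmless terms.
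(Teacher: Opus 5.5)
Your overall plan (a water-level upper bound on the optimum, an agent-by-agent comparison, and Fact~\ref{fact:x-onebyx} at the end) matches the paper's self-contained route through Lemma~\ref{lem:nash-upper-bound} and Theorem~\ref{thm:round-robin-nash-lower-bound}. However, two steps fail as written.

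\textbf{Step two (upper bound).} Sorting the optimal bundles by their top good shows only that the $j$-th top good is worth at most $v(g_j)$. It does not give $v(X^*_{\sigma(j)})\le v(g_j)+t_j$ with $t_j\ge 0$ and $\sum_j t_j\le\Delta s$. The reason is that a bundle may contain several goods of $G_{\mathrm{large}}$, and their value is not covered by $\Delta s$. The water-filling value is still a valid upper bound. Your $\lambda$ is exactly the threshold $v(G\setminus L)/(\Delta-\ell)$ of Definition~\ref{def:large-goods}, so the bound is Lemma~\ref{lem:nash-upper-bound}. Proving it needs an exchange argument: in an optimal allocation, an agent holding a good above the level holds nothing else of positive value. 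AM--GM on the remaining agents then finishes.

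\textbf{Step three (the comparison).} This is the decisive gap. The inequality you reduce to, $\prod_{i\in L}\lambda/\max(v(g_i),s)\le e^{\Delta/e}$ (with $L$ your filled agents), is false.
\begin{itemize}
\item Take $\Delta=10$, six goods of value $H\gg 1$, and fourteen goods of value $1$. Then $s=1$ and four agents are filled.
\item From $4(\lambda-1)=10$ we get $\lambda=3.5>\Delta/\ell=2.5$, and the product is $3.5^4\approx 150>e^{10/e}\approx 39.6$.
\item The lemma still holds in this instance: round robin gives each filled agent value $2$, not $1$.
\end{itemize}
So the loss is in your lower bounds. The bound $\max(v(g_i),s)$ forgets that a filled agent gets both its first-round good and a share of the later rounds. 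Also, in $\ell\lambda=\Delta s+\sum_{i\in L}v(g_i)$ the second term is of the same order as $\Delta s$, so it is not harmless.

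What is needed instead is an accounting of the small goods. Let $L$ now denote the large goods of Definition~\ref{def:large-goods}, and $\lambda=v(G\setminus L)/(\Delta-|L|)$.
\begin{itemize}
\item Write $v(G\setminus L)=\sum_i v(S_i)$.
\item For agents holding a large good, bound $v(S_i)\le v(W_\Delta)$ (Observation~\ref{obs:small-bundle-round-robin}).
\item For the others, bound $v(W_i)\le\lambda+v(W_\Delta)$ (Fact~\ref{fact:round-robin-ef1}).
\item Write each utility in $[v(W_\Delta),\lambda]$ as $\alpha_i\lambda+(1-\alpha_i)v(W_\Delta)$ and use concavity of $\ln$.
\end{itemize}
Summing gives $v(W_\Delta)/\lambda\ge(k-\alpha)/(\Delta-\alpha)$, where $k$ is the number of such agents and $\alpha=\sum_i\alpha_i$. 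Hence the loss term satisfies $\frac{k-\alpha}{\Delta}\ln\frac{v(W_\Delta)}{\lambda}\ge\frac{k-\alpha}{\Delta}\ln\frac{k-\alpha}{\Delta}\ge-\frac1e$ (Lemma~\ref{lem:w-delta-lower-bound}).

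The paper's own proof of this lemma is just two lines. Round robin is EF1 (Fact~\ref{fact:round-robin-ef1}), and \cite{Barman2018FindingFA} show that any EF1 allocation under identical additive valuations is an $e^{1/e}$-approximation.
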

\begin{proof}
\cite{Barman2018FindingFA} shows that when agents have identical valuations, any envy-free up to one good allocation is an $e^{1/e}$-approximation of the max Nash welfare. By Fact \ref{fact:round-robin-ef1}, $W$ is envy-free up to one good.
\end{proof}

In this section, we examine when the worst case behavior occurs. We show that unless $W$ has a specific structure, it is better than an $e^{1/e}$-approximation of the max Nash welfare. This structure will be used later on to define what it means for an agent to be well-behaved.

\begin{definition}[Large and Small Goods]\label{def:large-goods}
In the instance $\cal I$, we define the set of large goods $L$ (with $|L| = \ell$) as the set of at most $\Delta -1$ goods that satisfies 
\begin{enumerate}[(i)]
\item for each $g \in L$, $v(g) > \frac{v(G \setminus L)}{\Delta - \ell}$, and 
\item for each $g \in G \setminus L$, $v(g) \le \frac{v(G \setminus L)}{\Delta - \ell}$. 
\end{enumerate}
We refer to all the goods in $G \setminus L$ as small goods.
\end{definition}

We have the following simple observation, whose proof is relegated to Appendix \ref{apdx:round-robin-guarantees}.

\begin{restatable}{obs}{obslargegoodsunique}\label{obs:large-goods-unique}
For any instance $\cal I$ with identical valuations, there exists a unique set of large goods $L$ satisfying the properties of Definition \ref{def:large-goods}.
\end{restatable}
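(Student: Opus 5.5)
We will show that any set of large goods must be a prefix of the goods in decreasing order of value, reduce both conditions of Definition \ref{def:large-goods} to a single predicate on the prefix length $\ell$, and then show that this predicate is monotone in $\ell$. Existence and uniqueness both follow from the monotonicity.

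\textbf{Reduction to prefixes.} Order the goods so that $v(g_1) \ge \dots \ge v(g_m)$. Suppose $L$ satisfies Definition \ref{def:large-goods}, and let $\tau = v(G \setminus L)/(\Delta - \ell)$. Every good in $L$ has value $> \tau$, and every good outside $L$ has value $\le \tau$. Hence every good of $L$ is strictly more valuable than every good outside $L$. So $L = \{g_1, \dots, g_\ell\}$, and moreover $L$ never splits a tie class. In particular, as a set, $L$ is determined by $\ell$ alone, so it suffices to prove there is exactly one valid $\ell \in \{0, \dots, \Delta-1\}$.

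\textbf{The predicate.} Write $L_\ell = \{g_1, \dots, g_\ell\}$ and $T(\ell) = v(G \setminus L_\ell)/(\Delta - \ell)$ for $0 \le \ell \le \Delta - 1$. Define $P(\ell)$ to be the statement ``$\ell+1 \le m$ and $v(g_{\ell+1}) > T(\ell)$''. Condition (ii) for $L_\ell$ is equivalent to $\neg P(\ell)$, by monotonicity of the sorted values.

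For condition (i), it suffices to have $v(g_\ell) > T(\ell)$ when $\ell \ge 1$. The key identity is that $T(\ell-1)$ is a weighted average of $v(g_\ell)$ and $T(\ell)$:
\[ T(\ell-1) = \frac{v(g_\ell) + (\Delta-\ell)\,T(\ell)}{\Delta-\ell+1}. \]
Consequently $v(g_\ell) > T(\ell-1)$ if and only if $v(g_\ell) > T(\ell)$. So condition (i) for $L_\ell$ is exactly $P(\ell-1)$. Therefore $\ell$ is valid if and only if ($\ell = 0$ or $P(\ell-1)$) and $\neg P(\ell)$.

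\textbf{Monotonicity.} We show $\neg P(\ell)$ implies $\neg P(\ell+1)$. If $\ell + 1 > m$ this is immediate. Otherwise $v(g_{\ell+1}) \le T(\ell)$. By the same averaging identity (with $\ell+1$ in place of $\ell$), $T(\ell)$ lies between $v(g_{\ell+1})$ and $T(\ell+1)$. Hence
\[ T(\ell+1) \ge T(\ell) \ge v(g_{\ell+1}) \ge v(g_{\ell+2}), \]
which is $\neg P(\ell+1)$. So $P$ holds on an initial segment of $\{0, \dots, \Delta-1\}$ and fails afterwards, and the valid $\ell$ is precisely the first index where $P$ fails. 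This gives uniqueness.

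\textbf{Existence.} It remains to show that $P$ fails somewhere in $\{0, \dots, \Delta-1\}$. If $m < \Delta$, then $P(\ell)$ fails at $\ell = m$ trivially. Otherwise $P(\Delta-1)$ would assert $v(g_\Delta) > v(G \setminus L_{\Delta-1}) \ge v(g_\Delta)$, a contradiction.

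The only delicate point is the bookkeeping at the boundaries: the cases $\ell = 0$, $m < \Delta$, and zero-valued goods, where $T(\ell)$ may be $0$. The averaging identity handles each of these uniformly, provided the empty-good case is encoded in $P$ as above.
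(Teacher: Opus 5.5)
Your proof is correct, but it is organized differently from the paper's.

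The paper proves existence greedily. It repeatedly adds any good whose value exceeds the current threshold $v(G\setminus L)/(\Delta-\ell)$. It then observes that each addition strictly lowers the threshold, so previously added goods stay large, and that nothing can be added once $|L|=\Delta-1$. Uniqueness is proved separately. It takes two valid prefixes $L\subsetneq L'$ and sums condition (ii) for $L$ over the goods of $L'\setminus L$. Rearranging gives $v(L'\setminus L)\le (\ell'-\ell)\,v(G\setminus L')/(\Delta-\ell')$, which contradicts condition (i) for $L'$.

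You use the same averaging phenomenon only in its one-step form: $T(\ell-1)$ is a convex combination of $v(g_\ell)$ and $T(\ell)$. This collapses both conditions into the single predicate $P$, and you show that $\neg P$ is upward closed. Existence and uniqueness then both follow from that one monotonicity statement. As a bonus, you get an explicit description of $L$ as the prefix ending at the first $\ell$ with $\ell=m$ or $v(g_{\ell+1})\le T(\ell)$.

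Each route has its advantages. The paper's greedy step needs no sorting, and its two-set summation argument is reused almost verbatim for the fractional analogue, Lemma~\ref{lem:large-goods-unique-fractional}. Yours gives an explicit characterization of the valid $\ell$ and needs only local, one-step comparisons.

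One detail to tidy: $L_\ell$ and $T(\ell)$ should be defined only for $\ell\le\min\{m,\Delta-1\}$. This is harmless, since $P(\ell)$ fails for $\ell\ge m$ by its first conjunct and no valid set can contain more than $m$ goods.
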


\begin{definition}[Mean Value]\label{def:mean}
In the instance $\cal I$, we define the mean value of the instance (denoted $\mu$) to be the value $\frac{v(G \setminus L)}{\Delta}$, where $L$ is the set of large goods of the instance and $\Delta$ is the number of agents.  
\end{definition}

Finally, we make the following assumption about the instance $\cal I$ that applies to all the results in this section. This assumption mainly exists to ensure that every time we apply a logarithm, it is well-defined.  

\begin{ass}\label{ass:non-zero}
The instance $\cal I$ consists of at least $\Delta$ goods that are positively valued by all the agents. This implies that $\NSW(\cal I)$ is positive and $\NSW(W)$ is positive. It also implies that the mean value $\mu$ is positive.
\end{ass}

We use the definitions of large goods and mean value to upper bound the max Nash welfare of the instance $\cal I$. 

\begin{lemma}\label{lem:nash-upper-bound}
Let $\cal I$ be an instance where all agents have the same valuation function $v$. Then, 
$$\normalfont{\NSW}(\cal I) \le \left (\frac{\Delta \mu}{\Delta - \ell} \right )^{1 - \frac{\ell}{\Delta}} \prod_{g \in L} v(g)^{\frac{1}{\Delta}}.$$

Alternatively, 
\begin{align*}
\normalfont{\logNSW}(\cal I) \le \frac{1}{\Delta} \left [\sum_{g \in L} \ln{v(g)} + (\Delta - \ell)\ln\left ( \frac{\Delta \mu}{\Delta - \ell} \right ) \right ] \le  \frac{1}{\Delta} \left [\sum_{g \in L} \ln{v(g)} + (\Delta - \ell)\ln\mu  \right ] + \frac1e.
\end{align*}
\end{lemma}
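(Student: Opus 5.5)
We bound the product of utilities of an arbitrary allocation $Y$ of $\cal I$ directly; the second inequality then follows by elementary calculus. Fix any allocation $Y = (Y_1,\dots,Y_\Delta)$ with positive Nash welfare. Since $|L| = \ell \le \Delta - 1$ and each agent must receive at least one positively valued good, the goods of $L$ lie in the bundles of at most $\ell$ agents.

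The first step is to argue that we may assume each large good sits alone in its own bundle, and that the remaining $\Delta-\ell$ agents share exactly the small goods. I would do this by an exchange argument in the spirit of Lemma \ref{lem:nash-welfare-swap}. If a bundle contains a large good $g$ together with other goods, we may move those extra goods (or swap them) to bundles that contain no large good, and this should not decrease the Nash welfare. The reason is condition (i) of Definition \ref{def:large-goods}: $v(g)$ exceeds the average value $\frac{v(G\setminus L)}{\Delta-\ell}$ of a ``small-only'' bundle.

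Rather than doing the exchange carefully, a cleaner route is a relaxation. Let $A$ be the set of agents whose bundle contains some large good, and set $k = |A|$. Then the $\Delta-k$ agents outside $A$ share only small goods, while each agent in $A$ has value at most $v(g)$ plus the small goods it holds. I would then relax the problem to a divisible one: the large goods are fixed, and the small goods form a divisible mass $v(G\setminus L)$ to be split among all agents. Concavity of $\ln$ (water-filling) gives the optimal split. Condition (ii), together with uniqueness (Observation \ref{obs:large-goods-unique}), says exactly that the water level $\frac{v(G\setminus L)}{\Delta-\ell}$ is below every large value. Hence the divisible optimum gives each large good its own agent with no small goods, and gives every other agent exactly $\frac{\Delta\mu}{\Delta-\ell}$. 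If some agent holds two large goods, merging them is only worse, by an AM-GM/concavity comparison in the same relaxation. The main obstacle is justifying this relaxation cleanly, specifically checking that the KKT/water-filling conditions hold when the large values strictly exceed the water level. This yields $\NSW(\cal I)^\Delta \le \prod_{g\in L}v(g)\cdot\left(\frac{\Delta\mu}{\Delta-\ell}\right)^{\Delta-\ell}$, and taking the $\Delta$-th root gives the first bound.

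For the second inequality, write $t = \frac{\Delta-\ell}{\Delta}\in(0,1]$. We need $t\ln(1/t) \le 1/e$. This is the maximum of $-t\ln t$ on $(0,1]$, which is attained at $t=1/e$; equivalently it is Fact \ref{fact:x-onebyx} applied with $x = 1/t$. Hence
\[
\frac{\Delta-\ell}{\Delta}\ln\frac{\Delta\mu}{\Delta-\ell} = t\ln\mu + t\ln\frac1t \le \frac{\Delta-\ell}{\Delta}\ln\mu + \frac1e ,
\]
which completes the proof.
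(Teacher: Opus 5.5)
Your proof is correct, but it takes a different route from the paper's.

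\textbf{The paper's route.} The paper fixes a Nash-optimal allocation $X^*$ and makes a single exchange. Suppose an agent holding a large good $g$ also holds goods $S_i$ with $v(S_i)>0$. Move $S_i$ to the worst-off agent $j$ among those with no large good. At least $\Delta-\ell$ such agents share value at most $v(G\setminus L)$, so $v(X^*_j)\le \frac{v(G\setminus L)}{\Delta-\ell}<v(g)$. Lemma~\ref{lem:nash-welfare-swap} then gives a strict improvement. Since $S_i$ may itself contain large goods, this one move rules out both extra small goods and doubled-up large goods. So at the optimum exactly $\ell$ agents hold one positive good each, and AM--GM on the remaining $\Delta-\ell$ agents finishes the proof.

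\textbf{Your route, and what it costs.} Your divisible relaxation bounds every allocation directly, without using the structure of an optimum. The water-filling step is the same AM--GM in disguise. The cost is a separate splitting step for merged large goods, which you assert but do not prove. It does go through:
\begin{itemize}
\item Suppose the partition of $L$ has $k<\ell$ holders, and set $\lambda' := \frac{v(G\setminus L)}{\Delta-k}$. Every holder's value exceeds $\frac{v(G\setminus L)}{\Delta-\ell}\ge \lambda'$.
\item Hence the relaxed optimum gives holders no small mass and gives each of the $\Delta-k$ other agents exactly $\lambda'$.
\item Move a large good of value $c$ from a holder with total $b+c$ to one of those agents. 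Here $b>\lambda'$, since $b$ still includes a large good. The pair's product changes from $(b+c)\lambda'$ to $b(c+\lambda')$, an increase of $c(b-\lambda')>0$. Re-optimizing the small mass can only help further.
\end{itemize}

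\textbf{Two smaller points.}
\begin{itemize}
\item Your relaxation, as first stated, fixes the partition of $L$ taken from $Y$. So the ``each large good alone'' structure comes from this splitting step, not from the water-filling itself.
\item The fact that the water level lies below every large value is condition (i) of Definition~\ref{def:large-goods}, not condition (ii). Condition (ii) and Observation~\ref{obs:large-goods-unique} play no role here: the divisible relaxation upper-bounds every integral allocation no matter how large the individual small goods are. Only condition (i) and $\ell\le\Delta-1$ are needed.
\end{itemize}

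Your derivation of the second inequality via $t\ln(1/t)\le 1/e$ matches the paper's use of Fact~\ref{fact:x-onebyx}.
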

\begin{proof}
Let $X^*$ be the Nash optimal allocation in this instance.
We first show that any agent in $X^*$ who receives a large good does not receive any other positively valued good. Assume for contradiction that an agent $i$ who receives the large good $g$ receives a set of other goods $S_i$ such that $v(S_i) > 0$. Consider the agent $j$ who is the worst off among the agents who do not receive a large good; that is, $j$ minimizes $v(X^*_j)$ among the agents who do not receive a large good. Since $\ell \le \Delta - 1$, such a $j$ must exist. This agent $j$ receives utility at most $\frac{v(G \setminus L)}{\Delta - \ell} < v(g)$. Therefore, moving the bundle $S_i$ from agent $i$ to agent $j$ strictly increases Nash welfare (using Lemma \ref{lem:nash-welfare-swap}). This contradicts the optimality of $X^*$.

Let $N_L$ be the set of agents who receive a large good in $X^*$. Let $N_S$ be the set of all the other agents. Since each agent in $N_L$ receives exactly one good of positive value,
\begin{align}
\prod_{i \in N_L} v(X^*_i) = \prod_{g \in L} v(g). \label{eq:nash-upper-bound-1}
\end{align}   

We can upper bound the product of all other agents' utilities using the AM-GM inequality:
\begin{align}
\prod_{i \in N_S} v(X^*_i) \le \left ( \frac{v(G \setminus L)}{\Delta - \ell}\right )^{\Delta - \ell}. \label{eq:nash-upper-bound-2}
\end{align}

Combining \eqref{eq:nash-upper-bound-1} and \eqref{eq:nash-upper-bound-2} proves the first statement of the lemma. The second statement follows from applying a logarithm on both sides of the first statement, and using Fact \ref{fact:x-onebyx}.
\end{proof}

To analyze the Nash welfare guarantees of the round robin allocation $W$, we separate the allocation using large goods. 
\begin{definition}[Small Bundles of the Round Robin Allocation]\label{def:small-bundles-round-robin}
In the instance $\cal I$, for each agent $i$, we use $S_i$ to denote the set of {\em small} goods that agent $i$ receives in the round robin allocation $W$.
\end{definition}

\begin{obs}\label{obs:small-bundle-round-robin}
If there are $\ell$ large goods in the instance, $v(S_{\ell + 1}) \ge v(S_{\ell + 2}) \ge \dots \ge v(S_{\Delta}) \ge v(S_{1}) \ge v(S_2) \ge \dots \ge v(S_{\ell})$.
\end{obs}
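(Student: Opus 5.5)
We first pin down where the large goods sit in $W$. With $v(g_1)\ge\dots\ge v(g_m)$, the large goods are a prefix of this order: every large good has value strictly above $\frac{v(G\setminus L)}{\Delta-\ell}$ and every small good has value at most this threshold. Hence $L=\{g_1,\dots,g_\ell\}$. Because $\ell\le\Delta-1$, each large good is picked in the first round, and agent $i\le\ell$ receives exactly $g_i$ as its only large good. Consequently $S_i=W_i\setminus\{g_i\}=\{g_{\Delta+i},g_{2\Delta+i},\dots\}$ for $i\le \ell$, while $S_i=W_i=\{g_i,g_{\Delta+i},\dots\}$ for $i>\ell$.

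Next we reduce the claim to a comparison of item-by-item dominating sequences. Fix two agents $i$ and $j$ to be compared. We match the $k$-th good of one small bundle with the $k$-th good of the other and check that the first always has index at most that of the second. For $\ell< i<j\le \Delta$, the $k$-th good of $S_i$ is $g_{(k-1)\Delta+i}$ and that of $S_j$ is $g_{(k-1)\Delta+j}$. Since $i<j$, we get $v(g_{(k-1)\Delta+i})\ge v(g_{(k-1)\Delta+j})$ whenever the latter exists, and $|S_i|\ge|S_j|$. Summing gives $v(S_i)\ge v(S_j)$; this is just Fact \ref{fact:round-robin-valuations} restricted to the agents without large goods. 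The same index comparison settles $1\le i<j\le \ell$: the $k$-th goods are $g_{k\Delta+i}$ and $g_{k\Delta+j}$, so $v(S_i)\ge v(S_j)$.

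The remaining link, and the only one needing care, is $v(S_\Delta)\ge v(S_1)$. Here $S_\Delta=\{g_\Delta,g_{2\Delta},g_{3\Delta},\dots\}$ and $S_1=\{g_{\Delta+1},g_{2\Delta+1},\dots\}$. We pair the $k$-th good $g_{k\Delta}$ of $S_\Delta$ with the $k$-th good $g_{k\Delta+1}$ of $S_1$. Since $k\Delta<k\Delta+1$, we have $v(g_{k\Delta})\ge v(g_{k\Delta+1})$ whenever $g_{k\Delta+1}$ exists, and then $g_{k\Delta}$ exists too. Summing over $k$ with nonnegative values gives $v(S_\Delta)\ge v(S_1)$. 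Chaining the three comparisons yields the full order $v(S_{\ell+1})\ge\dots\ge v(S_\Delta)\ge v(S_1)\ge\dots\ge v(S_\ell)$. The cases $\ell=0$ (then the chain is just Fact \ref{fact:round-robin-valuations}) and $\ell=\Delta-1$ need no separate treatment beyond noting that some blocks of the chain are empty.

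The only subtle point is the first step, that $L$ is exactly the prefix $\{g_1,\dots,g_\ell\}$. Under ties across the threshold this follows from the strict inequality in Definition \ref{def:large-goods}(i) versus the weak inequality in (ii). Tie-breaking within $L$ is harmless because only membership matters.
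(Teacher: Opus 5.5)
Your proof is correct and is essentially the paper's argument. The paper notes that $L=\{g_1,\dots,g_\ell\}$, so $(S_1,\dots,S_\Delta)$ is exactly the round robin allocation of $G\setminus L$ under the rotated picking sequence $(\ell+1,\dots,\Delta,1,\dots,\ell)$, and then cites Fact \ref{fact:round-robin-valuations} once; you instead verify the three blocks of inequalities, including the link $v(S_\Delta)\ge v(S_1)$, by explicit index pairing, which is that Fact unpacked.
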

\begin{proof}
The set of small goods $G \setminus L$ is the set $\{g_{\ell + 1}, \dots, g_{m}\}$. Therefore, the allocation of small goods $(S_1, \dots, S_{\Delta})$ is equivalent to the round robin allocation of the small goods $G \setminus L$ with picking sequence $(\ell +1, \ell + 2, \dots, \Delta, 1, 2, \dots, \ell)$. Therefore, the observation follows from Fact \ref{fact:round-robin-valuations}.
\end{proof}

To prove improved Nash welfare guarantees, we first describe the structure that the instance $\cal I$ must have. This structure is defined by five properties. Formally, we say that the instance $\cal I$ (and the corresponding round robin allocation $W$) is {\em well-behaved} if it satisfies the following properties:

\begin{description}
\descitem{Property A}{prop:a} $\frac{\ell}{\Delta} \in \left [ 1 - \frac{1}{e} - \gamma, 1 - \frac{1}{e} + \gamma \right ]$.
\descitem{Property B}{prop:b} For all but at most $\gamma \Delta$ agents $i \in [\Delta]$, $v(S_i) \in \left [ (1 - \gamma)\mu, (1 + \gamma)\mu \right ]$.
\descitem{Property C}{prop:c} For all but at most $\gamma^2 \Delta$ agents $i \in [\ell]$ who receive a large good, $v(S_i) < \gamma^2 v(g_i)$.
\descitem{Property D}{prop:d} For all but at most $\gamma^2 \Delta$ agents $i \in [\ell]$ who receive a large good, $v(g_i) > \frac{\mu}{\gamma}$.
\descitem{Property E}{prop:e} For all agents $i \in [\Delta]$, $v(S_i) \ge 10^{-4} \mu$.
\end{description}

To gain intuition, one may read the properties initially thinking of $\gamma$ as $0$. The above properties describe our worst case instance, where $W$ is {\em not} better than an $e^{1/e}$-approximation of the Nash welfare. If any of these properties are violated, we show that $W$ is slightly better than an $e^{1/e}$-approximation of the Nash welfare. Formally, we prove the following two results.

\begin{restatable}{theorem}{thmnonwellbehavedinstanceimprovement}\label{thm:non-well-behaved-instance-improvement}
Let $\cal I$ be an instance with $\Delta$ identical agents that satisfies Assumption \ref{ass:non-zero}. Define $W$ as the round robin allocation of this instance. If $\cal I$ is not a well-behaved instance, then for some constant $c > 0$,
\begin{align*}
\logNSW(W) \ge  \logNSW(\cal I) - \frac{1}{e} + c.
\end{align*}
\end{restatable}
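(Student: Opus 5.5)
We compare $\logNSW(W)$ with the upper bound of Lemma \ref{lem:nash-upper-bound}. Since $W$ gives agent $i \le \ell$ the bundle $\{g_i\} \cup S_i$ and agent $i > \ell$ the bundle $S_i$, we can write
\begin{align*}
\Delta \cdot \logNSW(W) = \sum_{i \le \ell} \ln\bigl(v(g_i) + v(S_i)\bigr) + \sum_{i > \ell} \ln v(S_i) \ge \sum_{g \in L}\ln v(g) + \sum_{i > \ell} \ln v(S_i).
\end{align*}
The first bound in Lemma \ref{lem:nash-upper-bound} also gives $\Delta\cdot\logNSW(\cal I) \le \sum_{g\in L}\ln v(g) + (\Delta-\ell)\ln\frac{\Delta\mu}{\Delta-\ell}$. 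Write $\alpha = \ell/\Delta$. Then
\begin{align*}
\logNSW(\cal I) - \logNSW(W) \le (1-\alpha)\ln\frac{1}{1-\alpha} + \frac{1}{\Delta}\sum_{i>\ell}\ln\frac{\mu}{v(S_i)} - \frac1\Delta\sum_{i\le \ell}\ln\Bigl(1+\frac{v(S_i)}{v(g_i)}\Bigr).
\end{align*}
By Observation \ref{obs:small-bundle-round-robin}, the bundles $S_i$ with $i>\ell$ are the largest small bundles, and they sum to at most $\Delta\mu$. Hence their average is at least $\mu$, but only in a weak sense. The key accounting is therefore the following. The first term equals $t\ln(1/t)$ with $t = 1-\alpha$, which is at most $1/e$ with equality only at $t=1/e$. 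The second term is where the slack must be found; the third term is a gain that is always nonnegative.

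\textbf{Main structural step.} First we show that the middle term is nonpositive up to a controllable error, and strictly negative (by a constant) whenever Property \ref{prop:b} fails. Because of the round robin ordering, each agent $i>\ell$ has $v(S_i) \ge v(S_j)$ for all $j$. In addition, every small good has value at most $\mu\Delta/(\Delta-\ell)$, so consecutive small bundles differ by at most one small good. Combined with EF1 (Fact \ref{fact:round-robin-ef1}), this gives $v(S_i) \ge v(S_j) - \mu\Delta/(\Delta-\ell)$. We then argue as follows. The total $\sum_i v(S_i) = \Delta\mu$, and the bundles with $i>\ell$ are at least as large as those with $i\le\ell$. Therefore the average over $i>\ell$ is at least $\mu$, and Jensen's inequality is not needed in the bad direction: $\sum_{i>\ell}\ln(\mu/v(S_i))$ can be controlled by $\ln$ of the average, which is at most $0$. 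The failure modes then split into cases.

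\textbf{Case analysis.}
\begin{itemize}
\item[(A)] If Property \ref{prop:a} fails, then $|t-1/e| > \gamma$. Strict concavity gives $t\ln(1/t) \le 1/e - c(\gamma)$, which finishes this case.
\item[(B)] If Property \ref{prop:b} fails, then at least $\gamma\Delta$ agents have $v(S_i)$ outside $[(1\pm\gamma)\mu]$. Suppose many of them are among $i>\ell$ and lie above $(1+\gamma)\mu$. Then either the average over $i>\ell$ strictly exceeds $\mu$, giving constant slack in the $\ln$ of the average, or strict concavity of $\ln$ gives slack. Suppose instead that the deviating agents lie among $i\le \ell$ and fall below $(1-\gamma)\mu$. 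Then the mass $\Delta\mu$ forces the $i>\ell$ bundles to have average at least $(1+c)\mu$.
\item[(E)] If Property \ref{prop:e} fails, some bundle is below $10^{-4}\mu$. By the one-good-difference ordering, the bundles of agents $i\le \ell$ (a constant fraction, once (A) holds) cannot all be near $\mu$ unless the largest small goods are huge, namely about $\mu\Delta/(\Delta-\ell)$. In that situation we reduce to (B) or use a more careful bound.
\item[(C)/(D)] If Property \ref{prop:c} fails, then $\gamma^2\Delta$ large-good agents have $v(S_i)\ge\gamma^2 v(g_i)$, so the third term contributes at least $\gamma^2\ln(1+\gamma^2)$. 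If Property \ref{prop:d} fails, then $\gamma^2\Delta$ agents have $v(g_i)\le \mu/\gamma$ while $v(S_i)$ is close to $\mu$ (otherwise (B) applies). This gives $v(S_i)/v(g_i) \ge \gamma(1-\gamma)$, again a constant gain.
\end{itemize}

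\textbf{Expected obstacle.} The delicate part is (E), together with the interplay in (B) between bundles of size about $\mu$ and the bound $\mu\Delta/(\Delta-\ell)$ on individual small goods. When small goods are comparable to $\mu$, the round-robin small bundles can be uneven: some agents get a large small good and others get crumbs. We expect to handle this by showing that in this regime the $i>\ell$ bundles, which pick first among the small goods, receive strictly more than their fair share. That forces the average over $i>\ell$ to be at least $(1+c)\mu$, which again produces constant slack against the upper bound of Lemma \ref{lem:nash-upper-bound}. All constants depend only on $\gamma$, so the final $c$ is the minimum over the cases.
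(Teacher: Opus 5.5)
\textbf{Gap 1: the middle term is not controlled.} Your main structural step uses Jensen in the wrong direction. Let $\bar s\ge\mu$ be the average of $v(S_i)$ over $i>\ell$. Concavity of $\ln$ gives $\sum_{i>\ell}\ln v(S_i)\le(\Delta-\ell)\ln\bar s$. That is a \emph{lower} bound on $\sum_{i>\ell}\ln\frac{\mu}{v(S_i)}$, but you need an upper bound. A large average does not stop individual bundles from falling well below $\mu$, and the middle term really can be a positive constant. Take $\ell=0$, $a\Delta$ goods of value $1$, and dust of total value $(1-a)\Delta$, so $\mu=1$ and no good is large. Round robin gives about $2-a$ to $a\Delta$ agents and about $1-a$ to the rest. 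Your first term is then $0$, while the middle term is $-a\ln(2-a)-(1-a)\ln(1-a)\approx 0.18$ at $a=3/4$.

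The concavity slack of $t\ln(1/t)$ when $|t-1/e|>\gamma$ is only of order $\gamma^2$, so case (A) does not follow. Case (C) needs ``first term plus middle term at most $1/e$'', which you have not shown. Case (B) rests on the same inversion: strict concavity of $\ln$ hurts rather than helps.

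The missing idea is a lower bound on the worst bundle $v(W_\Delta)$, measured against the benchmark $\frac{\Delta\mu}{\Delta-\ell}$ rather than $\mu$. The large-good agents pick last among small goods, so $v(S_i)\le v(W_\Delta)$ for every $i\le\ell$. Feeding this and EF1 into $\sum_i v(S_i)=\Delta\mu$ gives $\frac{(\Delta-\ell)v(W_\Delta)}{\Delta\mu}\ge\frac{|N_5|-\alpha}{\Delta-\alpha}$ (Lemma~\ref{lem:w-delta-lower-bound}). Here $N_5$ is the set of non-large-good agents below the benchmark, and $\alpha$ measures how far above $v(W_\Delta)$ they sit. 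The resulting loss $\frac{|N_5|-\alpha}{\Delta}\ln\frac{\Delta-\alpha}{|N_5|-\alpha}$ is at most $1/e$. Failures of A, B and C become constant slack through this inequality: many agents at or above the benchmark, $\alpha$ large, or $\frac{|N_5|-\alpha}{\Delta}$ bounded away from $1/e$.

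\textbf{Gap 2: case (E) needs a better upper bound on the optimum.} No slack against Lemma~\ref{lem:nash-upper-bound} exists in this case. Take $\ell\approx(1-1/e)\Delta$ huge large goods, $\Delta-1$ small goods each worth about $\frac{\Delta\mu}{\Delta-1}$, and negligible dust. Properties A--D hold, and $S_\ell$ consists only of dust, so E fails. Every other $v(S_i)$ is about $\mu$, so your proposed remedy (an average of $(1+c)\mu$ over $i>\ell$) is false here. Moreover, $\logNSW(W)$ falls short of the bound of Lemma~\ref{lem:nash-upper-bound} by an amount arbitrarily close to $1/e$.

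The slack must therefore come from a sharper upper bound on $\logNSW(\cal I)$. The small goods are essentially $\Delta$ indivisible items of value about $\mu$. An optimal allocation can give each of the roughly $\Delta/e$ non-large agents only two or three of them, not $e$ fractional units. This caps the product of their utilities (in units of $\mu$) at about $1.437^{\Delta}$ instead of $e^{\Delta/e}\approx 1.4447^{\Delta}$; this is Lemma~\ref{lem:property-e}.

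Relatedly, your (D) argument fails as stated, because Property B tolerates $\gamma\Delta\gg\gamma^2\Delta$ exceptions. The paper instead derives D from C and E: $v(S_i)\ge 10^{-4}\mu$ together with $v(S_i)<\gamma^2 v(g_i)$ forces $v(g_i)>\mu/\gamma$.
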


Using Fact \ref{fact:log-nash-welfare}, this proves that $W$ is strictly better than an $e^{1/e}$-approximation of the Nash welfare. En route, we also prove the following useful result that applies to all instances.

\begin{restatable}{theorem}{thmroundrobinnashlowerbound}\label{thm:round-robin-nash-lower-bound}
Let $\cal I$ be an instance with $\Delta$ identical agents that satisfies Assumption \ref{ass:non-zero}. Define $W$ as the round robin allocation. The following holds,
\begin{align*}
\logNSW(W) \ge  \frac{1}{\Delta}\left [ \sum_{g \in L} \ln{v(g)} + (\Delta - \ell) \ln\left ( \frac{\Delta\mu}{\Delta - \ell} \right ) \right ] - \frac{1}{e}.
\end{align*}
\end{restatable}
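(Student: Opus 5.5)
The plan is to split $\logNSW(W)$ into the contribution of agents $1,\dots,\ell$, who each receive a large good, and the contribution of agents $\ell+1,\dots,\Delta$, who receive only small goods. Throughout, write $\tau = \frac{v(G\setminus L)}{\Delta-\ell} = \frac{\Delta\mu}{\Delta-\ell}$, so that every small good has value at most $\tau$ and the small goods have total value $(\Delta-\ell)\tau$.

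For $i\le \ell$ we have $W_i \ni g_i$, so $\ln v(W_i)\ge \ln v(g_i)$. This accounts exactly for the term $\sum_{g\in L}\ln v(g)$. The theorem then reduces to
\begin{align*}
\sum_{i=\ell+1}^{\Delta} \ln v(S_i) \ge (\Delta-\ell)\ln\tau - \frac{\Delta}{e}.
\end{align*}
This is a statement purely about the round robin allocation of the small goods $G\setminus L$ with picking sequence $(\ell+1,\dots,\Delta,1,\dots,\ell)$, as in the proof of Observation \ref{obs:small-bundle-round-robin}. It is also exactly the shape of the bound in Lemma \ref{lem:nash-upper-bound}, with the extra slack $\Delta/e$. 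By scaling we may assume $\tau=1$.

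To prove the reduced inequality, I would imitate the argument of \cite{Barman2018FindingFA} for EF1 allocations, specialised to the round robin structure. For each $k>\ell$, let $h_k\le \tau$ be the first small good picked by agent $k$. The round robin structure gives, for each $i>\ell$:
\begin{itemize}
\item $v(S_i)\ge v(S_k)$ for every agent $k$ picking after $i$, including all of $1,\dots,\ell$, by Fact \ref{fact:round-robin-valuations} and Observation \ref{obs:small-bundle-round-robin};
\item $v(S_i)\ge v(S_k)-v(h_k)$ for every $k$ picking before $i$ (this is Fact \ref{fact:round-robin-ef1});
\item $v(S_i)\ge v(h_i)$.
\end{itemize}
Summing the first two over all $k$ gives
\begin{align*}
\Delta\, v(S_i) \ge (\Delta-\ell) - \sum_{\ell<k<i} v(h_k).
\end{align*}
Combined with $v(S_k)\ge v(h_k)$ and the monotonicity $v(h_{\ell+1})\ge v(h_{\ell+2})\ge\cdots$, this produces a lower bound on each $v(S_i)$ in terms of the sequence of first picks. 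The target then becomes
\begin{align*}
\sum_{i>\ell}\ln v(S_i)\ge -\frac{\Delta}{e}
\end{align*}
uniformly over all feasible first-pick sequences.

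The main obstacle is this last optimisation, because the naive version is too weak. Using only $v(h_k)\le 1$ gives $v(S_i)\ge (\Delta-i+1)/\Delta$, and then $\sum_k \ln(k/\Delta)$ can be as small as about $-\Delta$ rather than $-\Delta/e$. One must therefore exploit the tension that large first picks lower later agents' bounds but simultaneously raise the picking agents' own utilities, since $v(S_k)\ge v(h_k)$. I would parametrise by the fraction $t$ of agents whose bound is controlled by their own large first pick versus by the averaging bound. Optimising the worst case, I expect it to reduce to maximising a quantity of the form $t\ln(1/t)$, equivalently $x^{1/x}$, whose maximum $1/e$ is supplied by Fact \ref{fact:x-onebyx}, exactly as in the proof of Lemma \ref{lem:nash-upper-bound}. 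If this direct bookkeeping becomes messy, the fallback is to invoke the EF1 analysis of \cite{Barman2018FindingFA} (Lemma \ref{lem:barman-eonebye}) on the sub-instance of $\Delta-\ell$ agents and the small goods, where each agent's optimum is bounded by $\tau$, and then absorb the agents $1,\dots,\ell$ using the first bullet.
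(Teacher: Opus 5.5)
Your reduction is fine and matches the paper: bound $\ln v(W_i)\ge\ln v(g_i)$ for $i\le\ell$, then prove $\sum_{i>\ell}\ln v(S_i)\ge(\Delta-\ell)\ln\tau-\Delta/e$. Your three bullets are also correct. The gap is the last step, and it is not just unfinished; it is false. The per-agent bound
$$v(S_i)\ge\max\Bigl\{v(h_i),\,\bigl((\Delta-\ell)\tau-\textstyle\sum_{\ell<k<i}v(h_k)\bigr)/\Delta\Bigr\}$$
is too weak, and no optimisation over first-pick sequences rescues it. Take $\ell=0$, $\tau=1$ and first picks $v(h_k)=(1-1/\Delta)^{k-1}$ for $k=1,\dots,\Delta$. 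Fill the remaining value $\Delta(1-1/\Delta)^{\Delta}$ with many tiny goods. This is a legitimate instance with no large goods, since no good exceeds $v(G)/\Delta=1$. Because $\sum_{j<k}v(h_j)=\Delta\bigl(1-(1-1/\Delta)^{k-1}\bigr)$, both of your lower bounds equal $v(h_k)$ exactly. The best your relaxation can certify is therefore
\begin{align*}
\sum_{k=1}^{\Delta}\ln v(h_k)=\frac{\Delta(\Delta-1)}{2}\ln\left(1-\frac{1}{\Delta}\right)\le-\frac{\Delta-1}{2},
\end{align*}
which is below $-\Delta/e$ for every $\Delta\ge 4$ and is about $-\Delta/2$. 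Your two-level family, with first picks in $\{0,\tau\}$, does produce $t\ln(1/t)$, but it is simply not the worst case of your relaxation. What you lose is the value of each agent's later-round goods. The bound $v(S_k)\ge v(h_k)$ discards $v(S_k)-v(h_k)$, which is about $1/e$ per agent in this instance. The identity $\sum_k v(S_k)=(\Delta-\ell)\tau$ forces these residuals to be large in aggregate, but you use it only through the per-agent inequality, so that coupling is gone.

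The paper never bounds agents individually. It lower-bounds only the minimum utility $v(W_\Delta)$.
\begin{itemize}
\item Small-good agents with utility at least $\tau$ contribute $\ln\tau$ each.
\item Every other small-good agent is written as $v(W_i)=\alpha_i\tau+(1-\alpha_i)v(W_\Delta)$, and concavity gives $\ln v(W_i)\ge\alpha_i\ln\tau+(1-\alpha_i)\ln v(W_\Delta)$.
\item In the identity $\Delta\mu=\sum_k v(S_k)$, it uses $v(S_k)\le v(W_\Delta)$ for $k\le\ell$, and EF1 ($v(S_k)\le\tau+v(W_\Delta)$) only for the agents already above $\tau$.
\item For the remaining agents it substitutes their \emph{exact} convex-combination values, so the same quantities appear in both the objective and the constraint.
\end{itemize}
This yields $v(W_\Delta)/\tau\ge(|N_5|-\alpha)/(\Delta-\alpha)$. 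The total loss then collapses to $x\ln x\ge-1/e$ with $x=(|N_5|-\alpha)/\Delta$, which is also why the slack is $\Delta/e$ rather than $(\Delta-\ell)/e$.

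Your fallback does not avoid this work. Used as a black box, Lemma \ref{lem:barman-eonebye} compares to the optimum of the sub-instance, which can be strictly below $\tau$; bounding that optimum above by $\tau$ points the wrong way. Moreover, agents $1,\dots,\ell$ hold part of the small goods, so the sub-instance on agents $\ell+1,\dots,\Delta$ can have average strictly below $\tau$. Accounting for that leakage is precisely the computation above.
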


For some intuition of why the above result is useful, note that the expression on the right hand side within the square brackets is identical to our upper bound on the log Nash welfare of the instance (Lemma \ref{lem:nash-upper-bound}). The result shows that $W$ is always not just an $e^{1/e}$-approximation of the Nash welfare, it is an $e^{1/e}$-approximation of our upper bound on the Nash welfare. 

\subsection{Proof of Theorems \ref{thm:non-well-behaved-instance-improvement} and \ref{thm:round-robin-nash-lower-bound}}
Both of these results follow (for the most part) from a closer examination of \cite[Lemma 1]{Barman2018FindingFA} using the lens of large and small goods. The one exception is \ref{prop:e}, which requires an entirely different approach. We first partition the agents using these large and small goods. 

Let $N_L$ be the set of agents who receive a large good in $W$; $N_L = \{1, 2, \dots, \ell\}$. Let $N_1 \subseteq N_L$ be the set of agents $i$ such that $v(S_i) < \gamma^2 v(g_i)$; this is the set of agents where the large good is a significant fraction of the value of the bundle. Similarly, define $N_2 \subseteq N_L$ as the set of agents $i$ such that $v(S_i) \ge \gamma^2 v(g_i)$; this is the set of agents which violate \ref{prop:c}. 

Let $N_S$ be the set of agents who do not receive a large good in the allocation $W$. Define $N_3 \subseteq N_S$ as the set of agents $i$ who receive a bundle with value $v(W_i) > (1 + \gamma^2)\frac{\Delta \mu}{\Delta - \ell}$. Define $N_4 \subseteq N_S$ as the set of agents $i$ who receive a bundle with value in the interval $v(W_i) \in \left [\frac{\Delta \mu}{\Delta - \ell}, (1 + \gamma^2)\frac{\Delta \mu}{\Delta - \ell} \right ]$. Let $N_5$ denote all the remaining agents in $N_S$. The sets $N_1$, $N_2$, $N_3$ and $N_4$ are defined so that the utility of any agent in these groups has a natural lower bound. 

Note that if $N_5$ is empty, then all the $(\Delta - \ell)$ agents who do not receive a large good in $W$ receive a utility of at least $\frac{\Delta \mu}{\Delta - \ell}$. Therefore,
\begin{align*}
\logNSW(W) \ge \frac{1}{\Delta}\left [\sum_{g \in L}\ln{v(g)} + (\Delta - \ell)\ln{\frac{\Delta\mu}{\Delta - \ell}} \right ].
\end{align*}

The right hand side is at least $\logNSW(\cal I)$ (from Lemma \ref{lem:nash-upper-bound}). Therefore, it proves Theorem \ref{thm:non-well-behaved-instance-improvement} with $c = 1/e$ and it trivially proves Theorem \ref{thm:round-robin-nash-lower-bound}. Therefore, from here on, we assume $N_5$ is non-empty.

Using the partition of the agents and the definition of each part, we can lower bound the log Nash welfare of the allocation $W$. In this lower bound, for each agent $i \in N_5$, we use $\alpha_i \in [0, 1]$ to denote a constant such that $v(W_i) = \alpha_i \frac{\Delta \mu}{\Delta - \ell} + (1 - \alpha_i)v(W_{\Delta})$. For each agent $i \in N_5$, this value $\alpha_i$ is well-defined by the definition of agents in $N_5$ having utility at most $\frac{\Delta\mu}{\Delta - \ell}$ and at least $v(W_{\Delta})$. We define $\alpha = \sum_{i \in N_5} \alpha_i$. 

We lower bound the log Nash welfare of $W$ as follows:
\begin{align*}
\logNSW(W) &\ge \frac{1}{\Delta} \left [\sum_{i \in N_1}\ln{v(g_i)} + \sum_{i \in N_2}\ln(v(g_i)(1 + \gamma^2)) + \sum_{i \in N_3}\ln\left ((1 + \gamma^2)\frac{\Delta \mu}{\Delta - \ell} \right ) 
\right .\\
&\qquad \qquad\left .  + \sum_{i \in N_4} \ln\frac{\Delta \mu}{\Delta -\ell} + \sum_{i \in N_5} \ln v(W_i) \right ].
\end{align*}

This simplifies to the following:
\begin{align*}
\logNSW(W) \ge \frac{1}{\Delta} \left [\sum_{g \in L} \ln v(g) + (|N_2| + |N_3|)\ln(1 + \gamma^2) + (|N_3| + |N_4|)\ln\frac{\Delta \mu}{\Delta - \ell} + \sum_{i \in N_5}\ln v(W_{i})\right ].
\end{align*}

Using the concavity of the log function, we have $\ln v(W_i) \ge \alpha_i \ln\frac{\Delta\mu}{\Delta - \ell} + (1 - \alpha_i)\ln v(W_{\Delta})$ for each $i \in N_5$. Using this, we can simplify the above inequality further to
\begin{align*}
\logNSW(W) \ge \frac{1}{\Delta} \left [\sum_{g \in L} \ln v(g) + (|N_2| + |N_3|)\ln(1 + \gamma^2) + (|N_3| + |N_4| + \alpha)\ln\frac{\Delta \mu}{\Delta - \ell} + (|N_5| - \alpha)\ln v(W_{\Delta})\right ].
\end{align*}

To relate this expression to the statements of Theorems \ref{thm:non-well-behaved-instance-improvement} and \ref{thm:round-robin-nash-lower-bound}, we use $|N_3| + |N_4| + |N_5| = |N_S| = \Delta - \ell$ to re-arrange the above inequality.

\begin{align}
\logNSW(W) &\ge \frac{1}{\Delta} \left [\sum_{g \in L} \ln v(g) + (\Delta - \ell)\ln\frac{\Delta \mu}{\Delta - \ell} \right ]+ \frac{(|N_2| + |N_3|)}{\Delta}\ln(1 + \gamma^2) \notag\\
& \qquad \qquad + \frac{(|N_5| - \alpha)}{\Delta}\ln v(W_{\Delta}) - \frac{(|N_5| - \alpha)}{\Delta}\ln\frac{\Delta \mu}{\Delta - \ell} \notag\\
&= \frac{1}{\Delta} \left [\sum_{g \in L} \ln v(g) + (\Delta - \ell)\ln\frac{\Delta \mu}{\Delta - \ell} \right ] + \frac{(|N_2| + |N_3|)}{\Delta}\ln(1 + \gamma^2) + \frac{(|N_5| - \alpha)}{\Delta}\ln \left (\frac{v(W_{\Delta})(\Delta-\ell)}{\Delta \mu} \right) \label{eq:w-lognash-lowerbound}
\end{align}

The key to proving our results is carefully lower bounding the term $\frac{(|N_5| - \alpha)}{\Delta}\ln \left (\frac{v(W_\Delta)(\Delta-\ell)}{\Delta \mu} \right)$. We do this in the following Lemma.

\begin{lemma}\label{lem:w-delta-lower-bound}
There is a positive constant $c_1 > 0$ such that
\begin{align*}
\frac{(|N_5| - \alpha)}{\Delta}\ln \left (\frac{v(W_\Delta)(\Delta-\ell)}{\Delta \mu} \right) \ge -\frac{1}{e} + c_1,
\end{align*}
if any of the following conditions hold:
\begin{enumerate}[(i)]
\item $\frac{|N_5| - \alpha}{\Delta} \notin \left [\frac{1}e - 10^{-4}\gamma, \frac{1}e + 10^{-4}\gamma \right]$,
\item $\alpha \ge 10^{-4} \gamma^4 \Delta$,
\item $|N_4| \ge 10^{-4} \gamma^2 \Delta$,
\item $v(W_{\Delta}) \ge (1 + 10^{-4}\gamma^2)\mu$.
\end{enumerate}

Additionally, irrespective of whether these conditions are satisfied, the above inequality holds with $c_1 = 0$.
\end{lemma}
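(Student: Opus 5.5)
The plan is to reduce everything to the one-variable inequality $t\ln t \ge -\frac1e$, with equality only at $t = \frac1e$. Write $t = \frac{|N_5|-\alpha}{\Delta}$ and $r = \frac{v(W_\Delta)(\Delta-\ell)}{\Delta\mu}$, so the quantity to bound is $t\ln r$. Note $0 \le t \le \frac{|N_5|}{\Delta} \le \frac{\Delta-\ell}{\Delta}$. If $r \ge 1$, then $t\ln r \ge 0$ and there is nothing to prove (any $c_1 \le \frac1e$ works). So assume $r<1$, i.e. $v(W_\Delta) < \frac{\Delta\mu}{\Delta-\ell}$. The central step is a \emph{key inequality}
\[ r \ge t, \quad\text{equivalently}\quad v(W_\Delta) \ge \frac{(|N_5|-\alpha)\,\mu}{\Delta-\ell}. \]
Given this, $t\ln r \ge t\ln t \ge -\frac1e$, which is the $c_1=0$ statement.

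To prove the key inequality, I would count the total small value $\Delta\mu = \sum_{i\in[\Delta]} v(S_i)$ and bound each part from above in terms of $v(W_\Delta) = v(S_\Delta)$, noting that agent $\Delta\notin N_L$ since $\ell\le \Delta-1$.
\begin{itemize}
\item For $i \in N_L$, Observation \ref{obs:small-bundle-round-robin} gives $v(S_i) \le v(S_\Delta)$.
\item For $i\in N_5$, by definition $v(W_i) = \alpha_i\frac{\Delta\mu}{\Delta-\ell} + (1-\alpha_i)v(W_\Delta)$.
\item For $i \in N_3\cup N_4$, I would use the round-robin picking structure on the small goods (picking order $\ell+1,\dots,\Delta,1,\dots,\ell$). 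Agent $\Delta$'s $k$-th pick dominates every other agent's $(k+1)$-th pick, so $v(S_j) \le v(S_\Delta) + a_j$, where $a_j \le \frac{\Delta\mu}{\Delta-\ell}$ is $j$'s first small pick.
\end{itemize}
Summing these and telescoping the first-round picks should give an inequality of the form $\Delta\mu \le (\text{coefficient})\cdot v(W_\Delta) + (\text{stuff})\cdot\frac{\Delta\mu}{\Delta-\ell}$. Rearranging it yields $v(W_\Delta)\ge \frac{(|N_5|-\alpha)\mu}{\Delta-\ell}$. The version I would aim for shows the $N_3\cup N_4$ and $\alpha$ contributions only help. This accounting is the step I expect to be the main obstacle: the crude EF1 bound on $N_3\cup N_4$ loses too much, so the first-round goods must be charged carefully against agent $\Delta$'s later picks. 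If it resists, I would fall back on showing directly that $v(W_\Delta) \ge \mu$ (up to a factor $1-O(\gamma)$ absorbed elsewhere). Since $t \le \frac{\Delta-\ell}{\Delta}$, that bound also gives $r \ge \frac{\Delta-\ell}{\Delta} \ge t$.

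For the strict improvement, I would handle conditions (i)--(iv) separately, using that $t\ln t$ has a unique minimum at $t=\frac1e$ with strictly positive curvature there. Under (i), $|t-\frac1e| \ge 10^{-4}\gamma$, so $t\ln r \ge t\ln t \ge -\frac1e + \Omega(\gamma^2)$. Under (ii), (iii) or (iv), I would re-run the key inequality while tracking the slack, aiming for $r \ge t(1+\kappa)$ with $\kappa = \Omega(\gamma^4)$.
\begin{itemize}
\item Under (ii), the $\alpha$ term enters the accounting with coefficient $\frac{\Delta\mu}{\Delta-\ell}$ rather than $v(W_\Delta)$, which gains a constant fraction.
\item Under (iii), the $N_4$ agents have value at least $\frac{\Delta\mu}{\Delta-\ell}$, which forces extra slack.
\item Under (iv), $v(W_\Delta)\ge (1+10^{-4}\gamma^2)\mu$ directly gives $r \ge \frac{\Delta-\ell}{\Delta}(1+10^{-4}\gamma^2) \ge t(1+10^{-4}\gamma^2)$.
\end{itemize}
Then $t\ln r \ge t\ln t + t\ln(1+\kappa)$. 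If $t \ge \frac{1}{2e}$, this gives a gain of $\Omega(\kappa)$. If $t<\frac1{2e}$, case (i) already applies. Taking $c_1$ to be the minimum of these constants (all positive, since $\gamma$ is fixed) completes the plan.
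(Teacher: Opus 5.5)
Your skeleton is the paper's: the key inequality $r\ge t$ from the accounting $\Delta\mu=\sum_i v(S_i)$ with exactly your three bounds, then $t\ln t\ge -\frac1e$, then extra slack under each condition. The problem is that you never close the central step, and your reason for doubting it is wrong. The crude bound does not lose too much. Your third bullet with $a_j\le\frac{\Delta\mu}{\Delta-\ell}$ \emph{is} the EF1 bound, and simply summing closes the argument. Write $A=\frac{\Delta\mu}{\Delta-\ell}$ and $w=v(W_\Delta)$; then
\[
\Delta\mu \le \ell w + (|N_3|+|N_4|)(A+w) + \alpha A + (|N_5|-\alpha)w = (\Delta-\alpha)w + (\Delta-\ell-|N_5|+\alpha)A .
\]
Since $(\Delta-\ell)A=\Delta\mu$, this reads $(\Delta-\alpha)w\ge(|N_5|-\alpha)A$, i.e.\ $r\ge\frac{|N_5|-\alpha}{\Delta-\alpha}=t\cdot\frac{\Delta}{\Delta-\alpha}$. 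No telescoping is needed. This single inequality gives both the $c_1=0$ claim and condition (ii), with gain factor $\frac{\Delta}{\Delta-\alpha}\ge 1+10^{-4}\gamma^4$.

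Your fallback cannot work. The $c_1=0$ statement is exact and must hold for every instance, since it is what yields Theorem~\ref{thm:round-robin-nash-lower-bound}. So there is nowhere to absorb a $1-O(\gamma)$ loss. Moreover, $v(W_\Delta)\ge\mu$ is false in general. Take $\Delta=2$ and values $1,\frac12,\frac12$: then $\ell=0$, $\mu=1$ and $v(W_2)=\frac12$. Here $r=t=\frac12$, so the key inequality is tight.

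For condition (iii) you invoke the wrong property of $N_4$. The lower bound $v(W_i)\ge A$ is irrelevant, because the accounting needs \emph{upper} bounds on $v(S_i)$. The slack comes from the defining upper bound $v(W_i)\le(1+\gamma^2)A$. It replaces $A+w$ for $i\in N_4$ and yields $(\Delta-\alpha-|N_4|)w\ge(|N_5|-\alpha-\gamma^2|N_4|)A$. To see that this is a genuine improvement, you must compare $\gamma^2 A$ with $w$. This is where the failure of (i), i.e.\ $t\ge\frac13$, enters. It gives $r\ge t\cdot\frac{1-3\gamma^2|N_4|/\Delta}{1-|N_4|/\Delta}\ge t\,e^{(1-6\gamma^2)|N_4|/\Delta}$, a gain of order $|N_4|/\Delta\ge 10^{-4}\gamma^2$.

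Your treatments of (i) and (iv) are fine. In (iv) you use $\frac{\Delta-\ell}{\Delta}\ge t$ directly, where the paper applies $x\ln x\ge-\frac1e$ at $x=\frac{\Delta-\ell}{\Delta}$; the two are equivalent.
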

\begin{proof}
To prove this result, we lower bound $v(W_{\Delta})$ using the following relation:

\begin{align*}
\mu \Delta = v(G \setminus L) = \sum_{i \in [\Delta]} v(S_i)
\end{align*}

We upper bound $\sum_{i \in [\Delta]} v(S_i)$ using the grouping of the agents we constructed earlier. For all the agents $i \in N_L$, $v(S_i) \le v(W_{\Delta})$ from Observation \ref{obs:small-bundle-round-robin}. For all the agents $i \in N_3 \cup N_4$, $v(S_i) \le \frac{\Delta \mu}{\Delta - \ell} + v(W_{\Delta})$ using Fact \ref{fact:round-robin-ef1}. For the agents $i \in N_4$, we additionally have $v(S_i) \le (1 + \gamma^2)\frac{\Delta \mu}{\Delta - \ell}$. For the remaining agents in $N_5$, we have $\sum_{i \in N_5} v(S_i) = \sum_{i \in N_5} v(W_i) = \alpha\frac{\Delta \mu}{\Delta - \ell} + (|N_5| - \alpha)v(W_{\Delta})$. Combining all of these inequalities, we get:

\begin{align}
\mu \Delta &= \sum_{i \in [\Delta]} v(S_i) \notag \\
&= \sum_{i \in N_L} v(S_i) + \sum_{i \in N_3} v(S_i) + \sum_{i \in N_4}v(S_i) + \sum_{i \in N_5} v(S_i) \notag\\
&\le |N_L| v(W_{\Delta}) + |N_3| \left ( \frac{\Delta \mu}{\Delta - \ell} + v(W_{\Delta})\right ) + |N_4| \min\left \{\left ( \frac{\Delta \mu}{\Delta - \ell} + v(W_{\Delta})\right ),  (1 + \gamma^2)\frac{\Delta \mu}{\Delta - \ell} \right \} \notag\\ 
&\qquad \qquad + \alpha\frac{\Delta \mu}{\Delta - \ell} + (|N_5| - \alpha)v(W_{\Delta}) \label{eq:w-delta-lower-bound-1}.
\end{align}

Depending on how we upper bound the $\min\{\}$ function, we get two different lower bounds on $v(W_{\Delta})$. We first use $\min\left \{\left ( \frac{\Delta \mu}{\Delta - \ell} + v(W_{\Delta})\right ),  (1 + \gamma^2)\frac{\Delta \mu}{\Delta - \ell} \right \} \le \frac{\Delta \mu}{\Delta - \ell} + v(W_{\Delta})$. This simplifies \eqref{eq:w-delta-lower-bound-1} to:
\begin{align*}
\mu \Delta \le |N_L| v(W_{\Delta}) + (|N_3| + |N_4|)\left ( \frac{\Delta \mu}{\Delta - \ell} + v(W_{\Delta})\right ) + \alpha\frac{\Delta \mu}{\Delta - \ell} + (|N_5| - \alpha)v(W_{\Delta})
\end{align*}

Re-arranging terms and using the facts that $\Delta = |N_L| + |N_3| + |N_4| + |N_5|$ and $\Delta - \ell = |N_S| = |N_3| + |N_4| + |N_5|$, we get
\begin{align*}
\frac{v(W_{\Delta})(\Delta - \ell)}{\Delta \mu} \ge \frac{|N_5| - \alpha}{\Delta - \alpha}.
\end{align*}

This implies
\begin{align}
\frac{(|N_5| - \alpha)}{\Delta}\ln \left (\frac{v(W_\Delta)(\Delta-\ell)}{\Delta \mu} \right) &\ge \frac{(|N_5| - \alpha)}{\Delta}\ln \left ( \frac{|N_5| - \alpha}{\Delta - \alpha} \right ) \notag \\
& = \frac{(|N_5| - \alpha)}{\Delta}\ln \left ( \frac{|N_5| - \alpha}{\Delta} \right ) + \frac{(|N_5| - \alpha)}{\Delta} \ln \left (\frac{\Delta}{\Delta - \alpha} \right ). \label{eq:w-delta-lower-bound-2}
\end{align}

The first term is minimized when $|N_5| - \alpha = \frac{\Delta}{e}$ and achieves a minimum value of $-\frac{1}{e}$ (Fact \ref{fact:x-onebyx}). The second term is non-negative. This shows that
\begin{align*}
\frac{(|N_5| - \alpha)}{\Delta}\ln \left (\frac{v(W_{\Delta})(\Delta-\ell)}{\Delta \mu} \right) &\ge -\frac{1}{e}.
\end{align*}

This proves the second part of the lemma. Additionally, when $\frac{|N_5| - \alpha}{\Delta}$ is not in the interval $\left [\frac{1}e - 10^{-4}\gamma, \frac{1}e + 10^{-4}\gamma \right]$, there is some constant $c'_1$ such that 
\begin{align*}
\frac{(|N_5| - \alpha)}{\Delta}\ln \left (\frac{v(W_{\Delta})(\Delta-\ell)}{\Delta \mu} \right) &\ge -\frac1e + c'_1 + \frac{(|N_5| - \alpha)}{\Delta} \ln \left (\frac{\Delta}{\Delta - \alpha} \right ) \\
&\ge -\frac1e + c'_1.
\end{align*}

This shows that if condition (i) is satisfied, then we achieve our desired result. So from here on we assume condition (i) is not satisfied. Consider the case where condition (ii) is satisfied. In this case, \eqref{eq:w-delta-lower-bound-2} reduces to
\begin{align*}
\frac{(|N_5| - \alpha)}{\Delta}\ln \left (\frac{v(W_{\Delta})(\Delta-\ell)}{\Delta \mu} \right) &\ge -\frac{1}{e} + \frac{(|N_5| - \alpha)}{\Delta} \ln \left (\frac{\Delta}{\Delta - \alpha} \right ) \\
&\ge -\frac{1}{e} + \frac13 \ln \left (\frac{1}{1 - 10^{-4}\gamma^4} \right ).
\end{align*}
The final inequality uses our assumptions that condition (i) is not satisfied but condition (ii) is satisfied.
Therefore, if condition (ii) is satisfied, then for $c'_2 = \frac13 \ln \left (\frac{1}{1 - 10^{-4}\gamma^4} \right )$, we have $\frac{(|N_5| - \alpha)}{\Delta}\ln \left (\frac{v(W_{\Delta})(\Delta-\ell)}{\Delta \mu} \right) \ge -\frac1e + c'_2$. This is our desired result. So from here on we assume condition (ii) is not satisfied.

To prove our result for condition (iii), we return to \eqref{eq:w-delta-lower-bound-1} and upper bound $\min\left \{\left ( \frac{\Delta \mu}{\Delta - \ell} + v(W_{\Delta})\right ),  (1 + \gamma^2)\frac{\Delta \mu}{\Delta - \ell} \right \}$ using $(1 + \gamma^2)\frac{\Delta \mu}{\Delta - \ell}$. Using a similar analysis, we get
\begin{align*}
\frac{v(W_{\Delta})(\Delta - \ell)}{\mu \Delta} \ge \frac{|N_5| - \alpha - \gamma^2 |N_4|}{\Delta - \alpha - |N_4|} &\ge \frac{|N_5| - \alpha - \gamma^2 |N_4|}{\Delta- |N_4|}.
\end{align*}

The right hand side of the above inequality can be simplified as follows using basic arithmetic operations and using $|N_5| - \alpha \ge \frac{\Delta}{3}$ (from condition (i) not being satisfied):
\begin{align*}
\frac{v(W_{\Delta})(\Delta - \ell)}{\mu \Delta} &\ge \frac{|N_5| - \alpha - \gamma^2 |N_4|}{\Delta- |N_4|} \\
&\ge \frac{|N_5| - \alpha}{\Delta} \left [ \frac{1 - \frac{\gamma^2 |N_4|}{|N_5| - \alpha}}{1 - \frac{|N_4|}{\Delta}} \right ] \\
&\ge \frac{|N_5| - \alpha}{\Delta} \left [ \frac{1 - \frac{3\gamma^2 |N_4|}{\Delta}}{1 - \frac{|N_4|}{\Delta}} \right ] \\
&\ge \left (\frac{|N_5| - \alpha}{\Delta} \right ) e^{\frac{|N_4|}{\Delta}(1 - 6\gamma^2)}.
\end{align*}

This implies
\begin{align*}
\frac{(|N_5| - \alpha)}{\Delta}\ln \left (\frac{v(W_{\Delta})(\Delta-\ell)}{\Delta \mu} \right) &\ge \frac{(|N_5| - \alpha)}{\Delta} \ln \left ( \frac{|N_5| - \alpha}{\Delta} \right ) + \left (\frac{|N_5| - \alpha}{\Delta} \right )\left (\frac{|N_4|}{\Delta}(1 - 6\gamma^2) \right ) \\
&\ge -\frac1e + \frac13 \left (\frac{|N_4|}{\Delta}(1 - 6\gamma^2) \right ).
\end{align*}

Therefore, if $|N_4| \ge 10^{-4}\gamma^2 \Delta$ (condition (iii) is satisfied), then there is some positive constant $c'_3$ such that $\frac{(|N_5| - \alpha)}{\Delta}\ln \left (\frac{v(W_{\Delta})(\Delta-\ell)}{\Delta \mu} \right) \ge - \frac1e + c'_3$.

Finally, to prove our result for condition (iv), we directly plug in the lower bound on $v(W_{\Delta})$. This gives us:
\begin{align*}
\frac{(|N_5| - \alpha)}{\Delta}\ln \left (\frac{v(W_{\Delta})(\Delta-\ell)}{\Delta \mu} \right) &\ge \frac{(|N_5| - \alpha)}{\Delta}\ln \left (\frac{(1 + 10^{-4}\gamma^2)(\Delta-\ell)}{\Delta} \right) \\
&\ge \frac{(|N_5| - \alpha)}{\Delta}\ln \left (\frac{(\Delta-\ell)}{\Delta} \right ) + \frac{(|N_5| - \alpha)}{\Delta}\ln(1 + 10^{-4}\gamma^2).
\end{align*}

To simplify this further, we use the fact that $|N_5| - \alpha$ is lowerbounded by $\frac{\Delta}3$ (from condition (i) not being satisfied) and upper bounded by $\Delta - \ell$. This gives us
\begin{align*}
\frac{(|N_5| - \alpha)}{\Delta}\ln \left (\frac{v(W_{\Delta})(\Delta-\ell)}{\Delta \mu} \right) &\ge \frac{(\Delta - \ell)}{\Delta}\ln \left (\frac{(\Delta-\ell)}{\Delta} \right )+ \frac{1}{3}\ln(1 + 10^{-4}\gamma^2) \\
&\ge -\frac1e + \frac{1}{3}\ln(1 + 10^{-4}\gamma^2).
\end{align*}

This shows that if condition (iv) is satisfied, we prove our desired result with constant $c'_4 = \frac{1}{3}\ln(1 + 10^{-4}\gamma^2)$. In conclusion, the lemma is proved with constant $c_1 = \min\{c'_1, c'_2, c'_3, c'_4\}$.
\end{proof}

We can use this lowerbound to prove approximation guarantees for the Nash welfare of $W$. 

\begin{lemma}\label{lem:nash-welfare-improvement-conditions}
There is some positive constant $c_2 > 0$ such that 
\begin{align*}
\logNSW(W) \ge \frac{1}{\Delta} \left [\sum_{g \in L} \ln v(g) + (\Delta - \ell)\ln\frac{\Delta \mu}{\Delta - \ell} \right ] -\frac{1}{e} + c_2, 
\end{align*}
if any of the following conditions hold:
\begin{enumerate}[(a)]
\item $|N_2| \ge 10^{-4}\gamma^2 \Delta$,
\item $|N_3| \ge 10^{-4}\gamma^2 \Delta$,
\item $\frac{|N_5| - \alpha}{\Delta} \notin \left [\frac{1}e - 10^{-4}\gamma, \frac{1}e + 10^{-4}\gamma \right]$,
\item $\alpha \ge 10^{-4} \gamma^4 \Delta$,
\item $|N_4| \ge 10^{-4} \gamma^2 \Delta$,
\item $v(W_{\Delta}) \ge (1 + 10^{-4}\gamma^2)\mu$.
\end{enumerate}
Additionally, irrespective of whether these conditions are satisfied, the above inequality holds with $c_2 = 0$.
\end{lemma}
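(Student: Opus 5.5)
The plan is to combine inequality \eqref{eq:w-lognash-lowerbound} with Lemma \ref{lem:w-delta-lower-bound}. Inequality \eqref{eq:w-lognash-lowerbound} writes $\logNSW(W)$ as the bracketed benchmark $\frac{1}{\Delta}\left[\sum_{g\in L}\ln v(g) + (\Delta-\ell)\ln\frac{\Delta\mu}{\Delta-\ell}\right]$ plus two correction terms. The first is $\frac{|N_2|+|N_3|}{\Delta}\ln(1+\gamma^2)$. The second is $\frac{|N_5|-\alpha}{\Delta}\ln\left(\frac{v(W_\Delta)(\Delta-\ell)}{\Delta\mu}\right)$. It therefore suffices to show that the sum of these two terms is at least $-\frac1e$ in general, and at least $-\frac1e + c_2$ under any of (a)--(f).

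For the unconditional statement with $c_2=0$, I use two facts. The first correction term is nonnegative, since $\ln(1+\gamma^2)>0$ and the set sizes are nonnegative. The second term is at least $-\frac1e$ by the unconditional part of Lemma \ref{lem:w-delta-lower-bound}. Adding the two gives the claim.

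Conditions (c)--(f) are exactly conditions (i)--(iv) of Lemma \ref{lem:w-delta-lower-bound}. Under any of them, the second term is at least $-\frac1e + c_1$, and the first term is still nonnegative, so $c_2 = c_1$ works.

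Under (a) or (b), the second term is only bounded by $-\frac1e$ via the unconditional part of Lemma \ref{lem:w-delta-lower-bound}. The first term, however, is now at least $\frac{10^{-4}\gamma^2\Delta}{\Delta}\ln(1+\gamma^2) = 10^{-4}\gamma^2\ln(1+\gamma^2)$, which is a positive constant. Setting $c_2 = \min\{c_1, 10^{-4}\gamma^2\ln(1+\gamma^2)\}$ then covers all six cases at once.

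There is no real obstacle, since all the work was done in Lemma \ref{lem:w-delta-lower-bound}. The only point to check is that \eqref{eq:w-lognash-lowerbound} was derived under the standing assumption that $N_5$ is nonempty. When $N_5$ is empty, the earlier argument already gives $\logNSW(W)$ at least the bracketed benchmark, which exceeds the claim with $c_2 = 1/e$. Equivalently, the second correction term is zero in that case and the same bounds apply.
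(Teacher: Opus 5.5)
Your proposal is correct and matches the paper's proof. For the $c_2=0$ claim and for (a)--(b), you combine the unconditional $-\frac1e$ bound from Lemma~\ref{lem:w-delta-lower-bound} with the nonnegative (or, under (a)/(b), at least $10^{-4}\gamma^2\ln(1+\gamma^2)$) first correction term in \eqref{eq:w-lognash-lowerbound}; for (c)--(f), you use the conditional $-\frac1e + c_1$ bound. Your empty-$N_5$ remark is a harmless extra check of a case the paper dispatches just before the lemma.
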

\begin{proof}
We return to \eqref{eq:w-lognash-lowerbound}. Lemma \ref{lem:w-delta-lower-bound} shows that $\frac{(|N_5| - \alpha)}{\Delta}\ln \left (\frac{v(W_{\Delta})(\Delta-\ell)}{\Delta \mu} \right) \ge -\frac{1}{e}$. Using this lower bound, \eqref{eq:w-lognash-lowerbound} simplifies to
\begin{align*}
\logNSW(W) &\ge \frac{1}{\Delta} \left [\sum_{g \in L} \ln v(g) + (\Delta - \ell)\ln\frac{\Delta \mu}{\Delta - \ell} \right ] + \frac{(|N_2| + |N_3|)}{\Delta}\ln(1 + \gamma^2) -\frac1e.
\end{align*}
The above inequality immediately proves the second part of the lemma. Additionally, if either (a) or (b) holds, the above inequality further simplifies to
\begin{align*}
\logNSW(W) &\ge \frac{1}{\Delta} \left [\sum_{g \in L} \ln v(g) + (\Delta - \ell)\ln\frac{\Delta \mu}{\Delta - \ell} \right ] + 10^{-4}\gamma^2\ln(1 + \gamma^2) -\frac1e.
\end{align*}

This proves our desired result for the conditions (a) and (b). If any of the remaining four conditions hold, Lemma \ref{lem:w-delta-lower-bound} shows that $\frac{(|N_5| - \alpha)}{\Delta}\ln \left (\frac{v(W_{\Delta})(\Delta-\ell)}{\Delta \mu} \right) \ge -\frac{1}{e} + c_1$. Plugging this into \eqref{eq:w-lognash-lowerbound} gives us
\begin{align*}
\logNSW(W) &\ge \frac{1}{\Delta} \left [\sum_{g \in L} \ln v(g) + (\Delta - \ell)\ln\frac{\Delta \mu}{\Delta - \ell} \right ] - \frac1e + c_1.
\end{align*}

This proves our desired result for the conditions (c), (d), (e) and (f).
\end{proof}

The second part of the above lemma implies Theorem \ref{thm:round-robin-nash-lower-bound}. To prove Theorem \ref{thm:non-well-behaved-instance-improvement}, we show an improved approximation factor when any of the properties are violated.

\begin{lemma}\label{lem:property-a}
If \ref{prop:a} is violated, the following holds for some positive constant $c_2 > 0$: 
\begin{align*}
\logNSW(W) \ge \logNSW(\cal I) -\frac{1}{e} + c_2. 
\end{align*}
\end{lemma}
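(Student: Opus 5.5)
The plan is to compare the upper bound on $\logNSW(\cal I)$ from Lemma \ref{lem:nash-upper-bound} (first form) with the improved lower bound on $\logNSW(W)$ from Lemma \ref{lem:nash-welfare-improvement-conditions}. Both are expressed through the same bracket $\frac{1}{\Delta}\bigl[\sum_{g \in L}\ln v(g) + (\Delta-\ell)\ln\frac{\Delta\mu}{\Delta-\ell}\bigr]$. Lemma \ref{lem:nash-upper-bound} gives $\logNSW(\cal I)$ at most this bracket, so it suffices to show that a violation of \ref{prop:a} forces one of the conditions (a)--(f) of Lemma \ref{lem:nash-welfare-improvement-conditions}. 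That lemma then yields the claim with $c_2$ equal to its constant. If $N_5$ is empty, we already showed $\logNSW(W)\ge\logNSW(\cal I)$, and the claim holds trivially, so assume $N_5 \neq \emptyset$.

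Write $x = 1 - \ell/\Delta = |N_S|/\Delta$ and $y = (|N_5| - \alpha)/\Delta$. Since $\alpha \ge 0$ and $N_5 \subseteq N_S$, we have $y \le |N_5|/\Delta \le x$.

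\emph{Case 1: $\ell/\Delta > 1 - 1/e + \gamma$.} Then $x < 1/e - \gamma$, hence $y < 1/e - \gamma < 1/e - 10^{-4}\gamma$. So condition (c) of Lemma \ref{lem:nash-welfare-improvement-conditions} holds and we are done.

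\emph{Case 2: $\ell/\Delta < 1 - 1/e - \gamma$,} so $x > 1/e + \gamma$.
\begin{itemize}
\item If $y \notin [1/e - 10^{-4}\gamma, 1/e + 10^{-4}\gamma]$, condition (c) holds.
\item If $\alpha \ge 10^{-4}\gamma^4\Delta$, condition (d) holds.
\item Otherwise $y \le 1/e + 10^{-4}\gamma$ and $\alpha < 10^{-4}\gamma^4\Delta$. Using $|N_3| + |N_4| + |N_5| = \Delta - \ell$, we get
\begin{align*}
\frac{|N_3| + |N_4|}{\Delta} = x - \frac{|N_5|}{\Delta} = x - y - \frac{\alpha}{\Delta} > \gamma - 10^{-4}\gamma - 10^{-4}\gamma^4 \ge \frac{\gamma}{2}.
\end{align*}
Hence one of $|N_3|, |N_4|$ is at least $\gamma\Delta/4 \ge 10^{-4}\gamma^2\Delta$. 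This is condition (b) or (e), respectively.
\end{itemize}

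In every case some condition (b)--(e) holds. Combining Lemma \ref{lem:nash-welfare-improvement-conditions} with Lemma \ref{lem:nash-upper-bound} gives $\logNSW(W) \ge \logNSW(\cal I) - 1/e + c_2$.

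The only step needing care is the last subcase of Case 2. There we must see that a large mass of non-large agents, combined with $N_5$ being pinned near $1/e$ and $\alpha$ being tiny, forces $N_3 \cup N_4$ to be large; the counting identity $|N_S| = |N_3| + |N_4| + |N_5|$ handles this.
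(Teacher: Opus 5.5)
Your proof is correct and follows the paper's argument: both use the identity $\ell = \Delta - |N_3| - |N_4| - |N_5|$ to show that a violation of \ref{prop:a} forces one of the conditions of Lemma \ref{lem:nash-welfare-improvement-conditions}, and then combine that lemma with Lemma \ref{lem:nash-upper-bound}. The paper argues by contrapositive: if $|N_3|$, $|N_4|$ and $\alpha$ are small and $(|N_5|-\alpha)/\Delta$ is near $1/e$, then $\ell/\Delta$ lies in the allowed interval. You instead split by the direction of the violation, which is equivalent and has your thresholds match the lemma's conditions exactly.
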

\begin{proof}
Note that $\ell = |N_L| = \Delta - |N_3| - |N_4| - |N_5|$. If \ref{prop:a} is violated, one of the following conditions must hold:
\begin{enumerate}[(a)]
\item $|N_3| \ge 0.01\gamma \Delta$,
\item $|N_4| \ge 0.01\gamma \Delta$,
\item $\frac{|N_5| - \alpha}{\Delta} \notin \left [\frac{1}e - 0.01\gamma, \frac{1}e + 0.01\gamma \right]$,
\item $\alpha \ge 0.01\gamma\Delta$.
\end{enumerate}

If any of these conditions are true, Lemma \ref{lem:nash-welfare-improvement-conditions} shows that
\begin{align*}
\logNSW(W) \ge \frac{1}{\Delta} \left [\sum_{g \in L} \ln v(g) + (\Delta - \ell)\ln\frac{\Delta \mu}{\Delta - \ell} \right ] -\frac{1}{e} + c_2.
\end{align*}

The result follows by combining the above inequality with Lemma \ref{lem:nash-upper-bound}.
\end{proof}

\begin{lemma}\label{lem:property-c}
If \ref{prop:c} is violated, the following holds for some positive constant $c_2 > 0$: 
\begin{align*}
\logNSW(W) \ge \logNSW(\cal I) -\frac{1}{e} + c_2.
\end{align*}
\end{lemma}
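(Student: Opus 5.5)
If \ref{prop:c} is violated, then more than $\gamma^2 \Delta$ agents $i \in [\ell]$ satisfy $v(S_i) \ge \gamma^2 v(g_i)$. By definition these are exactly the agents of $N_2$, so $|N_2| > \gamma^2 \Delta \ge 10^{-4}\gamma^2\Delta$. This is condition (a) of Lemma \ref{lem:nash-welfare-improvement-conditions}. The plan is therefore a short reduction to that lemma followed by Lemma \ref{lem:nash-upper-bound}, mirroring the proof of Lemma \ref{lem:property-a}.

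First I would record the identification between violating agents and $N_2$. The agents who receive a large good in $W$ are precisely $N_L = [\ell]$, with agent $i$ receiving $g_i$. The set $N_2$ was defined as those $i \in N_L$ with $v(S_i) \ge \gamma^2 v(g_i)$, which is exactly the negation of the inequality required by \ref{prop:c}. So "\ref{prop:c} fails" means $|N_2| > \gamma^2\Delta$.

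Second, I would verify that the lemma applies in both regimes used earlier. If $N_5 = \emptyset$, we already showed $\logNSW(W) \ge \logNSW(\cal I)$, which is stronger than needed with $c_2 = 1/e$. Otherwise, \eqref{eq:w-lognash-lowerbound} holds, and since $\gamma^2 \ge 10^{-4}\gamma^2$, condition (a) of Lemma \ref{lem:nash-welfare-improvement-conditions} yields
\[
\logNSW(W) \ge \frac{1}{\Delta}\left[\sum_{g\in L}\ln v(g) + (\Delta-\ell)\ln\frac{\Delta\mu}{\Delta-\ell}\right] - \frac1e + c_2 .
\]
Finally, I would apply the first inequality of Lemma \ref{lem:nash-upper-bound}, which bounds $\logNSW(\cal I)$ by the bracketed expression. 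Substituting gives the claim.

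There is essentially no obstacle. The only point needing care is that the improvement term $\frac{|N_2|}{\Delta}\ln(1+\gamma^2)$ in \eqref{eq:w-lognash-lowerbound} comes from lower bounding $v(W_i) = v(g_i) + v(S_i) \ge (1+\gamma^2)v(g_i)$ for $i \in N_2$. That step was already carried out in deriving \eqref{eq:w-lognash-lowerbound}, so nothing new is needed.
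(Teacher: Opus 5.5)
Your proposal is correct and follows the paper's proof exactly: a violation of Property C means $|N_2| > \gamma^2\Delta \ge 10^{-4}\gamma^2\Delta$, which triggers condition (a) of Lemma~\ref{lem:nash-welfare-improvement-conditions}, and combining the resulting bound with Lemma~\ref{lem:nash-upper-bound} finishes the argument. Your strict inequality is actually the precise negation of Property C (the paper writes $\ge$, which is harmless), and your separate handling of the case $N_5=\emptyset$ matches the remark the paper makes before Lemma~\ref{lem:w-delta-lower-bound}.
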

\begin{proof}
\ref{prop:c} is violated if and only if $|N_2| \ge \gamma^2 \Delta$. In this case, Lemma \ref{lem:nash-welfare-improvement-conditions} shows that 
\begin{align*}
\logNSW(W) \ge \frac{1}{\Delta} \left [\sum_{g \in L} \ln v(g) + (\Delta - \ell)\ln\frac{\Delta \mu}{\Delta - \ell} \right ] -\frac{1}{e} + c_2.
\end{align*}

The result follows by combining the above inequality with Lemma \ref{lem:nash-upper-bound}.
\end{proof}

We turn to \ref{prop:b} next.

\begin{lemma}\label{lem:property-b}
If \ref{prop:b} is violated, the following holds for some positive constant $c_2 > 0$: 
\begin{align*}
\logNSW(W) \ge \logNSW(\cal I) -\frac{1}{e} + c_2.
\end{align*}
\end{lemma}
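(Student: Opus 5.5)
Assume \ref{prop:b} is violated. We may also assume that none of the conditions (a)--(f) of Lemma \ref{lem:nash-welfare-improvement-conditions} holds. If one of them holds, then combining that lemma with Lemma \ref{lem:nash-upper-bound} gives the result, exactly as in the proofs of Lemmas \ref{lem:property-a} and \ref{lem:property-c}. The plan is therefore to show that the negations of (a)--(f) together force all but $\gamma\Delta$ agents to satisfy $v(S_i)\in[(1-\gamma)\mu,(1+\gamma)\mu]$, which contradicts the violation of \ref{prop:b}.

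\textbf{Step 1: upper bounds.} Under the negation of (f), $v(W_\Delta) < (1+10^{-4}\gamma^2)\mu$. By Observation \ref{obs:small-bundle-round-robin}, every agent $i\in N_L$ has $v(S_i)\le v(S_\Delta)=v(W_\Delta)$, so each such agent is below $(1+\gamma)\mu$. Agents in $N_5$ have $v(S_i)=v(W_i)\le \frac{\Delta\mu}{\Delta-\ell}$, which is about $e\mu$. However, by the negation of (d), $\alpha<10^{-4}\gamma^4\Delta$, so the total excess $\sum_{i\in N_5}(v(W_i)-v(W_\Delta))=\alpha\left(\frac{\Delta\mu}{\Delta-\ell}-v(W_\Delta)\right)$ is at most $O(\gamma^4\Delta\mu)$. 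By Markov's inequality, at most $O(\gamma^3\Delta)$ agents of $N_5$ exceed $v(W_\Delta)+\gamma\mu/2$, and all other agents of $N_5$ are below $(1+\gamma)\mu$. Finally, $|N_3|+|N_4|\le 2\cdot 10^{-4}\gamma^2\Delta$ by the negations of (b) and (e).

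\textbf{Step 2: lower bounds via an averaging argument.} Use $\sum_i v(S_i)=\Delta\mu$. Bound the contribution of $N_3\cup N_4$ by EF1 (Fact \ref{fact:round-robin-ef1}): each such agent has $v(S_i)\le \frac{\Delta\mu}{\Delta-\ell}+v(W_\Delta)=O(\mu)$, so this group contributes only $O(\gamma^2\Delta\mu)$. Together with Step 1, every agent has $v(S_i)\le(1+10^{-4}\gamma^2)\mu$ except for a set of total excess $O(\gamma^2\Delta\mu)$. Since the average of $v(S_i)$ is exactly $\mu$ and almost every value is at most $\mu(1+10^{-4}\gamma^2)$, a reverse Markov argument bounds the deficits. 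The total deficit $\sum_i \max(0,\mu-v(S_i))$ is at most the total excess above $\mu$, which is $O(\gamma^2\Delta\mu)$. Hence the number of agents with $v(S_i)<(1-\gamma)\mu$ is $O(\gamma\Delta)$ with a small constant.

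\textbf{Main obstacle: constants.} The hardest part is checking that the constants work out so the final counts are strictly below $\gamma\Delta$ rather than merely $O(\gamma\Delta)$. The deficit bound gives roughly $(10^{-4}\gamma^2+O(\gamma^2))\Delta\mu/(\gamma\mu)$, which is $O(\gamma)\Delta$ with a constant well below $1$ because it is multiplied by an extra factor of $\gamma$. I expect the excess count from Step 1 to be $O(\gamma^3\Delta)$, and the counts from $N_3$ and $N_4$ to be $O(\gamma^2\Delta)$. Adding these gives fewer than $\gamma\Delta$ agents outside $[(1-\gamma)\mu,(1+\gamma)\mu]$, which is the contradiction. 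I would first verify that $\frac{\Delta}{\Delta-\ell}$ is bounded by a constant; this follows from the negation of (c), which gives $\Delta-\ell\ge|N_5|-\alpha\ge\Delta/3$.
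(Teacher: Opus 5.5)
Your proof is correct and follows the same strategy as the paper: assume none of conditions (a)--(f) of Lemma~\ref{lem:nash-welfare-improvement-conditions} hold, and show that \ref{prop:b} must then be satisfied. The only difference is bookkeeping. The paper turns many deficient agents into a raised average over $N_S$, and from there into $\alpha\ge 10^{-4}\gamma^4\Delta$ via per-agent bounds $\alpha_i\ge\gamma^2/24$. You instead bound the total excess $\sum_i\max\{0,v(S_i)-\mu\}<10^{-3}\gamma^2\Delta\mu$ directly, using the exact identity $\sum_{i\in N_5}(v(W_i)-v(W_\Delta))=\alpha\bigl(\frac{\Delta\mu}{\Delta-\ell}-v(W_\Delta)\bigr)$, and equate it with the total deficit. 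This gives fewer than $10^{-3}\gamma\Delta$ deficient agents, and it also handles the upper tail explicitly, where the paper only says the argument is similar. On your constants paragraph: the count is small because the thresholds in (b), (e), (f) carry a factor $10^{-4}$ and each agent in $N_3\cup N_4$ has excess at most $4\mu$, not because of an extra factor of $\gamma$.
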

\begin{proof}
In this proof, we assume for contradiction that none of the conditions in Lemma \ref{lem:nash-welfare-improvement-conditions} are satisfied. Note that this also implies \ref{prop:a} and \ref{prop:c} are satisfied by the instance. 
We will show that if \ref{prop:b} is violated, then one of the conditions of Lemma \ref{lem:nash-welfare-improvement-conditions} is satisfied. This then implies the lemma via Lemma \ref{lem:nash-upper-bound}.

Assume \ref{prop:b} is violated such that there are at least $\frac{\gamma \Delta}{2}$ agents $i$ with $v(S_i) \le (1 - \gamma)\mu$. Let $N'$ be the set of $i$ such that $i \le \ell$ and $v(S_i) \le (1 - \gamma)\mu$. Note that $|N'| \ge \frac{\gamma \Delta}{2}$ using Observation \ref{obs:small-bundle-round-robin} and the fact that $\ell \ge \frac{\Delta}{2}$ (assuming Property A is satisfied). Recall that $\Delta \mu = \sum_{i \in [\Delta]} v(S_i)$. Let $\mu^*$ denote the mean of $v(S_i)$ among all the agents in $[\Delta] \setminus N'$. 

\begin{align*}
\mu^* = \frac1{\Delta - |N'|} \sum_{i \in [\Delta] \setminus N'} v(S_i) &\ge \frac{1}{\Delta - |N'|} \left (\Delta \mu - |N'|(1- \gamma) \mu\right ) \\
&\ge \frac{1}{\Delta - \frac{\gamma \Delta}{2}} \left (\Delta \mu -  \frac{\gamma \Delta}{2} (1 - \gamma) \mu \right )\\
&\ge \mu \left ( 1 + \frac{\gamma^2}{2} \right ).
\end{align*}

From Observation \ref{obs:small-bundle-round-robin}, the bundles $\{S_{\ell+1}, \dots, S_\Delta\}$ are the $\Delta-\ell$ highest valued small bundles. Therefore, 
\begin{align*}
\sum_{i \in N_S} v(W_i) \ge \mu^* |N_S| \ge \mu |N_S| \left (1 + \frac{\gamma^2}{2} \right ).
\end{align*}

Moreover, using Fact \ref{fact:round-robin-ef1}, for any agent $i \in N_S$,
\begin{align*}
v(W_i) \le \max_{g \in W_i}v(g) + v(W_{\Delta}) \le \frac{\Delta \mu}{\Delta - \ell} + v(W_{\Delta}) \le 3\mu + (1 + 10^{-4}\gamma^2)\mu \le 5\mu.
\end{align*}
In this inequality, we assume \ref{prop:a} is satisfied and $v(W_{\Delta}) \le (1+10^{-4}\gamma^2) \mu$ (Lemma \ref{lem:nash-welfare-improvement-conditions}[condition (f)]).

Using a simple counting argument, at least $\frac{\gamma^2 |N_S|}{20} \ge \frac{\gamma^2 \Delta}{60}$ agents $i \in N_S$ have $v(W_i) \ge \left (1 + \frac{\gamma^2}{4} \right ) \mu$. Let $\widetilde{N}$ denote the set of these agents.

Since $\widetilde N$ only contains agents who do not receive a large good, one of $N_3 \cap \widetilde{N}, N_4 \cap \widetilde{N}$ and $N_5 \cap \widetilde{N}$ must have size at least $\frac{\gamma^2 \Delta}{200}$. If $N_3$ or $N_4$ have size at least $\frac{\gamma^2 \Delta}{200}$, then one of conditions (b) and (e) in Lemma \ref{lem:nash-welfare-improvement-conditions} must be satisfied --- a contradiction. Therefore, it must be that $|N_5 \cap \widetilde{N}| \ge \frac{\gamma^2 \Delta}{200}$. In this case, we lower bound $\alpha$.

To lower bound $\alpha$, we lower bound the $\alpha_i$ values. For any agent $i \in \widetilde{N} \cap N_5$, we have $\alpha_i \ge \frac{\gamma^2}{24}$. This follows from the following sequence of inequalities.
\begin{align*}
v(W_i) \ge \left (1 + \frac{\gamma^2}{4} \right )\mu = \frac{\gamma^2}{24} (3\mu) + \left (1 + \frac{\gamma^2}{8} \right )\mu \ge \frac{\gamma^2}{24} \frac{\Delta \mu}{\Delta - \ell} + v(W_{\Delta}).
\end{align*}
In the final inequality we use both Property A and Lemma \ref{lem:nash-welfare-improvement-conditions}[condition (f)].

Summing over all $i \in \widetilde{N} \cap N_5$, we get $\alpha \ge \sum_{i \in \tilde N \cap N_5} \alpha_i \ge 10^{-4} \gamma^4 \Delta$. This implies that condition (d) of Lemma \ref{lem:nash-welfare-improvement-conditions} is satisfied --- another contradiction. This shows that when $\frac{\gamma \Delta}{2}$ agents $i$ have $v(S_i) \le (1 - \gamma)\mu$, the lemma holds. 

The proof for the other case, where there are $\frac{\gamma \Delta}{2}$ agents $i 
\in N_S$ with $v(W_i) \ge (1 + \gamma)\mu$ follows using a similar argument. In summary, if \ref{prop:b} is not satisfied, one of the conditions of Lemma \ref{lem:nash-welfare-improvement-conditions} must be satisfied, which implies 
\begin{align*}
\logNSW(W) \ge \logNSW(\cal I) -\frac{1}{e} + c_2.
\end{align*}
This proves the lemma.
\end{proof}

Next, we show that $\cal I$ must satisfy \ref{prop:e}. To prove that this property must be satisfied, we establish an improved upper bound on the optimal Nash welfare for the special case where $v(S_{\ell}) < 10^{-4}\mu$. 
\begin{lemma}\label{lem:property-e}
Let $\cal I$ be an instance with $\Delta$ identical agents with valuation function $v$ that satisfies Properties A, B and C. If the instance has $\ell$ large goods and $v(S_{\ell}) < 10^{-4}\mu$, then the following holds for some constant $c_3 > 0$.
\begin{align*}
\logNSW(W) \ge \logNSW(\cal I) - \frac1e + c_3.
\end{align*}
\end{lemma}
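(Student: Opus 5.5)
The plan is to keep the lower bound on $\logNSW(W)$ from Theorem \ref{thm:round-robin-nash-lower-bound} and instead sharpen the \emph{upper} bound on $\logNSW(\cal I)$ from Lemma \ref{lem:nash-upper-bound} by an additive constant. The key observation is that $v(S_\ell) < 10^{-4}\mu$ forces the small goods to be far from perfectly divisible. By Observation \ref{obs:small-bundle-round-robin}, $S_\ell$ is the least valuable small bundle. In round robin, $S_\ell$ receives, in each round, a small good at least as valuable as the small good agent $\ell+1$ receives in the next round. Hence $v(S_{\ell+1}) \le v(g_{\ell+1}) + v(S_\ell) < v(g_{\ell+1}) + 10^{-4}\mu$, and the same bound holds for every $S_i$. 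Combining this with Property B, almost all $v(S_i)$ are about $\mu$, which gives $v(g_{\ell+1}) \ge (1-\gamma-10^{-4})\mu$. The same EF1-type sandwiching applies further down the sorted list of small goods. So there must be many small goods of value close to $\mu$: roughly, the first $\Delta$ small goods $g_{\ell+1},\dots,g_{\ell+\Delta}$ each have value about $\mu$, up to an error of $10^{-4}\mu$, while every later small good contributes almost nothing. Since the total small value is exactly $\Delta\mu$, I would make this quantitative: $\sum_{j=\ell+1}^{\ell+\Delta} v(g_j) \ge (1-O(\gamma)-O(10^{-4}))\Delta\mu$.

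Next I bound $\logNSW(\cal I)$ using this "chunky" structure. As in Lemma \ref{lem:nash-upper-bound}, agents holding large goods get nothing else. The remaining $\Delta-\ell \approx \Delta/e$ agents must split small goods, and about $\Delta \approx e(\Delta-\ell)$ of them have value about $\mu$. The AM-GM bound assumes each such agent gets exactly $\frac{\Delta\mu}{\Delta-\ell} \approx e\mu$, which means $e \approx 2.718$ indivisible units of size about $\mu$ each. Integrality instead forces a mix of bundles with value about $2\mu$ and about $3\mu$, plus a residual of total value at most $O(\gamma+10^{-4})\Delta\mu$ spread among them. Concretely, for agents in $N_S^*$ I would write $v(X^*_i) \le k_i\mu(1+10^{-4}) + r_i$, where $k_i$ is an integer and $\sum k_i \le \Delta$. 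I would then maximize $\sum \ln(k_i\mu + r_i)$ and show that it is at most $(\Delta-\ell)\ln\frac{\Delta\mu}{\Delta-\ell} - c\Delta$. The gap comes from strict concavity of $\ln$ on the integer points $2,3$ versus the mean $e$, namely $\frac{1}{e}\bigl[(3-e)\ln 2 + (e-2)\ln 3\bigr] < \ln e$ roughly. The residual $r_i$ and the $10^{-4}$ slack must then be absorbed by showing they shift the bound only by $O(\gamma+10^{-4})$, which is smaller than this concavity gap.

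I would also have to handle the degree of freedom that $\ell$ and the number of small agents in the optimum might differ from those in $W$. The optimum may give large goods to fewer agents, but Lemma \ref{lem:nash-upper-bound} already shows that large goods go alone, so the partition into $\ell$ large-good agents and $\Delta-\ell$ others is fixed. With the improved bound $\logNSW(\cal I) \le \frac1\Delta\bigl[\sum_{g\in L}\ln v(g) + (\Delta-\ell)\ln\frac{\Delta\mu}{\Delta-\ell}\bigr] - c$, the lemma follows directly from Theorem \ref{thm:round-robin-nash-lower-bound} with $c_3 = c$.

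The main obstacle is the first step: rigorously extracting from $v(S_\ell) < 10^{-4}\mu$ and Property B that almost all the small value sits in about $\Delta$ goods, each of value close to $\mu$. This needs care with the round-robin telescoping and with the $\gamma\Delta$ exceptional agents in Property B. I would first try the telescoping bound $v(S_i) - v(S_\ell) \le v(g_{\ell+1})$ together with an averaging argument, and then check that the error constants ($\gamma$ and $10^{-4}$) stay well below the integrality gap $1 - \frac1e[(3-e)\ln2+(e-2)\ln3] \approx 0.01$.
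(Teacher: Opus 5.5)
Your proposal is correct and uses the same two ingredients as the paper. The first is the round-robin domination: $S_i$ minus its first small good is elementwise dominated by $S_\ell$. Together with $v(S_\ell)<10^{-4}\mu$, this puts all but $10^{-4}\Delta\mu$ of the small value into the $\Delta$ goods $g_{\ell+1},\dots,g_{\ell+\Delta}$. By Property B, all but $\gamma\Delta$ of these have value about $\mu$. This step is short and is not really the main obstacle. The second ingredient is the integrality loss from splitting at most $\Delta$ such chunks among $\Delta-\ell\approx\Delta/e$ agents, which forces a mix of $2$s and $3$s.

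Where you differ is the comparison target. The paper does not use Theorem \ref{thm:round-robin-nash-lower-bound} here. It makes goods of value below $0.99\mu$ divisible and uses an exchange argument so that every non-large agent in the optimum holds at least one chunk. It then bounds such an agent's utility by $|Z^*_i|\,v(g^*_i)\,(1+p_i/v(g^*_i))$ and compares directly with $\prod_{i\le\Delta}v(g_i)\le\prod_i v(W_i)$, obtaining a ratio of at most $1.439<e^{1/e}$. That route only needs chunk values to be \emph{at least} $0.99\mu$. Upper bounds on chunks are absorbed for free by $\prod v(g^*_i)\le\prod_{i=\ell+1}^{\Delta}v(g_i)$, since $W$ hands out the same top small goods.

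Your route instead shows that the AM--GM bound of Lemma \ref{lem:nash-upper-bound} is itself loose by a constant in this regime. This needs the upper bound $v(g)\le v(S_i)\le(1+\gamma)\mu$ on chunks. It also means the at most $\gamma\Delta$ exceptional chunks, of value up to $\frac{\Delta\mu}{\Delta-\ell}<3\mu$, must go into the residual at cost $O(\gamma)$. That is harmless, and it yields a slightly more reusable statement.

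Two details need repair in a write-up. First, your estimate $\ln(k_i\mu+r_i)\le\ln(k_i\mu)+r_i/(k_i\mu)$ requires $k_i\ge 1$. You therefore need the paper's exchange step, or an equivalent, to rule out non-large agents holding no chunk. Second, your gap constant is misnormalized. We have $(3-e)\ln 2+(e-2)\ln 3\approx 0.984<1=\ln e$, so each non-large agent loses about $0.016$. After weighting by $(\Delta-\ell)/\Delta\approx 1/e$, this is about $0.0057$ in $\logNSW$. By contrast, your expression $1-\frac1e[(3-e)\ln 2+(e-2)\ln 3]$ evaluates to about $0.64$, not $0.01$. The correct margin still dominates the residual cost of about $10^{-4}+O(\gamma)$, consistent with the paper's $\ln 1.437+\ln 1.001<\frac1e$.
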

\begin{proof}
We define $G' \subseteq G$ as the set of goods with value at least $0.99\mu$. All the large goods trivially fall into $G'$ since each large good has value of at least $\frac{\Delta \mu}{\Delta - \ell}$ and $\ell \ge \left (1 - 1/e - \gamma \right )\Delta$ (\ref{prop:a}). 

Consider the small bundle of an agent $i$ who receives a large good; that is, $i \le \ell$. By our construction, $S_i = \{g_{\Delta + i}, g_{2\Delta + i}, \dots\}$. The bundle $S_i \setminus \{g_{\Delta + i}\} = \{g_{2\Delta + i}, g_{3\Delta + i}, \dots\}$ is at most as valuable as $S_{\ell} = \{g_{\Delta + \ell}, g_{2\Delta + \ell}, \dots\}$. This has two implications. First, the only good in $S_i$ which can be in $G'$ is $g_{\Delta + i}$. Second, we can lower bound $v(g_{\Delta + i})$ using $v(S_{\ell})$ as follows:
\begin{align*}
v(S_i) = v(g_{\Delta + i}) + v(S_i \setminus \{g_{\Delta + i}\}) \le v(g_{\Delta + i}) + v(S_\ell) \le v(g_{\Delta + i}) + 10^{-4}\mu. 
\end{align*}
Re-arranging the terms of this inequality, we get for all agents $i \in [\ell]$
\begin{align}
v(g_{\Delta + i}) \ge v(S_i) - 10^{-4}\mu. \label{eq:large-small-bundle}
\end{align}

We can make a similar argument about agents who do not receive a large good. If an agent $i$ does not receive a large good, $S_i = \{g_i, g_{\Delta + i}, \dots\}$.
The bundle $S_i \setminus \{g_{i}\} = \{g_{\Delta + i}, g_{2\Delta + i}, \dots\}$ is at most as valuable as $S_{\ell} = \{g_{\Delta + \ell}, g_{2\Delta + \ell}, \dots\}$. 
Using a similar argument, we get that the only good from $G'$ in $S_i$ is the good $g_i$, and we get for all agents $i > \ell$,
\begin{align}
v(g_i) \ge v(S_i) - 10^{-4}\mu. \label{eq:small-small-bundle}
\end{align}

By \ref{prop:b}, all but $\gamma \Delta$ agents satisfy $v(S_i) \ge (1 - \gamma)\mu$. Let $N'$ be the set of these agents $i \in [\Delta]$ such that $v(S_i) \ge (1 - \gamma)\mu$. For all the agents $i \in N'$, from \eqref{eq:large-small-bundle} and \eqref{eq:small-small-bundle}, there is exactly one good in each small bundle $S_i$ with value at least $0.99\mu$. Using this, we can upper bound the total value of $G \setminus G'$ as follows:
\begin{align}
v(G \setminus G') &\le v(\{g_{\Delta + \ell}, g_{\Delta + \ell + 1}, \dots, g_m\})+\sum_{i \in [\Delta] \setminus N'} v(S_i) \notag\\
&\le \Delta v(S_{\ell}) + \sum_{i \in [\Delta] \setminus N'} v(S_i) \notag \\
&\le \Delta v(S_{\ell}) + \gamma \Delta (1 - \gamma) \mu  \notag\\
&\le 2 \cdot 10^{-4} \Delta \mu. \label{eq:g-minus-gprime}
\end{align} 
Additionally, $|G'| \in [\Delta + \ell - \gamma \Delta, \Delta + \ell]$ since each small bundle $S_i$ has at most one good in $G'$ and at least $\Delta - \gamma \Delta$ small bundles have one good in $G'$.

Let $X^*$ denote the optimal Nash allocation of $\cal I$.
To upper bound the Nash welfare of $X^*$, we consider the optimal Nash allocation where the goods in $G'$ are allocated integrally and all the other goods are allocated fractionally. Formally, this is a fractional allocation with the constraint that the goods in $G'$ are allocated integrally. It is easy to see that the optimal Nash welfare of such an allocation upper bounds the Nash welfare of any fully integral allocation in this instance. We denote this allocation using $(Z^*, p)$ where $Z^*_i$ denotes the goods allocated from $G'$ to agent $i$, and $p_i$ denotes the total value allocated from the goods in $G \setminus G'$ to agent $i$. Note that $\sum_{i \in [\Delta]} p_i = v(G \setminus G') \le 2\cdot 10^{-4} \Delta \mu$ (from \eqref{eq:g-minus-gprime}). In this allocation, the utility of agent $i$ is given by $v(Z^*_i) + p_i$. For all agents $i$, we denote $g^*_i$ as the highest valued good in the bundle $Z^*_i$. 

We may assume, without loss of generality, $Z^*_i$ is non-empty for each agent $i \in [\Delta]$. This holds due to the following swapping argument. Assume that $Z^*_i = \emptyset$. There must be some other agent $j$ such that $|Z^*_j| \ge 2$. Let $g$ be some good in $Z^*_j \setminus \{g^*_j\}$. If $p_i < v(g^*_j)$, then moving $g$ from $Z^*_j$ to $Z^*_i$ strictly increases the Nash welfare of the allocation (using an argument similar to Lemma \ref{lem:nash-welfare-swap}). If $p_i \ge v(g^*_j)$, we add $g$ to $Z^*_i$ and deduct $p_i$ by the value $v(g)$. We add this deducted value to agent $j$ by increasing $p_j$ by $v(g)$. This swap does not change the utility of any agent but reduces the number of agents with empty bundles by $1$. Applying this swap repeatedly, we end up with an optimal Nash allocation $(Z^*, p)$ where each $Z^*_i$ is non-empty. 

Additionally, we can assume that if there is some large good $g$ in the bundle $Z^*_i$, then $Z^*_i = \{g\}$ and $p_i = 0$. This follows from the definition of the large goods. There are at least $\Delta - \ell$ agents who do not receive any large goods, which implies there is at least one agent $j$ who receives a utility of at most $\frac{v(G \setminus L)}{\Delta - \ell} < v(g)$ for any large good $g$. Therefore, if $p_i \ne 0$ or $Z^*_i \setminus \{g\} \ne \emptyset$, moving all of this value from $i$ to $j$ strictly improves the Nash welfare (from Lemma \ref{lem:nash-welfare-swap}).

Let $N_L^*$ denote the set of agents who receive a large good in the allocation $Z^*$ and let $N_S^*$ denote all the other agents. Since each agent in $N_L^*$ receives only one good:
\begin{align}
\prod_{i \in N_L^*} (v(Z^*_i) + p_i) = \prod_{g \in L} v(g) = \prod_{i = 1}^{\ell} v(g_i). \label{eq:z-star-large-product}
\end{align}

For all the other agents:
\begin{align*}
\prod_{i \in N_S^*} (v(Z^*_i) + p_i) &\le \prod_{i \in N_S^*} |Z^*_i|v(g^*_i)\left (1 + \frac{p_i}{v(g^*_i)} \right ) \\
&= \left ( \prod_{i \in N_S^*} |Z^*_i| \right ) \left ( \prod_{i \in N_S^*} v(g^*_i) \right ) \prod_{i \in N_S^*} \left (1 + \frac{p_i}{v(g^*_i)} \right )
\end{align*}

We upper bound each of these three terms separately. For the third term, $v(g^*_i) \ge 0.99\mu$ by definition of $G'$ and $\sum_{i \in [\Delta]} p_i \le 2 \cdot 10^{-4} \Delta \mu$. Therefore, we can use the AM-GM inequality to get the following upper bound:

\begin{align*}
\prod_{i \in N_S^*} \left (1 + \frac{p_i}{v(g^*_i)} \right ) &\le \left (1 + \frac{\sum_{i\in [\Delta]} p_i}{|N_S^*| \cdot 0.99\mu} \right )^{|N_S^*|} \\
&\le \left (1 + \frac{2\cdot 10^{-4} \Delta \mu}{|N_S^*| \cdot 0.99 \mu} \right )^{|N_S^*|} \\
&\le \left (1 + 10^{-3} \right )^{|N_S^*|}.
\end{align*}
In the last inequality, we use \ref{prop:a} which implies that $|N_S^*| = \Delta - \ell \ge \Delta \left  (\frac{1}e - \gamma \right )\ge \frac{\Delta}3$. The second term is maximized when the $|N_S^*|$ goods of the form $g^*_i$ are the highest valued small goods in the instance. That is $\prod_{i \in N_S^*} v(g^*_i) \le \prod_{i = \ell + 1}^{\Delta} v(g_i)$. 

For the first term, we need to upper bound $\prod_{i \in N_S^*} |Z^*_i|$ subject to each $|Z^*_i|$ being an integer and $\sum_{i \in N_S^*} |Z^*_i| \le \Delta$, since there are at most $\Delta$ small goods in $G'$. Since $\frac{|N_S^*|}{\Delta} \in [\frac{1}{e} - \gamma, \frac{1}{e} + \gamma]$ (\ref{prop:a}), $\prod_{i \in [\Delta]} |Z^*_i|$ is maximized when some agents receive three goods and all the remaining agents receive two goods. Therefore, 

\begin{align*}
\prod_{i \in N_S^*} |Z^*_i| \le 3^{\beta} 2^{|N_S^*| - \beta} = \left ( \frac{3}{2} \right )^{\beta} 2^{|N_S^*|},
\end{align*}
where $\beta$ is such that $3\beta +  2(|N_S^*| - \beta) \le \Delta$. Re-arranging this inequality and applying $|N_S^*| \ge \left (\Delta(\frac{1}{e} - \gamma) \right )$, we get $\beta \le \frac{e - 2 + 2e\gamma}{e}\Delta$. When $\gamma = 10^{-10}$, we can calculate $\beta \le 0.265 \Delta$ and $|N_S^*| \le 0.368\Delta$. Therefore,
\begin{align*}
\prod_{i \in N_S^*} |Z^*_i| \le \left ( \frac{3}{2} \right )^{0.265 \Delta} 2^{0.368\Delta} \le 1.437^{\Delta}.
\end{align*}

Using these three inequalities, we get the following upper bound on $\prod_{i \in N_S^*} (v(Z^*_i) + p_i)$.
\begin{align}
\prod_{i \in N_S^*} (v(Z^*_i) + p_i) \le (1.437)^{\Delta} \left (\prod_{i = \ell + 1}^{\Delta} v(g_i)  \right ) \left (1 + 10^{-3} \right )^\Delta \le (1.439)^{\Delta} \left (\prod_{i = \ell + 1}^{\Delta} v(g_i)  \right ). \label{eq:z-star-small-product}
\end{align}

Putting \eqref{eq:z-star-large-product} and \eqref{eq:z-star-small-product} together, we get:

\begin{align*}
\prod_{i \in [\Delta]} v(X^*_i) \le \prod_{i \in [\Delta]} (v(Z^*_i) + p_i) \le (1.439)^{\Delta} \prod_{i \in [\Delta]} v(g_i) \le \left (e^{1/e} - 0.001 \right )^{\Delta} \prod_{i \in [\Delta]} v(g_i).
\end{align*}

The lemma follows from noting that $v(W_i) \ge v(g_i)$ for each agent $i \in [\Delta]$. Therefore, 
\begin{align*}
\NSW(W) \ge \frac{1}{e^{1/e} - 0.001}\NSW(\cal I).  
\end{align*}

Applying a logarithm on both sides completes the proof.
\end{proof}

Combining all of these results, we can prove the main result of this section.

\thmnonwellbehavedinstanceimprovement*
\begin{proof}
If any of properties A, B and C are not satisfied, Lemmas \ref{lem:property-a}, \ref{lem:property-b} and \ref{lem:property-c} prove the theorem. If \ref{prop:e} is not satisfied, Lemma \ref{lem:property-e} proves the theorem since $S_\ell$ is the least valuable small bundle (Observation \ref{obs:small-bundle-round-robin}). 

Finally, note that \ref{prop:c} and \ref{prop:e} imply \ref{prop:d}, because all agents $i$ who receive a large good and satisfy $\gamma^2 v(g_i) > v(S_i)$ also satisfy $\gamma v(g_i) > \mu$ since \ref{prop:e} implies that $v(S_i) \ge 10^{-4} \mu$. Therefore \ref{prop:c} and \ref{prop:e} together imply that for all but at most $\gamma^2\Delta$ agents $i$ who receive a large good, $v(g_i) > \frac{\mu}{\gamma}$, which implies that \ref{prop:d} is satisfied. 
\end{proof}

\section{Adding and Removing Goods to and from the Instance}\label{sec:add-remove}
In this section, we continue our analysis of round robin allocations in instances with identical valuations. We bound the Nash welfare of the round robin allocation when goods are added and/or removed from the instance $\cal I$. This instance modification can take various forms. We present four specific results which will come in handy later on. All of them follow from roughly similar arguments, so we present just one and relegate the other three to Appendix \ref{apdx:add-remove}.

In all the results in this section, we continue to make Assumption \ref{ass:non-zero} about the original instance $\cal I$. We do not make any such assumptions about the modified instance after adding or removing goods.

\begin{theorem}\label{thm:small-good-removal-large-good-addition}
Let $\cal I$ be a well-behaved instance that satisfies Assumption \ref{ass:non-zero} with $\Delta$ identical agents, a set of large goods $L$ and mean value $\mu$. Let $W$ denote the round robin allocation of this instance. Let $\cal I^{mod}$ be the instance $\cal I$ after removing a set of small goods $R$ and adding a set of large goods $A$; let $W^{mod}$ be the round robin allocation in this modified instance. If each good in $A$ has value at least $\frac{\mu}{\gamma}$, $|A| \le 0.01 \Delta$ and $v(R) \le 0.1\Delta\mu$, then
\begin{align*}
\logNSW(W^{mod}) \ge \logNSW(\cal I) -\frac1e+ \frac{1}{\Delta}\left (10|A| - 2|R| \right ).
\end{align*}
\end{theorem}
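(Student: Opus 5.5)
We compare $W^{mod}$ directly with the structure of $W$ and avoid recomputing the optimum of $\cal I^{mod}$. The target $\logNSW(\cal I)-\frac1e$ is bounded through Lemma \ref{lem:nash-upper-bound} by $B:=\frac1\Delta[\sum_{g\in L}\ln v(g)+(\Delta-\ell)\ln\frac{\Delta\mu}{\Delta-\ell}]$. It therefore suffices to show
\[
\logNSW(W^{mod}) \ge B - \frac1e + \frac{1}{\Delta}(10|A|-2|R|).
\]
The plan is to lower bound $\logNSW(W^{mod})$ by an explicit allocation-based expression. Order the goods of $\cal I^{mod}$ by value. Every good of $A$ has value at least $\mu/\gamma$, so the added goods sit among the top goods together with $L$, giving $\ell+|A| \le \Delta-1$ top goods.

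\textbf{Step 1 (large goods).} The agents $1,\dots,\ell+|A|$ of $W^{mod}$ each receive one of these top goods. Their contribution to $\Delta\cdot\logNSW(W^{mod})$ is at least $\sum_{g\in L}\ln v(g)+\sum_{g\in A}\ln v(g)$. We would compare this with $W$ agent by agent:
\begin{itemize}
\item The $\ell$ agents holding $L$ in $W$ lose at most their small bundles. By Property C, $v(S_i)<\gamma^2 v(g_i)$ for all but $\gamma^2\Delta$ of them, so the loss is $O(\gamma^2)$ per agent plus a controlled exceptional set.
\item The $|A|$ extra agents, who were small-bundle agents with utility about $\mu$ in $W$ (Properties B and E), now have value at least $\mu/\gamma$. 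Each gains $\ln(1/\gamma)-O(1)\ge 20$, since $\ln 10^{10}\approx 23$. This is the source of the $10|A|$ term, with room to spare.
\end{itemize}

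\textbf{Step 2 (small goods).} The remaining $\Delta-\ell-|A|$ agents share the small goods $G\setminus(L\cup R)$ by round robin. By Observation \ref{obs:small-bundle-round-robin} and Fact \ref{fact:round-robin-ef1}, each small-bundle agent gets at least the minimum small bundle, and the sum of their values is at least the total small value minus what goes to the large-good agents. I would rerun the argument of Lemma \ref{lem:w-delta-lower-bound} in the modified instance, with mean $\mu' = (\Delta\mu - v(R))/\Delta$. The key lower bound would take the form
\[
\frac{(|N_5|-\alpha)}{\Delta}\ln\frac{v(W^{mod}_\Delta)(\Delta-\ell-|A|)}{\Delta\mu'} \ge -\frac1e.
\]
Replacing $\mu'$ by $\mu$ costs $(\Delta-\ell)\ln(1-v(R)/(\Delta\mu))\ge -c\,v(R)/\mu$. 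Since every removed small good has value at most $\frac{\Delta\mu}{\Delta-\ell}\le 3\mu$, this is at most $-c'|R|$ with a constant below $2$ once constants are tracked; the hypothesis $v(R)\le 0.1\Delta\mu$ keeps $\ln(1-x)\ge -1.06x$ valid. Shrinking the number of small agents from $\Delta-\ell$ to $\Delta-\ell-|A|$ raises each agent's share, which only helps. The $(\Delta-\ell)\ln\frac{\Delta\mu}{\Delta-\ell}$ term loses $|A|$ copies of $\ln(3\mu)$, and these were already accounted for in the gain from Step 1.

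\textbf{Main obstacle.} The hard step is making the $-\frac1e$ slack from Lemma \ref{lem:w-delta-lower-bound} survive in the modified instance, because the proportion of small agents has shifted away from $1/e$. Here I would use that $x\ln x\ge-1/e$ holds for all $x$, which is the second part of that lemma with $c_1=0$. The $\ln(1+\gamma^2)$-type bookkeeping then only needs to go in the favorable direction. Constants are loose ($10$ versus roughly $20$, and $2$ versus roughly $1.1\cdot 3/\ldots$), so I expect crude bounds to suffice.
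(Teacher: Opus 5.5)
\textbf{Gap in Step 2.} When you rerun Lemma~\ref{lem:w-delta-lower-bound} ``in the modified instance with mean $\mu'$'' and normalize by $T':=\frac{\Delta\mu'}{\Delta-\ell-|A|}=\frac{\Delta\mu-v(R)}{\Delta-\ell-|A|}$, you are treating $L\cup A$ as the large-good set of $\mathcal{I}^{mod}$.

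The step of that lemma giving $v(W_\Delta)\ge\frac{|N_5|-\alpha}{\Delta-\alpha}\cdot(\text{threshold})$ uses Fact~\ref{fact:round-robin-ef1} in the form ``every agent above the threshold has utility at most threshold${}+v(W_\Delta)$''. That requires every surviving good outside $L\cup A$ to be worth at most the threshold.

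\begin{itemize}
\item For the original threshold $T=\frac{\Delta\mu}{\Delta-\ell}$ this is automatic, since those goods are small goods of $\mathcal{I}$.
\item But $T'\ge T$ holds only when $v(R)\le |A|T$. With $A=\emptyset\neq R$, for instance, $T'<T$.
\item A surviving small good with value in $(T',T]$ then breaks the bound. Property B allows this for a few agents, and such a good can push $v(W^{mod}_\Delta)$ below $\frac{|N_5|-\alpha}{\Delta}T'$, which is what your key inequality rests on.
\end{itemize}

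More tellingly, Step 2 uses none of Properties B--E, yet without them the statement is false. Let $\mathcal{I}$ have exactly $\Delta-\ell\ge 10$ small goods, each of value $T$, and let $R$ be one of them. Then agent $\Delta$ receives nothing in $W^{mod}$, and your key inequality reads $-\infty\ge-\frac1e$.

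If you instead meant the true large set of $\mathcal{I}^{mod}$, two things change. That set can strictly contain $L\cup A$, so the normalizer is no longer $T'$. And by AM--GM, enlarging the large set only decreases $\sum_{g\in L'}\ln v(g)+(\Delta-|L'|)\ln\frac{v(G\setminus L')}{\Delta-|L'|}$. So the comparison with $B$ goes the wrong way and again needs Property B to control the extra goods.

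\textbf{How the paper handles it.} The paper never uses the large goods of $\mathcal{I}^{mod}$. It splits $N_S$ into $N_1$ and $N_2$ at the \emph{original} threshold $T$, so the EF1 bound is always valid. The removal enters only through $\sum_i v(S^{mod}_i)=\Delta\mu-v(R)$, which gives
\[
\frac{(\Delta-\ell)\,v(W^{mod}_\Delta)}{\Delta\mu}\ \ge\ \frac{x}{\Delta}\Bigl(1-\frac{(\Delta-\ell)\,v(R)}{\Delta\mu\,x}\Bigr),\qquad x=|N_2|-\alpha .
\]
Turning this into $\frac{x}{\Delta}\ln\frac{x}{\Delta}-\frac{2|R|}{\Delta}$ needs a lower bound on $x$, and that is the ingredient your plan lacks. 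The paper combines $v(S^{mod}_{\ell+|A|+i})\le v(S_{\ell+i})$ with Properties A and B and $|A|\le 0.01\Delta$ to get $|N_1|\le\gamma\Delta$ and $\alpha\le\frac{|N_2|}{2}+\gamma\Delta$. Hence $x\ge\Delta/6$, the relative loss is at most $0.3$, $\ln(1-y)\ge-2y$ applies, and $v(R)\le|R|T$ finishes the bound.

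Two side remarks. Your $\mu'$-normalization does go through verbatim in the regime $v(R)\le|A|T$. And your agent-by-agent comparison with $W$ in Step 1 is unnecessary: only $\sum_{g\in L\cup A}\ln v(g)$ measured against $B$ is used.
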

\begin{proof}
We note that $\cal I^{mod}$ might have a different set of large goods than $\cal I$. However, to avoid any confusion, we only refer to the large goods of $\cal I$ when we refer to the set of large goods $L$. This proof does not explicitly use the large goods of the instance $\cal I^{mod}$. In the same vein, when we use the value $\mu$, we mean the value $\frac{v(G \setminus L)}{\Delta}$, where $L$ is the set of large goods in the instance $\cal I$. In the allocation $W^{mod}$, we use $S^{mod}_i$ to denote for each agent $i$ the set of small goods in $W^{mod}_i$ --- again, note that these small goods are defined with respect to the original instance $\cal I$. 

We lower bound the Nash welfare of $W^{mod}$ the same way we did in Theorem \ref{thm:round-robin-nash-lower-bound}. Let $N_L$ be the set of agents who receive a large good or a good from $A$ in $W^{mod}$. Since each good in $A$ has value strictly greater than $\frac{\Delta\mu}{\Delta - \ell}$, we must have $N_L = \{1, \dots, \ell + |A|\}$. We let $N_S$ denote the remaining agents. This set of remaining agents must be non-empty since $\ell + |A| \le (1 - 1/e + \gamma)\Delta + 0.01\Delta \le 0.7\Delta$ (\ref{prop:a}). Within the set $N_S$, we use $N_1$ to denote the set of agents $i$ such that $v(W^{mod}_i) \ge \frac{\Delta \mu}{\Delta - \ell}$. 
We use $N_2$ to denote the set of all other agents in $N_S$. Note that all agents in $N_2$ have utility between $v(W^{mod}_{\Delta})$ and $\frac{\Delta \mu}{\Delta - \ell}$. 

For each $i \in N_2$ we define the value $\alpha_i \in [0, 1]$ as the value which satisfies $v(W^{mod}_i) = \alpha_i \frac{\Delta \mu}{\Delta - \ell} + (1 - \alpha_i)v(W^{mod}_{\Delta})$. We define $\alpha = \sum_{i \in N_2} \alpha_i$. Similar to Theorem \ref{thm:round-robin-nash-lower-bound}, we can lower bound the log Nash welfare of the allocation $W^{mod}$:

\begin{align*}
\logNSW(W^{mod}) &\ge \frac{1}{\Delta} \left [\sum_{g \in L \cup A}\ln{v(g)} + \sum_{i \in N_1}\ln\frac{\Delta \mu}{\Delta -\ell} + \sum_{i \in N_2} \ln v(W^{mod}_i) \right ].
\end{align*}

This can be simplified in the same way using the concavity of the log function and the definition of each agent part:

\begin{align*}
\logNSW(W^{mod}) &\ge \frac{1}{\Delta} \left [\sum_{g \in L}\ln{v(g)} + \sum_{g \in A}\ln{v(g)} + (|N_1| + \alpha)\ln\frac{\Delta \mu}{\Delta -\ell} + (|N_2| - \alpha) \ln v(W^{mod}_{\Delta}) \right ] \\
&\ge \frac{1}{\Delta} \left [\sum_{g \in L}\ln{v(g)} + |A|\ln{\frac{\mu}{\gamma}} + (|N_1| + \alpha)\ln\frac{\Delta \mu}{\Delta -\ell} + (|N_2| - \alpha) \ln v(W^{mod}_{\Delta}) \right ].
\end{align*}

By definition, $|N_1| + |N_2| + |A|= \Delta - \ell$. We can therefore re-arrange the above expression to bring it closer to the upper bound of $\logNSW(\cal I)$ in Lemma \ref{lem:nash-upper-bound}. 

\begin{align*}
\logNSW(W^{mod}) &\ge \frac1{\Delta} \left [\sum_{g \in L}\ln{v(g)} + (\Delta - \ell)\ln\frac{\Delta \mu}{\Delta -\ell} \right ] + \frac{|A|}{\Delta}\ln\frac{\Delta - \ell}{\Delta \gamma} + \frac{(|N_2| - \alpha)}{\Delta}\ln\frac{(\Delta - \ell)v(W^{mod}_{\Delta})}{\Delta\mu}.
\end{align*}

Using \ref{prop:a}, $\frac{\Delta - \ell}{\Delta \gamma} \ge 10^{9}$. This simplifies the above expression to:

\begin{align}
\logNSW(W^{mod}) &\ge \frac1{\Delta} \left [\sum_{g \in L}\ln{v(g)} + (\Delta - \ell)\ln\frac{\Delta \mu}{\Delta -\ell} \right ] + \frac{10|A|}{\Delta} + \frac{(|N_2| - \alpha)}{\Delta}\ln\frac{(\Delta - \ell)v(W^{mod}_{\Delta})}{\Delta\mu}. \label{eq:small-good-removal-large-good-addition-1}
\end{align}

The key to proving this theorem is analyzing the term $\frac{(|N_2| - \alpha)}{\Delta}\ln\frac{(\Delta - \ell)v(W^{mod}_{\Delta})}{\Delta\mu}$ in the above expression. We do this the same way as Theorem \ref{thm:round-robin-nash-lower-bound}. We require the following observations to make our analysis simpler.

\begin{obs}\label{obs:round-robin-worse-off}
For each $i \in [\Delta - \ell - |A|]$, $v(S^{mod}_{\ell + |A| + i}) \le v(S_{\ell + i})$. 
\end{obs}
\begin{proof}
The bundle $S^{mod}_{\ell + |A| + i}$ can be constructed using the round robin algorithm over the goods $G \setminus (L \cup R)$ with picking sequence $(\ell + |A| + 1, \dots, \Delta, 1, \dots, \ell + |A|)$. On the other hand, the bundle $S_{\ell + i}$ can be constructed using the round robin algorithm over the goods $G \setminus L$ with picking sequence $(\ell +  1, \dots, \Delta, 1, \dots, \ell)$. Since $G \setminus (L \cup R)$ is a subset of $G \setminus L$, the observation follows from the definition of the round robin algorithm.
\end{proof}

\begin{obs}\label{obs:x-bound}
$\frac{|N_2| -\alpha} {\Delta}\in \left [\frac16, \frac12 \right ]$.
\end{obs}
\begin{proof}
To prove this claim, we use the fact that the instance $\cal I$ is well-behaved. By construction $|N_1| + |N_2| = \Delta - \ell - |A|$. Via \ref{prop:a}, this implies, 
\begin{align}
\frac{|N_1| + |N_2|}{\Delta} \in \left [\frac1e - \gamma - \frac{|A|}{\Delta}, \frac1e + \gamma - \frac{|A|}{\Delta} \right ]. \label{eq:x-bound-1}
\end{align}

Via Property B, there are at most $\gamma \Delta$ agents who do not receive a large good and receive a utility greater than $\mu (1 + \gamma)$ in $W$. Using Observation \ref{obs:round-robin-worse-off}, this implies that at most $\gamma \Delta$ agents who do not receive a large good receive a utility of at least $\frac{\Delta \mu}{\Delta - \ell}$ in $W^{mod}$. This implies $|N_1| \le \gamma \Delta$. Plugging this into \eqref{eq:x-bound-1}, we get
\begin{align}
\frac{|N_2|}{\Delta} \in \left [\frac1e - 2\gamma - \frac{|A|}{\Delta}, \frac1e + \gamma - \frac{|A|}{\Delta} \right ]. \label{eq:x-bound-2}
\end{align}

This immediately gives us our upper bound of $\frac{|N_2| - \alpha}{\Delta} \le \frac12$.

Next, we upper bound $\alpha$ using the $\alpha_i$ values. Via Property B and Observation \ref{obs:round-robin-worse-off}, at most $\gamma \Delta$ agents in $N_2$ have utility greater than $(1 + \gamma)\mu$. These agents $i$ have $\alpha_i \le 1$. For all the other $|N_2| - \gamma\Delta$ agents, we have 
\begin{align*}
\alpha_i \le \frac{(1 + \gamma)\mu}{\frac{\Delta \mu}{\Delta - \ell}} \le \frac{(1 + \gamma)(\Delta - \ell)}{\Delta} \le \frac{(1 + \gamma)\Delta \left (\frac1e + \gamma \right )}{\Delta} \le 1/2.
\end{align*}

The third inequality uses \ref{prop:a} to bound the number of large goods. Using these upper bounds, we get $\alpha = \sum_{i \in N_2} \alpha_i \le \frac{|N_2|}{2} + \gamma \Delta$. Therefore, 
\begin{align*}
\frac{|N_2| - \alpha}{\Delta} \ge \frac{|N_2|}{2\Delta} - \gamma \ge \frac{1}{2e} - 2\gamma - 0.005 \ge 1/6.
\end{align*} 
In the final inequality, we use $|A| \le 0.01\Delta$. This completes the proof. 
\end{proof} 

\begin{obs}\label{obs:w-mod-delta-lower-bound}
$\frac{(\Delta - \ell)v(W^{mod}_{\Delta})}{\Delta \mu} \ge \left (\frac{|N_2| - \alpha}{\Delta} \right )e^{-\frac{2|R|}{|N_2|- \alpha}}$.
\end{obs}
\begin{proof}
We use the same idea as Lemma \ref{lem:w-delta-lower-bound} where we use $\sum_{i \in [\Delta]} v(S^{mod}_i) = v(G \setminus (L \cup R))$. 
\begin{align*}
\Delta \mu - v(R) &= v(G \setminus (L \cup R)) = \sum_{i \in [\Delta]} v(S^{mod}_i) \\
&= \sum_{i \in N_L} v(S^{mod}_i) + \sum_{i \in N_1} v(S^{mod}_i) + \sum_{i \in N_2} v(S^{mod}_i) 
\end{align*}
Each agent $i \in N_L$ satisfies $v(S^{mod}_i) \le v(W^{mod}_{\Delta})$ by the definition of the round robin allocation. Each agent $i \in N_1$ satisfies $v(W^{mod}_i) \le \max_{g \in W^{mod}_i} v(g) + v(W^{mod}_{\Delta})\le \frac{\Delta \mu}{\Delta - \ell} + v(W^{mod}_{\Delta})$ using Fact \ref{fact:round-robin-ef1}. Finally, we can bound $\sum_{i \in N_2} v(S^{mod}_i)$ using the $\alpha$ values. This simplifies the above equation to
\begin{align*}
\Delta \mu - v(R) &\le |N_L| v(W^{mod}_{\Delta}) + |N_1| \left (\frac{\Delta \mu}{\Delta - \ell} + v(W^{mod}_{\Delta}) \right ) + \alpha \frac{\Delta \mu}{\Delta - \ell} + (|N_2| - \alpha)v(W^{mod}_{\Delta}). \\
&\le \Delta v(W^{mod}_{\Delta}) + (|N_1| + \alpha)\frac{\Delta \mu}{\Delta - \ell}.
\end{align*}
Rearranging terms we get, 
\begin{align*}
\frac{(\Delta - \ell)v(W^{mod}_{\Delta})}{\Delta \mu} &\ge \frac{|N_2| - \alpha}{\Delta} \left ( 1  - \frac{(\Delta - \ell)v(R)}{\Delta\mu(|N_2| - \alpha)} \right ).
\end{align*}

By definition $v(R) \le 0.1\Delta\mu$, and from Observation \ref{obs:x-bound}, $|N_2| - \alpha \ge \frac{\Delta}{6}$. Moreover, $\ell \ge \frac{\Delta}{2}$ from \ref{prop:a}. Therefore, $\frac{(\Delta - \ell)v(R)}{\Delta\mu(|N_2| - \alpha)} \le 0.3$ and we can use Fact \ref{fact:identity-inequality} to obtain:
\begin{align*}
\frac{(\Delta - \ell)v(W^{mod}_{\Delta})}{\Delta \mu} \ge \left (\frac{|N_2| - \alpha}{\Delta} \right ) e^{\frac{-2(\Delta - \ell)v(R)}{\Delta\mu(|N_2| - \alpha)}}.
\end{align*}

We finally upper bound $v(R)$ using $|R|\frac{\Delta \mu}{\Delta - \ell}$ which holds because $R$ only has small goods. This gives us the following, which completes the proof.
\begin{equation*}
\frac{(\Delta - \ell)v(W^{mod}_{\Delta})}{\Delta \mu} \ge \left (\frac{|N_2| - \alpha}{\Delta} \right )e^{-\frac{2|R|}{|N_2|- \alpha}}. \qedhere
\end{equation*}
\end{proof}

We return to \eqref{eq:small-good-removal-large-good-addition-1} and simplify it using our observations. Specifically, we plug in Observation \ref{obs:w-mod-delta-lower-bound} and Lemma \ref{lem:nash-upper-bound} to get

\begin{align*}
\logNSW(W^{mod}) &\ge \logNSW(\cal I) + \frac{10|A|}{\Delta} + \frac{(|N_2| - \alpha)}{\Delta}\ln\frac{|N_2| - \alpha}{\Delta} - \frac{2|R|}{\Delta}. 
\end{align*}

This can be further simplified by noting that $\frac{(|N_2| - \alpha)}{\Delta}\ln\frac{|N_2| - \alpha}{\Delta} \ge -\frac1e$. This gives us our final result:
\begin{equation*}
\logNSW(W^{mod}) \ge \logNSW(\cal I) -\frac1e + \frac{10|A|}{\Delta}  - \frac{2|R|}{\Delta}. \qedhere
\end{equation*}
\end{proof}

We also have the following simple corollary of the above result.

\begin{corr}\label{cor:small-good-removal-large-good-addition}
Let $\cal I$ be a well-behaved instance with $\Delta$ identical agents and round robin allocation $W$. Let $\cal I^{mod}$ be the instance $\cal I$ after removing a set of small goods $R$ and adding a set of goods $A$; let $W^{mod}$ be the round robin allocation in this modified instance. Define the set $A' \subseteq A$ as the subset of goods of $A$ with value at least $\frac{\mu}{\gamma}$. If $|A'| \le 0.01 \Delta$ and $v(R) \le 0.1\Delta\mu$, then
\begin{align*}
\logNSW(W^{mod}) \ge \logNSW(\cal I) -\frac1e+ \frac{1}{\Delta}\left (10|A'| - 2|R| \right ).
\end{align*}\end{corr}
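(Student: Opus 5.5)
The plan is to reduce the corollary to Theorem~\ref{thm:small-good-removal-large-good-addition} by a monotonicity argument on round robin allocations with identical valuations. We view $\cal I^{mod}$ as obtained in two steps. First form the intermediate instance $\cal I'$ from $\cal I$ by removing $R$ and adding only $A'$, and let $W'$ be its round robin allocation. Since every good of $A'$ has value at least $\frac{\mu}{\gamma}$, $|A'| \le 0.01\Delta$ and $v(R) \le 0.1\Delta\mu$, Theorem~\ref{thm:small-good-removal-large-good-addition} applies directly to $\cal I'$. It gives
\begin{align*}
\logNSW(W') \ge \logNSW(\cal I) - \frac1e + \frac{1}{\Delta}\left(10|A'| - 2|R|\right).
\end{align*}

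It remains to show $\logNSW(W^{mod}) \ge \logNSW(W')$. Here $W^{mod}$ is the round robin allocation of $\cal I'$ with the extra goods $A \setminus A'$ added. We claim that adding goods to an identical-valuation instance never decreases any coordinate of the sorted utility vector of the round robin allocation, and in particular never decreases the Nash welfare. Adding goods one at a time, it suffices to handle a single added good $h$.

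Sort the goods by value, breaking ties consistently. Inserting $h$ at position $p$ leaves the goods before $p$ in their places and shifts every later good one step in the cyclic picking order. Agent $p \bmod \Delta$ receives $h$. Each later good moves to the next agent in the cycle, which is the same as a round robin on the suffix with a rotated picking sequence.

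The comparison we would make is the one used in Observation~\ref{obs:round-robin-worse-off}. Take the $k$-th best bundle, and compare the $j$-th good it receives in the new instance with the $j$-th good of the corresponding bundle in the old instance. The new good is at least as valuable, since the $t$-th most valuable good of a superset is at least the $t$-th most valuable good of the subset. Summing over $j$ gives a majorization-type domination of the sorted utility vectors, so each agent's utility in sorted order weakly increases and $\logNSW$ does not drop. If matching bundles agent-by-agent proves awkward, the fallback is to compare sorted utility vectors directly via Fact~\ref{fact:round-robin-valuations}.

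The main obstacle is making this monotonicity rigorous, because the rotation of the picking order changes which agent gets which good. We would handle it by an explicit index matching: for each agent, pair its $j$-th pick with the old pick at the same global rank or the rank before it. This pairing is what we would write out first.
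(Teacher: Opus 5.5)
Your proposal is correct and follows the paper's proof, which also applies Theorem~\ref{thm:small-good-removal-large-good-addition} to the instance with $R$ removed and only $A'$ added, and then observes that adding the goods $A \setminus A'$ can only improve the Nash welfare. The monotonicity step is simpler than you expect: agent $i$ always receives the goods at ranks $i, \Delta+i, 2\Delta+i, \dots$ in sorted order, and the value at each rank can only weakly increase when goods are added, so $v(W^{mod}_i) \ge v(W'_i)$ for every $i$ without any rotation bookkeeping.
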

\begin{proof}
The corollary follows from Theorem \ref{thm:small-good-removal-large-good-addition} and the fact that adding the goods $A \setminus A'$ can only improve the Nash welfare of the allocation.
\end{proof}

We also prove the following three results. The proofs for these results have been relegated to Appendix \ref{apdx:add-remove}.
\begin{restatable}{theorem}{thmsmallgoodadditionlargegoodremoval}\label{thm:small-good-addition-large-good-removal}
Let $\cal I$ be a well-behaved instance with $\Delta$ identical agents, a set of large goods $L$ and mean value $\mu$. Let $\cal I^{mod}$ be the instance that results from making the following modifications to $\cal I$:
\begin{enumerate}
\item Replace the set of large goods $L$ with a new set of large goods $L'$ such that $v(g) > \frac{\Delta \mu}{\Delta - \ell}$ for each new large good $g \in L'$ and $|L'| \le \Delta$,
\item Add a set $A$ of small goods to $\cal I$ such that $v(g) \le \frac{\Delta \mu}{\Delta - \ell}$ for each $g \in A$, and $v(A) \le \Delta \mu$.
\end{enumerate}

Define $W^{mod}$ to be the round robin allocation of the instance $\cal I^{mod}$. The following holds:
\begin{align*}
\logNSW(W^{mod}) \ge \frac{1}{\Delta} \left [\sum_{g \in L'} \ln v(g) + (\Delta - |L'|)\ln \mu \right ] + \frac{(\Delta - |L'|)v(A)}{5\Delta^2 \mu} - 12\gamma.
\end{align*}
\end{restatable}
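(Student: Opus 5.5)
The plan is to repeat the partition-and-concavity analysis of Theorem~\ref{thm:round-robin-nash-lower-bound}, now applied to $W^{mod}$. Write $\ell' = |L'|$. Every good of $L'$ has value above $\frac{\Delta\mu}{\Delta-\ell}$, and every other good of $\cal I^{mod}$ has value at most that threshold: the original small goods satisfy this by Definition~\ref{def:large-goods}, and the goods of $A$ by hypothesis. So in $W^{mod}$ the agents $1,\dots,\ell'$ each receive exactly one good of $L'$ as their first pick.

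Let $N_S = \{\ell'+1,\dots,\Delta\}$. For agents outside $N_S$ I simply keep the large good, so their contribution to $\logNSW(W^{mod})$ is at least $\frac1\Delta\sum_{g\in L'}\ln v(g)$. What remains is to show
\[
\sum_{i\in N_S}\ln v(W^{mod}_i) \ge (\Delta-\ell')\ln\mu + \frac{(\Delta-\ell')v(A)}{5\Delta\mu} - 12\gamma\Delta .
\]
If $\ell' = \Delta$ there is nothing left to prove, so assume $N_S \neq \emptyset$.

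I would establish two facts about the agents in $N_S$.

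\emph{(i) A pointwise lower bound.} The small goods of $\cal I^{mod}$ are $(G\setminus L)\cup A$. They are dealt by round robin with picking sequence $(\ell'+1,\dots,\Delta,1,\dots,\ell')$, so the agents of $N_S$ pick first. I would show, in the style of Observation~\ref{obs:round-robin-worse-off}, that adding goods to a round robin over identical valuations cannot make the $k$-th pick position worse off than the $k$-th position of some original ordering. Concretely, I would argue that the $j$-th picker's bundle is at least the bundle of the original $j$-th ranked small bundle, and Observation~\ref{obs:small-bundle-round-robin} gives that ranking. Combining this with \ref{prop:b} and \ref{prop:e} of $\cal I$, all but at most $\gamma\Delta$ agents in $N_S$ get value at least $(1-\gamma)\mu$, and every agent gets at least $10^{-4}\mu$. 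The exceptional agents cost at most $\gamma\ln(10^4) \le 10\gamma$ per unit of $\Delta$, and the remaining agents cost at most $|\ln(1-\gamma)| \le 2\gamma$. Together these account for the $-12\gamma$ term.

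\emph{(ii) An aggregate gain.} The total small value is now $\Delta\mu + v(A)$. As in Lemma~\ref{lem:w-delta-lower-bound}, each agent $i\le\ell'$ has small bundle worth at most $v(W^{mod}_\Delta)$, since it picks after every agent of $N_S$. Hence the agents of $N_S$ collect at least a $\frac{\Delta-\ell'}{\Delta}$ fraction of the added value $v(A)$ beyond what they had before.

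To convert the additive gain into a log gain, I will use Fact~\ref{fact:identity-inequality}. By Fact~\ref{fact:round-robin-ef1}, every agent of $N_S$ has utility within $v(W^{mod}_\Delta) + \frac{\Delta\mu}{\Delta-\ell}$, and this is $O(\mu)$ after invoking \ref{prop:a}. So relative increments are bounded by a constant. For each agent I write $v(W^{mod}_i) = \mu_i(1+x_i)$, where $\mu_i$ is the baseline from (i) and $x_i = O(1)$; the hypothesis $v(A)\le\Delta\mu$ keeps $x_i$ bounded. Then $\ln(1+x_i)\ge x_i/(1+x_i)\ge x_i/5$, and summing gives the $\frac{(\Delta-\ell')v(A)}{5\Delta^2\mu}$ term.

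The main obstacle is making (i) and (ii) hold simultaneously. I need a single per-agent decomposition in which the baseline is the original small bundle at the matching rank, which gives the $\ln\mu - O(\gamma)$ term, and the surplus above it sums to roughly $\frac{\Delta-\ell'}{\Delta}v(A)$. Round robin is not monotone pointwise under adding goods, so I would first try a charging argument on the sorted sequence of small goods. Inserting the goods of $A$ shifts each original small good to a later or equal global index, while the first $\Delta-\ell'$ pickers in every round still take the top positions of that round. I would use this to compare round by round rather than bundle by bundle. If that fails, the fallback is to work only with the aggregate quantities and with $v(W^{mod}_\Delta)$, as in Observation~\ref{obs:w-mod-delta-lower-bound}, accepting a weaker constant, which the slack in $1/5$ should absorb.
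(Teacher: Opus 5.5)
Your ingredients are the right ones and coincide with the paper's. The first $\ell'$ picks go to $L'$. Rank-wise monotonicity together with \ref{prop:b} and \ref{prop:e} gives the floor and the $12\gamma$ loss. Since $N_S$ holds the $\Delta-\ell'$ most valuable small bundles, $\sum_{i\in N_S}v(W^{mod}_i)\ge\frac{\Delta-\ell'}{\Delta}(\Delta\mu+v(A))$.

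The gap is the step you yourself flag, and the per-agent decomposition you propose does not close it.
\begin{itemize}
\item \emph{The surplus.} Take $\mu_i$ to be the rank-matched original small bundle. By Observation \ref{obs:small-bundle-round-robin} these are the $\Delta-\ell'$ most valuable original small bundles, so $\sum_{i\in N_S}\mu_i\ge(\Delta-\ell')\mu$. The aggregate bound therefore does not imply that the surpluses $\sum_i(v(W^{mod}_i)-\mu_i)$ reach $\frac{(\Delta-\ell')v(A)}{\Delta}$. In other words, ``beyond what they had before'' is not what (ii) proves.
\item \emph{The conversion inequality.} $\ln(1+x_i)\ge x_i/5$ requires $v(W^{mod}_i)\le5\mu_i$. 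This fails for the up to $\gamma\Delta$ agents whose baseline is only guaranteed to be $10^{-4}\mu$.
\item \emph{The ceiling.} It is never established. \ref{prop:a} bounds only the $\frac{\Delta\mu}{\Delta-\ell}$ term in the EF1 bound, not $v(W^{mod}_\Delta)$. The constant $\frac15$ needs a ceiling of exactly $5\mu$, not $O(\mu)$.
\end{itemize}
Your fallback (``a weaker constant'') would not prove the inequality as stated.

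The missing idea is to decouple (i) and (ii) by comparing every agent with a \emph{uniform} floor and ceiling instead of a rank-matched baseline.
\begin{itemize}
\item \emph{Ceiling, by a case split.} If $v(W^{mod}_\Delta)\ge2\mu$, every agent of $N_S$ has utility at least $2\mu\ge\mu\bigl(1+\frac{v(A)}{\Delta\mu}\bigr)$, and Fact \ref{fact:identity-inequality} finishes. Otherwise EF1 gives $v(W^{mod}_i)\le2\mu+3\mu=5\mu$ on $N_S$.
\item \emph{Concavity against the floor.} Let $N_1\subseteq N_S$ be the agents with utility at least $(1-\gamma)\mu$. Write $v(W^{mod}_i)=\alpha_i\cdot5\mu+(1-\alpha_i)(1-\gamma)\mu$, so concavity gives $\ln v(W^{mod}_i)\ge\ln((1-\gamma)\mu)+\alpha_i\ln5$. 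The other, at most $\gamma\Delta$, agents just get $\ln(10^{-4}\mu)$.
\item \emph{Aggregate bound.} Because the floor is at most $\mu$, the aggregate bound alone now controls $\alpha=\sum_{i\in N_1}\alpha_i$: $5\alpha\mu+(\Delta-\ell')\mu\ge\sum_{i\in N_S}v(W^{mod}_i)\ge(\Delta-\ell')\mu+\frac{(\Delta-\ell')v(A)}{\Delta}$. Hence $\alpha\ge\frac{(\Delta-\ell')v(A)}{5\Delta\mu}$, and $\frac{\alpha\ln5}{\Delta}$ supplies the gain term.
\end{itemize}
Equivalently, in your notation, take $\mu_i$ to be the floor and use $\ln(1+x_i)\ge\frac{x_i}{1+x_i}=\frac{v(W^{mod}_i)-\mu_i}{v(W^{mod}_i)}\ge\frac{v(W^{mod}_i)-\mu_i}{5\mu}$, dividing by the absolute ceiling rather than by $\mu_i$.

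Finally, your remark that round robin is not pointwise monotone contradicts your own step (i), and it is false here. With identical valuations, the $k$-th largest value of a superset dominates that of the subset, so the bundle at each fixed picking position weakly improves (Observation \ref{obs:round-robin-better-off}). No charging argument is needed; only the choice of baseline has to change.
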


\begin{restatable}{theorem}{thmgeneraladdremoval}\label{thm:general-add-removal}
Let $\cal I$ be a well-behaved instance that satisfies Assumption \ref{ass:non-zero} with $\Delta$ identical agents, a set of large goods $L$ and mean value $\mu$. Let $\cal I^{mod}$ be the instance that results from making the following modifications to $\cal I$:
\begin{enumerate}
\item Replace the set of large goods $L$ with a new set of large goods $L'$ such that $v(g) > \frac{\Delta \mu}{\Delta - \ell}$ for each new large good $g \in L'$, and $|L'| \in [100\gamma\Delta, \Delta]$.
\item Remove a set $R$ of small goods from $\cal I$ such that $v(R) \le \frac{|L'| \mu}{100}$.
\item Add a set $A$ of small goods to $\cal I$ such that $v(g) \le \frac{\Delta \mu}{\Delta - \ell}$ for each $g \in A$, and $v(A) \le \Delta \mu$.
\end{enumerate}
Define $W^{mod}$ to be the round robin allocation in the instance $\cal I^{mod}$. The following holds:
\begin{align*}
\logNSW(W^{mod}) \ge \frac{1}{\Delta} \left [\sum_{g \in L'} \ln v(g) + (\Delta - |L'|)\ln \mu \right ] + \frac{(\Delta - |L'|)v(A)}{10\Delta^2 \mu} - \frac{3(\Delta - |L'|)v(R)}{\Delta^2 \mu} - 15\gamma.
\end{align*}
\end{restatable}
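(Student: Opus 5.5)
We follow the template of Theorem \ref{thm:small-good-removal-large-good-addition}, now working with the new large set $L'$ and carrying the extra small goods $A$ and the removed goods $R$ through the accounting. Write $\ell' = |L'|$. Every good of $L'$ has value exceeding $\frac{\Delta\mu}{\Delta-\ell}$, which is at least the value of any remaining small good of $\cal I$ and any good of $A$. Hence in $W^{mod}$ agents $1,\dots,\ell'$ each receive one good of $L'$, and we let $N_S$ be the other $\Delta-\ell'$ agents. If $\ell' = \Delta$ the statement is immediate, so we assume $N_S \neq \emptyset$. For $i \le \ell'$ we use the bound $v(W^{mod}_i) \ge v(g_i)$, which accounts for the term $\frac1\Delta\sum_{g\in L'}\ln v(g)$. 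The remaining work is to show
$$\frac{1}{\Delta}\sum_{i\in N_S}\ln\frac{v(W^{mod}_i)}{\mu} \ge \frac{(\Delta-\ell')v(A)}{10\Delta^2\mu} - \frac{3(\Delta-\ell')v(R)}{\Delta^2\mu} - 15\gamma.$$

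The key structural step is a comparison of small bundles, analogous to Observation \ref{obs:round-robin-worse-off} but with inequalities in both directions. The agents in $N_S$ pick first in the round robin over the small goods $(G\setminus(L\cup R))\cup A$. We need a lower bound on what they receive. By Property B and Observation \ref{obs:small-bundle-round-robin}, in $\cal I$ all but $\gamma\Delta$ of the small bundles $S_i$ have value in $[(1-\gamma)\mu,(1+\gamma)\mu]$, and by Property E all of them are at least $10^{-4}\mu$. The small goods are thus spread almost evenly across the $\Delta$ bundles. Taking the first $\Delta-\ell'$ picks in the small-good round robin gives each agent of $N_S$ a bundle whose value is at least the corresponding ``top'' small bundles of $\cal I$, minus the loss from $R$, plus its share of $A$. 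We make this quantitative through the potential argument of Lemma \ref{lem:w-delta-lower-bound}:
\begin{itemize}
\item the total small value equals $\Delta\mu - v(R) + v(A)$;
\item the $\ell'$ agents holding $L'$ get small value at most $v(W^{mod}_\Delta)$ each;
\item by Fact \ref{fact:round-robin-ef1}, the agents in $N_S$ are within one small good (value at most $\frac{\Delta\mu}{\Delta-\ell}\le 3\mu$) of $v(W^{mod}_\Delta)$.
\end{itemize}
Together these give lower bounds both on the average utility in $N_S$ and on $v(W^{mod}_\Delta)$.

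Next we convert the average into a bound on the sum of logarithms. Concavity alone is insufficient because AM-GM points the wrong way, so we again use the $\alpha_i$-interpolation between $v(W^{mod}_\Delta)$ and a high threshold. The key input is Property E. Since every agent of $N_S$ is, up to one good, at least as well off as in $\cal I$ after the loss from $R$, every utility in $N_S$ is at least a constant times $\mu$. With utilities pinned to $[c\mu, O(\mu)]$ and almost all of them within $(1\pm O(\gamma))$ of a common value by Property B, the logarithm is nearly linear on the relevant range. Writing $\frac1\Delta\sum_{i \in N_S}\ln(1+\epsilon_i)$ with $|\epsilon_i|$ small for all but $O(\gamma\Delta)$ agents, we use $\ln(1+x)\ge x - x^2$, or Fact \ref{fact:identity-inequality}, to obtain a bound of order
$$\frac{\Delta-\ell'}{\Delta}\cdot\frac{v(A)-v(R)\cdot(\text{const})}{\Delta\mu}.$$
The $O(\gamma\Delta)$ exceptional agents each have utility at least $10^{-4}\mu$ (up to the $R$ correction), so they contribute at most $O(\gamma)\ln 10^4 \le 15\gamma$ in total. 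Two quantities from the statement enter here. The condition $v(R)\le \ell'\mu/100$ keeps the removed value small relative to $\mu$ per agent. The condition $\ell'\ge 100\gamma\Delta$ ensures that the removal of $R$, which can hit the low-value bundles, is absorbed by $\ell'$ slack bundles; this is where the $3v(R)$ coefficient appears.

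The main obstacle is the pointwise lower bound on the agents in $N_S$ when $R$ is removed. The removal could be concentrated on a few agents of $N_S$ and push their utility toward $0$, which the logarithm punishes heavily. The plan is as follows. Since $N_S$ picks first among small goods, the removed goods reduce each bundle in $N_S$ by at most about $v(R)/(\Delta-\ell')$ plus one good, by the round-robin exchange structure. Because $\ell'\ge 100\gamma\Delta$ and $v(R)\le \ell'\mu/100$, this per-agent loss is at most a small fraction of $\mu$ on average. Combined with Property E, it keeps every utility above roughly $\mu/2$. If this direct per-bundle bound fails, the fallback is to bound only the average loss and handle the worst-off agents through the $\alpha$-decomposition, as in Observation \ref{obs:w-mod-delta-lower-bound}, paying a factor $e^{-O(|R|/\Delta)}$ that is then re-expressed in terms of $v(R)$.
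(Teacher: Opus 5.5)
\textbf{There is a genuine gap, at the step you yourself flag as the main obstacle: the lower bound on the utilities in $N_S$ after removing $R$.} Neither of your routes establishes it.

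Your three accounting bullets by themselves only give $\Delta\, v(W^{mod}_\Delta) \ge \Delta\mu - v(R) + v(A) - 3(\Delta-\ell')\mu$. This is vacuous once $\ell' \le 2\Delta/3$.

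The exchange argument only shows that the bundle at a given picking position loses at most its top good plus $v(R)/\Delta$. In a well-behaved instance that top good can be almost the whole small bundle, e.g.\ $\Delta$ small goods of value $(1-\epsilon)\mu$ plus dust of total value $\epsilon\Delta\mu$. Property E only guarantees $10^{-4}\mu$, so nothing like $\mu/2$ follows.

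Your fallback is mis-calibrated. Observation \ref{obs:w-mod-delta-lower-bound} interpolates up to $\frac{\Delta\mu}{\Delta-\ell}$ and inherently costs $1/e$, whereas here the loss must be $O(\gamma)$. Its $e^{-O(|R|/\Delta)}$ also cannot be turned back into a $v(R)$ bound, since $|R|$ is not controlled by $v(R)$.

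What works is a global accounting in the intermediate instance $\cal I^{int}$: replace $L$ with $L'$ and remove $R$, but do not yet add $A$.
\begin{itemize}
\item The $i$-th picker's small bundle there is at most the $i$-th picker's small bundle in $\cal I$. So by Property B at most $\gamma\Delta$ agents of $N_S$ exceed $(1+\gamma)\mu$.
\item Interpolate the remaining agents between $v(W^{int}_\Delta)$ and $(1+\gamma)\mu$ with weights $\alpha_i$.
\item Bound the small bundles of the $\ell'$ holders of $L'$ by $v(W^{int}_\Delta)$.
\item Summing small values gives $(\ell'+\alpha)\, v(W^{int}_\Delta) \ge (\ell'+\alpha)\mu - v(R) - 6\gamma\Delta\mu$.
\end{itemize}
This is exactly where $v(R)\le \ell'\mu/100$ and $\ell'\ge 100\gamma\Delta$ enter. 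It yields your pointwise bound $v(W^{int}_\Delta)\ge 0.93\mu$ as an output of the accounting, not as an input from Property E.

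\textbf{Second, your passage to logarithms assumes that almost all utilities in $N_S$ lie within $(1\pm O(\gamma))$ of a common value ``by Property B''.} Property B describes $\cal I$, and via the comparison above $\cal I^{int}$, but not $\cal I^{mod}$. Adding $A$ with $v(A)$ up to $\Delta\mu$ can spread these utilities over $[\mu,2\mu]$ or beyond, so the near-linearity you invoke is unavailable.

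The missing idea is to decouple the two modifications.
\begin{enumerate}
\item From the accounting above, prove $\frac1\Delta\sum_{i\in N_S}\ln v(W^{int}_i) \ge \frac{\Delta-\ell'}{\Delta}\ln\mu - \frac{2(\Delta-\ell')v(R)}{\Delta^2\mu} - 12\gamma$.
\item Write $v(W^{mod}_i) = v(W^{int}_i)+\beta_i$ with $\beta_i\ge 0$, since adding goods weakly raises every bundle.
\item If $v(W^{mod}_\Delta)\ge 2\mu$, the theorem is immediate from $v(A)\le \Delta\mu$.
\item Otherwise EF1 caps $v(W^{mod}_i)\le 5\mu$. Hence $\ln(v(W^{int}_i)+\beta_i)\ge \ln v(W^{int}_i)+\beta_i/(10\mu)$ by Fact \ref{fact:identity-inequality}.
\item The top-bundles argument, together with the same Property B bound on $\cal I^{int}$, gives $\sum_{i\in N_S}\beta_i \ge \frac{(\Delta-\ell')(v(A)-v(R))}{\Delta} - 6\gamma\Delta\mu$.
\end{enumerate}
The gain from $A$ thus comes from bounded nonnegative increments on top of a removal-only baseline, not from concentration of the final utilities in $\cal I^{mod}$.
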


\begin{restatable}{theorem}{thmnonwellbehavedremoval}\label{thm:non-well-behaved-removal}
Let $\cal I$ be an instance with $\Delta$ identical agents that satisfies Assumption \ref{ass:non-zero}. Partition the goods into classes $G_1, G_2, \dots, G_{k'}$ such that all goods in each class have the same value. Let $G^{\Delta}$ be a set of $\Delta$ goods in the instance such that each good in $G^{\Delta}$ is weakly more valuable than each of the goods outside $G^{\Delta}$. Let $\cal I^{mod}$ be the instance $\cal I$ after removing a set $R$ of goods, such that $|R \cap G_k| \le \frac12 |G_{k} \setminus G^{\Delta}|$ for each $k \in [k']$. Let $W^{mod}$ be the round robin allocation in the instance $\cal I^{mod}$. The following holds:
\begin{align*}
\logNSW(W^{mod}) \ge \logNSW(\cal I) - \frac1e - 2.
\end{align*}
\end{restatable}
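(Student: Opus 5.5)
The plan is to compare $W^{mod}$ agent by agent with the round robin allocation $W$ of the original instance $\cal I$. We then invoke Lemma \ref{lem:barman-eonebye}, which gives $\logNSW(W) \ge \logNSW(\cal I) - \frac1e$. It therefore suffices to prove the pointwise bound $v(W^{mod}_i) \ge \frac14 v(W_i)$ for every agent $i$. Since $\ln 4 < 2$, this yields $\logNSW(W^{mod}) \ge \logNSW(W) - 2$, and the theorem follows.

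\textbf{Normalization.} Goods within a class $G_k$ are interchangeable, so we may relabel goods inside each class. After relabeling, we can assume $R$ avoids $G^{\Delta}$ and removes only goods from $G_k \setminus G^{\Delta}$, at most half of them in each class. Order the original goods as $g_1, \dots, g_m$ with $G^{\Delta} = \{g_1, \dots, g_{\Delta}\}$. Order the remaining goods of $\cal I^{mod}$ as $h_1 \ge h_2 \ge \cdots$ in value. Then $h_j = g_j$ for $j \le \Delta$.

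\textbf{Dominance claim.} The key intermediate claim is $v(h_{\Delta + j}) \ge v(g_{\Delta + 2j})$ (up to an off-by-one we will fix) for every $j$ such that $h_{\Delta+j}$ exists. To prove it, count the goods outside $G^{\Delta}$ that are at least as valuable as $g_{\Delta+2j}$. Since at most half of each class outside $G^{\Delta}$ is removed, at least about half of these goods survive, hence at least $j$ of them survive. Consequently the $j$-th surviving good outside $G^{\Delta}$ has value at least $v(g_{\Delta+2j})$. We also need to check that removing at most half of each class leaves enough goods, i.e., $\cal I^{mod}$ has at least $\Delta$ positively valued goods. This holds because $G^{\Delta}$ is untouched and Assumption \ref{ass:non-zero} holds for $\cal I$, so all logarithms are finite.

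\textbf{Bundle comparison.} With ties broken as in the paper, $W^{mod}_i = \{h_i, h_{\Delta+i}, h_{2\Delta+i}, \dots\}$. The dominance claim gives $v(h_{k\Delta+i}) \ge v(g_{(2k+1)\Delta + 2i}) \ge v(g_{(2k+2)\Delta})$. The value of the good $g_{(2k+2)\Delta}$ is at least the value of every good in rounds $2k+2$ and $2k+3$ of the original round robin. Summing over $k \ge 1$, the non-first goods of $W^{mod}_i$ are worth at least half of the goods of $W_1$ taken from rounds $3,4,\dots$. This is a weak bound because it uses agent $1$'s (largest) items per round as an upper bound on every agent's items. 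For the first rounds, we use $v(h_i) = v(g_i) \ge v(g_{\Delta+i'})$ for every $i'$, together with $v(h_{\Delta+i}) \ge v(g_{3\Delta})$ or similar. This covers the items of rounds $1$ and $2$ of $W_i$ with a factor-$2$ loss. Combining the two parts gives $v(W^{mod}_i) \ge \frac14 v(W_i)$, or at worst a constant factor with logarithm below $2$ after careful accounting.

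\textbf{Main obstacle.} The main difficulty is the index bookkeeping in the dominance step. We must make sure that the per-round comparison charges each of $i$'s original rounds to a distinct item of $W^{mod}_i$, and that the leading goods absorb the first two rounds without losing more than a factor $e^2$ overall. If the clean bound $\frac14$ fails by boundary effects, the fallback is to bound $v(W^{mod}_i) \ge \frac12 \bigl(v(W_i) - v(g_i)\bigr)/2 + v(g_i)$ directly, using the monotonicity of round values in the identical-valuations round robin.
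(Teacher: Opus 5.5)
Your plan is the paper's: prove a pointwise bound $v(W_i)\le C\,v(W^{mod}_i)$ with $\ln C\le 2$, then finish with Lemma \ref{lem:barman-eonebye}. The two ingredients you isolate are sound.

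\textbf{What is correct.} The relabeling that makes $R\cap G^{\Delta}=\emptyset$ is legitimate. With identical valuations, round robin bundle values depend only on the sorted multiset of values. Your dominance claim $v(h_{\Delta+j})\ge v(g_{\Delta+2j})$ also holds exactly, with no off-by-one. The goods outside $G^{\Delta}$ worth at least $v(g_{\Delta+2j})$ form a union of whole sets $G_k\setminus G^{\Delta}$ of total size at least $2j$, and at most half of it is removed.

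\textbf{The gap.} The bundle-comparison step, which you yourself flag as the crux, fails as written.
\begin{enumerate}[(i)]
\item Taking $j=(k-1)\Delta+i$, the dominance gives $v(h_{k\Delta+i})\ge v(g_{(2k-1)\Delta+2i})$. Your next inequality $v(g_{(2k+1)\Delta+2i})\ge v(g_{(2k+2)\Delta})$ goes the wrong way when $2i>\Delta$.
\item $g_{(2k+2)\Delta}$ is the \emph{least} valuable good of round $2k+2$, so it does not bound that round from above.
\item Even granting your per-$k$ claim, summing over $k\ge 1$ covers rounds $4,5,\dots$, not $3,4,\dots$. Moreover, $h_{\Delta+i}$ is charged twice.
\end{enumerate}

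More seriously, both the comparison with agent $1$'s goods and your fallback inequality are false. Take $\Delta=4$ and let $G^{\Delta}$ be four goods of value $1$. Let the other goods form classes $\{g_{2t+3},g_{2t+4}\}$ of value $\varepsilon^t$ for $t=1,\dots,K$ with $K\ge 4$, and remove one good per class. Then $v(h_{4+j})=\varepsilon^j$, so the non-first goods of $W^{mod}_4$ are worth $\varepsilon^4+\varepsilon^8+\cdots$. By contrast, agent $1$'s round-$3$ good $g_9$ is worth $\varepsilon^3$, and agent $4$'s round-$2$ good $g_8$ is worth $\varepsilon^2$. For small $\varepsilon$, neither half of $W_1$'s rounds $3,4,\dots$ nor $\frac14\bigl(v(W_4)-v(g_4)\bigr)$ is dominated by the non-first goods of $W^{mod}_4$. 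The round-$2$ good of $W_i$ can only be charged to $h_i$.

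\textbf{The repair.} Use only what you already have, but compare agent $i$ with its \emph{own} goods in $W$. Since $i\le\Delta$, we have $(2k-1)\Delta+2i\le 2k\Delta+i$, hence
\[
v(h_{k\Delta+i})\;\ge\; v(g_{(2k-1)\Delta+2i})\;\ge\; v(g_{2k\Delta+i})\;\ge\; v(g_{(2k+1)\Delta+i}) \qquad (k\ge 1).
\]
So the $(k+1)$-st good of $W^{mod}_i$ dominates both of agent $i$'s goods from rounds $2k+1$ and $2k+2$ of $W$. The same count shows $h_{k\Delta+i}$ exists whenever $g_{2k\Delta+i}$ does. Meanwhile $v(h_i)=v(g_i)\ge v(g_{\Delta+i})$ absorbs rounds $1$ and $2$. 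Hence $v(W_i)\le 2v(h_i)+2\sum_{k\ge1}v(h_{k\Delta+i})=2\,v(W^{mod}_i)$.

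\textbf{Comparison with the paper.} The paper proves the weaker $v(W_i)\le 3\,v(W^{mod}_i)$ the same way. Observation \ref{obs:iprime-equals-i} handles the first good. Its Observation \ref{obs:tdelta-bound} pairs $g'_{t\Delta+i}$ only with $g_{(2t+1)\Delta+i}$, because it uses just that at most half of each whole class is removed, ignoring that $G^{\Delta}$ survives intact. This leaves $g'_i$ to absorb three goods. Your normalization starts the halving count at index $\Delta+1$ and improves the constant from $3$ to $2$. Either constant suffices, since $\ln 3<2$.
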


\section{Sub-optimal Allocations}\label{sec:suboptimal}
We have so far proved approximation guarantees for the round robin allocation. In this section, we prove an important property that can certify, in some sense, that an allocation is far from optimal. 

\begin{lemma}\label{lem:suboptimal}
Let $\cal I$ be a well-behaved instance that satisfies Assumption \ref{ass:non-zero} with $\Delta$ identical agents, a set of large goods $L$ and mean value $\mu$. Assume the small goods are partitioned into classes $G_1, G_2, \dots, G_k$ such that all goods in the same class have the same value and $|G_j| \le \Delta$ for each $j \in [k]$. Let $Y$ be an allocation in this instance with positive Nash welfare where all agents receive at most one good from each class. If $\sum_{j = 1}^k \mathbbm{1}\{|G_j| > 0.5\Delta\}v(G_j) \ge \gamma \Delta \mu$, then
\begin{align*}
\logNSW(Y) < \frac1{\Delta} \left [\sum_{g \in L}\ln v(g) + (\Delta - \ell)\ln \left (\frac{\Delta \mu}{\Delta - \ell} \right ) \right ]  - \gamma^2.
\end{align*}
\end{lemma}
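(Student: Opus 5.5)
The plan is to compare $Y$ against the upper-bound expression of Lemma \ref{lem:nash-upper-bound}, written
$$B = \frac1{\Delta} \left [\sum_{g \in L}\ln v(g) + (\Delta - \ell)\ln \left (\frac{\Delta \mu}{\Delta - \ell} \right ) \right ].$$
I will show that the hypothesis on the big classes forces $Y$ to waste either large goods or a constant times $\gamma\Delta\mu$ of small value, and that either waste costs $\Omega(\gamma)\gg\gamma^2$ in log Nash welfare.

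\textbf{Setup.} Let $N^Y_L$ be the agents holding at least one large good in $Y$, and let $N^Y_S$ be the rest. I split into two cases according to $|N^Y_S|$.

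\textbf{Case 1: $|N^Y_S| > 0.45\Delta$.} By \ref{prop:a}, $\Delta-\ell\le (1/e+\gamma)\Delta<0.37\Delta$. Hence $|N^Y_L|\le 0.55\Delta<\ell - 0.08\Delta$, so at least $0.08\Delta$ large goods are held by agents who hold another large good.

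I bound $\prod_{i} v(Y_i)$ as follows:
\begin{itemize}
\item For an agent holding large goods $h_1\ge h_2\ge\dots$ plus small value $s_i$, use $v(Y_i)\le (\#\text{large goods held})\cdot v(h_1)+s_i$.
\item For the agents in $N^Y_S$, use AM-GM on the total small value $\Delta\mu$, as in Lemma \ref{lem:nash-upper-bound}.
\end{itemize}
Relative to $B$, each surplus large good $h$ loses a factor $v(h)$ and gains only an $O(1)$ multiplicative factor. Among the surplus large goods, all but at most $\gamma^2\Delta$ have $v(h)>\mu/\gamma$ by \ref{prop:d}, so each such merge costs about $\ln(1/\gamma)-O(1)$.

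The extra agent in $N^Y_S$ gains at most $\ln$ of an average share, which is $O(\ln\mu)$ after normalizing by $\mu$. The AM-GM term with $|N^Y_S|$ agents sharing $\Delta\mu$ contributes at most $|N^Y_S|\ln(\Delta\mu/|N^Y_S|)$. Comparing with $(\Delta-\ell)\ln\frac{\Delta\mu}{\Delta-\ell}$ gives a difference of $O(\Delta)$ plus $(|N^Y_S|-(\Delta-\ell))\ln\mu$, and the $\ln\mu$ terms cancel against the lost $\ln v(h)\ge\ln(\mu/\gamma)$.

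Net, we lose at least $\frac{0.08\Delta-\gamma^2\Delta}{\Delta}(\ln(1/\gamma)-O(1))$, which is far more than $\gamma^2$.

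\textbf{Case 2: $|N^Y_S|\le 0.45\Delta$.} For a class $G_j$ with $|G_j|>0.5\Delta$, all its goods go to distinct agents. So at most $0.45\Delta$ of them go to $N^Y_S$, and at most $\gamma^2\Delta$ go to the exceptional large-holders violating \ref{prop:d}. Hence at least $|G_j|-0.45\Delta-\gamma^2\Delta\ge |G_j|/11$ of its goods go to large-holders whose large good has value $>\mu/\gamma$.

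Summing over the big classes, a small value $s\ge \gamma\Delta\mu/11$ sits on such good large-holders.

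I then bound as in Lemma \ref{lem:nash-upper-bound}, using that an agent with a large good $g$ and extra value $s_i$ gains at most $\ln(1+s_i/v(g))\le \gamma s_i/\mu$. First reduce, by the same Case-1 argument, to the situation where each large-holder has essentially one large good. The agents of $N^Y_S$ then share at most $\Delta\mu - s$. AM-GM gives a loss of
$$(\Delta-\ell)\ln\frac{\Delta\mu}{\Delta\mu-s}\ge (\Delta-\ell)\frac{s}{\Delta\mu}\ge 0.36\,\frac{s}{\mu},$$
against a total gain of at most $\gamma s/\mu$. The exceptional large-holders can only add an $O(\gamma^2)$ term, and I plan to bound it crudely via the ordering of their values.

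If $|N^Y_S|<\Delta-\ell$, further loss only helps, by the same merging accounting. So $\logNSW(Y)\le B-\frac{0.3 s}{\Delta\mu}\le B-0.02\gamma<B-\gamma^2$.

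\textbf{Main obstacle.} I expect the main obstacle to be doing a single clean accounting that handles simultaneously agents with several large goods, agents with small bundles above the AM-GM mean, and the $\gamma^2\Delta$ exceptional large goods, without double-counting. I would first try to reduce $Y$, by monotone exchange steps (Lemma \ref{lem:nash-welfare-swap}), to a canonical form: one large good per large-holder, with the remaining value fractional among $N^Y_S$. These steps must preserve the count of big-class goods placed on good large-holders, and the per-agent bounds above are then applied to the canonical form.
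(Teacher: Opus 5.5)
Your plan works, but one step fails as written, and your own fallback is the fix. Your Case 2 is the paper's core computation: holders of a large good worth at least $\mu/\gamma$ gain at most $\gamma s_i/\mu$, AM-GM over the $\Delta-\ell$ non-holders charges about $\frac{\Delta-\ell}{\Delta}\cdot\frac{s}{\Delta\mu}$, and the big-class hypothesis forces $s=\Omega(\gamma\Delta\mu)$. Case 1 also goes through, since $\ln(1/\gamma)\approx 23$ comfortably absorbs your slack. The failing step is ``first reduce, by the same Case-1 argument, to one large good per holder'' in Case 2. Case 1 absorbs $\Theta(\Delta)$ slack (the $|N^Y_S|\ln(\Delta/|N^Y_S|)\le\Delta/e$ term and the gains $\sum_i s_i/v(h_1)$) into a loss of order $0.08\Delta\ln(1/\gamma)$. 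That loss is not available when there are only $o(\Delta)$ surplus large goods.

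The exchange in your last paragraph is exactly the paper's first step and closes this. If some agent holds two large goods, there are more than $\Delta-\ell$ non-holders, so one of them, $j$, has $v(Y_j)<\frac{\Delta\mu}{\Delta-\ell}$. Moving the donor's least valuable large good $\tilde g$ to $j$ more than doubles $v(Y_j)$, because $v(\tilde g)>\frac{\Delta\mu}{\Delta-\ell}$, while the donor keeps at least half its utility. Small goods never move, so the one-per-class constraint survives. The donor keeps its top large good, so the number of big-class goods on holders whose top large good is worth at least $\mu/\gamma$ can only grow.

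After the reduction there are exactly $\Delta-\ell\le(1/e+\gamma)\Delta$ non-holders, so Case 1 is never needed. The count can then be done directly, as the paper does: each big class has at least $0.5\Delta-(1/e+2\gamma)\Delta\ge 0.1\Delta$ goods on such holders.

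Alternatively, you can skip the reduction. With $a=\Delta-\ell$ and $k$ surplus goods, the AM-GM comparison produces an extra $(a+k)\ln\frac{a}{a+k}\le -k$. This pays for the $\sum_i\ln m_i\le k$ coming from $v(Y_i)\le m_i v(h_1)+s_i$. Each missing $\ln v(h)\ge\ln\frac{\Delta\mu}{a}$ pays for one extra non-holder's average share.

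Two smaller points. For holders of the at most $\gamma^2\Delta$ large goods worth less than $\mu/\gamma$, follow the paper: charge their small value to the non-holders' pool as well, and bound their gain by $\ln(1+s_i/v(g))\le\frac{(\Delta-\ell)s_i}{\Delta\mu}$, which cancels exactly against the linearized AM-GM loss. If you bound their gain separately instead, you only get $O(\gamma^2\ln(1/\gamma))$, not $O(\gamma^2)$, because one such agent can collect a lot of small value from many small classes. That is still negligible next to your $0.02\gamma$, so nothing breaks. Also, $|N^Y_S|<\Delta-\ell$ is impossible, since at most $\ell$ agents hold large goods; you mean $>$.
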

\begin{proof}
To upper bound the Nash welfare of $Y$, we modify it slightly to make it easier to analyze. We first assume without loss of generality that $Y$ allocates all the goods. If not, we can allocate the unallocated goods arbitrarily in a way that satisfies the class constraints. This only improves the log Nash welfare of the allocation $Y$.

In the allocation $Y$, let $g^*_i$ denote the highest valued large good that agent $i$ is allocated (if $i$ is allocated a large good). 

If there is some agent who has two (or more) large goods in $Y$, we move one of the large goods to another agent in a way that increases the Nash welfare of $Y$. Specifically, consider an agent $i$ with at least two large goods. Let $\tilde{g}_i$ be the least valued large good in $Y_i$. Since $\tilde{g}_i$ is not the highest valued good of $i$, we have
\begin{align*}
\frac{v(Y_i \setminus \{\tilde{g}_i\})}{v(Y_i)} \ge \frac12.
\end{align*}

By the definition of large goods, this good $\tilde{g}_i$ has value strictly greater than $\frac{\Delta \mu}{\Delta - \ell}$. Since the total value of the small goods is $\Delta \mu$ and there are at least $\Delta - \ell$ agents who do not receive a large good, there is at least one agent who does not receive a large good and receives utility at most $\frac{\Delta \mu}{\Delta - \ell}$. Let $j$ denote one such agent. Adding $\tilde{g}_i$ to $Y_j$ at least doubles the utility of $j$. Therefore, 
\begin{align*}
\frac{v(Y_j \cup \{\tilde{g}_i\})}{v(Y_j)} > 2.
\end{align*}

Combining these two inequalities, we get that moving $\tilde{g}_i$ from agent $i$ to $j$ strictly improves the Nash welfare of $Y$, and strictly increases the number of agents who receive a large good. We can repeat this transfer whenever there is an agent who has two large goods. Let $Y'$ denote the resulting allocation where $\ell$ agents receive a large good. Note that $\NSW(Y') \ge \NSW(Y)$. Moreover, $Y'$ still allocates at most one good from each class to every agent since we only modified the allocations of the large goods.

Let $L^*$ be the set of large goods in the instance with value at least $\frac{\mu}{\gamma}$, and $N_{L^*}$ be the set of agents who receive goods in $L^*$. 
Since the instance is well-behaved, there are at least $(1 - 1/e - 2\gamma)\Delta$ large goods in $L^*$ (\ref{prop:a} and \ref{prop:d}). 
Let $G_1, \dots, G_t$ denote the set of classes of goods such that $|G_j| > 0.5\Delta$ for each $j \in [t]$. Via a simple counting argument, at least $0.1\Delta$ goods from $G_j$ for each $j \in [t]$ are allocated to the agents in $N_{L^*}$ in the allocation $Y'$. Therefore, the total value of small goods allocated to agents in $N_{L^*}$ is at least
\begin{align*}
\sum_{j = 1}^t 0.1\Delta \frac{v(G_j)}{|G_j|} \ge \sum_{j = 1}^t 0.1v(G_j) \ge 0.1\gamma v(G \setminus L). 
\end{align*}

For each agent $i$, let $s_i$ denote the total value that agent $i$ receives from small goods. From our discussion above, we have
\begin{align}
\sum_{i \in N_{L^*}} s_i \ge 0.1\gamma v(G \setminus L) = 0.1 \gamma \Delta \mu. \label{eq:suboptimal-0}
\end{align}

Let $N_L$ denote the set of agents who receive a large good in $Y'$. For any agent $i \in N_L$, let $g^*_i$ denote the large good allocated to agent $i$. Partitioning the agents in $N_{L^*}$, $N_{L} \setminus N_{L^*}$ and $[\Delta] \setminus N_L$, we can upper bound the log Nash welfare of $Y'$:
\begin{align}
\logNSW(Y') &= \frac1{\Delta} \left [ \sum_{i \in N_{L^*}}\ln(v(g^*_i) + s_i) + \sum_{i \in N_{L} \setminus N_{L^*}}\ln(v(g^*_i) + s_i) + \sum_{i \in [\Delta] \setminus N_L} \ln(s_i) \right ] \notag \\
&\le \frac1{\Delta} \left [ \sum_{i \in N_{L^*}}\ln \left (v(g^*_i) \left (1 + \frac{\gamma s_i}{\mu} \right ) \right ) + \sum_{i \in N_{L} \setminus N_{L^*}}\ln \left (v(g^*_i)\left (1 + \frac{(\Delta - \ell)s_i}{\Delta\mu} \right ) \right ) + \sum_{i \in [\Delta] \setminus N_L} \ln(s_i) \right ] \notag\\
&\le \frac1{\Delta} \left [ \sum_{g \in L}\ln v(g) + \sum_{i \in [\Delta] \setminus N_L} \ln(s_i) \right ] \notag \\
&\qquad \qquad + \frac{1}{\Delta} \sum_{i \in N_{L^*}}\ln \left (1 + \frac{\gamma s_i}{\mu} \right ) + \frac{1}{\Delta}\sum_{i \in N_{L} \setminus N_{L^*}}\ln \left (1 + \frac{(\Delta - \ell)s_i}{\Delta\mu} \right )\label{eq:suboptimal} 
\end{align}

In the first inequality, we use the lower bound on $v(g^*_i)$ given by the group each agent in $N_L$ belongs to. If $i \in N_{L^*}$, $v(g^*_i) \ge \frac{\mu}{\gamma}$, and if $i \in N_L$, $v(g^*_i) \ge \frac{\Delta\mu}{\Delta - \ell}$.  

Next, we upper bound $\sum_{i \in [\Delta] \setminus N_L} s_i$. We do this using the AM-GM inequality. 
\begin{align*}
\sum_{i \in [\Delta]\setminus N_L} \ln(s_i) &\le (\Delta - \ell)\ln \left ( \frac{\sum_{i \in [\Delta]\setminus N_L} s_i}{\Delta - \ell} \right ) \\
&\le (\Delta - \ell)\ln \left ( \frac{\Delta \mu - \sum_{i \in N_L} s_i}{\Delta - \ell} \right ) \\
&\le (\Delta - \ell)\ln \left (\frac{\Delta \mu}{\Delta - \ell}\right ) + (\Delta - \ell)\ln \left (1 - \frac{\sum_{i \in N_L} s_i}{\Delta \mu} \right ).
\end{align*}

Plugging this upper bound into \eqref{eq:suboptimal}, we get
\begin{align*}
\logNSW(Y') &\le \frac1{\Delta} \left [ \sum_{g \in L}\ln v(g) + (\Delta - \ell) \ln\left (\frac{\Delta\mu}{\Delta - \ell} \right) \right ]  \\
&\qquad + \frac{1}{\Delta} \sum_{i \in N_{L^*}}\ln \left (1 + \frac{\gamma s_i}{\mu} \right ) + \frac{1}{\Delta}\sum_{i \in N_{L} \setminus N_{L^*}}\ln \left (1 + \frac{(\Delta - \ell)s_i}{\Delta\mu} \right ) + \frac{(\Delta - \ell)}{\Delta}\ln \left (1 - \frac{\sum_{i \in N_L} s_i}{\Delta \mu} \right ).
\end{align*}

We now apply AM$\ge$GM once more to simplify the right hand side.
\begin{align*}
\logNSW(Y') &\le \frac1{\Delta} \left [ \sum_{g \in L}\ln v(g) + (\Delta - \ell) \ln\left (\frac{\Delta\mu}{\Delta - \ell} \right) \right ] \\
& \qquad \qquad + \ln \left (1 - \sum_{i \in N_{L^*}} s_i \left (\frac{\Delta - \ell}{\Delta^2 \mu} - \frac{\gamma}{\Delta\mu} \right )  \right ) \\
&\le \frac1{\Delta} \left [ \sum_{g \in L}\ln v(g) + (\Delta - \ell) \ln\left (\frac{\Delta\mu}{\Delta - \ell} \right) \right ] + \ln \left (1 - \sum_{i \in N_{L^*}} \frac{s_i}{3\Delta \mu}  \right ) \\
&\le \frac1{\Delta} \left [ \sum_{g \in L}\ln v(g) + (\Delta - \ell) \ln\left (\frac{\Delta\mu}{\Delta - \ell} \right) \right ] + \ln \left (1 - \frac{0.1\gamma\Delta\mu}{3\Delta \mu}  \right ) \\
&\le \frac1{\Delta} \left [ \sum_{g \in L}\ln v(g) + (\Delta - \ell) \ln\left (\frac{\Delta\mu}{\Delta - \ell} \right) \right ] -\gamma^2.
\end{align*}

The second inequality uses \ref{prop:a}. The third inequality uses \eqref{eq:suboptimal-0}. The final inequality uses $1 + x \le e^x$. Since $\NSW(Y') \ge \NSW(Y)$, this completes the proof.
\end{proof}

\section{Shmoys-Tardos Rounding}\label{sec:st-rounding}
In this section, we move to instances with non-identical agents. Specifically, we consider instances with $n$ agents, each with an additive valuation function $v_i$ over a set of $m$ goods $G = \{g_1, \dots, g_m\}$.
We describe the Shmoys-Tardos rounding method (referred to as the ST rounding algorithm), and present key tools we use to analyze this method. We then show that unless agents are well-behaved (for some appropriate definition of well-behavedness), the ST rounding algorithm outputs a better than $e^{1/e}$-approximation of the max Nash welfare. 

The ST algorithm takes as input a fractional allocation $\x$ and outputs an integral allocation $X$. This is done via the following bucketing procedure.
Fix an agent $i \in N$ and assume without loss of generality, $v_i(g_1) \ge v_i(g_2) \ge \dots \ge v_i(g_m)$. Define $k_i = \left \lceil \sum_{g \in G} x_{ig} \right \rceil$. We create $k_i$ buckets for agent $i$, denoted $b^1, \dots, b^{k_i}$ where each $b^t$ is defined by a vector $[0, 1]^G$. We divide the fractional bundle ${\x}_i$ into these $k_i$ buckets such that each $b^t$ gets a fractional bundle of size at most $1$. In this construction, $b^1$ gets the highest valued fraction of goods (according to agent $i$) of size at most one. Then, of the remaining fractional bundle $\x_i$ not in $b^1$, $b^2$ gets the highest valued fraction of goods of size at most $1$, and so on. 

\begin{figure}[ht]
\centering
\begin{tikzpicture}

\def\h{0.8}

\draw[thick, fill=gray!35] (0,0)     rectangle (2.4,\h);   
\draw[thick, fill=gray!55] (2.4,0)   rectangle (3.2,\h);   
\draw[thick, fill=gray!25] (3.2,0)   rectangle (5.2,\h);   
\draw[thick, fill=gray!60] (5.2,0)   rectangle (6.8,\h);   
\draw[thick, fill=gray!40] (6.8,0)   rectangle (8.0,\h);   
\draw[thick, fill=gray!50] (8.0,0)   rectangle (9.2,\h);   
\draw[thick, fill=gray!30] (9.2,0)   rectangle (9.8,\h);   
\draw[thick, fill=gray!65] (9.8,0)   rectangle (10.4,\h);  

\node at (1.2, \h/2) {$g_1$};
\node at (2.8, \h/2) {$g_2$};

\draw[thick] (0,0) -- (12.0,0);

\draw[line width=1.6pt] (0,-0.2)    -- (0,\h+0.15);
\draw[line width=1.6pt] (4.0,-0.2)  -- (4.0,\h+0.15);
\draw[line width=1.6pt] (8.0,-0.2)  -- (8.0,\h+0.15);
\draw[line width=1.6pt] (12.0,-0.2) -- (12.0,\h+0.15);

\node[anchor=north] at (4.0,-0.22)  {$1$};
\node[anchor=north] at (8.0,-0.22)  {$2$};
\node[anchor=north] at (12.0,-0.22) {$3$};

\draw[<->,>=latex,thick] (0,\h+0.55)   -- (2.4,\h+0.55)
      node[midway,above]{$x_{ig_1}$};
\draw[<->,>=latex,thick] (2.4,\h+0.55) -- (3.2,\h+0.55)
      node[midway,above]{$x_{ig_2}$};
\draw[<->,>=latex,thick] (3.2,\h+0.55) -- (5.2,\h+0.55)
      node[midway,above]{$x_{ig_3}$};

\node at (6.0,\h+0.7) {$\cdots$};

\draw[decorate,decoration={brace,mirror,amplitude=8pt},thick]
      (0,-0.85)   -- (4.0,-0.85)
      node[midway,below=10pt]{\large $b^1$};
\draw[decorate,decoration={brace,mirror,amplitude=8pt},thick]
      (4.0,-0.85) -- (8.0,-0.85)
      node[midway,below=10pt]{\large $b^2$};
\draw[decorate,decoration={brace,mirror,amplitude=8pt},thick]
      (8.0,-0.85) -- (12.0,-0.85)
      node[midway,below=10pt]{\large $b^3$};

\end{tikzpicture}
\caption{Bucketing construction for agent $i$. The fractional bundle $x_i$ is divided into $k_i = 3$ buckets, each of size at most $1$.}
\label{fig:bucketing}
\end{figure}

This is illustrated in Figure \ref{fig:bucketing}. Formally, the bucketing is uniquely defined by the following properties \cite{Feng2025weightednashadditive}:
 
\begin{description}
\item[(P1)] For every $t \in [k_i - 1]$, we have $|b^t|_1 = 1$; for $t = k_i$, $|b^t|_1 = \sum_{g \in G} x_{ig} - k_i + 1$.
\item[(P2)] For each $g \in G$, $\sum_{t \in [k_i]} b^t_g = x_{ig}$.
\item[(P3)] For every two goods $g_j, g_{j'}$ such that $j < j'$, and $t, t' \in [k_i]$ such that $t < t'$, it cannot be that $b^{t'}_{g_j} > 0$ and $b^{t}_{g_{j'}} > 0$.
\end{description}

We construct these buckets for each agent $i \in N$. Let $B_i = \{b^1, \dots, b^{k_i}\}$ be the set of buckets constructed for agent $i$, and let $B = \bigcup_{i \in N} B_i$ be the set of buckets constructed for all agents $i \in N$. We use these buckets to construct an edge weighted bipartite graph $\cal G(\x)$ over the set of nodes $B \cup G$. For each bucket $b \in B$ and good $g \in G$, we add an edge of weight $b_g$ between $b$ and $g$. From this graph $\cal G(\x)$, we compute an integral matching $M_{\cal G(\x)}$ between $B$ and $G$ in a way that satisfies the following properties:
\begin{description}
\item[Marginals] For every bucket $b$ and good $g$, $\Pr[g \text{ is matched to } b] = b_g$. 
\item[Cardinality] For every bucket $b$ such that $|b|_1 = 1$, $b$ is matched in the integral matching with probability $1$.  
\item[Rationality] If the edge weights are rational, the probability of each integral matching is a rational value.
\end{description}

An integral matching that satisfies the above properties can be computed using standard techniques, such as \cite{birkhoff1946bvndecomposition} and \cite{gandhi2006dependent}. The integral matching $M_{\cal G(\x)}$ implies a natural allocation $X$. The allocation for agent $i$ (the bundle $X_i$) is the set of goods matched to the buckets of agent $i$; that is, the set of goods matched to the buckets in $B_i$ in the integral matching $M_{\cal G(\x)}$.

Our goal in this section is to analyze the Nash welfare of the allocation $X$. To make this easier, we define $\ST(\x, i)$ to be the value $\E[\ln(v_i(X_i))]$; that is, the expected log of the utility of agent $i$ obtained by rounding fractional allocation $\x$ using the ST rounding algorithm. We also define $\ST(\x) = \frac1n \sum_{i \in N}\E[\ln( v_i(X_i))]$ to be the expected log Nash welfare of the final allocation $X$.

\subsection{Worst Case Bundles}
In this subsection, we present a technique that connects $\ST(\x, i)$ to round robin allocations when agents have identical valuations. This will allow us to lower bound $\ST(\x, i)$ using the techniques developed in Section \ref{sec:round-robin-guarantees}.

\begin{definition}[Feasible Integers]
We say a positive integer $\Delta$ is feasible with respect to a rational configuration LP solution $y$ if $\Delta y_{i, S}$ is an integer for each $i \in N$ and $S \subseteq G$. Somewhat similarly, we say that a positive integer $\Delta$ is feasible with respect to a rational fractional allocation $\x$ if 
\begin{inparaenum}[(a)]
\item $\Delta b_g$ is an integer for each bucket $b \in B$ and $g \in G$, and
\item $\Delta \Pr[M_{\cal G(\x)} = M']$ is an integer for each matching $M'$ in the support of $M_{\cal G(\x)}$.
\end{inparaenum}
\end{definition}

A feasible $\Delta$ is guaranteed to exist for a configuration LP solution $y$ if $y$ is rational. A feasible $\Delta$ is guaranteed to exist for the fractional allocation $\x$ if $\x$ is rational and we pick a rounding algorithm that satisfies the Rationality property. Throughout this paper, we only work with rational configuration LP solutions, rational fractional allocations and rational rounding methods. 
Therefore, a feasible $\Delta$ will always exist. 

Fix an agent $i \in N$. Given a fractional allocation $\x$, we construct an instance with $\Delta$ agents (for any feasible $\Delta$ w.r.t. $\x$). For each bucket $b \in B_i$, we construct a set of goods $G_b$ that contains $\Delta b_g$ copies of each good $g$. The set of goods of the instance we construct is $G' := \bigcup_{b \in B_i} G_b$.  Essentially, for each $g \in G$, the set of goods $G'$ has $\Delta x_{ig}$ copies of $g$. All agents in this instance have the valuation function $v_i$. We refer to this instance as the {\em artificial instance} for agent $i$ with $\Delta$ agents, denoted $\cal I^{i, \x}(\Delta)$. We use $[\Delta]$ to denote the set of agents of this instance and $G' = \{g'_1, g'_2, \dots, g'_{m'}\}$ to denote the set of goods with the assumption that $v_i(g'_1) \ge \dots \ge v_i(g'_{m'})$. Since all goods from $G_{b^1}$ are weakly more valuable than the goods in $G_{b^2}$ and so on, we assume $G_{b^1} = \{g'_1, \dots, g'_{\Delta}\}$ and $G_{b^2} = \{g'_{\Delta + 1}, \dots, g'_{2\Delta}\}$, and so on.

Our main result in this subsection is the following. 

\begin{lemma}\label{lem:w-worst}
Let $\x$ be a fractional allocation. Let $\cal I^{i, \x}(\Delta)$ be the artificial instance of agent $i$ for some $\Delta$ that is feasible with respect to $\x$. Let $W$ be the round robin allocation of this instance $\cal I^{i, \x}(\Delta)$. The following holds: 
$$\ST(\x, i) \ge \logNSW(W).$$
\end{lemma}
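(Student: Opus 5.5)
Let $\mathcal{M}$ denote the support of the random matching $M_{\cal G(\x)}$. Since $\Delta$ is feasible, each matching $M' \in \mathcal{M}$ has probability $p_{M'}$ with $\Delta p_{M'}$ an integer, and $\sum_{M'} \Delta p_{M'} = \Delta$. The plan is to assign each of the $\Delta$ agents of $\cal I^{i,\x}(\Delta)$ to one matching. We create $\Delta p_{M'}$ agents for each $M'$, so that $\ST(\x,i) = \frac1\Delta \sum_{M'} \Delta p_{M'} \ln v_i(X_i^{M'})$. Here $X_i^{M'}$ is the set of goods matched to buckets of $B_i$ under $M'$. The right-hand side is the log Nash welfare of an allocation $Y$ of (a subset of) $G'$ in which each artificial agent receives a copy of its matched bundle. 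This requires checking that copies suffice. By the Marginals property, good $g$ is matched to bucket $b$ in exactly $\Delta b_g$ of the $\Delta$ matchings (counted with multiplicity), and $G_b$ contains exactly $\Delta b_g$ copies of $g$. So $Y$ can be realized with each agent receiving at most one good from each $G_b$. By the Cardinality property, each agent receives exactly one good from each full bucket $G_{b^t}$, $t<k_i$, and the number of agents receiving a good from $G_{b^{k_i}}$ equals $|G_{b^{k_i}}|$. Hence $Y$ is an allocation of all of $G'$ in which every agent gets exactly one good from each $G_{b^t}$, except that the last block may be partially distributed.

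It then suffices to show $\logNSW(Y) \ge \logNSW(W)$ for every such ``one good per block'' allocation $Y$. In this sense $W$ is the worst case. Round robin $W$ with the stated tie-breaking is itself of this form: agent $j$ gets $g'_{(t-1)\Delta + j}$ from block $t$. Every allocation of this form differs only in how each agent's goods are chosen within the blocks. Write agent $j$'s utility in $Y$ as $\sum_t y_{j,t}$, where $y_{j,t}$ is the value of the good it takes from block $t$.

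The key step is a rearrangement/majorization argument. I would proceed block by block, from the last block up to the first, using the swap Lemma \ref{lem:nash-welfare-swap}. Suppose two agents $j,j'$ have $v(\text{blocks } 1..t-1 \text{ of } j) \ge$ that of $j'$, but in block $t$ agent $j$ holds a more valuable good. Swapping their suffixes (blocks $t$ onward) does not increase the product when the hypotheses of Lemma \ref{lem:nash-welfare-swap} hold. The target is the fully ``anti-sorted'' configuration where, in every block, higher-prefix agents receive lower goods; this is exactly round robin. More cleanly, I would induct on the number of blocks. Let $P_j$ be the utility of agent $j$ from blocks $1..t-1$ and $a_j$ its good in block $t$. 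Among all pairings of the multiset $\{a_j\}$ with $\{P_j\}$, the product $\prod (P_j + a_j)$ is minimized by the anti-sorted pairing (a standard exchange argument via Lemma \ref{lem:nash-welfare-swap}). Moreover, the anti-sorted profile $(P_j+a_j)$ keeps the ordering structure needed for the next step.

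The main obstacle is that a greedy block-by-block minimization need not globally minimize, since the prefix vector $P$ itself changes. To handle this, I would show that the round robin prefix vector is majorized-extremal. That is, the vector of round robin utilities after $t$ blocks is ``most spread'' (it majorizes every achievable prefix vector, in the log/Schur-concave sense), and adding an anti-sorted block preserves this. Since $\sum \ln$ is Schur-concave, majorization gives $\logNSW(Y)\ge \logNSW(W)$. If the majorization route is delicate, the fallback is the explicit exchange argument. Take an optimal (minimum-log-NSW) one-per-block allocation $Y$ that agrees with $W$ on the longest possible sequence of blocks, and use Lemma \ref{lem:nash-welfare-swap} on full suffixes to show it can be made to agree with $W$ on one more block without increasing the product. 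This exchange argument is the step I expect to require the most care.
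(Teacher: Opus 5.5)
Your first half (building the allocation from the support of $M_{\cal G(\x)}$, using Marginals for validity and Cardinality for the one-good-per-block structure, and identifying $\ST(\x,i)$ with its log Nash welfare) matches the paper's. The comparison step, however, has the orientation backwards, so your key step is false as stated.

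In $W$, agent $j$ receives $g'_{(t-1)\Delta+j}$ from every block $t$. So agent $1$ gets the best good of every block, agent $2$ the second best, and so on. In each block, the agent with the larger prefix gets the \emph{larger} good: round robin is the co-sorted configuration, not the anti-sorted one. Moreover, for $P_1\ge P_2$ and $a_1\ge a_2$,
\[
(P_1+a_1)(P_2+a_2)-(P_1+a_2)(P_2+a_1)=-(P_1-P_2)(a_1-a_2)\le 0,
\]
which is exactly Lemma \ref{lem:nash-welfare-swap}. Hence the co-sorted pairing \emph{minimizes} $\prod_j(P_j+a_j)$, and the anti-sorted pairing maximizes it. An exchange process that steers each block toward anti-sorted order moves away from $W$ and can only raise the product, so it says nothing about $\logNSW(W)$. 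Your remark that the round robin vector is the most spread is correct, but it contradicts building that vector from anti-sorted blocks.

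With the orientation corrected, your majorization route works, is shorter than the paper's, and needs no block-by-block induction. Pad the last block with zeros, and let $a^{(t)}_1\ge\dots\ge a^{(t)}_\Delta$ be the values in block $t$. A one-per-block allocation gives utilities $u_j=\sum_t a^{(t)}_{\sigma_t(j)}$ for some permutations $\sigma_t$, while $W$ gives the decreasing vector $w_j=\sum_t a^{(t)}_j$. For any set $K$ of $k$ agents,
\[
\sum_{j\in K}u_j\le\sum_t\sum_{r\le k}a^{(t)}_r=\sum_{r\le k}w_r,
\]
and the totals agree, so $u$ is majorized by $w$ ($u\prec w$). Karamata's inequality for the concave $\ln$ then gives $\sum_j\ln u_j\ge\sum_j\ln w_j$; this holds trivially if some $w_j=0$.

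The paper instead converts $Z$ into $W$ one good at a time using Lemma \ref{lem:nash-welfare-swap}. That argument needs a case split which your fallback of swapping full suffixes misses. Suppose $g'_k$ must move from $j'$ to $j<j'$, and $j$ already has the larger value in the blocks after $t$. Then swapping everything from block $t$ onward can increase the product, so one must swap only the two block-$t$ goods. Only when $j'$ has the larger later suffix does one swap $g'_k$ together with that suffix.
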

\begin{proof}
Let $M_{\cal G(\x)}$ be the integral matching constructed by the ST rounding algorithm applied to allocation $\x$. For each matching $M'$ in the support of $M_{\cal G(\x)}$, we define a bundle $S(M') \subseteq G'$, where $G'$ is the set of goods of $\cal I^{i, \x}(\Delta)$. The bundle $S(M')$ consists of one copy of the good $g$ matched to bucket $b$ (in $M'$) from the set $G_{b}$, for each bucket $b \in B_i$. 

We define an allocation $Z$ in the instance $\cal I^{i, \x}(\Delta)$ as one where $\Delta \Pr[M_{\cal G(\x)} = M']$ agents receive the bundle $S(M')$, for each $M'$ in the support of $M_{\cal G(\x)}$. This is a valid allocation because the number of copies of good $g$ from set $G_b$ it allocates is exactly $\Delta \Pr[g \text{ is matched to } b] = \Delta b_g$ for each good $g \in G$ and bucket $b \in B_i$. Using the Cardinality property of the allocation $X$, we know that each of the first $k_i - 1$ buckets is matched to a good. Therefore, each bundle in the allocation $Z$ must have exactly one good from $G_{b^1}, \dots G_{b^{k_i - 1}}$, and at most one good from the set $G_{b^{k_i}}$.

The proof has two steps. We first show that $\logNSW(Z) \ge \logNSW(W)$. Then, we show that $\logNSW(Z) = \E[\ln(v_i(X_i))]$. 

Assume that $Z$ and $W$ are not identical allocations. We start at the allocation $Z$ and change it using a series of steps till we reach the allocation $W$. We show that each of these steps weakly reduces the Nash welfare of the allocation. This shows that the Nash welfare of $W$ is upper bounded by the Nash welfare of $Z$. To make the rest of this proof more readable, we say good $g$ is from bucket $b$ instead of saying good $g$ is from the set $G_b$.

Let $g'_k$ be the good with the smallest $k$ such that it is allocated to different agents in $Z$ and $W$.
Let $k = (t-1)\Delta + j$ for positive integer $t$ and positive integer $j$ with $j \le \Delta$; this means $g'_k$ is from the $t$-th bucket $b^t$ and is allocated to agent $j$ in $W$. Assume $g'_k$ is allocated to agent $j'$ in $Z$. 

If $j' < j$, then the good that $Z_{j'}$ receives from bucket $b^t$ is $g'_k$. Since $Z_{j'}$ receives at most one good from bucket $b^t$, this implies that $g'_{(t-1)\Delta + j'}$ is allocated to different agents in both $W$ and $Z$, contradicting our assumption about $k$. Therefore $j' > j$. 

Let the set of goods in $Z_j$ from bucket $b^t$ be $S_j$. Note that $|S_j| \le 1$; $S_j$ could be empty if $t = k_i$. Since all goods $\{g'_1, \dots, g'_{k-1}\}$ are allocated to the same agents in $Z$ and $W$, it must be that $g'_k$ is more valuable than the singleton good in $S_j$ (if one exists); specifically, we have $v_i(g'_k) \ge v_i(S_j)$. 

We define $Z^{<t}_j$ as the set of goods in $Z_j$ from the buckets with index $< t$. We similarly define $Z^{> t}_j, W^{< t}_j$ and $W^{>t}_j$. Note that $Z^{<t}_j$ and $W^{<t}_j$ are identical by assumption and $v_i(W^{<t}_j) \ge v_i(W^{<t}_{j'})$ since $j' > j$. We have two cases in our proof. 

\noindent \textbf{Case 1: } $v_i(Z_{j}^{> t}) \ge v_i(Z_{j'}^{> t})$. We convert $Z$ into a new allocation $\widetilde{Z}$ by swapping $g'_k$ and $S_j$; that is, we allocate $g'_k$ to agent $j$ and $S_j$ to agent $j'$ leaving everything else unchanged. This swap reduces the Nash welfare of allocation $Z$ due to Lemma \ref{lem:nash-welfare-swap}. Additionally, $\widetilde{Z}$ and $W$ do not differ in the allocations of the first $k$ goods $g'_1, \dots, g'_k$.

\noindent \textbf{Case 2: } $v_i(Z_{j}^{> t}) < v_i(Z_{j'}^{> t})$. We convert $Z$ into a new allocation $\widetilde{Z}$ by swapping $Z_{j'}^{> t} \cup \{g'_k\}$ and $Z_{j}^{> t} \cup S_j$; that is, we allocate $Z_{j'}^{> t} \cup \{g'_k\}$ to agent $j$ and $Z_{j}^{> t} \cup S_j$ to agent $j'$ leaving the rest of the allocation unchanged. This swap reduces the Nash welfare of allocation $Z$ due to Lemma \ref{lem:nash-welfare-swap}. Once again, $\widetilde{Z}$ and $W$ do not differ in the allocations of the first $k$ goods $g'_1, \dots, g'_k$.

What we have ensured above is that given an allocation $Z$ that agrees with $W$ on the allocation of the first $k-1$ goods, we can convert it into an allocation $\widetilde{Z}$ that agrees with $W$ on the allocation of the first $k$ goods and has weakly lower Nash welfare. Moreover, $\widetilde{Z}$ allocates to each agent, exactly one good from the first $k_i - 1$ buckets and at most one good from the last bucket. Repeating this step at most $m'$ times, we reach the allocation $W$. This proves that $W$ has a lower Nash welfare than $Z$.

Finally, we recall the definition of $Z$ as an allocation where $\Delta \Pr[M_{\cal G(\x)} = M']$ agents receive the bundle $S(M')$ for each $M'$ in the support of $M_{\cal G(\x)}$. Assuming $X$ is the final allocation output by ST rounding, we have
\begin{align*}
\logNSW(Z) = \frac1{\Delta}\sum_{j \in [\Delta]}\ln(v_i(Z_j)) &= \frac{1}{\Delta} \sum_{M' \in supp(M_{\cal G(\x)})} \Delta \Pr[M_{\cal G(\x)} = M']\ln(v_i(S(M'))) \\
&= \E[\ln(v_i(X_i))] = \ST(\x, i). 
\end{align*}
The penultimate equality holds because $S(M')$ is exactly $X_i$ when $M' = M_{\cal G(\x)}$ (more pedantically, $S(M')$ contains one copy of each good in $X_i$ when $M' = M_{\cal G(\x)}$). 
\end{proof}

\subsection{Configuration LP solutions and their Corresponding Fractional Allocations}
In the previous section, we established a connection between $\ST(\x, i)$ and the round robin allocation in an instance with identical valuations. In this section, we will connect the configuration LP solution $y$ to the same instance with identical valuations.
Let $y$ be a rational configuration LP solution with a well-defined objective value. Using this solution, we define a fractional allocation $\x$ as follows:
\begin{align}
 x_{ig} = \sum_{S: g \in S} y_{i, S}.\label{eq:fractional-allocation-definition}
\end{align}
We refer to the fractional allocation $\x$ constructed using configuration LP solution $y$ as the {\em corresponding fractional allocation} of $y$. Every time we mention a configuration LP solution $y$, we mean a rational configuration LP solution with a well-defined objective value. Such a solution is guaranteed to exist due to Remark \ref{rem:well-defined-objective}. It can also be found in polynomial time using Theorem \ref{thm:configuration-lp}.

We prove three results. The first one is a simple observation about the fractional allocation $\x$, the second one shows that artificial instances always satisfy Assumption \ref{ass:non-zero} and the third one is a result from \cite{Feng2025weightednashadditive}, which we include for completeness.

We start by showing that all goods are completely allocated in the fractional allocation $\x$. 
\begin{obs}\label{obs:x-complete}
Let $y$ be a configuration LP solution and $\x$ be its corresponding fractional allocation. For all goods $g \in G$, $\sum_{i \in N} x_{ig} = 1$. 
\end{obs}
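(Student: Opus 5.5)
The plan is a direct computation that swaps the order of summation and then invokes constraint \eqref{eq:item-capacity} of the configuration LP. Fix a good $g \in G$. By the definition of the corresponding fractional allocation in \eqref{eq:fractional-allocation-definition}, we have $x_{ig} = \sum_{S: g \in S} y_{i,S}$ for every agent $i$. Summing over $i \in N$ gives
\begin{align*}
\sum_{i \in N} x_{ig} = \sum_{i \in N} \sum_{S \subseteq G: g \in S} y_{i,S}.
\end{align*}
This double sum is exactly the left-hand side of \eqref{eq:item-capacity}. Since $y$ is a valid configuration LP solution, and by the remark following Theorem \ref{thm:configuration-lp} we may take \eqref{eq:item-capacity} to hold with equality, the sum equals $1$.

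The proof should also confirm that $\x$ is a legitimate fractional allocation, even though the observation itself only asks for the column sums. Nonnegativity of $x_{ig}$ follows from \eqref{eq:nonnegativity}. The bound $x_{ig} \le 1$ holds because $x_{ig} \le \sum_{S} y_{i,S} = 1$ by \eqref{eq:agent-cover}; alternatively, it follows from the column sum being $1$ together with nonnegativity.

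There is no real obstacle. The only subtlety is that the solver of Theorem \ref{thm:configuration-lp} may output a solution satisfying only the relaxed constraint $\le 1$. The remark handles this: the solution is completed by assigning any unallocated fraction of a good to some positive $y_{i,S}$, replacing $S$ with $S \cup \{g\}$. This does not decrease $v_i(S)$, so the objective does not decrease and the solution stays rational. I will cite that remark rather than redo the argument.
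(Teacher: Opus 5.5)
Your argument is exactly the paper's: sum the definition of $x_{ig}$ over $i \in N$, recognize the left-hand side of \eqref{eq:item-capacity}, and use that this constraint holds with equality under the paper's convention. One small caution on your aside about completion: the fix should move only mass equal to the deficit $1-\sum_{j} x_{jg}$ from sets $S$ with $g \notin S$ to the corresponding $S \cup \{g\}$, rather than replace $S$ wholesale, which could overshoot. This is always possible, since $\sum_{S:\, g \notin S} y_{i,S} = 1 - x_{ig}$ is at least that deficit. It does not affect the observation itself.
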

\begin{proof}
This follows from the condition \eqref{eq:item-capacity} that configuration LP solutions satisfy. This condition states that $\sum_{i \in N}\sum_{S\subseteq G: g \in S} y_{i, S} = 1$ for all $g \in G$. The observation follows from noting, by the definition of $\x$, that $\sum_{i \in N} x_{ig} = \sum_{i \in N}\sum_{S\subseteq G: g \in S} y_{i, S} = 1$.
\end{proof}

Next, we show that that artificial instances $\cal I^{i, \x}(\Delta)$ satisfy Assumption \ref{ass:non-zero} when $\x$ is the corresponding fractional allocation of some configuration LP solution. This allows us to apply the results from Section \ref{sec:round-robin-guarantees} to this instance. 

\begin{lemma}
Let $y$ be a rational configuration LP solution and $\x$ be the corresponding fractional allocation. Let $\cal I^{i, \x}(\Delta)$ be the artificial instance of agent $i$ for some $\Delta$ that is feasible with respect to $\x$. The instance $\cal I^{i, \x}(\Delta)$ satisfies Assumption \ref{ass:non-zero}.
\end{lemma}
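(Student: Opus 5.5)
We need to show that $\cal I^{i, \x}(\Delta)$ contains at least $\Delta$ goods (counted with multiplicity, i.e. copies) that are positively valued under $v_i$. The plan is to count copies of positively valued goods directly from the configuration LP solution $y$. By Remark \ref{rem:well-defined-objective}, every $S$ with $y_{i,S} > 0$ satisfies $v_i(S) > 0$, so each such $S$ contains at least one good $g$ with $v_i(g) > 0$. Let $P_i = \{g \in G : v_i(g) > 0\}$.

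The number of copies of positively valued goods in the artificial instance is
\[
\sum_{g \in P_i} \Delta x_{ig} \;=\; \Delta \sum_{g \in P_i} \sum_{S \ni g} y_{i,S} \;=\; \Delta \sum_{S \subseteq G} y_{i,S}\, |S \cap P_i|.
\]
Every $S$ in the support of $y_{i,\cdot}$ has $|S \cap P_i| \ge 1$, so this is at least $\Delta \sum_{S} y_{i,S} = \Delta$, using \eqref{eq:agent-cover}. Here we use that the artificial instance has exactly $\Delta x_{ig}$ copies of each good $g$, which follows from (P2) of the bucketing, since the copies are $\Delta b_g$ per bucket summed over $b \in B_i$. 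These counts are integers by feasibility of $\Delta$.

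Each copy of $g \in P_i$ is valued $v_i(g) > 0$ by every agent of $\cal I^{i,\x}(\Delta)$, since all agents share $v_i$. Hence there are at least $\Delta$ goods positively valued by all agents, which is exactly Assumption \ref{ass:non-zero}.

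The only delicate point is making sure the LP solution $y$ has no positive $y_{i,S}$ with $v_i(S) = 0$. This is guaranteed because $y$ is assumed to have a well-defined, finite objective value, per the convention stated just before the lemma.
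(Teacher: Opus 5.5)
Your proof is correct and follows essentially the same route as the paper. Both use the finite objective value to get that every $S$ with $y_{i,S}>0$ contains a good of positive value, and then double-count $\sum_{g} x_{ig}$ over the positively valued goods to get at least $\sum_S y_{i,S} = 1$, hence at least $\Delta$ positively valued copies. Your identity with $|S \cap P_i|$, and your remarks on integrality and on all agents sharing $v_i$, only spell out details the paper leaves implicit.
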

\begin{proof}
We assume that $y$ has a well-defined objective value. This means for every $S \subseteq G$ such that $y_{i, S} > 0$, the value $v_i(S)$ is positive. This implies there is at least one good in $S$ with positive value for agent $i$. Let $G^+$ be the set of goods that agent $i$ has positive value for. By our discussion, each set $S$ with $y_{i, S} > 0$ has at least one good from $G^+$. Therefore,
\begin{align*}
\sum_{g \in G^+} x_{ig} = \sum_{g \in G^+} \sum_{S \subseteq G: g \in S} y_{i, S} \ge \sum_{S \subseteq G} y_{i, S} = 1.
\end{align*}

The final equality holds since $y$ satisfies \eqref{eq:agent-cover}.
Therefore, the number of goods with positive value in $\cal I^{i, \x}(\Delta)$ is at least $\Delta \sum_{g \in G^+} x_{ig} \ge \Delta$. This means that Assumption \ref{ass:non-zero} is satisfied.
\end{proof}

Finally, we present a result from \cite{Feng2025weightednashadditive} that lower bounds the log Nash welfare of $\cal I^{i, \x}(\Delta)$.
\begin{lemma}[\cite{Feng2025weightednashadditive}]\label{lem:y-substitute}
Let $y$ be a rational configuration LP solution and $\x$ be the corresponding fractional allocation. Let $\cal I^{i, \x}(\Delta)$ be the artificial instance of agent $i$ for some $\Delta$ that is feasible with respect to both $\x$ and $y$. Then the following holds:
\begin{align*}
\logNSW(\cal I^{i, \x}(\Delta)) \ge \sum_{S \subseteq G} y_{i, S} \ln{v_i(S)}
\end{align*}
\end{lemma}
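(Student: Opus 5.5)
The plan is to build an explicit allocation $Z$ in the artificial instance $\cal I^{i, \x}(\Delta)$ whose log Nash welfare equals $\sum_{S} y_{i,S}\ln v_i(S)$. Since $\logNSW(\cal I^{i, \x}(\Delta))$ is the maximum over all allocations, this gives the lemma.

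Because $\Delta$ is feasible with respect to $y$, each $\Delta y_{i,S}$ is a nonnegative integer. By \eqref{eq:agent-cover}, $\sum_S \Delta y_{i,S} = \Delta$. So we can assign the $\Delta$ agents of $\cal I^{i, \x}(\Delta)$ to configurations, giving exactly $\Delta y_{i,S}$ agents the configuration $S$. Each agent assigned to $S$ will receive one copy of every good $g \in S$.

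The next step is to check that enough copies exist. The number of copies of $g$ requested is
\begin{align*}
\sum_{S \ni g} \Delta y_{i,S} = \Delta x_{ig},
\end{align*}
by the definition \eqref{eq:fractional-allocation-definition} of $\x$. The artificial instance contains exactly $\Delta x_{ig}$ copies of $g$: it has $\sum_{b \in B_i}\Delta b_g$ copies, and this equals $\Delta x_{ig}$ by (P2). Hence $Z$ is a valid allocation, and in fact it allocates every good.

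Finally, all agents share the valuation $v_i$, and copies have the same value as the original good. So an agent assigned to $S$ has utility $v_i(S)$, which is positive whenever $y_{i,S}>0$ by Remark \ref{rem:well-defined-objective}. Therefore
\begin{align*}
\logNSW(Z) = \frac{1}{\Delta}\sum_S \Delta y_{i,S}\ln v_i(S) = \sum_S y_{i,S}\ln v_i(S),
\end{align*}
and $\logNSW(\cal I^{i, \x}(\Delta)) \ge \logNSW(Z)$ finishes the proof.

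I do not expect a real obstacle here. The only point needing care is the bookkeeping that the bucket decomposition produces exactly $\Delta x_{ig}$ copies of each good, which follows from (P2) together with integrality from feasibility.
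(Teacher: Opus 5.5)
Your proposal is correct and is essentially the paper's proof: give $\Delta y_{i,S}$ agents the bundle $S$, check via the definition of $\x$ that exactly $\Delta x_{ig}$ copies of each good $g$ are used, and compute that this allocation's log Nash welfare equals $\sum_S y_{i,S}\ln v_i(S)$. The extra bookkeeping you include (that $\sum_S \Delta y_{i,S} = \Delta$ by \eqref{eq:agent-cover}, that (P2) gives exactly $\Delta x_{ig}$ copies, and positivity of $v_i(S)$) is left implicit in the paper and is all correct.
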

\begin{proof}
In the instance $\cal I^{i, \x}(\Delta)$, we define the allocation $Y$ as one where $\Delta y_{i, S}$ agents receive the bundle $S$ for each $S\subseteq G$. This is a valid allocation since the number of copies of good $g$ it allocates is exactly $\sum_{S \subseteq G: g \in S} \Delta y_{i, S} = \Delta x_{i, g}$ for each good $g \in G$. This implies
\begin{align*}
\logNSW(\cal I^{i, \x}(\Delta)) \ge \logNSW(Y).
\end{align*}
The proof is completed by using the definition of $Y$ to measure its log Nash welfare:
\begin{equation*}
\logNSW(Y) = \frac1{\Delta} \sum_{j \in [\Delta]} \ln{v_i(Y_j)} = \frac1{\Delta} \sum_{S \subseteq G} \Delta y_{i, S} \ln{v_i(S)} = \sum_{S \subseteq G} y_{i, S} \ln{v_i(S)}. \qedhere
\end{equation*}
\end{proof}

\begin{remark}\label{rem:fl}
As \cite{Feng2025weightednashadditive} show, rounding the corresponding fractional allocation $\x$ of configuration LP solution $y$ results in an $e^{1/e}$-approximation of the max Nash welfare. This is because for each agent $i$, 
\begin{align*}
\ST(\x, i) \ge \logNSW(W) \ge \logNSW(\cal I^{i, \x}(\Delta)) - \frac1e \ge \sum_{S \subseteq G} y_{i, S} \ln v_i(S) - \frac1e.
\end{align*}
The first inequality follows from Lemma \ref{lem:w-worst}. The second inequality follows from Lemma \ref{lem:barman-eonebye}. The final inequality follows from Lemma \ref{lem:y-substitute}.
\end{remark}

\subsection{Well-behaved Agents}
In the previous subsections, we established a connection between the round robin allocation of an instance with identical valuations and the approximation guarantees of the ST rounding algorithm. To exploit our results and improve over $e^{1/e}$, we define a notion of well-behavedness for agents as we did for instances with identical valuations. In this section, we define what it means for an agent $i$ to be well-behaved in a fractional allocation $\x$. 
We emphasize here that while these definitions hold for any fractional allocation $\x$, they will only be applied to corresponding fractional allocations of some configuration LP solution.
We start by defining a set of large goods and a mean value for fractional bundles. 

\begin{definition}[Large and Small Goods in Fractional Bundles]\label{def:large-goods-fractional}
Given a fractional allocation $\x$, we define the set of large goods $L_i$ for agent $i$ as the set of goods which satisfies
\begin{enumerate}[(i)]
\item for each $g \in L_i$, $x_{ig} > 0$ and $v_i(g) > \frac{v_i(\x_i) - \sum_{g \in L_i} x_{ig} v_i(g)}{1 - \sum_{g \in L_i} x_{ig}}$, 
\item for each $g \in G \setminus L_i$, $x_{ig} = 0$ or $v_i(g) \le \frac{v_i(\x_i) - \sum_{g \in L_i} x_{ig} v_i(g)}{1 - \sum_{g \in L_i} x_{ig}}$, and
\item $\sum_{g \in L_i} x_{ig} < 1$.
\end{enumerate}
We refer to all the goods in $G \setminus L_i$ as small goods for the agent $i$.
\end{definition}

\begin{definition}[Mean Value in Fractional Bundles]\label{def:mean-fractional}
Given a fractional allocation $\x$, the mean value of the agent $i$, denoted $\mu^i$, is defined as the value $\sum_{g \in G \setminus L_i} x_{ig} v_i(g)$.
\end{definition}

\begin{definition}[Large Good Amount in Fractional Bundles]
Given a fractional allocation $\x$, the amount of large goods allocated to agent $i$, denoted $\ell_i$, is defined as the value $\sum_{g \in L_i} x_{ig}$.
\end{definition}

We can prove a uniqueness lemma similar to Observation \ref{obs:large-goods-unique}.

\begin{lemma}\label{lem:large-goods-unique-fractional}
For any fractional allocation $\x$ and agent $i$, there is a unique set of large goods $L_i$ satisfying the properties of Definition \ref{def:large-goods-fractional}. 
Moreover, this set of large goods can be found in polynomial time.
\end{lemma}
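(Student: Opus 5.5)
Fix agent $i$ and sort the goods with $x_{ig}>0$ so that $v_i(g_1)\ge v_i(g_2)\ge\dots\ge v_i(g_r)$. The plan is to show that every valid $L_i$ is a prefix $\{g_1,\dots,g_k\}$ of this order. Then we show that exactly one prefix length $k$ satisfies the three conditions of Definition \ref{def:large-goods-fractional}. For a prefix $P_k=\{g_1,\dots,g_k\}$ write $s_k=\sum_{j\le k}x_{ig_j}$ and
\[
\tau_k=\frac{v_i(\x_i)-\sum_{j\le k}x_{ig_j}v_i(g_j)}{1-s_k},
\]
which is defined whenever $s_k<1$.

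\textbf{Step 1 (prefix structure).} Suppose $L_i$ is valid with threshold $\tau=\tau(L_i)$. Every good in $L_i$ has value $>\tau$, and every positive-fraction good outside $L_i$ has value $\le\tau$. Hence every good of $L_i$ is strictly more valuable than every positive-fraction good outside it. Goods with $x_{ig}=0$ cannot lie in $L_i$ by condition (i). So $L_i$ is a prefix $P_k$, up to ties. Ties are harmless: two goods of equal value are either both strictly above $\tau$ or both at most $\tau$, so ties never split. This also shows that the candidates are exactly the $k$ with $s_k<1$, $v_i(g_k)>\tau_k$ and $v_i(g_{k+1})\le\tau_k$. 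The last two conditions only need checking at the boundary goods, by monotonicity of the values.

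\textbf{Step 2 (existence and uniqueness).} Existence holds with $k=0$ when $v_i(g_1)\le\tau_0=v_i(\x_i)$. Otherwise we take $k$ to be the largest valid one, but the cleanest route is a monotonicity argument. The key identity is
\[
\tau_{k-1}(1-s_{k-1})=\tau_k(1-s_k)+x_{ig_k}v_i(g_k),
\]
so $\tau_{k-1}$ is a convex combination of $\tau_k$ and $v_i(g_k)$, with weights $(1-s_k)/(1-s_{k-1})$ and $x_{ig_k}/(1-s_{k-1})$. Consequently $v_i(g_k)>\tau_k$ holds iff $v_i(g_k)>\tau_{k-1}$, iff $\tau_{k-1}>\tau_k$.

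Define the predicate $Q(k)$: ``$v_i(g_{k+1})>\tau_k$'', meaning the next good is large relative to the current threshold. A valid $k$ is one with $Q(k-1)$ true (or $k=0$) and $Q(k)$ false. We show that once $Q(k)$ is false it stays false. If $v_i(g_{k+1})\le\tau_k$, then $\tau_{k+1}\ge\tau_k\ge v_i(g_{k+1})\ge v_i(g_{k+2})$, using the convex-combination identity for the first inequality. So $Q$ is true on an initial segment and false afterwards. The valid $k$ is the unique switching point, which gives uniqueness.

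For existence, take $k$ to be the first index with $Q(k)$ false. We must check $s_k<1$. As $s_k\to 1$ we have $\tau_k$ equal to (remaining value)/(remaining size), which is at least the value of the next good with positive fraction, since that good is the most valuable of the remainder. So $Q$ fails before $s_k$ reaches $1$. More precisely, $Q(k)$ is false once the remaining mass is at most the fraction of $g_{k+1}$, and $\sum x_{ig}$ may exceed $1$. I would handle this by showing $\tau_k\ge v_i(g_{k+1})$ whenever $1-s_k\le x_{ig_{k+1}}$: the numerator is at least $(1-s_k)v_i(g_{k+1})$ when the remaining bundle, truncated appropriately, contains that good. I expect this edge case, where the total size is not exactly $1$ or the numerator contains extra goods, to be the main obstacle. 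It needs some care with the bound $v_i(\x_i)-\sum_{j\le k}x_{ig_j}v_i(g_j)\ge x_{ig_{k+1}}v_i(g_{k+1})\ge(1-s_k)v_i(g_{k+1})$, which gives $\tau_k\ge v_i(g_{k+1})$.

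\textbf{Step 3 (algorithm).} Sort the goods and compute $s_k$ and $\tau_k$ incrementally for $k=0,1,\dots$. Stop at the first $k$ with $v_i(g_{k+1})\le\tau_k$, or when no goods remain, and output $P_k$. This takes $O(m\log m)$ time, which is polynomial.
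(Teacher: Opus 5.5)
Your proof is correct. It uses the same two ingredients as the paper. The first is the identity writing $\tau_{k-1}$ as a convex combination of $\tau_k$ and $v_i(g_k)$; the paper uses this computation to show that the threshold strictly drops when a large good is added. The second is the bound $\tau_k \ge x_{ig_{k+1}}v_i(g_{k+1})/(1-s_k) \ge v_i(g_{k+1})$ whenever $1-s_k\le x_{ig_{k+1}}$, which is exactly how the paper shows property (iii) is never violated. So the edge case you flag as the main obstacle is already closed by your own final inequality together with the induction ``$Q(k)$ true $\Rightarrow s_{k+1}<1$''.

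The difference is in organization. The paper proves existence by an order-free greedy loop: add any positive-fraction good above the current threshold, and the threshold decrease keeps earlier additions large. It proves uniqueness separately. Valid sets are upward closed, hence nested, and summing $v_i(g)\le\tau(L_i)$ over $g\in L'_i\setminus L_i$ shows that the $x$-weighted average of that block is at most $\tau(L'_i)$, contradicting largeness in $L'_i$. That is the same convex-combination identity, applied to a whole block at once. You instead fix a sorted order and apply the identity one good at a time in both directions. This gives a single-crossing property of $Q$ that delivers existence, uniqueness and the linear scan together. Your route is a bit more economical and makes the algorithm explicit.

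When you write it up, build $s_k<1$ into the definition of a valid $k$. Also note that the step ``$Q(k)$ false $\Rightarrow Q(k+1)$ false'' is only needed for $k+1$ with $s_{k+1}<1$, that is, on the initial segment where $\tau$ is defined.
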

\begin{proof}
Fix an agent $i$. We first construct a set $L_i$ satisfying the properties of Definition \ref{def:large-goods-fractional}. This set can be constructed iteratively. We start by initializing $L_i$ as the empty set, and while there is some good $g' \in G \setminus L_i$ such that $x_{ig'} > 0$ and $v_i(g') > \frac{v_i(\x_i) - \sum_{g \in L_i} x_{ig} v_i(g)}{1 - \sum_{g \in L_i} x_{ig}}$, we add this $g'$ to $L_i$. It is easy to see that this process always terminates and it does so in time polynomial in the number of goods $m$. Additionally, by definition, it satisfies property (ii) of Definition \ref{def:large-goods-fractional} when it terminates.

We show that this process satisfies property (iii) next. If for some $g'$, we have $x_{ig'} \ge 1 - \sum_{g \in L_i} x_{ig}$, then
\begin{align*}
\frac{v_i(\x_i) - \sum_{g \in L_i} x_{ig} v_i(g)}{1 - \sum_{g \in L_i} x_{ig}} \ge \frac{x_{ig'} v_i(g')}{1 - \sum_{g \in L_i} x_{ig}} \ge \frac{x_{ig'} v_i(g')}{x_{ig'}} = v_i(g').
\end{align*}
This implies $g'$ will never get added to $L_i$ which implies that our addition process will never violate property (iii) of Definition \ref{def:large-goods-fractional}.

Moreover, the addition of $g'$ to $L_i$ reduces the value of $\frac{v_i(\x_i) - \sum_{g \in L_i} x_{ig} v_i(g)}{1 - \sum_{g \in L_i} x_{ig}}$ due to the following inequalities:

\begin{align*}
\frac{v_i(\x_i) - \sum_{g \in L_i} x_{ig} v_i(g) - x_{ig'}v_i(g')}{1 - \sum_{g \in L_i} x_{ig} - x_{ig'}} 
&< \frac{v_i(\x_i) - \sum_{g \in L_i} x_{ig} v_i(g) - x_{ig'}\frac{v_i(\x_i) - \sum_{g \in L_i} x_{ig} v_i(g)}{1 - \sum_{g \in L_i} x_{ig}}}{1 - \sum_{g \in L_i} x_{ig} - x_{ig'}} 
\\
&= \frac{v_i(\x_i) - \sum_{g \in L_i} x_{ig} v_i(g)}{1 - \sum_{g \in L_i} x_{ig}}.
\end{align*}
Therefore adding $g'$ to $L_i$ does not violate the largeness of the goods already in $L_i$. In conclusion, we have presented a polynomial time algorithm which outputs a set of large goods $L_i$ satisfying Definition \ref{def:large-goods-fractional}. 

We show uniqueness next. Assume there are two different sets of large goods $L_i$ and $L'_i$ satisfying the properties of Definition \ref{def:large-goods-fractional}. Note that if any good $g$ is in $L_i$ (resp. $L'_i$), all goods $g'$ with $x_{ig'} > 0$ and value at least $v_i(g)$ must also be in $L_i$ (resp. $L'_i$). Therefore, one of the two sets $L_i$ and $L'_i$ must be contained in the other. Assume $L_i \subset L'_i$. By applying the properties of large goods with the set $L_i$, we get for any $g \in L'_i \setminus L_i$, 
\begin{align*}
v_i(g) \le \frac{v_i(\x_i) - \sum_{g \in L_i} x_{ig} v_i(g)}{1 - \sum_{g \in L_i} x_{ig}} = \frac{v_i(\x_i) - \sum_{g \in L'_i} x_{ig} v_i(g) + \sum_{g \in L'_i \setminus L_i} x_{ig} v_i(g)}{1 - \sum_{g \in L_i} x_{ig}}.
\end{align*}

Adding the above inequality up for all $g \in L'_i \setminus L_i$, we get
\begin{align*}
\sum_{g \in L'_i \setminus L_i} x_{ig} v_i(g) \le \left (\frac{\sum_{g \in L'_i \setminus L_i} x_{ig}}{1 - \sum_{g \in L_i}x_{ig}} \right ) \left (v_i(\x_i) - \sum_{g \in L'_i} x_{ig} v_i(g) + \sum_{g \in L'_i \setminus L_i} x_{ig} v_i(g) \right ).
\end{align*}

Re-arranging terms, we get
\begin{align*}
\frac{\sum_{g \in L'_i \setminus L_i} x_{ig} v_i(g)}{ \sum_{g \in L'_i \setminus L_i} x_{ig}} \le \frac{ \left ( v_i(\x_i) - \sum_{g \in L'_i} x_{ig} v_i(g) \right )}{1 - \sum_{g \in L'_i}x_{ig}}.
\end{align*}

This contradicts the set $L'_i$ being a set of large goods since each good $g \in L'_i\setminus L_i$ satisfies $x_{ig} > 0$ and
\begin{align*}
v_i(g) > \frac{\left ( v_i(\x_i) - \sum_{g \in L'_i} x_{ig} v_i(g) \right )}{1 - \sum_{g \in L'_i}x_{ig}},
\end{align*}
by the definition of being a large good. Therefore, the set of large goods must be unique and the proof is complete.
\end{proof}

With these definitions set up, we show an important connection between the large goods $L_i$, mean value $\mu^i$ and amount of large goods $\ell_i$ of the fractional allocation $\x$ and the set of large goods in the artificial instance $\cal I^{i, \x}(\Delta)$. 

\begin{lemma}\label{lem:large-good-connection}
Let $\x$ be a fractional allocation, and let $\cal I^{i, \x}(\Delta)$ denote the artificial instance for agent $i$ defined for any feasible $\Delta$ with respect to $\x$. The set of large goods of $\cal I^{i, \x}(\Delta)$ is the set of all copies of the large goods $L_i$ of the agent $i$ in the fractional allocation $\x$. 

Additionally, the mean value of the instance $\cal I^{i, \x}(\Delta)$ is equal to $\mu^i$ and the number of large goods of $\cal I^{i, \x}(\Delta)$ is equal to $\Delta\ell_i$.
\end{lemma}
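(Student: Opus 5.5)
Let $L^\Delta$ denote the set of all copies (in $G'$) of the goods in $L_i$. The plan is to show that $L^\Delta$ satisfies the defining properties of Definition \ref{def:large-goods}, and then invoke uniqueness (Observation \ref{obs:large-goods-unique}) to conclude that $L^\Delta$ is \emph{the} set of large goods of $\cal I^{i, \x}(\Delta)$. The two secondary claims then follow from direct computation.

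First we collect the basic identities. The artificial instance contains exactly $\Delta x_{ig}$ copies of each good $g$. Therefore $|L^\Delta| = \Delta \sum_{g\in L_i} x_{ig} = \Delta \ell_i$, and
\begin{align*}
v_i(G' \setminus L^\Delta) = \Delta \sum_{g \notin L_i} x_{ig} v_i(g) = \Delta\left(v_i(\x_i) - \sum_{g\in L_i} x_{ig}v_i(g)\right).
\end{align*}
Hence the threshold of Definition \ref{def:large-goods} evaluated at $L^\Delta$ is
\begin{align*}
\frac{v_i(G'\setminus L^\Delta)}{\Delta - |L^\Delta|} = \frac{v_i(\x_i) - \sum_{g \in L_i} x_{ig} v_i(g)}{1 - \ell_i}.
\end{align*}
This is exactly the threshold in Definition \ref{def:large-goods-fractional}. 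Note that the denominator is positive by property (iii), since $\ell_i<1$.

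Next we verify the conditions of Definition \ref{def:large-goods}.
\begin{itemize}
\item Size: $\ell_i < 1$ gives $\Delta\ell_i < \Delta$. Since $\Delta\ell_i$ is an integer (each $\Delta x_{ig}$ is), we get $|L^\Delta| \le \Delta - 1$.
\item Condition (i): every copy of a good $g\in L_i$ has value $v_i(g)$, which strictly exceeds the common threshold by property (i) of Definition \ref{def:large-goods-fractional}.
\item Condition (ii): every good in $G'\setminus L^\Delta$ is a copy of some $g\notin L_i$ with $x_{ig}>0$, because goods with $x_{ig}=0$ have no copies. Property (ii) of Definition \ref{def:large-goods-fractional} then gives $v_i(g)$ at most the threshold.
\end{itemize}
Uniqueness (Observation \ref{obs:large-goods-unique}) now identifies $L^\Delta$ as the set of large goods of $\cal I^{i, \x}(\Delta)$.

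Finally, the count of large goods is $|L^\Delta| = \Delta\ell_i$, as computed above. For the mean value, Definition \ref{def:mean} gives $\frac{v_i(G'\setminus L^\Delta)}{\Delta} = \sum_{g\notin L_i} x_{ig}v_i(g) = \mu^i$, matching Definition \ref{def:mean-fractional}.

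There is no real obstacle here. The only point needing care is that the strict inequality $\ell_i<1$ has to be converted into $|L^\Delta|\le\Delta-1$, and this is exactly where the integrality from feasibility of $\Delta$ is used.
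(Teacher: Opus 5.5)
Your proposal is correct and follows the paper's proof essentially step for step. Like the paper, you check that the copies of $L_i$ satisfy Definition \ref{def:large-goods} with the same threshold, invoke uniqueness from Observation \ref{obs:large-goods-unique}, and then read off the count $\Delta\ell_i$ and the mean value $\mu^i$. One difference is your explicit verification that $|L^\Delta|\le\Delta-1$, using $\ell_i<1$ and the integrality of each $\Delta x_{ig}$. The paper leaves this implicit, yet it is needed for the uniqueness argument to apply, since that argument divides by $\Delta-\ell'$.
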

\begin{proof}
Let $L_i$ be the set of large goods for agent $i$ in the allocation $\x$. The instance $\cal I^{i, \x}(\Delta)$ is constructed by creating $\Delta x_{ig}$ copies of each good $g$. We use $G^i$ to denote the set of goods in instance $\cal I^{i, \x}(\Delta)$.

Let $L$ (with $|L| = \ell$) be the set of all copies of each good in $L_i$. From the definition of $\cal I^{i, \x}(\Delta)$, $v_i(G^i) = \sum_{g \in G}\Delta x_{ig} v_i(g)$, $v_i(L) = \Delta \sum_{g \in L_i} x_{ig} v_i(g)$ and $\ell = \Delta \ell_i$.

We show that this set $L$ satisfies Definition \ref{def:large-goods}. For each good $g \in L_i$, we have by the properties of $L_i$, 

\begin{align*}
v_i(g) > \frac{v_i(\x_i) - \sum_{g \in L_i} x_{ig} v_i(g)}{1 - \sum_{g \in L_i} x_{ig}} = \frac{v_i(G^i \setminus L)}{\Delta - \ell}.
\end{align*}
This shows that all copies of $g$ in the instance $\cal I^{i, \x}(\Delta)$ satisfy the first property of Definition \ref{def:large-goods}. To prove the second property, consider any $g \notin L_i$ with $x_{ig} > 0$. We have
\begin{align*}
v_i(g) \le \frac{v_i(\x_i) - \sum_{g \in L_i} x_{ig} v_i(g)}{1 - \sum_{g \in L_i} x_{ig}} = \frac{v_i(G^i \setminus L)}{\Delta - \ell},
\end{align*}
which shows that all copies of $g$ in the instance $\cal I^{i, \x}(\Delta)$ satisfy the second property of Definition \ref{def:large-goods}. Since the set of large goods in the instance $\cal I^{i, \x}(\Delta)$ is unique (Observation \ref{obs:large-goods-unique}), this proves that $L$ must be the set of large goods of the instance $\cal I^{i, \x}(\Delta)$. This implies the number of large goods is equal to $\Delta \ell_i$.

Finally, we compute the mean value $\mu = \frac{v_i(G^i \setminus L)}{\Delta}$. From the above discussion, this is equal to
\begin{equation*}
\mu = \frac{v_i(G^i \setminus L)}{\Delta} = \frac{\Delta \left (\sum_{g \in G \setminus L_i} x_{ig} v_i(g) \right )}{\Delta} = \sum_{g \in G \setminus L_i} x_{ig} v_i(g) = \mu^i. \qedhere
\end{equation*}
\end{proof}

Next, we show that if an artificial instance $\cal I^{i, \x}(\Delta)$ is well-behaved for some feasible $\Delta$, this well-behavedness does not change if we change $\Delta$. 

\begin{lemma}\label{lem:well-behaved-connection}
Let $\x$ be a fractional allocation. If the artificial instance for agent $i$, $\cal I^{i, \x}(\Delta)$ is well-behaved for some $\Delta$ feasible with respect to $\x$, then it is well-behaved for all $\Delta$ feasible with respect to $\x$.

Moreover, there is a polynomial time algorithm that takes as input the fractional allocation $\x$ and agent $i$, and outputs whether $\cal I^{i, \x}(\Delta)$ is well-behaved for all $\Delta$ feasible with respect to $\x$.
\end{lemma}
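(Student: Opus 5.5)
The plan is to show that each of Properties A--E, evaluated on $\cal I^{i, \x}(\Delta)$, is a statement about quantities that depend only on $\x_i$ (and $v_i$), not on $\Delta$. Lemma \ref{lem:large-good-connection} already gives this for the basic parameters: the large goods are the copies of $L_i$, the mean value is $\mu^i$, and $\ell/\Delta = \ell_i$. Property A therefore reads $\ell_i \in [1 - 1/e - \gamma, 1 - 1/e + \gamma]$, which is independent of $\Delta$. The remaining properties refer to the round robin allocation $W$, so the work is to describe $W$ through a continuous object defined by $\x_i$.

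\textbf{Continuous description of $W$.} Sort the goods of $G$ in decreasing order of $v_i$, using the same tie-breaking as in the bucketing, and lay the fractional bundle $\x_i$ out on the interval $[0, \sum_g x_{ig}]$. This defines a step function $\phi:[0,\sum_g x_{ig}) \to \R_{\ge 0}$ whose value on each stretch is $v_i$ of the corresponding good. In $\cal I^{i,\x}(\Delta)$, the good $g'_k$ has value $\phi((k-1)/\Delta)$, because the copies occupy exactly the cells of width $1/\Delta$. Agent $j$ receives the goods $g'_{j}, g'_{\Delta+j}, \dots$. Setting $s = (j-1)/\Delta \in [0,1)$, agent $j$'s bundle is therefore $W(s) := \{\phi(s + r): r = 0,1,2,\dots\}$, and its small part $S(s)$ is obtained by dropping the terms with $s + r < \ell_i$. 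Each agent corresponds to a point $s$ of the grid $\frac{1}{\Delta}\{0,\dots,\Delta-1\}$. Every fraction in Properties B--E is a count of agents over $\Delta$, so each becomes the measure of a set of points $s$. For example, Property B becomes: the set of $s \in [0,1)$ with $v(S(s)) \notin [(1-\gamma)\mu^i, (1+\gamma)\mu^i]$ has fraction at most $\gamma$.

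\textbf{Main obstacle: passing from the grid to a $\Delta$-free quantity.} Let $\Delta_0$ be the least common denominator of all $x_{ig}$ (more precisely of the bucket entries $b_g$). Every breakpoint of $\phi$ is a multiple of $1/\Delta_0$, and every feasible $\Delta$ is a multiple of $\Delta_0$, since feasibility forces $\Delta b_g \in \mathbb{Z}$. Then $\phi(s+r)$ is constant in $s$ on each cell $[a/\Delta_0, (a+1)/\Delta_0)$, and hence so are $W(s)$, $S(s)$, whether $s < \ell_i$, and $v(g_j) = \phi(s)$. A feasible $\Delta = q\Delta_0$ puts exactly $q$ grid points in each cell. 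Consequently, the fraction of agents satisfying any predicate of $(W(s), S(s), \phi(s))$ equals the Lebesgue measure of the corresponding union of cells, which does not depend on $\Delta$.

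\textbf{Conclusion and algorithm.} Properties A--D are each of the form "at most a fixed fraction of agents violate a predicate of $(W(s), S(s), \phi(s), \mu^i)$". Property E is of the form "no agent violates such a predicate", which amounts to saying a certain union of cells is empty. So each property holds for one feasible $\Delta$ if and only if it holds for all of them. For the algorithm, I take $\Delta = \Delta_0$, which is feasible or can be replaced by any feasible multiple. It suffices to evaluate the $\Delta_0$ cells: compute $S(s)$ and $W(s)$ at each cell's left endpoint, which takes polynomial time as long as $\Delta_0$ is polynomial. If $\Delta_0$ may be exponential in the bit length, I would instead merge adjacent cells into maximal intervals on which the sequence of goods $(\phi(s+r))_r$ is unchanged. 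These intervals are delimited by the points $\{\text{breakpoint} - r\}$ modulo 1, of which there are at most $O(m \cdot k_i)$. This gives polynomially many intervals, and the algorithm sums their lengths. Verifying this interval bookkeeping, in particular that the tie-breaking matches the convention $W_j = \{g'_j, g'_{\Delta+j}, \dots\}$ consistently, is the delicate step.
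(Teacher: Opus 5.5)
Your proposal is correct and is essentially the paper's argument. The paper simulates round robin block by block with pointers $p_t$ and obtains a partition of $[0,1)$ into at most $m+1$ intervals, cut at the fractional parts of the partial sums of the $x_{ig}$, on which every agent's bundle is constant. That is essentially your partition. Your closed form (the agent at offset $s$ receives the goods at positions $s+r$), together with the $1/\Delta_0$-cell count, is a more explicit justification of why this partition is independent of $\Delta$.

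Two small remarks. Your interval count is really $O(m)$, since each breakpoint contributes only one point mod $1$. The tie-breaking step you flag as delicate is moot, because Properties B--E depend only on values, which tie-breaking does not affect.

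The one thing you leave out is a polynomial-time test of Property A. That test compares the rational $\ell_i$ with the irrational number $1-\frac1e\pm\gamma$, so it needs an effective bound on how well $e$ can be approximated by rationals; the paper supplies this in a footnote.
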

\begin{proof}
Fix an agent $i \in N$. Both claims of the lemma follow from a simple observation: each of the five properties \ref{prop:a}--\ref{prop:e} defining a well-behaved instance can be rephrased as a statement about the \emph{fraction} of the $\Delta$ agents that satisfy a certain condition, where neither the condition nor the thresholds involved depend on $\Delta$.  

Fix any feasible $\Delta$, and let the artificial instance $\cal I^{i, \x}(\Delta)$ have large good set $L$ (with $|L| = \ell$), mean value $\mu$, and round robin allocation $W$ with small bundles $(S_1, \dots, S_{\Delta})$.

Recall that \ref{prop:a} asks that $\frac{\ell}{\Delta} \in \left [1 - \frac1e-\gamma, 1 - \frac1e + \gamma \right ]$. By Lemma \ref{lem:large-good-connection}, $\frac{\ell}{\Delta} = \ell_i$, a quantity of the fractional bundle $\x_i$ alone that does not depend on $\Delta$. Hence \ref{prop:a} is equivalent to the $\Delta$-independent condition $\ell_i \in \left [1 - \frac1e-\gamma, 1 - \frac1e + \gamma \right ]$; in particular it holds either for all feasible $\Delta$ or for none. Since $\ell_i = \sum_{g \in L_i} x_{ig}$ can be computed from $L_i$ in polynomial time, this condition can be verified in polynomial time.\footnote{Verifying \ref{prop:a} requires comparing a rational number $p/q$ (represented in polynomially many bits) with $e$. Even though $e$ is transcendental, this comparison can be done in polynomial time since $|e- \frac{p}{q}| > \frac{1}{O(q^2 \log q)}$ \cite{Davis1978Rational} and $e$ can be approximated within an additive error of $2^{-t}$ in time polynomial in $t$ \cite{Brent1976e}.}
 
The four remaining properties \ref{prop:b}--\ref{prop:e} are statements about the small bundles and large goods that agents receive in the round robin allocation $W$. Since $\Delta$ can be exponentially large, we cannot compute each agent's allocated bundle in $W$ efficiently. We get around this by exploiting the fact that $W$ contains only a few \emph{distinct} bundles, which lets us describe $W$ compactly and, crucially, without any reference to $\Delta$.
 
Let $G = \{g_1, \dots, g_m\}$ denote the set of goods of the original instance, ordered so that $v_i(g_1) \ge v_i(g_2) \ge \dots \ge v_i(g_m)$. We assume the round robin algorithm breaks ties in favor of goods with an earlier index; that is, the copies of $g_1$ are allocated before the copies of $g_2$, and so on, even if two goods in $G$ have identical value. This is without loss of generality: for each agent $j$, the values $v_i(W_j)$ and $v_i(S_j)$ do not depend on how these ties are broken, so the tie-breaking rule does not affect whether $W$ satisfies the properties of a well-behaved instance.

Our succinct representation of $W$ has the form $\{(X_j, a_j, b_j)\}_{j \in [k]}$, where each $X_{j} \subseteq G$ and $a_j, b_j \in [0, 1]$. Its interpretation is that, in the allocation $W$, the agents $\{\Delta a_j + 1, \Delta a_j + 2, \dots, \Delta b_j\}$ all receive the bundle $X_j$ (more pedantically, they each receive a bundle consisting of one copy of each good in $X_j$). To ensure that this is a complete description of the allocation, we require that $a_1 = 0$, $b_k = 1$, and that $a_{j} < b_j$ and $a_{j+1} = b_j$ for all $j \in [k]$. This ensures the index intervals $(a_j\Delta, b_j\Delta]$ partition the agent set $[\Delta]$. It is easy to see that any allocation (not just $W$) can be represented using this form. We will exploit this representation form by showing that $W$ can be represented in this form with $k \le m+1$. Moreover, the fraction of agents receiving the bundle $X_j$ will be independent of $\Delta$; that is the value $(b_j - a_j)$ will be a function of $\x_i$ that does not depend on $\Delta$. 
 
We construct this representation iteratively by simulating the round robin algorithm. We simulate the round robin algorithm in blocks: in the first block we allocate all copies of $g_1$, then in the next block we allocate all copies of $g_2$, and so on. There are $m$ such blocks. To simulate the $t$-th block (the allocation of the copies of $g_t$) we require an allocation $\{(X_j, a_j, b_j)\}_{j \in [k]}$ that corresponds to the round robin allocation at the end of the first $t-1$ blocks and a {\em starting pointer} that points to the agent who should receive the first good in the $t$-th block.

Initially, the allocation is empty and the first agent is due to pick a good. Therefore, we initialize our allocation using the tuple $\{(\emptyset, 0, 1)\}$ and initialize a starting pointer $p_1 = 0$, which denotes which agent should pick a good next; the value of the starting pointer $p_1$ is set such that the agent $p_1\Delta + 1$ is due to receive the next good. 
Simulating the first block, we allocate the $\Delta x_{ig_1}$ copies of good $g_1$; these goods are allocated to the first $\Delta x_{ig_1}$ agents. We can update our allocation by splitting the initial tuple $\{(\emptyset, 0, 1)\}$ into two tuples $\{(\{g_1\}, 0, x_{ig_1}), (\emptyset, x_{ig_1}, 1)\}$, and then updating our starting pointer to $p_2 = x_{ig_1}$. This is the allocation of the round robin algorithm after $\Delta x_{ig_1}$ iterations, and the starting pointer $p_2$ denotes that the agent $\Delta x_{ig_1} + 1$ should receive the next good. 

More generally, assume we have simulated the first $t-1$ blocks and would like to simulate the allocation of the copies of $g_t$. Given a starting pointer $p_t$, the good $g_t$ is allocated to all the agents $\{\Delta p_t + 1, \dots, \Delta (p_t + x_{ig_t})\}$ if $(p_t + x_{ig_t}) \le 1$ and the set of agents $\{\Delta p_t + 1, \dots, \Delta, 1, \dots, \Delta (p_t + x_{ig_t} - 1)\}$ otherwise. We can use this information to update our allocation $\{(X_j, a_j, b_j)\}_{j \in [k]}$, which we constructed by simulating the first $t-1$ blocks. As we show, this can be done without knowing the value of $\Delta$. 

For all the tuples $(X_{j}, a_j, b_j)$ where the interval $[a_j, b_j]$ is contained in $[p_t, \min\{1, p_t + x_{ig_t}\}]$ or $[0, \max\{p_t + x_{ig_t} - 1, 0\}]$ we update $X_j$ to $X_j \cup \{g_t\}$ since all of the agents in this interval receive a copy of $g_t$. For all the tuples $(X_{j}, a_j, b_j)$ where the interval $[a_j, b_j]$ is completely disjoint from $[p_t, \min\{1, p_t + x_{ig_t}\}]$ or $[0, \max\{p_t + x_{ig_t} - 1, 0\}]$, we leave $X_j$ unchanged since none of the agents in this interval receive a copy of $g_t$. 

Next, consider the tuple $(X_{j}, a_j, b_j)$ where the interval $(a_j, b_j)$ contains the endpoint $(p_t + x_{ig_t})$ or $(p_t + x_{ig_t} - 1)$. In this case, some of the agents in $\{\Delta a_j + 1, \dots, \Delta b_j\}$ receive the good $g_t$ and some agents do not. 
Specifically, if the interval contains the endpoint $p_t + x_{ig_t}$ and does not contain $p_t$, the set of agents $\{\Delta a_j + 1, \dots, \Delta (p_t + x_{ig_t})\}$ receive the good $g_t$ and the remaining agents do not. 
Therefore, to update the allocation, we split $(X_j, a_j, b_j)$ into $(X_{j} \cup \{g_t\}, a_j, p_t + x_{ig_t})$ and $(X_{j}, p_t + x_{ig_t}, b_j)$. 
We can do something similar if the interval contains $(p_t + x_{ig_t} - 1)$. We will show that no interval contains $p_t$, so there is no other case to worry about.

Finally, we update the starting pointer $p_{t+1}$ to be $p_t + x_{ig_t}$ if $p_t + x_{ig_t} < 1$ and $p_t + x_{ig_t} - 1$ otherwise. Note that this update step also means the allocation at the beginning of the $(t+1)$-th block will not have any tuple $(X_j, a_j, b_j)$ such that $(a_j, b_j)$ contains $p_{t+1}$. If it did, we would have split it in the simulation of the $t$-th block itself. More generally, for any block $t$, $p_t$ will not be present in the interval $(a_j, b_j)$ for any tuple $(X_j, a_j, b_j)$.
 
Using this procedure, we compute a succinct representation $\{(X_j, a_j, b_j)\}_{j \in [k]}$ of the allocation $W$. To see why it is succinct, note that at each iteration, we split at most one tuple $(X_j, a_j, b_j)$ --- this is the tuple that contains either $p_t + x_{ig_t}$ or $p_t + x_{ig_t} - 1$ (only one of these will be in the range $(0, 1)$). Since we start with an allocation represented by one tuple and each tuple is split into two tuples (if it is split), we terminate with an allocation represented by at most $m+1$ tuples. 
Moreover, all the $a_j$ and $b_j$ values are functions of the fractional bundle $\x_i$. Therefore, this is a representation that only takes up polynomial space. Additionally this construction is independent of the value of $\Delta$ since all the steps can be carried out without knowing $\Delta$. Therefore, we can construct this representation of $W$ in polynomial time, and this allocation $\{(X_j, a_j, b_j)\}_{j \in [k]}$ represents the round robin allocation of $\cal I^{i, \x}(\Delta)$ for any feasible value of $\Delta$. 

We now use the representation $\{(X_j, a_j, b_j)\}_{j \in [k]}$ of the allocation $W$ to express the four remaining properties. Consider the group of agents that receive the bundle $X_j$. Since every agent in this group receives the same set of goods, the following two quantities are common to the whole group, and both depend only on $\x_i$ (not on $\Delta$):
\begin{itemize}
\item the value of the small goods in the bundle $\sigma_j := v_i(X_j \setminus L_i)$, which equals $v_i(S_{j'})$ for every agent $j'$ in the group; and
\item if the group receives a large good, the value $\lambda_j := v_i(g)$ of that large good, where $g$ is the unique element of $X_j \cap L_i$.
\end{itemize}
Here we used that each agent who receives a large good receives \emph{exactly one} of them: by property (iii) of Definition \ref{def:large-goods-fractional}, $\ell_i = \sum_{g \in L_i} x_{ig} < 1$, so there are $\ell = \Delta\ell_i < \Delta$ copies of large goods in total; since the large goods are the most valuable goods, they are all allocated in the first round of the round robin algorithm, one to each of the agents $\{1, \dots, \ell\}$. Consequently, group $j$ receives a large good if and only if $X_j \cap L_i \neq \emptyset$, in which case $|X_j \cap L_i| = 1$.
 
Using $\sigma_j$, $\lambda_j$, $\mu^i$, and the group fractions $b_j - a_j$, each remaining property becomes a statement about the allocation $\{(X_j, a_j, b_j)\}_{j \in [k]}$ that does not involve $\Delta$:
\begin{description}
\item[\ref{prop:b}] holds if and only if the fraction of agents whose small bundle has value outside $\left [(1-\gamma)\mu^i, (1+\gamma)\mu^i \right ]$ is at most $\gamma$; that is, $\sum_{j:\, \sigma_j \notin [(1-\gamma)\mu^i,\, (1+\gamma)\mu^i]} (b_j - a_j) \le \gamma$.
\item[\ref{prop:c}] holds if and only if the fraction of agents who receive a large good and a small bundle of comparable value is at most $\gamma^2$; that is, $\sum_{j:\, X_j \cap L_i \neq \emptyset,\ \sigma_j \ge \gamma^2 \lambda_j} (b_j - a_j) \le \gamma^2$.
\item[\ref{prop:d}] holds if and only if the fraction of agents who receive a large good with value less than $\frac{\mu^i}{\gamma}$ is at most $\gamma^2$; that is, $\sum_{j:\, X_j \cap L_i \neq \emptyset,\ \lambda_j \le \mu^i/\gamma} (b_j - a_j) \le \gamma^2$.
\item[\ref{prop:e}] holds if and only if $\sigma_j \ge 10^{-4}\mu^i$ for every group; that is, $\sum_{j:\, \sigma_j < 10^{-4}\mu^i} (b_j - a_j) = 0$.
\end{description}
 
Each of these conditions is expressed purely in terms of $b_j - a_j$, $\sigma_j$, $\lambda_j$, and $\mu^i$, none of which depend on $\Delta$. Hence each of \ref{prop:b}--\ref{prop:e} holds either for all feasible $\Delta$ or for no feasible $\Delta$. Together with the analogous conclusion for \ref{prop:a}, this shows that the well-behavedness of $\cal I^{i, \x}(\Delta)$ does not depend on the choice of feasible $\Delta$, proving the first claim of the lemma. Finally, the representation $\{(X_j, a_j, b_j)\}_{j \in [k]}$ of $W$, the set $L_i$ and the value $\mu^i$ are all computable in polynomial time. Therefore all of the properties can be verified in polynomial time, proving the second claim of the lemma.
\end{proof}

We are now ready to define what it means for an agent $i$ to be well-behaved in a fractional allocation $\x$. This definition will also require the definition of the small good fraction property. 

\begin{definition}[Small Good Fraction Property]\label{def:small-good-fraction}
Given a fractional allocation $\x$, an agent $i$ satisfies the small good fraction property in $\x$ if $\sum_{g \in G \setminus L_i} \mathbbm{1}\{x_{ig} > 0.5 \} x_{ig} v_i(g) < \gamma \mu^i$.
\end{definition}

\begin{definition}[Well-behaved Agent]
Let $\x$ be a fractional allocation. We say that $\x$ is well-behaved for an agent $i$ if $i$ satisfies the small good fraction property in $\x$ and the artificial instance for agent $i$, $\cal I^{i, \x}(\Delta)$, is well-behaved for all feasible $\Delta$ with respect to $\x$.
\end{definition}

As a corollary of Lemma \ref{lem:well-behaved-connection}, we can identify well-behaved agents efficiently.

\begin{corr}\label{cor:well-behaved-identification}
Given a fractional allocation $\x$, there is a polynomial time algorithm that can efficiently identify the set of well-behaved agents in $\x$.
\end{corr}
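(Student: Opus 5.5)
The algorithm handles each agent $i \in N$ separately. An agent is well-behaved exactly when two conditions hold: the small good fraction property holds for $i$ in $\x$, and the artificial instance $\cal I^{i, \x}(\Delta)$ is well-behaved for every feasible $\Delta$. Running the per-agent test $n$ times gives the set of well-behaved agents, so it suffices to show that each condition can be checked in polynomial time.

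For the second condition we invoke Lemma \ref{lem:well-behaved-connection}. It says that well-behavedness of $\cal I^{i, \x}(\Delta)$ does not depend on which feasible $\Delta$ is chosen. It also supplies a polynomial time procedure that, given $\x$ and $i$, decides this property. That procedure builds the succinct round robin representation $\{(X_j, a_j, b_j)\}$ and then checks Properties \ref{prop:a}--\ref{prop:e} through $\ell_i$, $\sigma_j$, $\lambda_j$ and $\mu^i$. We use it as a black box.

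For the first condition we proceed directly.
\begin{enumerate}
\item Compute $L_i$ in polynomial time using Lemma \ref{lem:large-goods-unique-fractional}.
\item Compute $\mu^i = \sum_{g \in G \setminus L_i} x_{ig} v_i(g)$.
\item Compute $\sum_{g \in G \setminus L_i} \mathbbm{1}\{x_{ig} > 0.5\} x_{ig} v_i(g)$.
\item Compare the sum from step 3 against $\gamma \mu^i$.
\end{enumerate}
Every quantity here is a rational sum of at most $m$ terms, and $\gamma = 10^{-10}$ is rational. The comparison is therefore exact and takes polynomial time in the bit length of $\x$ and the valuations.

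The only step with any subtlety is the comparison with $e$ inside Property \ref{prop:a}. The footnote in the proof of Lemma \ref{lem:well-behaved-connection} already resolves it. With both checks polynomial, repeating them for all $n$ agents yields the claimed algorithm.
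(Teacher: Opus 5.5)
Your proposal is correct and follows the same route as the paper. The paper also computes $L_i$ via Lemma~\ref{lem:large-goods-unique-fractional}, checks the small good fraction property directly from it, and uses Lemma~\ref{lem:well-behaved-connection} for the artificial-instance condition. Your added remarks on exact rational comparison and on the comparison with $e$ are consistent with the paper.
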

\begin{proof}
Using Lemma \ref{lem:large-goods-unique-fractional}, we can find the set $L_i$ for each agent $i$ in polynomial time. Using this set, it is easy to check if an agent $i$ satisfies the small good fraction property. Using Lemma \ref{lem:well-behaved-connection}, we can check in polynomial time if $\cal I^{i, \x}(\Delta)$ is well-behaved for all feasible $\Delta$ with respect to $\x$. Therefore, we can decide in polynomial time if an agent $i$ is well-behaved, and the corollary follows.
\end{proof}

\subsection{Analyzing the Shmoys-Tardos Rounding Algorithm}
In this section, we prove a few lemmas lower-bounding $\ST(\x, i)$ (and a couple of other useful results). These results essentially translate the results from Sections \ref{sec:round-robin-guarantees} and \ref{sec:suboptimal} into the terminology defined in the previous subsection. All of these results have a very similar proof technique. So we present two proofs and then relegate the remaining proofs to the appendix.

We start with the small good fraction property. This property was introduced rather abruptly in the previous subsection, with no specific justification. The following lemma provides this justification showing that we improve over Remark \ref{rem:fl} if an agent $i$ does not satisfy the small good fraction property in $\x$. 

\begin{lemma}\label{lem:small-good-fraction}
Let $y$ be a configuration LP solution and $\x$ be the corresponding fractional allocation. If $\cal I^{i, \x}(\Delta)$ is well-behaved for all $\Delta$ feasible with respect to $\x$ and $i$ does not satisfy the small good fraction property in $\x$, then
\begin{align*}
\ST(\x, i) \ge \sum_{S \subseteq G} y_{i, S} \ln(v_i(S)) - \frac{1}{e} + \gamma^2.
\end{align*}
\end{lemma}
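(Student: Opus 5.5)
Fix a $\Delta$ that is feasible with respect to both $\x$ and $y$, and work in the artificial instance $\cal I = \cal I^{i,\x}(\Delta)$ with its round robin allocation $W$. By Lemma \ref{lem:w-worst}, $\ST(\x,i) \ge \logNSW(W)$. By Theorem \ref{thm:round-robin-nash-lower-bound},
\begin{align*}
\logNSW(W) \ge \frac1{\Delta}\left[\sum_{g\in L}\ln v_i(g) + (\Delta-\ell)\ln\frac{\Delta\mu}{\Delta-\ell}\right] - \frac1e =: U - \frac1e .
\end{align*}
So it suffices to prove $\sum_S y_{i,S}\ln v_i(S) \le U - \gamma^2$. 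That is, the configuration LP's contribution for agent $i$ is strictly below the upper bound $U$ from Lemma \ref{lem:nash-upper-bound}. The tool for this is Lemma \ref{lem:suboptimal}, applied to the allocation $Y$ from the proof of Lemma \ref{lem:y-substitute}. In $Y$, $\Delta y_{i,S}$ agents receive bundle $S$, and $\logNSW(Y) = \sum_S y_{i,S}\ln v_i(S)$.

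Next I check the hypotheses of Lemma \ref{lem:suboptimal}. The instance is well-behaved by assumption, and it satisfies Assumption \ref{ass:non-zero} by the lemma preceding Lemma \ref{lem:y-substitute}. $Y$ has positive Nash welfare because $y$ has a well-defined objective (Remark \ref{rem:well-defined-objective}).

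For the classes, partition the small goods of $\cal I$ according to the original good they copy: class $G_g$ consists of the $\Delta x_{ig}$ copies of $g \in G\setminus L_i$. All goods in a class have equal value, and $|G_g| = \Delta x_{ig} \le \Delta$. Each agent of $Y$ receives at most one copy of $g$, since each bundle $S$ is a set, so $g$ appears at most once. One point to check is that copies of distinct original goods with equal value are allowed to sit in different classes; the lemma only requires equal values within a class, so this is fine.

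Finally, the condition $|G_g| > 0.5\Delta$ is exactly $x_{ig} > 0.5$, and $v_i(G_g) = \Delta x_{ig} v_i(g)$. Since $i$ violates the small good fraction property,
\begin{align*}
\sum_{g} \mathbbm{1}\{|G_g| > 0.5\Delta\}\, v_i(G_g) = \Delta\sum_{g\notin L_i}\mathbbm{1}\{x_{ig}>0.5\}\,x_{ig}v_i(g) \ge \gamma\Delta\mu^i = \gamma\Delta\mu,
\end{align*}
where the last equality uses $\mu = \mu^i$ from Lemma \ref{lem:large-good-connection}. Lemma \ref{lem:suboptimal} then gives $\logNSW(Y) < U - \gamma^2$.

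Chaining the inequalities yields $\ST(\x,i) \ge U - \frac1e > \sum_S y_{i,S}\ln v_i(S) - \frac1e + \gamma^2$. The main obstacle is the bookkeeping of the translation: identifying the large goods and $\mu$ of $\cal I$ with $L_i$ and $\mu^i$ (Lemma \ref{lem:large-good-connection}), and confirming that the class structure fits Lemma \ref{lem:suboptimal}. The inequalities themselves are immediate once these identifications are in place.
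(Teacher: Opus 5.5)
Your proposal is correct and follows the paper's proof essentially step for step. Like the paper, you apply Lemma~\ref{lem:suboptimal} to the allocation $Y$ induced by $y$, with one class per small good consisting of its $\Delta x_{ig}$ copies, and combine this with Theorem~\ref{thm:round-robin-nash-lower-bound} and Lemma~\ref{lem:w-worst}. You also use the identifications from Lemma~\ref{lem:large-good-connection} in the same way the paper does.
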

\begin{proof}
Let $\Delta$ be any integer that is feasible with respect to both $y$ and $\x$. Consider the artificial instance $\cal I^{i, \x}(\Delta)$ for agent $i$.  Note that $\cal I^{i, \x}(\Delta)$ is well-behaved by the assumption in the lemma.
Define the allocation $Y$ as one where $\Delta y_{i, S}$ agents receive the bundle $S$ for each $S \subseteq G$. This is the same allocation constructed in the proof of Lemma \ref{lem:y-substitute}. The goal of this proof is to invoke Lemma \ref{lem:suboptimal} with the allocation $Y$. Note that the allocation $Y$ has positive Nash welfare since the objective value of the configuration LP solution $y$ is well-defined; therefore, if $y_{i, S} > 0$, the value $v_i(S)$ is positive.

Let $G^i$ denote the set of goods of this instance $\cal I^{i, \x}(\Delta)$. This set of goods $G^i$ can be naturally partitioned based on which good in $G$ it is a copy of. Let $g'_1, g'_2, \dots, g'_k$ be the set of small goods for agent $i$, and let $G_1, G_2, \dots G_k$ be a partition of the small goods of $\cal I^{i, \x}(\Delta)$ where each $G_j$ consists of all the copies of good $g'_j$. Using Lemma \ref{lem:large-good-connection}, the set of small goods of $\cal I^{i, \x}(\Delta)$ is exactly the set of copies of $g'_1, \dots, g'_k$. Therefore, $G_1, \dots G_k$ partition the set of small goods of $\cal I^{i, \x}(\Delta)$ into classes where all goods in each class have the same value. 

Finally, note that for each small good $g$, there are $\Delta x_{ig}$ copies of $g$ in the instance $\cal I^{i, \x}(\Delta)$. Therefore, 
\begin{align*}
\sum_{j = 1}^k \mathbbm{1}\{|G_j| > 0.5\Delta\} v_i(G_j) = \sum_{j = 1}^k \mathbbm{1}\{x_{ig'_j} > 0.5\} \Delta x_{ig'_j} v_i(g'_j) \ge \gamma \Delta \mu^i = \gamma \Delta \mu.
\end{align*}

The final equality holds due to Lemma \ref{lem:large-good-connection}.
Additionally note that in the allocation $Y$, no agent gets two copies of a good in $G$ since each agent's allocation is a subset of $G$. 
We can therefore apply Lemma \ref{lem:suboptimal} to conclude that 
\begin{align}
\logNSW(Y) < \frac1{\Delta} \left [\sum_{g \in L}\ln v(g) + (\Delta - \ell)\ln \left (\frac{\Delta \mu}{\Delta - \ell} \right ) \right ]  - \gamma^2. \label{eq:small-good-fraction-1}
\end{align}

Let $W$ be the round robin allocation of this instance $\cal I^{i, \x}(\Delta)$. Using Theorem \ref{thm:round-robin-nash-lower-bound}, we have:
\begin{align}
\logNSW(W) \ge \frac1{\Delta} \left [\sum_{g \in L}\ln v(g) + (\Delta - \ell)\ln \left (\frac{\Delta \mu}{\Delta - \ell} \right ) \right ]  - \frac1e. \label{eq:small-good-fraction-2}
\end{align}

Combining \eqref{eq:small-good-fraction-1} and \eqref{eq:small-good-fraction-2}, we get
\begin{align*}
\logNSW(W) \ge \logNSW(Y) - \frac1e + \gamma^2.
\end{align*}
Using Lemma \ref{lem:w-worst}, we have $\ST(\x, i) \ge \logNSW(W)$. Further, using the arguments of Lemma \ref{lem:y-substitute}, we have $\logNSW(Y) = \sum_{S \subseteq G}y_{i, S} \ln v_i(S)$. Using these two observations, we can simplify the above inequality to conclude that
\begin{equation*}
\ST(\x, i) \ge \sum_{S \subseteq G} y_{i, S} \ln(v_i(S)) - \frac{1}{e} + \gamma^2. \qedhere
\end{equation*}
\end{proof}

\begin{lemma}\label{lem:non-well-behaved-instance-improvement-fractional}
Let $y$ be a configuration LP solution and let $\x$ be the corresponding fractional allocation. If $\x$ is not well-behaved for agent $i$, the following holds for some constant $c>0$:
\begin{align*}
\ST(\x, i) \ge  \sum_{S \subseteq G}y_{i, S}\ln v_i(S) - \frac{1}{e} + c.
\end{align*}
\end{lemma}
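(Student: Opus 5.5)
The plan is to split according to the two ways the definition of a well-behaved agent can fail. Fix a $\Delta$ that is feasible with respect to both $y$ and $\x$; such a $\Delta$ exists because both are rational, and we can take a common multiple of the two denominators. Let $W$ be the round robin allocation of $\cal I^{i, \x}(\Delta)$. By the lemma immediately preceding Lemma~\ref{lem:y-substitute}, this artificial instance satisfies Assumption~\ref{ass:non-zero}, so every result of Section~\ref{sec:round-robin-guarantees} applies to it. By Lemma~\ref{lem:well-behaved-connection}, well-behavedness of $\cal I^{i, \x}(\Delta)$ does not depend on which feasible $\Delta$ we pick. Hence, if $\x$ is not well-behaved for $i$, exactly one of two cases occurs:
\begin{enumerate}[(1)]
\item $\cal I^{i, \x}(\Delta)$ is not a well-behaved instance;
\item $\cal I^{i, \x}(\Delta)$ is well-behaved for every feasible $\Delta$, but $i$ fails the small good fraction property in $\x$.
\end{enumerate}

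Case (2) is exactly Lemma~\ref{lem:small-good-fraction}, which gives the claimed bound with constant $\gamma^2$.

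For case (1), chain three earlier results in the same way as Remark~\ref{rem:fl}, but use the strengthened middle inequality. Lemma~\ref{lem:w-worst} gives $\ST(\x, i) \ge \logNSW(W)$. Theorem~\ref{thm:non-well-behaved-instance-improvement}, applied to the instance $\cal I^{i, \x}(\Delta)$ (which is not well-behaved and satisfies Assumption~\ref{ass:non-zero}), gives
\[
\logNSW(W) \ge \logNSW(\cal I^{i, \x}(\Delta)) - \frac1e + c'
\]
for some constant $c' > 0$. Lemma~\ref{lem:y-substitute} gives $\logNSW(\cal I^{i, \x}(\Delta)) \ge \sum_{S} y_{i,S}\ln v_i(S)$. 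Combining the three yields the statement with constant $c'$, so overall we may take $c = \min\{c', \gamma^2\}$.

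The only point that needs care is that $c'$ must be an absolute constant, independent of $\Delta$, the instance, and the agent. The proof of Theorem~\ref{thm:non-well-behaved-instance-improvement} produces its constant as a minimum of explicit quantities depending only on $\gamma$ (the constants in Lemmas~\ref{lem:nash-welfare-improvement-conditions} and~\ref{lem:property-e}), so this holds. Everything else is direct bookkeeping with previously established results; I do not expect a real obstacle beyond checking that the feasibility requirements of each invoked lemma hold simultaneously for the chosen $\Delta$.
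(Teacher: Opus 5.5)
Your proposal is correct and follows essentially the same route as the paper: it handles the case where $\cal I^{i, \x}(\Delta)$ is well-behaved but the small good fraction property fails via Lemma~\ref{lem:small-good-fraction}, and the non-well-behaved-instance case by chaining Lemma~\ref{lem:w-worst}, Theorem~\ref{thm:non-well-behaved-instance-improvement} and Lemma~\ref{lem:y-substitute}. Your remarks on Lemma~\ref{lem:well-behaved-connection}, Assumption~\ref{ass:non-zero}, and the constant being independent of $\Delta$ make explicit details that the paper leaves implicit.
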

\begin{proof}
Let $\cal I^{i, \x}(\Delta)$ be the artificial instance of agent $i$ for some $\Delta$ that is feasible with respect to $\x$ and $y$. If $\cal I^{i, \x}(\Delta)$ is well-behaved but $i$ does not satisfy the small good fraction property, the lemma holds due to Lemma \ref{lem:small-good-fraction}. We therefore assume the instance $\cal I^{i, \x}(\Delta)$ is not well-behaved.

Let $W$ be the round robin allocation of this instance. Since the instance $\cal I^{i, \x}(\Delta)$ is not well-behaved, Theorem \ref{thm:non-well-behaved-instance-improvement} proves the following:
\begin{align*}
\logNSW(W) \ge  \logNSW(\cal I^{i, \x}(\Delta)) - \frac{1}{e} + c,
\end{align*}
for some constant $c > 0$.  The lemma follows by noting that $\ST(\x, i) \ge \logNSW(W)$ (Lemma \ref{lem:w-worst}) and $\logNSW(\cal I^{i, \x}(\Delta)) \ge \sum_{S \subseteq G} y_{i, S}\ln v_i(S)$ (Lemma \ref{lem:y-substitute}).
\end{proof}

The remaining proofs are relegated to Appendix \ref{apdx:st-rounding}.
\begin{restatable}{lemma}{lemfractionalnashlowerbound}\label{lem:fractional-nash-lower-bound}
Let $y$ be a configuration LP solution and let $\x$ be the corresponding fractional allocation. Let $i$ be an agent with set of large goods $L_i$, mean value $\mu^i$ and large good amount $\ell_i$. The following two inequalities hold:
\begin{align*}
\ST(\x, i) &\ge   \sum_{g \in L_i} x_{ig}\ln{v_i(g)} + (1 - \ell_i) \ln\left ( \frac{\mu^i}{1 - \ell_i} \right )  - \frac{1}{e}. \\
\ST(\x, i) &\ge \sum_{S \subseteq G} y_{i, S}\ln v_i(S) - \frac1e.
\end{align*}
\end{restatable}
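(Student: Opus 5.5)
The plan is to pass to the artificial instance and combine earlier results, in the same way as Remark \ref{rem:fl}. Fix agent $i$ and choose an integer $\Delta$ that is feasible with respect to both $y$ and $\x$; such a $\Delta$ exists because both are rational and the rounding satisfies the Rationality property. Let $W$ be the round robin allocation of $\cal I^{i, \x}(\Delta)$. By Lemma \ref{lem:w-worst}, $\ST(\x, i) \ge \logNSW(W)$, so both inequalities reduce to lower bounds on $\logNSW(W)$.

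For the first inequality, I will apply Theorem \ref{thm:round-robin-nash-lower-bound} to $\cal I^{i, \x}(\Delta)$. Its hypothesis, Assumption \ref{ass:non-zero}, holds by the lemma showing that artificial instances of corresponding fractional allocations satisfy it. The theorem gives
\begin{align*}
\logNSW(W) \ge \frac{1}{\Delta}\left[\sum_{g \in L}\ln v_i(g) + (\Delta-\ell)\ln\frac{\Delta\mu}{\Delta-\ell}\right] - \frac1e .
\end{align*}
Lemma \ref{lem:large-good-connection} then translates this into the fractional quantities. The large goods $L$ of the artificial instance are exactly the copies of goods in $L_i$; each $g \in L_i$ has $\Delta x_{ig}$ copies, so $\frac1\Delta \sum_{g\in L}\ln v_i(g) = \sum_{g \in L_i} x_{ig}\ln v_i(g)$. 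The lemma also gives $\ell = \Delta \ell_i$ and $\mu = \mu^i$, so $\frac{\Delta-\ell}{\Delta}\ln\frac{\Delta\mu}{\Delta-\ell} = (1-\ell_i)\ln\frac{\mu^i}{1-\ell_i}$. Substituting both identities yields the first inequality.

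The second inequality is exactly the chain in Remark \ref{rem:fl}. Lemma \ref{lem:barman-eonebye} gives $\logNSW(W) \ge \logNSW(\cal I^{i,\x}(\Delta)) - \frac1e$, and Lemma \ref{lem:y-substitute} gives $\logNSW(\cal I^{i,\x}(\Delta)) \ge \sum_S y_{i,S}\ln v_i(S)$. Alternatively, one can combine the first inequality with Lemma \ref{lem:nash-upper-bound} and Lemma \ref{lem:y-substitute}.

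The proof contains no real obstacle. The only step needing care is the bookkeeping in the first inequality: checking that the $\Delta$-normalized sums match their fractional counterparts. This requires $\ell_i < 1$, which is property (iii) of Definition \ref{def:large-goods-fractional}, so that $\Delta - \ell > 0$ and the logarithm is well defined. It also requires confirming that $\mu^i > 0$, which follows from Assumption \ref{ass:non-zero}.
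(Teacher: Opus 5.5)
Your proposal is correct and matches the paper's proof. The first inequality goes through Lemma \ref{lem:w-worst}, Theorem \ref{thm:round-robin-nash-lower-bound} and Lemma \ref{lem:large-good-connection}, exactly as you describe. For the second inequality the paper uses the route you list as your alternative, namely combining the first inequality with Lemma \ref{lem:nash-upper-bound} and Lemma \ref{lem:y-substitute}; your primary route via Lemma \ref{lem:barman-eonebye} is the chain of Remark \ref{rem:fl} and is equally valid.
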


The following results do not lower bound $\ST(\x, i)$ but they use the same proof technique, and will be useful in future sections.

\begin{restatable}{lemma}{lemfractionalpropertya}\label{lem:fractional-property-a}
Let $\x$ be a fractional allocation. Let $\ell_i$ be the large good amount of agent $i$. If agent $i$ is well-behaved in $\x$, then $\ell_i \in \left [1 - \frac1e - \gamma, 1 - \frac1e+\gamma \right ]$.
\end{restatable}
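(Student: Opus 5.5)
This follows almost directly from the definitions together with Lemma \ref{lem:large-good-connection}. By definition, if agent $i$ is well-behaved in $\x$, then the artificial instance $\cal I^{i, \x}(\Delta)$ is well-behaved for every $\Delta$ feasible with respect to $\x$. I would fix one such feasible $\Delta$; one always exists because we only work with rational fractional allocations and rational rounding methods. A well-behaved instance satisfies \ref{prop:a}, so the number $\ell$ of large goods of $\cal I^{i, \x}(\Delta)$ satisfies
\[
\frac{\ell}{\Delta} \in \left[1 - \frac1e - \gamma, 1 - \frac1e + \gamma\right].
\]

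Next I would apply Lemma \ref{lem:large-good-connection}. It states that the large goods of $\cal I^{i, \x}(\Delta)$ are exactly the copies of the goods in $L_i$, and that their number is $\ell = \Delta \ell_i$. Substituting gives $\ell/\Delta = \ell_i$, which is the claimed interval.

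There is no real obstacle here. The only points to check are that a feasible $\Delta$ exists, so the artificial instance is defined, and that \ref{prop:a} for the artificial instance refers to the large-good set of Definition \ref{def:large-goods}. By the uniqueness statement in Observation \ref{obs:large-goods-unique}, that set coincides with the copies of $L_i$, so the identification $\ell = \Delta\ell_i$ is legitimate. This is the same reduction already used for \ref{prop:a} in the proof of Lemma \ref{lem:well-behaved-connection}, so I would cite that argument. Assumption \ref{ass:non-zero} is not needed, since \ref{prop:a} is purely combinatorial.
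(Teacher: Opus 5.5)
Your proposal is correct and follows the paper's proof essentially line for line. The paper also fixes a feasible $\Delta$ and notes that $\cal I^{i,\x}(\Delta)$ is well-behaved, so \ref{prop:a} places $\ell/\Delta$ in the interval; it then concludes $\ell_i = \ell/\Delta$ by Lemma~\ref{lem:large-good-connection}.
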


\begin{restatable}{lemma}{lemfractionalnashupperbound}\label{lem:fractional-nash-upper-bound}
Let $y$ be a configuration LP solution and let $\x$ be the corresponding fractional allocation. Let $i$ be an agent with set of large goods $L_i$, mean value $\mu^i$ and large good amount $\ell_i$. The following inequality holds:
\begin{align*}
\sum_{S\subseteq G} y_{i, S}\ln v_i(S) \le \sum_{g \in L_i} x_{ig}\ln{v_i(g)} + (1 - \ell_i) \ln \mu^i   + \frac{1}{e}. 
\end{align*}
\end{restatable}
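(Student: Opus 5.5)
The plan is to transfer Lemma \ref{lem:nash-upper-bound} to the artificial instance, in the same way the earlier fractional lemmas (Lemma \ref{lem:small-good-fraction} and Lemma \ref{lem:non-well-behaved-instance-improvement-fractional}) were proved. First fix an integer $\Delta$ that is feasible with respect to both $y$ and $\x$; such a $\Delta$ exists because both are rational and the rounding is rational. Then consider the artificial instance $\cal I^{i, \x}(\Delta)$. It satisfies Assumption \ref{ass:non-zero} by the lemma shown earlier for corresponding fractional allocations, so the results of Section \ref{sec:round-robin-guarantees} apply to it.

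The first step bounds the left-hand side. Lemma \ref{lem:y-substitute} gives $\sum_{S} y_{i,S}\ln v_i(S) \le \logNSW(\cal I^{i, \x}(\Delta))$. The proof of that lemma shows this more directly: the allocation $Y$ in which $\Delta y_{i,S}$ agents receive $S$ is feasible, and its log Nash welfare equals the left-hand side.

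The second step bounds $\logNSW(\cal I^{i, \x}(\Delta))$ using the second inequality of Lemma \ref{lem:nash-upper-bound}:
\begin{align*}
\logNSW(\cal I^{i, \x}(\Delta)) \le \frac{1}{\Delta}\left[\sum_{g \in L}\ln v_i(g) + (\Delta-\ell)\ln \mu\right] + \frac1e .
\end{align*}
Here $L$, $\ell$ and $\mu$ are the large-good set, the number of large goods and the mean value of the artificial instance.

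The third step translates these quantities back to the fractional bundle via Lemma \ref{lem:large-good-connection}. That lemma says $L$ consists of exactly the $\Delta x_{ig}$ copies of each $g \in L_i$, that $\ell = \Delta \ell_i$, and that $\mu = \mu^i$. It follows that $\frac1\Delta \sum_{g\in L}\ln v_i(g) = \sum_{g \in L_i} x_{ig}\ln v_i(g)$ and $\frac{\Delta-\ell}{\Delta}\ln\mu = (1-\ell_i)\ln\mu^i$. Chaining the three steps gives the claimed inequality.

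I expect the only point needing care to be the second step, specifically the hypothesis $\ell \le \Delta - 1$ implicit in Definition \ref{def:large-goods}. This follows from property (iii) of Definition \ref{def:large-goods-fractional}: since $\ell_i < 1$, we get $\Delta\ell_i \le \Delta - 1$ because $\Delta\ell_i$ is an integer. This is also what makes $\Delta-\ell > 0$, so the logarithms are well defined. Apart from this check, the argument is direct bookkeeping.
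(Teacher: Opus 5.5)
Your proposal is correct and follows the paper's proof essentially step for step. The paper fixes $\Delta$ feasible for both $\x$ and $y$, then applies the second inequality of Lemma~\ref{lem:nash-upper-bound} to $\cal I^{i,\x}(\Delta)$. It translates $L$, $\ell$, $\mu$ back via Lemma~\ref{lem:large-good-connection} and lower-bounds $\logNSW(\cal I^{i,\x}(\Delta))$ by the configuration LP term using Lemma~\ref{lem:y-substitute}.

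Your extra check that $\Delta\ell_i \le \Delta - 1$ is sound. It uses the integrality of each $\Delta x_{ig}$ together with $\ell_i < 1$; the paper leaves this implicit inside Lemma~\ref{lem:large-good-connection}.
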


\subsection{Improved Approximations when Agents are not Well-behaved}

In this subsection, we show that if a small number of agents are not well-behaved, then the ST rounding algorithm achieves an approximation ratio better than $e^{1/e}$. 

\begin{theorem}\label{thm:non-well-behaved-algo}
Let $y$ be a configuration LP solution and let $\x$ be the corresponding fractional allocation. If at least $\gamma^3 n$ agents are not well-behaved in $\x$, then the following holds
\begin{align*}
\ST(\x) \ge \frac{1}n \sum_{i \in N, S \subseteq G} y_{i, S}\ln v_i(S) - \frac1e + c^*_1
\end{align*}
for some constant $c^*_1 > 0$.
\end{theorem}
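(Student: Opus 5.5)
This follows by averaging per-agent guarantees over all agents. Write $\ST(\x) = \frac1n\sum_{i \in N} \ST(\x, i)$. Let $N^{bad}$ denote the set of agents that are not well-behaved in $\x$, so that $|N^{bad}| \ge \gamma^3 n$ by assumption.

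We bound the two kinds of agents separately.
\begin{itemize}
\item For each $i \in N^{bad}$, Lemma \ref{lem:non-well-behaved-instance-improvement-fractional} gives
\[
\ST(\x, i) \ge \sum_{S \subseteq G} y_{i,S}\ln v_i(S) - \frac1e + c.
\]
Here $c>0$ is a constant independent of $i$ and of the instance.
\item For every other agent $i \in N \setminus N^{bad}$, the second inequality of Lemma \ref{lem:fractional-nash-lower-bound} gives the baseline
\[
\ST(\x, i) \ge \sum_{S} y_{i,S}\ln v_i(S) - \frac1e.
\]
\end{itemize}

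Summing over all agents and dividing by $n$ yields
\[
\ST(\x) \ge \frac1n\sum_{i,S} y_{i,S}\ln v_i(S) - \frac1e + \frac{|N^{bad}|}{n}\,c \ge \frac1n\sum_{i,S} y_{i,S}\ln v_i(S) - \frac1e + \gamma^3 c.
\]
So the theorem holds with $c^*_1 = \gamma^3 c > 0$.

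The only point needing care is that the constant $c$ in Lemma \ref{lem:non-well-behaved-instance-improvement-fractional} must be uniform. It is built from the constants of Theorem \ref{thm:non-well-behaved-instance-improvement}, namely $c_1$, $c_2$, $c_3$ and the $\gamma^2$ from Lemma \ref{lem:small-good-fraction}. All of these are explicit functions of $\gamma$ alone, so we take $c$ to be their minimum. Everything else is linearity of the average, and I expect no real obstacle.
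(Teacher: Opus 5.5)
Your proposal is correct and matches the paper's proof essentially line for line: you apply Lemma~\ref{lem:non-well-behaved-instance-improvement-fractional} to the non-well-behaved agents and Lemma~\ref{lem:fractional-nash-lower-bound} to the rest, then average to get $c^*_1 = \gamma^3 c$. Your remark that $c$ must be uniform, i.e.\ depend only on $\gamma$, is a sensible addition that the paper leaves implicit.
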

\begin{proof}
Let $\Delta$ be an integer feasible with respect to both $\x$ and $y$. 
Let $N_1$ be the set of agents $i$ such that $\x$ is not well-behaved for agent $i$. By the theorem statement, $|N_1| \ge \gamma^3 n$.

For each agent $i \in N_1$, the following holds due to Lemma \ref{lem:non-well-behaved-instance-improvement-fractional}:
\begin{align}
\ST(\x, i) \ge  \sum_{S \subseteq G}y_{i, S}\ln v_i(S) - \frac{1}{e} + c. \label{eq:n1-bound}
\end{align}

For all the other agents $i \in N \setminus N_1$, the following holds due to Lemma \ref{lem:fractional-nash-lower-bound}:
\begin{align}
\ST(\x, i) \ge \sum_{S \subseteq G} y_{i, S} \ln(v_i(S)) - \frac{1}{e} \label{eq:n-bound}.
\end{align}

Adding up \eqref{eq:n1-bound} and \eqref{eq:n-bound}, we get
\begin{align*}
\ST(\x) &\ge \frac1n \sum_{i \in N, S \subseteq G} y_{i, S}\ln v_i(S) -\frac1e + \frac{|N_1|c}{n}  \\
&\ge \frac1n \sum_{i \in N, S \subseteq G} y_{i, S} \ln v_i(S) - \frac1e + \gamma^3 c 
\end{align*}
Therefore the theorem holds with $c^*_1$ defined as the value $\gamma^3 c$.
\end{proof}

\section{Large Good Consistency}\label{sec:large-good-consistency}
In this section, we show that all goods $g \in G$ must be either large for all the agents (or at least most of them), or must be small throughout. For the case where this does not happen, we present a rounding algorithm that achieves an approximation ratio of $e^{1/e}(1 - c)$ for some constant $c > 0$. 

Throughout this section, we use $N_{wb}$ to denote the set of well-behaved agents, and $N_{nwb}$ to denote the set of non-well-behaved agents.

\begin{definition}[Well-behaved Globally Large Goods]\label{def:globally-large}
Given a fractional allocation $\x$, we refer to a good $g$ as {\em globally large} if it is large for at least one agent $i$. We define a globally large good as well-behaved if its total fractional allocation to well-behaved agents who consider it large is at least $1-\gamma$. Formally, a globally large good $g$ is well-behaved if $\sum_{i \in N_{wb}} \mathbbm{1}\{g \in L_i\}x_{ig} \ge 1-\gamma$.
\end{definition}

We use $G_{wb}$ to denote the set of well-behaved globally large goods, and let $G_{nwb}$ denote all the other goods. We define the large good consistency property.

\begin{description}
\item[Large Good Consistency Property] A fractional allocation $\x$ satisfies the large good consistency property if the total fractional amount that non-well-behaved goods are allocated to well-behaved agents who consider them large is at most $10\gamma n$. Formally, the large good consistency property is satisfied if $\sum_{g \in G_{nwb}} \sum_{i \in N_{wb}} \mathbbm{1}\{g \in L_i\}x_{ig} \le 10\gamma n$.
\end{description}

Our main result of this section is the following.

\begin{theorem}\label{thm:no-large-good-consistency}
Let $y$ be a configuration LP solution for an instance $\cal I$, and let $\x$ be the corresponding fractional allocation. If at most $\gamma^3 n$ agents are not well-behaved in $\x$ and $\x$ does not satisfy the large good consistency property, then there is a polynomial time algorithm that outputs a fractional allocation $\z$ such that 
\begin{align*}
\ST(\z) \ge \frac1n \sum_{i \in N, S \subseteq G} y_{i, S} \ln(v_i(S)) - \frac1e + c^*_2,
\end{align*}
for some constant $c^*_2 \in (0, 1)$.
\end{theorem}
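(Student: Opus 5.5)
The plan is to exploit the mismatch between goods that are large for a well-behaved agent but not well-behaved as globally large goods. Let $Q$ be the set of pairs $(i,g)$ with $i \in N_{wb}$, $g \in L_i \cap G_{nwb}$, carrying total mass $\sum_{(i,g)\in Q} x_{ig} > 10\gamma n$. For each such $g$, since $g \in G_{nwb}$, the mass of $g$ held by agents in $N_{nwb}$ or by well-behaved agents who consider $g$ small is more than $\gamma$. The mass held by non-well-behaved agents is globally at most $\gamma^3 n$, which is negligible against $10\gamma n$. So, after a weighting argument, a constant fraction (order $\gamma n$) of the relevant mass consists of goods $g$ that are simultaneously large (fraction $x_{ig}$) for some well-behaved $i$ and small (fraction $x_{jg}$) for some well-behaved $j$. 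For each such good, with the total mass of the good being $1$, we have $\sum_{i:\,g\in L_i} x_{ig} > 0$ and $\sum_{j:\,g\notin L_j} x_{jg} \ge \gamma$ (up to the negligible $N_{nwb}$ part). Hence we can pair amounts: for every such $g$ we transfer an amount $\delta_g \le \min\{\gamma\cdot(\text{large mass of } g), \text{small mass}\}$ from small-holders to large-holders. This gives a total transfer $\sum_g \delta_g = \Theta(\gamma^2 n)$, with the per-agent amounts $\delta_i$ received and $\delta_j$ removed kept small, say at most $0.01$ in $\ell$-units per agent and $v(R)\le 0.1\mu$. To make per-agent amounts bounded, we cap what each agent gives or receives. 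This uses only a $\Theta(\gamma)$ fraction of each agent's large mass, which is at most $1$.

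Define $\z$ by this transfer, restricted so that non-well-behaved agents are untouched; their guarantee $\ST(\z,i)\ge \sum_S y_{i,S}\ln v_i(S)-1/e$ then remains as in Lemma \ref{lem:fractional-nash-lower-bound}. For a well-behaved agent $i$, the bundle $\z_i$ arises from $\x_i$ by adding large goods (value $>\mu^i/\gamma$ for most; by Property D the exceptions are few) and removing small goods. In the artificial instance $\cal I^{i,\x}(\Delta)$ this corresponds to adding a set $A$ of $\Delta\cdot(\text{added amount})$ copies and removing a set $R$ of $\Delta\cdot(\text{removed amount})$ copies. The instance $\cal I^{i,\z}(\Delta)$ equals $\cal I^{mod}$, so by Lemma \ref{lem:w-worst} and Corollary \ref{cor:small-good-removal-large-good-addition} we obtain
$$\ST(\z,i) \ge \logNSW(\cal I^{i,\x}(\Delta)) - \tfrac1e + 10 a_i - 2 r_i,$$
where $a_i$ is the amount of goods of value at least $\mu^i/\gamma$ that $i$ receives and $r_i$ is the amount it loses. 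Combining with Lemma \ref{lem:y-substitute} and summing over agents gives a gain of $\frac1n\bigl(10\sum_i a_i - 2\sum_j r_j\bigr)$. Since the transfer conserves mass, $\sum r_j=\sum a_i + (\text{mass going to large-holders for whom value} < \mu^i/\gamma)$.

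The main obstacle is ensuring that most transferred mass goes to recipients for whom the good has value at least $\mu^i/\gamma$, rather than merely exceeding the large threshold. I would handle this by restricting $Q$ to pairs with $v_i(g)\ge\mu^i/\gamma$. Property D (fractionally, at most a $\gamma^2$ mass of large goods per well-behaved agent falls below $\mu^i/\gamma$) shows that the excluded mass totals at most $\gamma^2 n$, so $Q$ still has mass at least $9\gamma n$. Then $\sum a_i=\sum r_j=\Theta(\gamma^2 n)$, and the gain is $8\sum a_i/n\ge c^*_2>0$. A second technical point is to keep the capacity hypotheses $|A|\le 0.01\Delta$ and $v(R)\le 0.1\Delta\mu$, and to make $\Delta$ feasible for $\z$ as well. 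Both are handled by choosing transfers proportional to $\gamma x$ and rational, so the construction is polynomial time.
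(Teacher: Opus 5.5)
\paragraph{Gap: bundles of non-well-behaved agents are not small.} Your handling of goods whose missing mass sits with well-behaved agents who consider them small is sound and essentially matches the paper's Case 1. The gap is the step ``the mass held by non-well-behaved agents is globally at most $\gamma^3 n$.'' Only the \emph{number} of such agents is at most $\gamma^3 n$. The size of a fractional bundle, $\sum_g x_{ig}=\sum_S y_{i,S}|S|$, is not bounded by $1$ and can be as large as $m$.

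For example, a single non-well-behaved agent can hold a $2\gamma$ share of each of $\Theta(n)$ goods that are large for the well-behaved agents holding the rest. Then $\zeta(G_{nwb})\approx(1-2\gamma)(1-1/e)n>10\gamma n$, so large good consistency fails. Yet no well-behaved agent holds any of these goods as a small good. Your construction leaves non-well-behaved agents untouched, so in this situation it moves nothing and gains nothing. The theorem applies to any configuration LP solution, so this case cannot be dismissed.

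\paragraph{What is missing.} You need a way to take mass from non-well-behaved agents while controlling their loss; the paper does this in its Case 2 ($\zeta(G'')\ge 9\gamma n$).
\begin{itemize}
\item Each agent's first \ST{} bucket has size at most $1$, so the first-bucket mass of non-well-behaved agents totals at most $\gamma^3 n$.
\item Discarding the fewer than $4\gamma^2 n$ goods whose non-well-behaved first-bucket mass exceeds $\gamma/4$ costs little of $\zeta$. Each remaining good of $G''$ has at least $\gamma/4$ of non-well-behaved mass outside first buckets.
\item Setting $z_{ig}=(x_{ig}+f_{ig})/2$ for $i\in N_{nwb}$ frees at least $\gamma/8>\gamma^2$ of each such good. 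This pays for raising each well-behaved large-holder's share by a factor $1+\gamma^2$.
\item Theorem~\ref{thm:non-well-behaved-removal} bounds the damage. Removing at most half of each value class outside the top $\Delta$ copies shrinks every round-robin bundle by at most a factor $3$, so each non-well-behaved agent loses at most an additive $2$.
\item Over at most $\gamma^3 n$ such agents the loss is at most $2\gamma^3$. This is dominated by the roughly $80\gamma^3$ gain on the well-behaved side.
\end{itemize}
Without this second case, or an equivalent device, your argument does not cover all instances that violate large good consistency.
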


\subsection{Proof of Theorem \ref{thm:no-large-good-consistency}}
Our rounding algorithm can be summarized at a high level as follows: for a few carefully chosen set of large goods that are not well-behaved, we move a small fraction of each good from agents who value it as a small good to agents who value it as a large good. Intuitively, removing a small good from an agent's fractional bundle only reduces the Nash welfare by a small amount but adding a large good to an agent's bundle increases the Nash welfare by a large amount. 
However we need to be careful to ensure that the Nash welfare improves.

For each good $g$, we define $\zeta(g)$ to denote the total fractional amount $g$ is allocated as a large good to well-behaved agents; that is, $\zeta(g) = \sum_{i \in N_{wb}} \mathbbm{1}\{g \in L_i\}x_{ig}$. Given any subset $G' \subseteq G$, we write $\zeta(G') = \sum_{g \in G'} \zeta(g)$. If the large good consistency property is violated, then $\zeta(G_{nwb}) > 10\gamma n$.  

Each non-well-behaved large good is allocated either as a small good or to a non-well-behaved agent with at least a $\gamma$ fraction, using Observation \ref{obs:x-complete}. We partition these goods based on whether this $\gamma$ fraction is allocated to well-behaved agents or non-well-behaved agents. Specifically, for each $g \in G_{nwb}$, the following holds:
\begin{align*}
\sum_{i \in N_{wb}} \mathbbm{1}\{g \notin L_i\} x_{ig} + \sum_{i \in  N_{nwb}} x_{ig} \ge \gamma.
\end{align*}

This implies at least one of $\sum_{i \in N_{wb}} \mathbbm{1}\{g \notin L_i\} x_{ig}$ and $\sum_{i \in N_{nwb}} x_{ig}$ must be at least $\gamma/2$. 

Let $G'\subseteq G_{nwb}$ denote the set of these goods which have $\sum_{i \in N_{wb}} \mathbbm{1}\{g \notin L_i\} x_{ig} \ge \gamma/2$, and let $G'' \subseteq G_{nwb}$ denote the set of these goods which have $\sum_{i \in N_{nwb}} x_{ig} \ge \gamma/2$.

Since $G' \cup G'' = G_{nwb}$, we must have $\zeta(G') + \zeta(G'') \ge \zeta(G_{nwb}) > 10\gamma n$. Our proof is split into two cases. In the first case, we assume $\zeta(G') \ge \gamma n$. In the second case, we assume $\zeta(G'') \ge 9\gamma n$. Since $\zeta(G') + \zeta(G'')  > 10\gamma n$, one of these two cases must hold.

\medskip 

\textbf{Case 1:} $\zeta(G') \ge \gamma n$. 

\medskip 
We create a new fractional allocation $\z$ which we initialize at $\x$ and modify by doing the following for each good $g \in G'$.
\begin{itemize}
		\item Let $L(g)$ denote the set of well-behaved agents $i$ such that $g \in L_i$, and let $S(g)$ be the set of well-behaved agents $i$ such that $g \in G \setminus L_i$. 
		\item Let $\zeta(g) = \sum_{i \in N_{wb}} \mathbbm{1}\{g \in L_i\} x_{ig}$ be the fractional amount of $g$ which is allocated as a large good, and let $s(g) = \sum_{i \in N_{wb}} \mathbbm{1}\{g \notin L_i\} x_{ig}$ denote the fractional amount of $g$ which is allocated as a small good. 
		\item For each $i \in L(g)$, $z_{ig} \gets x_{ig}(1 + \gamma^2)$.
		\item For each $i \in S(g)$, $z_{ig} \gets x_{ig} \left (1 - \frac{\gamma^2\zeta(g)}{s(g)} \right )$.
\end{itemize}

Our proof consists of two parts. We first show that $\z$ is a valid allocation. Then we show that applying ST rounding with $\z$ leads to a constant factor improvement in Nash welfare.

\begin{claim}\label{claim:z-valid}
For each good $g \in G$, $\sum_{i \in N}z_{ig} = \sum_{i \in N}x_{ig}$. Moreover, for each good $g$ and agent $i$, $z_{ig} \ge 0$. 
\end{claim}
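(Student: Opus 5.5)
The claim has two parts, conservation of each good's total allocation and nonnegativity of every entry. I would argue both good by good, splitting into goods outside $G'$ and goods inside $G'$.

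For $g \notin G'$, the construction never touches column $g$. So $z_{ig} = x_{ig}$ for every $i$, and both statements are immediate.

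Now fix $g \in G'$. The only entries that change are those of well-behaved agents in $L(g) \cup S(g)$, and these two sets partition $N_{wb}$. Entries of non-well-behaved agents are unchanged. The net change to the column sum is
\begin{align*}
\sum_{i \in L(g)} \gamma^2 x_{ig} - \sum_{i \in S(g)} \frac{\gamma^2\zeta(g)}{s(g)} x_{ig} = \gamma^2 \zeta(g) - \frac{\gamma^2 \zeta(g)}{s(g)} s(g) = 0.
\end{align*}
This uses the definitions $\zeta(g) = \sum_{i \in L(g)} x_{ig}$ and $s(g) = \sum_{i \in S(g)} x_{ig}$. The division is legitimate because $g \in G'$ gives $s(g) \ge \gamma/2 > 0$. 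Hence $\sum_i z_{ig} = \sum_i x_{ig}$.

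For nonnegativity:
\begin{itemize}
\item Agents in $L(g)$ have $z_{ig} = (1+\gamma^2)x_{ig} \ge 0$.
\item Unchanged agents keep $z_{ig} = x_{ig} \ge 0$.
\item Agents in $S(g)$ need $1 - \gamma^2\zeta(g)/s(g) \ge 0$. Observation \ref{obs:x-complete} gives $\zeta(g) \le \sum_i x_{ig} = 1$, and membership in $G'$ gives $s(g) \ge \gamma/2$. So $\gamma^2 \zeta(g)/s(g) \le 2\gamma < 1$, and hence $z_{ig} \ge (1-2\gamma)x_{ig} \ge 0$.
\end{itemize}

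Although the claim does not ask for it, I would also note that $z_{ig} \le 1$ and $\sum_i z_{ig} = 1$ follow from the same computation, so $\z$ is a valid fractional allocation.

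The only step needing any care is the denominator bound $s(g) \ge \gamma/2$, which is exactly the defining property of $G'$.
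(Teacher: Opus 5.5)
Your proof is correct and takes essentially the same route as the paper. Goods outside $G'$ are untouched. For $g \in G'$, a direct column-sum computation shows that the $\gamma^2\zeta(g)$ added to agents in $L(g)$ is exactly removed from agents in $S(g)$. Nonnegativity follows from $s(g) \ge \gamma/2$ and $\zeta(g) \le 1$, which give a shrink factor of at least $1 - 2\gamma > 0$.
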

\begin{proof}
This is trivial for each good not in $G'$ since their fractional allocations are the same in both $\x$ and $\z$. Consider any good $g \in G'$. We use the notation $L(g)$, $S(g)$, $\zeta(g)$ and $s(g)$ as defined in the above procedure. 

\begin{align*}
\sum_{i \in N} z_{ig} &= \sum_{i \in L(g)} z_{ig} + \sum_{i \in S(g)} z_{ig} + \sum_{i \in N \setminus (L(g) \cup S(g))} z_{ig} \\
&= \sum_{i \in L(g)} x_{ig}(1 + \gamma^2) + \sum_{i \in S(g)} x_{ig}\left ( 1- \frac{\gamma^2 \zeta(g)}{s(g)}\right ) + \sum_{i \in N \setminus (L(g) \cup S(g))} x_{ig} \\
&= \sum_{i \in N}x_{ig} + \sum_{i \in N_{wb}} \mathbbm{1}\{g \in L_i\}\gamma^2 x_{ig} - \frac{\gamma^2\zeta(g)}{s(g)} \sum_{i \in N_{wb}} \mathbbm{1}\{g \notin L_i\} x_{ig} \\
&= \sum_{i \in N} x_{ig} + \gamma^2 \zeta(g) - \frac{\gamma^2\zeta(g)}{\sum_{i \in N_{wb}}\mathbbm{1}\{g \notin L_i\}x_{ig}} \sum_{i \in N_{wb}} \mathbbm{1}\{g \notin L_i\} x_{ig} \\
&= \sum_{i \in N} x_{ig}.
\end{align*}

To show that each $z_{ig}$ is non-negative, we only need to consider the cases where $z_{ig} < x_{ig}$. These are the cases where we set $z_{ig} \gets x_{ig}\left ( 1- \frac{\gamma^2 \zeta(g)}{s(g)}\right )$. In this case, from the definition of goods in $G'$, $s(g) \ge \frac{\gamma}{2}$ and we trivially have $\zeta(g) \le 1$. Therefore, 
\begin{align*}
\left ( 1- \frac{\gamma^2 \zeta(g)}{s(g)}\right ) \ge (1 - 2\gamma) > 0.
\end{align*}
This proves that $z_{ig}$ is always non-negative.
\end{proof}

Fix an agent $i \in N$. To analyze $\ST(\z, i)$, we consider two artificial instances $\cal I^{i, \x}(\Delta)$ and $\cal I^{i, \z}(\Delta)$ for some $\Delta$ that is feasible with respect to $\x$, $\z$ and the configuration LP solution $y$. Let $W^i$ denote the round robin allocation of $\cal I^{i, \x}(\Delta)$ and let $U^i$ denote the round robin allocation of $\cal I^{i, \z}(\Delta)$.

We start with the well-behaved agents: assume $i \in N_{wb}$.
The instance $\cal I^{i, \z}(\Delta)$ can be constructed from $\cal I^{i, \x}(\Delta)$ by removing some small goods and adding some large goods. For each good $g \in G'$, if $g$ is a large good for the agent $i$, we add $\gamma^2 \Delta x_{ig}$ copies of $g$ to $\cal I^{i,\x}(\Delta)$, and if $g$ is small for agent $i$, we remove $\frac{\gamma^2\zeta(g)}{s(g)}\Delta x_{ig}$ copies of $g$ from $\cal I^{i, \x}(\Delta)$.


Let $R^i$ be the set of small goods removed from $\cal I^{i, \x}(\Delta)$ and let $A^i$ be the set of goods added to create $\cal I^{i, \z}(\Delta)$. Our goal is to use Corollary \ref{cor:small-good-removal-large-good-addition} to lower bound the log Nash welfare of $U^i$. For this, we need to show that conditions of Corollary \ref{cor:small-good-removal-large-good-addition} are satisfied. First, the instance $\cal I^{i, \x}(\Delta)$ is well-behaved since agent $i$ is well-behaved. 

For each good $g \in G'$. We remove $\frac{\gamma^2 \zeta(g)}{s(g)} \Delta x_{ig}$ copies of the good. Note that since $s(g) \ge \gamma/2$, we have
\begin{align*}
\frac{\gamma^2 \zeta(g)}{s(g)} \le 2\gamma.
\end{align*}

This implies the total value removed is at most $\sum_{g \in G \setminus L_i} 2\gamma \Delta x_{ig} v_i(g) \le 2\gamma \Delta \mu^i = 2\gamma \Delta \mu$, where $\mu$ is mean value of the instance $\cal I^{i, \x}(\Delta)$. This shows that the total amount removed is at most $0.1\Delta \mu$. Moreover the number of large goods we add is at most $\gamma^2\ell_i \Delta \le \gamma^2 \Delta$. Therefore, the conditions of Corollary \ref{cor:small-good-removal-large-good-addition} are satisfied. However, to bound the final change in Nash welfare, we need to lower bound the number of goods we add with value at least $\frac{\mu}{\gamma}$. We do this in the following claim:

\begin{claim}\label{claim:z-addition}
The following two statements hold for each $i \in N_{wb}$:
\begin{enumerate}[(i)]
\item $|A^i| = \Delta \gamma^2 \sum_{g \in G'}\mathbbm{1}\{g \in L_i\} x_{ig}$, and
\item there are at least $\max \{|A^i| - \gamma^4 \Delta, 0\}$ goods $g$ in $A^i$ which have value at least $v_i(g) \ge \frac{\mu}{\gamma}$
\end{enumerate}
\end{claim}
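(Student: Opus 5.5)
Part (i) is a counting statement that follows directly from the construction of $\z$. For each good $g \in G'$ with $g \in L_i$ we have $i \in L(g)$, so $z_{ig} = x_{ig}(1+\gamma^2)$. Hence $\cal I^{i,\z}(\Delta)$ contains exactly $\gamma^2 \Delta x_{ig}$ more copies of $g$ than $\cal I^{i,\x}(\Delta)$. For goods in $G'$ that are small for $i$, and for goods outside $G'$, no copies are added. Summing over $g \in G'$ gives $|A^i| = \Delta\gamma^2 \sum_{g\in G'} \mathbbm{1}\{g \in L_i\} x_{ig}$.

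For part (ii), every good in $A^i$ is a copy of some $g \in L_i$. The goods of $A^i$ with value below $\mu/\gamma$ are therefore copies of large goods $g \in L_i$ with $v_i(g) < \mu^i/\gamma$; here we use $\mu = \mu^i$ from Lemma \ref{lem:large-good-connection}. The number of such copies is at most
\[
\gamma^2 \Delta \sum_{g \in L_i:\, v_i(g) < \mu^i/\gamma} x_{ig} .
\]
It therefore suffices to show that
\[
\sum_{g \in L_i:\, v_i(g) < \mu^i/\gamma} x_{ig} \le \gamma^2 .
\]
This is where \ref{prop:d} enters. Since $i$ is well-behaved, $\cal I^{i,\x}(\Delta)$ is a well-behaved instance. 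By Lemma \ref{lem:large-good-connection}, its large goods are exactly the $\Delta\ell_i$ copies of goods in $L_i$. In the round robin allocation $W^i$, these copies are handed out in the first round, one to each agent in $[\Delta \ell_i]$, and each copy of $g$ goes to a distinct agent $j$ with $v_i(g_j) = v_i(g)$. \ref{prop:d} says that at most $\gamma^2\Delta$ of these agents receive a large good of value at most $\mu/\gamma$. So the number of copies of goods in $L_i$ with value below $\mu/\gamma$ is at most $\gamma^2 \Delta$, i.e.\ $\Delta \sum_{g\in L_i: v_i(g)<\mu^i/\gamma} x_{ig} \le \gamma^2\Delta$.

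Combining the two bounds, at most $\gamma^2 \cdot \gamma^2 \Delta = \gamma^4\Delta$ goods of $A^i$ have value below $\mu/\gamma$. Hence at least $|A^i| - \gamma^4\Delta$ goods of $A^i$ have value at least $\mu/\gamma$, and trivially at least $0$, which proves (ii).

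The only real subtlety is matching the agent-count statement of \ref{prop:d} with a copy count of goods. This works because each large copy goes to exactly one agent in the first round and $\ell = \Delta\ell_i < \Delta$. A minor boundary point is the strict versus non-strict inequality at $\mu/\gamma$, which only helps, since the claim counts goods of value strictly below $\mu/\gamma$.
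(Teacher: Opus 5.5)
Your proof is correct and follows essentially the same route as the paper. Part (i) is read off from the update $z_{ig} = x_{ig}(1+\gamma^2)$. Part (ii) uses \ref{prop:d} of the well-behaved instance $\cal I^{i,\x}(\Delta)$ to bound the fractional mass of low-value large goods by $\gamma^2$, so at most $\gamma^4\Delta$ added copies have value below $\mu/\gamma$. The only differences are cosmetic: you bound over all of $L_i$ rather than over $G' \cap L_i$, and you spell out the correspondence between the agent count in \ref{prop:d} and the copy count, which the paper leaves implicit.
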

\begin{proof}
We add $\gamma^2\Delta x_{ig}$ copies of each good $g \in G'$ which is a large good for the agent $i$. Therefore, the first statement holds. To prove the second statement, let $\widetilde{G} \subseteq G'$ be the set of large goods in $G'$ such that for each $g \in \widetilde{G}$, we have $v_i(g) \le \frac{\mu}{\gamma}$. By \ref{prop:d} of the well-behaved instance $\cal I^{i, \x}(\Delta)$, we have $\Delta \sum_{g \in \widetilde{G}} x_{ig} \le \gamma^2 \Delta$. This implies the number of copies of goods in $\widetilde{G}$ that we add is at most $\gamma^4 \Delta$.
\end{proof}

Applying Corollary \ref{cor:small-good-removal-large-good-addition}, we get for each agent $i \in N_{wb}$: 
\begin{align*}
\logNSW(U^i) \ge \logNSW(\cal I^{i, \x}(\Delta)) - \frac{1}{e} + \frac{10\max\{|A^i| - \gamma^4 \Delta, 0\}}{\Delta} - \frac{2|R^i|}{\Delta}.
\end{align*}

Applying Lemma \ref{lem:w-worst} and \ref{lem:y-substitute} to the above inequality, we get for each $i \in N_{wb}$:
\begin{align}
\ST(\z, i) \ge \sum_{S \subseteq G} y_{i, S}\ln v_i(S) -\frac1e + \frac{10\max\{|A^i| - \gamma^4 \Delta, 0\}}{\Delta} - \frac{2|R^i|}{\Delta}. \label{eq:wb-bound-large-good-1}
\end{align}

For all the other agents $i \in N_{nwb}$, both $\x_i$ and $\z_i$ are identical. Therefore, $\cal I^{i, \z}(\Delta)$ and $\cal I^{i, \x}(\Delta)$ are identical. This allows us to use Lemma \ref{lem:barman-eonebye} to get:
\begin{align*}
\logNSW(U^i) \ge \logNSW(\cal I^{i, \z}(\Delta)) - \frac1e  = \logNSW(\cal I^{i, \x}(\Delta)) - \frac1e.
\end{align*}

We can then apply Lemma \ref{lem:w-worst} and \ref{lem:y-substitute} to the above inequality to get for each $i \in N_{nwb}$:
\begin{align}
\ST(\z, i) \ge \sum_{S \subseteq G} y_{i, S}\ln v_i(S) - \frac1e.\label{eq:nwb-bound-large-good-1}
\end{align}

Adding up \eqref{eq:wb-bound-large-good-1} and \eqref{eq:nwb-bound-large-good-1} gives us
\begin{align*}
\ST(\z) &\ge \frac1n \sum_{i \in N} \sum_{S \subseteq G} y_{i, S} \ln(v_i(S)) - \frac1e - 10\gamma^4 + \frac{1}{\Delta n} \left (10\sum_{i \in N_{wb}} |A^i| - 2\sum_{i \in N_{wb}} |R^i| \right ) 
\end{align*}

In Claim \ref{claim:z-valid}, we show that the total amount that each good is allocated in $\z$ is the same as $\x$. Therefore $\sum_{i \in N_{wb}} |A^i| = \sum_{i \in N_{wb}} |R^i|$. Plugging this into the above inequality gives us
\begin{align}
\ST(\z) &\ge \frac1n \sum_{i \in N} \sum_{S \subseteq G} y_{i, S} \ln(v_i(S)) - \frac1e - 10\gamma^4 + \frac{8}{\Delta n} \sum_{i \in N_{wb}} |A^i| \label{eq:intermediate-large-good-1}
\end{align}

We lower bound $\sum_{i \in N_{wb}} |A^i|$ using the first statement of Claim \ref{claim:z-addition}.
\begin{align*}
\sum_{i \in N_{wb}} |A^i| = \Delta \gamma^2 \sum_{i \in N_{wb}} \sum_{g \in G'} \mathbbm{1}\{g \in L_i\} x_{ig} = \Delta \gamma^2 \zeta(G') \ge \Delta \gamma^3 n.
\end{align*} 
The final inequality uses the assumption in this case that $\zeta(G') \ge \gamma n$. Plugging this into \eqref{eq:intermediate-large-good-1}, we get

\begin{align*}
\ST(\z) &\ge \frac1n \sum_{i \in N} \sum_{S \subseteq G} y_{i, S} \ln(v_i(S)) - \frac1e - 10\gamma^4 + 8\gamma^3
\end{align*}

This proves the theorem with the constant $c^*_2 = 8\gamma^3 - 10\gamma^4$.

\medskip 
\noindent\textbf{Case 2:} $\zeta(G'') \ge 9\gamma n$. 
\medskip

Recall that  $G''$ is the set of these goods which have $\sum_{i \in N_{nwb}} x_{ig} \ge \gamma/2$. We consider the fractional bundle $\x_i$ for some agent $i$. We partition this bundle into buckets as described in the ST rounding algorithm. For each agent $i \in N$ and good $g \in G$, we define the value $f_{ig}$ to denote the fraction of $g$ in the first bucket of the bundle $\x_i$. 

We use $\widetilde{G} \subseteq G''$ to denote the set of goods in $G''$ with $\sum_{i \in  N_{nwb}} f_{ig} \le \frac{\gamma}{4}$; this is the set of goods which have a total fractional allocation of at most $\frac{\gamma}4$ in the first bucket of the fractional bundles of non-well-behaved agents. 
 
\begin{claim}\label{claim:tildeg-lower-bound}
$\zeta(\widetilde{G}) \ge 8\gamma n$
\end{claim}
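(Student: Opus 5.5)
The plan is a counting argument: show that only few goods of $G''$ fall outside $\widetilde{G}$, and that each of them carries at most a unit of $\zeta$. Write $G'' = \widetilde{G} \cup (G'' \setminus \widetilde{G})$, so that
\begin{align*}
\zeta(\widetilde{G}) = \zeta(G'') - \zeta(G'' \setminus \widetilde{G}) \ge 9\gamma n - \zeta(G'' \setminus \widetilde{G}),
\end{align*}
using the Case 2 assumption $\zeta(G'') \ge 9\gamma n$. It therefore suffices to show $\zeta(G'' \setminus \widetilde{G}) \le \gamma n$.

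The key step is to bound $|G'' \setminus \widetilde{G}|$. By definition, every $g \in G'' \setminus \widetilde{G}$ satisfies $\sum_{i \in N_{nwb}} f_{ig} > \frac{\gamma}{4}$. On the other hand, the first bucket of each agent has size at most $1$ by property (P1) of the bucketing, so $\sum_{g \in G} f_{ig} \le 1$ for every agent $i$. Summing this over the non-well-behaved agents and using the standing hypothesis of Theorem \ref{thm:no-large-good-consistency} that $|N_{nwb}| \le \gamma^3 n$ gives
\begin{align*}
\frac{\gamma}{4}\,|G'' \setminus \widetilde{G}| < \sum_{g \in G'' \setminus \widetilde{G}} \sum_{i \in N_{nwb}} f_{ig} \le \sum_{i \in N_{nwb}} \sum_{g \in G} f_{ig} \le |N_{nwb}| \le \gamma^3 n .
\end{align*}
Hence $|G'' \setminus \widetilde{G}| < 4\gamma^2 n$.

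To finish, note that $\zeta(g) \le \sum_{i \in N} x_{ig} = 1$ for every good $g$, by Observation \ref{obs:x-complete}. Therefore $\zeta(G'' \setminus \widetilde{G}) \le |G'' \setminus \widetilde{G}| < 4\gamma^2 n \le \gamma n$, since $\gamma = 10^{-10}$. Substituting into the first display gives $\zeta(\widetilde{G}) \ge 9\gamma n - 4\gamma^2 n \ge 8\gamma n$. There is no real obstacle here; the only point needing care is remembering that the hypothesis $|N_{nwb}| \le \gamma^3 n$ comes from the statement of Theorem \ref{thm:no-large-good-consistency}, and that first buckets have size at most one.
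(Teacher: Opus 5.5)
Your proof is correct and follows the paper's argument essentially step for step. Both bound $|G''\setminus\widetilde{G}|$ by $4\gamma^2 n$ using first-bucket sizes at most $1$ and $|N_{nwb}|\le\gamma^3 n$, then use $\zeta(g)\le 1$ to lose at most $\gamma n$ from $\zeta(G'')\ge 9\gamma n$ (your strict inequality needs $G''\setminus\widetilde{G}\neq\emptyset$, but the empty case is trivial).
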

\begin{proof}
Each good $g \in G'' \setminus \widetilde{G}$ satisfies $\sum_{i \in N_{nwb}} f_{ig} > \frac{\gamma}4$. This implies

\begin{align*}
|N_{nwb}| \ge \sum_{i \in N_{nwb}} \sum_{g \in G} f_{ig} \ge \sum_{i \in N_{nwb}} \sum_{g \in G'' \setminus \widetilde{G}} f_{ig} > \frac{\gamma}4 |G'' \setminus \widetilde{G}|. 
\end{align*}

Combining this with the fact that $|N_{nwb}| \le \gamma^3 n$, we get $|G'' \setminus \widetilde{G}| \le 4\gamma^2 n < \gamma n$. Using this, we can upper bound $\zeta(G'')$ as
\begin{align*}
\zeta(G'') = \zeta(\widetilde{G}) + \zeta(G'' \setminus \widetilde{G}) &\le \zeta(\widetilde{G}) + |G'' \setminus \widetilde{G}| \\
&\le \zeta(\widetilde{G}) + \gamma n.
\end{align*}

Noting that $\zeta(G'') \ge 9\gamma n$, we can re-arrange the terms of the above inequality to infer that $\zeta(\widetilde{G}) \ge 8\gamma n$. 
\end{proof}

We are now ready to define our new fractional allocation. In this algorithm, we move small fractions of goods in $\widetilde{G}$ from non-well-behaved agents to well-behaved agents. We define a fractional allocation $\z$ initialized at $\x$ and modified according to the following procedure. We do the following for each $g \in \widetilde{G}$:
\begin{itemize}
		\item Let $L(g)$ denote the set of well-behaved agents $i$ such that $g \in L_i$. For each $g \in G$ and $i \in N$, let $f_{ig}$ denote the first bucket allocation of good $g$ in the bundle $\x_i$. 
		\item For each $i \in L(g)$, $z_{ig} \gets x_{ig}(1 + \gamma^2)$.
		\item For each $i \in N_{nwb}$, $z_{ig} \gets \frac{x_{ig} + f_{ig}}{2}$.
\end{itemize}
\medskip

Our proof has the same two steps as the first case. We first show that this rounding procedure is valid. Then, we lower bound the value of $\ST(\z)$. 

\begin{claim}\label{claim:z-valid-case-2}
For each good $g \in G$, $\sum_{i \in N}z_{ig} \le \sum_{i \in N}x_{ig}$. Moreover for each good $g$ and agent $i$, $z_{ig} \ge 0$. 
\end{claim}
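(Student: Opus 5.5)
The plan is to verify both claims good by good, as in Claim~\ref{claim:z-valid}. The procedure only modifies goods in $\widetilde{G}$, so for every $g \notin \widetilde{G}$ we have $\z_{\cdot g} = \x_{\cdot g}$ and there is nothing to show. Fix $g \in \widetilde{G}$. The agents whose entries change are $L(g) \subseteq N_{wb}$ and $N_{nwb}$; these two sets are disjoint, so no entry is modified twice. I will bound the total increase on $L(g)$ and the total decrease on $N_{nwb}$ separately and compare them.

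\textbf{Increase on $L(g)$.} The increase is
\begin{align*}
\sum_{i \in L(g)} \gamma^2 x_{ig} = \gamma^2 \zeta(g) \le \gamma^2 .
\end{align*}
This uses $\zeta(g) \le \sum_{i} x_{ig} \le 1$.

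\textbf{Decrease on $N_{nwb}$.} For $i \in N_{nwb}$ the decrease is $x_{ig} - \frac{x_{ig}+f_{ig}}{2} = \frac{x_{ig} - f_{ig}}{2}$. By property (P2) of the bucketing, $f_{ig} = b^1_g \le \sum_t b^t_g = x_{ig}$, so each term is nonnegative. Summing over $N_{nwb}$, and using $g \in G''$ (so $\sum_{i \in N_{nwb}} x_{ig} \ge \gamma/2$) together with $g \in \widetilde{G}$ (so $\sum_{i \in N_{nwb}} f_{ig} \le \gamma/4$), the total decrease is
\begin{align*}
\frac12\left(\sum_{i \in N_{nwb}} x_{ig} - \sum_{i \in N_{nwb}} f_{ig}\right) \ge \frac12\left(\frac{\gamma}{2} - \frac{\gamma}{4}\right) = \frac{\gamma}{8}.
\end{align*}

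\textbf{Conclusion.} The net change is therefore
\begin{align*}
\sum_{i \in N} z_{ig} - \sum_{i \in N} x_{ig} \le \gamma^2 - \frac{\gamma}{8} < 0,
\end{align*}
because $\gamma = 10^{-10}$. This gives the first statement.

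For nonnegativity, the entries on $L(g)$ equal $x_{ig}(1+\gamma^2) \ge 0$, and the entries on $N_{nwb}$ equal $\frac{x_{ig}+f_{ig}}{2} \ge 0$ since both $x_{ig}$ and $f_{ig}$ are nonnegative. All other entries are unchanged.

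The only step that needs care is $f_{ig} \le x_{ig}$, which is immediate from (P2). Everything else is direct arithmetic from the definitions of $G''$ and $\widetilde{G}$.
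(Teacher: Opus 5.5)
Your proof is correct and matches the paper's argument: you bound the total increase on $L(g)$ by $\gamma^2$ and the total decrease on $N_{nwb}$ from below by $\gamma/8$ (using $g \in G''$ and $g \in \widetilde{G}$), and nonnegativity is immediate from the update rules. Your remarks that $L(g)$ and $N_{nwb}$ are disjoint and that $f_{ig} \le x_{ig}$ by (P2) are small details the paper leaves implicit.
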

\begin{proof}
We restrict our attention to goods in $\widetilde{G}$. Goods outside of $\widetilde{G}$ are allocated the same way in both allocations. Fix a good $g \in \widetilde{G}$. We need to show that the amount of $g$ we add to well-behaved agents is less than the amount of $g$ we remove from non-well-behaved agents. The total amount of $g$ we add to well-behaved agents is at most 
\begin{align*}
\sum_{i \in N_{wb}} \mathbbm{1}\{g \in L_i\} \gamma^2 x_{ig} \le \gamma^2 \sum_{i \in N} x_{ig} \le \gamma^2.
\end{align*}

The amount we remove from non-well-behaved agents is at least
\begin{align*}
\sum_{i \in N_{nwb}} \frac{x_{ig} - f_{ig}}{2} = \sum_{i \in N_{nwb}} \frac{x_{ig}}{2} - \sum_{i \in N_{nwb}} \frac{f_{ig}}{2} \ge \frac{\gamma}{4} - \frac{\gamma}{8} \ge \frac{\gamma}{8}.
\end{align*} 
Here, we use the fact that $g$ is in both $G''$ and $\widetilde{G}$. Therefore, $\sum_{i \in N_{nwb}} x_{ig} \ge \frac{\gamma}{2}$ and $\sum_{i \in N_{nwb}} f_{ig} \le \frac{\gamma}{4}$. $\gamma/8$ is clearly greater than $\gamma^2$. This proves the first statement. 

The second statement is easy to see. Every time we set $z_{ig}$ such that $z_{ig} < x_{ig}$, we set $z_{ig} \gets \frac{x_{ig} + f_{ig}}{2}$ which is clearly at least $0$.
\end{proof}

To analyze the change in Nash welfare, we use a similar technique to the previous case. Fix an agent $i \in N$. We consider two artificial instances $\cal I^{i, \x}(\Delta)$ and $\cal I^{i, \z}(\Delta)$ for some $\Delta$ that is feasible with respect to $\x$, $\z$ and the configuration LP solution $y$. Let $W^i$ denote the round robin allocation of $\cal I^{i, \x}(\Delta)$ and let $U^i$ denote the round robin allocation of $\cal I^{i, \z}(\Delta)$.

Note that $\cal I^{i, \z}(\Delta)$ can be constructed from $\cal I^{i, \x}(\Delta)$ by removing some goods if $i$ is non-well-behaved or adding some large goods if $i$ is well-behaved. 

Consider the case where $i$ is well-behaved.
For each good in $\widetilde{G}$, if $g$ is a large good for the agent $i$, we add $\gamma^2 \Delta x_{ig}$ copies of $g$ to $\cal I^{i, \x}(\Delta)$ to create $\cal I^{i, \z}(\Delta)$. Let $A^i$ be the set of goods added. To bound the change in Nash welfare, we use Corollary \ref{cor:small-good-removal-large-good-addition} with the set of removed goods being the empty set and the set of added goods being $A^i$. It is easy to see that conditions of Corollary \ref{cor:small-good-removal-large-good-addition} are satisfied since $i$ is well-behaved (Lemma \ref{lem:well-behaved-connection}) and we add at most $\gamma^2\Delta$ goods $A^i$ to the instance $\cal I^{i, \x}(\Delta)$.  However, we need to bound the high valued goods in $\widetilde{G}$ for this, which we do using the following claim. The proof of this claim is almost identical to the proof of Claim \ref{claim:z-addition}.

\begin{claim}\label{claim:z-addition-case-2}
The following two statements hold:
\begin{enumerate}[(i)]
\item $|A^i| = \Delta \gamma^2 \sum_{g \in \widetilde{G}}\mathbbm{1}\{g \in L_i\} x_{ig}$, and
\item there are at least $\max \{|A^i| - \gamma^4 \Delta, 0\}$ goods $g$ in $A^i$ which have value at least $v_i(g) \ge \frac{\mu}{\gamma}$.
\end{enumerate}
\end{claim}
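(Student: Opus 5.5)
For (i), I will simply count copies. By construction of $\z$ in Case 2, for each good $g \in \widetilde{G}$ with $g \in L_i$ (so $i \in L(g)$, as $i$ is well-behaved), we set $z_{ig} = x_{ig}(1+\gamma^2)$. Hence $\cal I^{i, \z}(\Delta)$ contains exactly $\gamma^2 \Delta x_{ig}$ more copies of $g$ than $\cal I^{i, \x}(\Delta)$. Since $i$ is well-behaved, $i \notin N_{nwb}$, so the small goods in $\x_i$ are unchanged. Goods outside $\widetilde{G}$ are also unchanged. Summing over $g \in \widetilde{G} \cap L_i$ gives $|A^i| = \Delta\gamma^2 \sum_{g \in \widetilde{G}} \mathbbm{1}\{g \in L_i\} x_{ig}$.

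For (ii), I follow Claim \ref{claim:z-addition}. Let $\widehat{G} \subseteq \widetilde{G} \cap L_i$ be the goods with $v_i(g) < \frac{\mu}{\gamma}$, where $\mu = \mu^i$ is the mean value of $\cal I^{i,\x}(\Delta)$ (Lemma \ref{lem:large-good-connection}).

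\begin{itemize}
\item By Lemma \ref{lem:large-good-connection}, the large goods of $\cal I^{i,\x}(\Delta)$ are exactly the copies of goods in $L_i$. In the round robin allocation $W^i$ these are handed out in the first round, one to each of the agents $1,\dots,\Delta\ell_i$, and $\ell_i<1$.
\item Every copy of a good in $\widehat{G}$ therefore goes to an agent receiving a large good of value at most $\mu/\gamma$. Since $i$ is well-behaved, $\cal I^{i,\x}(\Delta)$ is well-behaved, and \ref{prop:d} says there are at most $\gamma^2\Delta$ such agents. So $\Delta\sum_{g\in\widehat{G}} x_{ig} \le \gamma^2\Delta$.
\item The number of added copies of goods in $\widehat{G}$ is $\gamma^2\Delta\sum_{g \in \widehat{G}} x_{ig} \le \gamma^4 \Delta$.
\end{itemize}

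All other added copies are copies of goods with $v_i(g) \ge \mu/\gamma$. Hence at least $|A^i| - \gamma^4\Delta$ goods of $A^i$ have value at least $\mu/\gamma$, and the bound with $\max\{\cdot,0\}$ is trivial otherwise.

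The only point requiring care is the matching of the boundary: \ref{prop:d} uses the strict inequality $v(g_i) > \mu/\gamma$. Defining $\widehat{G}$ by $v_i(g) < \mu/\gamma$ (or $\le$) is therefore consistent, since the complement then has value at least $\mu/\gamma$. I would also note that the \ref{prop:d} count refers to agents, but each such agent holds exactly one large-good copy, so it bounds the number of copies as used above.
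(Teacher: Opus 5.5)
Your proposal is correct and matches the paper's argument, which proves this claim exactly as it proves Claim~\ref{claim:z-addition}. For (i) it counts the $\gamma^2\Delta x_{ig}$ added copies, and for (ii) it uses \ref{prop:d} of the well-behaved instance $\cal I^{i,\x}(\Delta)$ to bound the copies of low-valued large goods by $\gamma^2\Delta$, so the added ones number at most $\gamma^4\Delta$. Your strict threshold $v_i(g) < \mu/\gamma$ in place of the paper's $\le$ is immaterial, since that set is still covered by the exceptional agents of \ref{prop:d} and its complement has value at least $\mu/\gamma$.
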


Using this claim, we can invoke Corollary \ref{cor:small-good-removal-large-good-addition} to get the following bound for each agent $i \in N_{wb}$
\begin{align*}
\logNSW(U^i) \ge \logNSW(\cal I^{i, \x}(\Delta)) - \frac{1}{e} + \frac{10\max\{|A^i| - \gamma^4 \Delta, 0\}}{\Delta}.
\end{align*}

Applying Lemma \ref{lem:w-worst} and \ref{lem:y-substitute} to the above inequality, we get for each $i \in N_{wb}$:
\begin{align}
\ST(\z, i) \ge \sum_{S \subseteq G} y_{i, S}\ln v_i(S) -\frac1e -10\gamma^4+ \frac{10|A^i|}{\Delta}. \label{eq:wb-bound-large-good-2}
\end{align}

Now consider the case where $i$ is not well-behaved. Let $G^i$ be the set of goods of the instance $\cal I^{i, \x}(\Delta)$. To bound the Nash welfare in this case, we use Theorem \ref{thm:non-well-behaved-removal}. For this, we need to construct a set of $\Delta$ goods $G^{\Delta}$ which are weakly more valuable than all the other goods of the instance. We construct $G^{\Delta}$ as the set which contains $\Delta f_{ig}$ copies of each good $g$, where $f_{ig}$ is the amount of $g$ present in the first bucket of agent $i$. By the definition of our bucketing procedure, this creates a set of $\Delta$ goods $G^{\Delta}$ such that each good in $G^{\Delta}$ is weakly more valuable than each good in $G^i \setminus G^{\Delta}$.

We partition the goods $G^i$ based on the copy of the good $g \in G$ they are. Note that there are $\Delta x_{ig}$ copies of each good $g \in G$ in the instance $\cal I^{i, \x}(\Delta)$. We obtain $\cal I^{i, \z}(\Delta)$ from $\cal I^{i, \x}(\Delta)$ by removing $\Delta \frac{x_{ig} - f_{ig}}{2}$ copies of each $g \in \widetilde{G}$. Therefore, this removal amount satisfies the constraints of Theorem \ref{thm:non-well-behaved-removal} and we can use its Nash welfare bound. 
This gives us for each $i \in N_{nwb}$,
\begin{align*}
\logNSW(U^i) \ge \logNSW(\cal I^{i, \x}(\Delta)) - \frac1e - 2.
\end{align*}

We can then apply Lemma \ref{lem:w-worst} and \ref{lem:y-substitute} to the above inequality to get for each $i \in N_{nwb}$:
\begin{align}
\ST(\z, i) \ge \sum_{S \subseteq G} y_{i, S}\ln v_i(S) - \frac1e - 2. \label{eq:nwb-bound-large-good-2}
\end{align}

Adding up \eqref{eq:wb-bound-large-good-2} and \eqref{eq:nwb-bound-large-good-2} gives us
\begin{align}
\ST(\z) &\ge \frac1n \sum_{i \in N} \sum_{S \subseteq G} y_{i, S} \ln(v_i(S)) - \frac1e - \frac{2|N_{nwb}|}{n} - 10\gamma^4 + \frac{10}{\Delta n} \sum_{i \in N_{wb}} |A^i| \label{eq:intermediate-large-good-2}
\end{align}

We lower bound $\sum_{i \in N_{wb}} |A^i|$ using the first statement of Claim \ref{claim:z-addition-case-2}.
\begin{align*}
\sum_{i \in N_{wb}} |A^i| = \Delta \gamma^2 \sum_{i \in N_{wb}} \sum_{g \in \widetilde{G}} \mathbbm{1}\{g \in L_i\} x_{ig} = \Delta \gamma^2 \zeta(\widetilde{G}) \ge \Delta 8\gamma^3 n.
\end{align*} 

The final inequality uses Claim \ref{claim:tildeg-lower-bound} which shows that $\zeta(\widetilde{G}) \ge 8\gamma n$. Plugging this into \eqref{eq:intermediate-large-good-2} and upper bounding $|N_{nwb}|$ using $\gamma^3n$, we get

\begin{align*}
\ST(\z) &\ge \frac1n \sum_{i \in N} \sum_{S \subseteq G} y_{i, S} \ln(v_i(S)) - \frac1e - 10\gamma^4 + 78\gamma^3
\end{align*}

This proves the theorem with the constant $c^*_2 = 78\gamma^3 - 10\gamma^4$ (which is strictly larger than the case 1 constant).

In both cases, note that the allocation $\z$ can trivially be constructed in polynomial time. Moreover, we can easily identify which case to follow in polynomial time as well. 

\subsection{Corollaries of The Large Good Consistency Property}
The large good consistency property by itself is hard to directly apply in future proofs. 
In this subsection, we prove some useful corollaries of the large good consistency property that are easier to apply. 

\begin{corr}\label{cor:well-behaved-large-good-lower-bound}
Let $y$ be a configuration LP solution and $\x$ be its corresponding fractional allocation. If there are at most $\gamma^3 n$ non-well-behaved agents in $\x$ and $\x$ satisfies the large good consistency property, then $|G_{wb}| \ge\left (1 - \frac1e - 12\gamma \right )n$.
\end{corr}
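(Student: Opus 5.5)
Our plan is a double count of the total fractional amount of large goods held by well-behaved agents, which we write as
\begin{align*}
\Lambda := \sum_{i \in N_{wb}} \ell_i = \sum_{i \in N_{wb}} \sum_{g \in G} \mathbbm{1}\{g \in L_i\} x_{ig} = \sum_{g \in G} \zeta(g).
\end{align*}
\textbf{Lower bound.} By Lemma \ref{lem:fractional-property-a}, every well-behaved agent has $\ell_i \ge 1 - \frac1e - \gamma$. There are at least $(1-\gamma^3)n$ well-behaved agents, so
\begin{align*}
\Lambda \ge (1 - \gamma^3)\left(1 - \frac1e - \gamma\right) n \ge \left(1 - \frac1e - 2\gamma\right) n .
\end{align*}

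\textbf{Upper bound.} We split $\sum_g \zeta(g)$ over $G_{wb}$ and $G_{nwb}$. Every good is allocated exactly once (Observation \ref{obs:x-complete}), so $\zeta(g) \le \sum_i x_{ig} = 1$ for every $g$. This gives $\sum_{g \in G_{wb}} \zeta(g) \le |G_{wb}|$. By the large good consistency property, $\sum_{g \in G_{nwb}} \zeta(g) \le 10\gamma n$. Note that goods that are not globally large have $\zeta(g) = 0$, so including them in $G_{nwb}$ does no harm. Hence
\begin{align*}
\Lambda \le |G_{wb}| + 10\gamma n .
\end{align*}

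\textbf{Conclusion.} Combining the two bounds yields
\begin{align*}
|G_{wb}| \ge \left(1 - \frac1e - 12\gamma\right) n .
\end{align*}

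There is no real obstacle here. The only care needed is to check that the definition of $G_{nwb}$ used in the consistency property matches this split: $G_{nwb}$ is every good outside $G_{wb}$, and those that are not globally large contribute nothing to $\Lambda$.
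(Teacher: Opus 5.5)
Your proposal is correct and follows the paper's proof: you lower-bound the total large-good mass held by well-behaved agents via Lemma~\ref{lem:fractional-property-a}, split it over $G_{wb}$ and $G_{nwb}$, bound the first part by $|G_{wb}|$ and the second by $10\gamma n$ using the large good consistency property. Your arithmetic $(1-\gamma^3)(1-\frac1e-\gamma) \ge 1-\frac1e-2\gamma$ checks out, and only the allocation constraint $\sum_i x_{ig}\le 1$ is needed for $\zeta(g)\le 1$, though citing Observation~\ref{obs:x-complete} is also fine.
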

\begin{proof}
From Lemma \ref{lem:fractional-property-a}, we have for each $i \in N_{wb}$, $\sum_{g \in G}\mathbbm{1}\{g \in L_i\}x_{ig} \ge 1 - \frac1e - \gamma$. There are at least $n - \gamma^3 n$ well-behaved agents. Therefore,
\begin{align*}
\sum_{i \in N_{wb}}\sum_{g \in G}\mathbbm{1}\{g \in L_i\}x_{ig} \ge \left (1 - \frac1e - \gamma \right )(1 - \gamma^3)n.
\end{align*}

The set $G$ can be split into $G_{wb}$ and $G_{nwb}$. Using this,
\begin{align*}
\sum_{i \in N_{wb}}\sum_{g \in G_{wb}}\mathbbm{1}\{g \in L_i\}x_{ig} \ge \left (1 - \frac1e - \gamma \right )(1 - \gamma^3)n - \sum_{i \in N_{wb}}\sum_{g \in G_{nwb}}\mathbbm{1}\{g \in L_i\}x_{ig}. 
\end{align*}

The term $\sum_{i \in N_{wb}}\sum_{g \in G_{nwb}}\mathbbm{1}\{g \in L_i\}x_{ig}$ is at most $10\gamma n$ via the large good consistency property. Additionally, the term $\sum_{i \in N_{wb}}\sum_{g \in G_{wb}}\mathbbm{1}\{g \in L_i\}x_{ig}$ is at most $|G_{wb}|$ since $\x$ is a valid fractional allocation. Therefore:
\begin{align*}
|G_{wb}| &\ge \left (1 - \frac1e - \gamma \right )(1 - \gamma^3)n - 10\gamma n\\
&\ge \left ( 1 - \frac1e - 12\gamma \right )n.
\end{align*}
This completes the proof.
\end{proof}

\begin{corr}\label{cor:weird-upper-bound}
Let $y$ be a rational configuration LP solution and $\x$ be its corresponding fractional allocation. If there are at most $\gamma^3 n$ non-well-behaved agents in $\x$ and $\x$ satisfies the large good consistency property, then 
\begin{align*}
\sum_{i \in N_{wb}}\sum_{g \in G \setminus L_i}x_{ig} \sum_{j \in N_{wb}}\mathbbm{1}\{g \in L_j\}x_{jg} \le 11\gamma n.
\end{align*}
\end{corr}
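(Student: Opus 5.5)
Write $\zeta(g) = \sum_{j \in N_{wb}} \mathbbm{1}\{g \in L_j\}x_{jg}$ as in the proof of Theorem \ref{thm:no-large-good-consistency}. Exchanging the order of summation, the left-hand side becomes
\begin{align*}
\sum_{g \in G} \zeta(g) \, s(g), \qquad \text{where } s(g) := \sum_{i \in N_{wb}} \mathbbm{1}\{g \notin L_i\} x_{ig}
\end{align*}
is the amount of $g$ that well-behaved agents hold as a small good. The plan is to split this sum according to whether $g \in G_{wb}$ or $g \in G_{nwb}$, and to bound each part by a different mechanism.

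\emph{Well-behaved globally large goods.} Take $g \in G_{wb}$. By definition $\zeta(g) \ge 1 - \gamma$. Observation \ref{obs:x-complete} gives $\sum_{i \in N} x_{ig} = 1$, and $\zeta(g)$ and $s(g)$ are amounts held by disjoint sets of agents, so $s(g) \le 1 - \zeta(g) \le \gamma$. Using $\zeta(g) \le 1$, the contribution of these goods is at most $\gamma |G_{wb}|$.

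It remains to show $|G_{wb}| \le n$. Each $g \in G_{wb}$ has at least $1 - \gamma$ of its mass on well-behaved agents who consider it large. On the other hand, each agent $i$ holds a total large amount $\ell_i < 1$ by property (iii) of Definition \ref{def:large-goods-fractional}. Combining these,
\begin{align*}
(1-\gamma)|G_{wb}| \le \sum_{i \in N_{wb}} \ell_i \le n,
\end{align*}
so $|G_{wb}| \le n/(1-\gamma)$. The contribution of $G_{wb}$ is therefore at most $\gamma n/(1-\gamma) \le 1.01\gamma n$.

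\emph{Non-well-behaved goods.} Take $g \in G_{nwb}$. Here we use only $s(g) \le 1$. The contribution of these goods is then at most $\sum_{g \in G_{nwb}} \zeta(g)$. The large good consistency property bounds exactly this quantity by $10\gamma n$.

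\emph{Conclusion.} Adding the two parts gives at most $1.01\gamma n + 10\gamma n \le 11\gamma n$, as required. The only step needing care is the bound $|G_{wb}| \le n/(1-\gamma)$, which relies on the $\ell_i < 1$ bound for the total large amount held by each agent. The other steps follow directly from the definitions and from Observation \ref{obs:x-complete}.
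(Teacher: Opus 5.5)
Your decomposition into $G_{wb}$ and $G_{nwb}$, the bound $s(g) \le 1-\zeta(g) \le \gamma$ on $G_{wb}$, and your use of the large good consistency property on $G_{nwb}$ all match the paper's proof. The gap is in the last line. $1.01\gamma n + 10\gamma n = 11.01\gamma n$ is not at most $11\gamma n$. Even with exact values, your two parts sum to $\frac{\gamma n}{1-\gamma} + 10\gamma n = 11\gamma n + \frac{\gamma^2 n}{1-\gamma} > 11\gamma n$. So as written you prove the inequality with constant $10 + \frac{1}{1-\gamma}$, not $11$.

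The loss comes from replacing $\zeta(g)$ by $1$ on $G_{wb}$, which then forces you to bound $|G_{wb}|$. Keep the factor instead:
$\sum_{g \in G_{wb}} \zeta(g)s(g) \le \gamma \sum_{g \in G_{wb}} \zeta(g) = \gamma \sum_{j \in N_{wb}}\sum_{g \in G_{wb}} \mathbbm{1}\{g \in L_j\}x_{jg} \le \gamma\sum_{j \in N_{wb}} \ell_j \le \gamma n.$
This uses exactly the inequality $\sum_{g \in G_{wb}} \zeta(g) \le \sum_{i} \ell_i \le n$ that you already wrote down on the way to bounding $|G_{wb}|$. It is the paper's argument, and it gives a total of exactly $\gamma n + 10\gamma n = 11\gamma n$.

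The weaker constant would be harmless where the corollary is applied, since the final bound there has slack of order $10^{-5}$ against $O(\gamma)$ terms. It still does not prove the statement as stated.
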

\begin{proof}
We can re-write the left hand side of the desired inequality as follows:
\begin{align}
\sum_{i \in N_{wb}}\sum_{g \in G \setminus L_i}x_{ig} \sum_{j \in N_{wb}}\mathbbm{1}\{g \in L_j\}x_{jg} &= \sum_{g \in G} \sum_{i \in N_{wb}} \mathbbm{1}\{g \notin L_i\} x_{ig} \sum_{j \in N_{wb}} \mathbbm{1}\{g \in L_j\} x_{jg} \notag \\
&= \sum_{g \in G_{wb}} \sum_{i \in N_{wb}} \mathbbm{1}\{g \notin L_i\} x_{ig} \sum_{j \in N_{wb}} \mathbbm{1}\{g \in L_j\} x_{jg} \notag \\
&\qquad + \sum_{g \in G_{nwb}} \sum_{i \in N_{wb}} \mathbbm{1}\{g \notin L_i\} x_{ig} \sum_{j \in N_{wb}} \mathbbm{1}\{g \in L_j\} x_{jg}. \label{eq:weird-upper-bound-0}
\end{align}

Essentially, we split the set of goods $G$ into $G_{wb}$ and $G_{nwb}$. We upper bound each term separately. By the definition of well-behaved large goods,
\begin{align}
\sum_{g \in G_{wb}} \sum_{i \in N_{wb}} \mathbbm{1}\{g \notin L_i\} x_{ig} \sum_{j \in N_{wb}} \mathbbm{1}\{g \in L_j\} x_{jg} &\le \sum_{g \in G_{wb}}\gamma \sum_{j \in N_{wb}} \mathbbm{1}\{g \in L_j\} x_{jg} \notag \\
&= \gamma \sum_{j \in N_{wb}} \sum_{g \in G_{wb}}  \mathbbm{1}\{g \in L_j\} x_{jg} \notag \\
&\le \gamma \sum_{j \in N_{wb}} \ell_j \notag\\
&\le \gamma n. \label{eq:weird-upper-bound-1}
\end{align}

In the final inequality we use the fact that $\ell_j \le 1$ for all agents. Next, by the definition of the large good consistency property,
\begin{align}
\sum_{g \in G_{nwb}} \sum_{i \in N_{wb}} \mathbbm{1}\{g \notin L_i\} x_{ig} \sum_{j \in N_{wb}} \mathbbm{1}\{g \in L_j\} x_{jg} &\le \sum_{g \in G_{nwb}} \sum_{j \in N_{wb}} \mathbbm{1}\{g \in L_j\} x_{jg} \notag \\
&\le 10\gamma n. \label{eq:weird-upper-bound-2}
\end{align}

In the first inequality, we upper bound $\sum_{i \in N_{wb}} \mathbbm{1}\{g \notin L_i\} x_{ig}$ using $1$. 
Plugging in \eqref{eq:weird-upper-bound-1} and \eqref{eq:weird-upper-bound-2} into \eqref{eq:weird-upper-bound-0}, we prove the result.
\end{proof}

\section{The Main Algorithm}\label{sec:main-algo}
In this section, we present our final algorithm which handles all the cases not covered by the previous algorithms. Specifically, our algorithm achieves a better than $e^{1/e}$-approximation of the Nash welfare when
\begin{inparaenum}[(a)]
\item at most $\gamma^3n$ agents are not well-behaved, and
\item the large good consistency property is satisfied.
\end{inparaenum}

\begin{theorem}\label{thm:final-algo}
Let $y$ be a configuration LP solution, and let $\x$ be the corresponding fractional allocation. If at most $\gamma^3 n$ agents are not well-behaved in $\x$ and $\x$ satisfies the large good consistency property, then there is a polynomial time algorithm that outputs a (randomized) fractional allocation $\z$ such that 
\begin{align*}
\E[\ST(\z)] \ge \frac1n \sum_{i \in N, S \subseteq G} y_{i, S} \ln(v_i(S)) - \frac1e + c^*_3,
\end{align*}
for some constant $c^*_3 > 0$.
\end{theorem}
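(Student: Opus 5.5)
Following the overview in the introduction, I will build $\z$ in two phases: dependent partial rounding on the well-behaved globally large goods, followed by a small-good redistribution.

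\textbf{Phase 1 (dependent partial rounding).} Restrict to the bipartite graph between $N_{wb}$ and $G_{wb}$, with edges $\{i,g\}$ where $g\in L_i$ and weight $x_{ig}$. Corollary \ref{cor:well-behaved-large-good-lower-bound} shows this graph carries almost all large-good mass of the well-behaved agents. I run the Gandhi et al.\ pipage steps on it: repeatedly pick a cycle or maximal path of fractional edges and shift mass along it. Agent-side sums $\ell_i$ are only required to stay at most $1$, and good-side sums are preserved. When a good's mass on this graph is strictly less than $1$, I treat the missing mass as a slack leaf so that it does not constrain the process. I stop once every remaining fractional component is small.

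The output is a random $\z^{(1)}$ with the following properties:
\begin{enumerate}[(i)]
\item marginals $\E[z_{ig}]=x_{ig}$;
\item negative correlation of the large-good amounts $\lambda_i=\sum_{g\in L_i} z_{ig}$ on each good's neighbourhood;
\item for all but $O(\gamma)n$ agents, $\lambda_i\in[0,\gamma]\cup[1-\gamma,1]$.
\end{enumerate}
Goods of $G_{nwb}$, the large goods held as small, and the non-well-behaved agents are left untouched. Corollary \ref{cor:weird-upper-bound} and the large good consistency property bound their total mass by $O(\gamma n)$.

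\textbf{Phase 2 (small-good redistribution).} For each well-behaved agent $i$, I remove a $\tfrac12\lambda_i$-proportional fraction of every small good with $x_{ig}\le 1/2$. The small good fraction property guarantees that the other small goods carry value less than $\gamma\mu^i$. The removed mass of each good $g$ is handed to the agents with $\lambda_j\le\gamma$ in proportion to $x_{jg}$. Validity follows as in Claim \ref{claim:z-valid}.

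\textbf{Analysis.} I lower bound $\E[\ST(\z,i)]$ through the artificial instance, using Lemma \ref{lem:w-worst} and Lemma \ref{lem:y-substitute}.
\begin{itemize}
\item \emph{Agents with $\lambda_i\approx 1$.} The instance consists of essentially one large good of value at least $\mu^i/\gamma$ (Property D) plus a reduced small part. I compare it to the upper bound of Lemma \ref{lem:fractional-nash-upper-bound}. Averaged over the randomness, the large-good term contributes $\sum_{g\in L_i}x_{ig}\ln v_i(g)$ exactly in expectation, by the marginals. The loss from the removed small goods is at most $O(\lambda_i)$ times a tiny log factor, because a large good dominates by $1/\gamma^2$ (Property C). Concretely, the agent lands at roughly $\ln$ of its large good, versus the $-1/e$ round-robin penalty.
\item \emph{Agents with $\lambda_i\approx 0$.} These only gain small goods. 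By negative correlation, the expected added small value is $\Omega((1-1/e)\mu^i)$, since their large-good neighbours are rounded up with correlated excess. Theorem \ref{thm:small-good-addition-large-good-removal} then yields a gain of $\Omega(v(A)/\mu)$ over $-1/e$.
\item \emph{Remaining agents.} The $O(\gamma)n$ intermediate agents and the non-well-behaved ones are handled by Theorem \ref{thm:general-add-removal} and Theorem \ref{thm:non-well-behaved-removal}, losing only $O(\gamma)$ on average.
\end{itemize}
Summing, a roughly $1/e$ fraction of agents gain a constant, while the losses are $O(\gamma)$. Since $\gamma=10^{-10}$, this gives $c_3^*>0$.

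\textbf{Main obstacle.} The hardest step is showing that the $\lambda_i\approx 0$ agents genuinely receive $\Omega(\mu^i)$ extra small value in expectation. This needs the negative correlation between an agent's own $\lambda_i$ and the $\lambda_j$ of the agents sharing its small goods, and the partial (stopped) rounding must not destroy it. I would first try to prove $\E[\lambda_j\mid\lambda_i\le\gamma]\ge \ell_j-O(\gamma)$ for neighbours $j$ via the martingale structure of pipage steps. I would control the cross terms with Corollary \ref{cor:weird-upper-bound}, which says that small goods of one agent are rarely large for another.
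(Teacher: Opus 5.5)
Your overall skeleton matches the paper: dependent partial rounding of the large goods, then small-good redistribution, analysed through the artificial instances and compared to Lemma~\ref{lem:fractional-nash-upper-bound} via the marginals. The gap is in Phase~2, which is where the entire gain must come from; as designed it does not work.

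\textbf{First, the recipient set can be empty.} You hand removed mass only to agents with $\lambda_j\le\gamma$. But $\lambda_j$ includes the large mass of goods in $G_{nwb}$, which you never touch. The large good consistency property bounds that mass only in aggregate, by $10\gamma n$. So every well-behaved agent may permanently carry, say, $9\gamma$ of it: take a good $h_j\in L_j$ with $x_{jh_j}=9\gamma$, the rest of $h_j$ held as small by others. Then no agent ever has $\lambda_j\le\gamma$, nobody receives anything, and your property (iii) fails for the roughly $n/e$ rounded-down agents. More generally, an indicator-based recipient rule needs concentration of $\lambda$ at scale $\gamma$, which this kind of rounding does not give. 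The guarantee it actually yields, and the one the paper proves, is $\E[\sum_i\ell^{\z}_i(1-\ell^{\z}_i)]=O(\gamma n)$, from \ref{d4} and Corollary~\ref{cor:d1}.

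\textbf{Second, the wrong mass is released.} You strip only goods with $x_{jg}\le 1/2$ from a remover $j$, and you never touch non-well-behaved agents. Suppose a low-$\lambda$ agent $i$ has its valuable small goods co-held by a single majority holder $j$ with $x_{jg}>1/2$ for whom $g$ is almost worthless. This is allowed, since $j$'s small good fraction property concerns $v_j$, not $v_i$. Or suppose they are co-held by one non-well-behaved agent with a huge bundle. In either case $i$ receives nothing, so your claimed $\Omega((1-1/e)\mu^i)$ expected gain has no basis. Two smaller inconsistencies: you cite Theorem~\ref{thm:non-well-behaved-removal}, but your scheme never removes anything from non-well-behaved agents. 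Also, the slack-leaf device cannot be reconciled with both preserving good-side sums and leaving small holders untouched.

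\textbf{What the paper does instead.} Step~2 halves the non-first-bucket mass of each non-well-behaved agent. This costs at most $2$ per such agent by Theorem~\ref{thm:non-well-behaved-removal}, hence at most $2\gamma^3$ on average. Step~3 removes a $\max\{\ell^{\z}_j/100-\gamma,0\}$ fraction of \emph{every} small good of every well-behaved $j$. Each well-behaved small holder $i$ of $g$ then gets an extra $x_{ig}$ times the total released mass of $g$, including the mass released in Step~2. The restriction $x_{ig}\le 1/2$ is used only on the receiver side (Lemma~\ref{lem:ai-upper-bound}), where it guarantees co-held mass $1-x_{ig}\ge 1/2$. Corollary~\ref{cor:weird-upper-bound} and Lemma~\ref{lem:fj-upper-bound} bound the part of that mass which is large for its holder or sits in a non-well-behaved first bucket.

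No conditioning is needed. The gain terms of Theorems~\ref{thm:small-good-addition-large-good-removal} and~\ref{thm:general-add-removal} already carry the factor $(1-\ell^{\z}_i)$. So one only needs $\E[(1-\ell^{\z}_i)\ell^{\z}_j]\ge\E[\ell^{\z}_j]-(1-1/e+\gamma)^2$. This is \ref{d2} applied to an arbitrary pair of well-behaved agents, not merely agents adjacent to a common good. It holds for every set $S$ because each step changes $\ell^{\z}$ only at the two path endpoints, in opposite directions.

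The removal loss, of order $\ell^{\z}_i(1-\ell^{\z}_i)$, is bounded in aggregate by the high-variance property \ref{d4}. Its proof pairs each good of $G_{wb}$ with a distinct agent that ends with $\ell^{\z}\ge 1-\gamma$. That pairing is made possible by the edge-fixing rules, which force paths to end at agents, and by the caps $\alpha^*,\beta^*$. This is the real substitute for your unproved (iii). Finally, a removal rate of $\lambda_i/2$ violates the hypothesis $v(R)\le|L'|\mu/100$ of Theorem~\ref{thm:general-add-removal}.
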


The remainder of this section proves Theorem \ref{thm:final-algo}.
At a high level, the allocation $\z$ is initialized at $\x$ and then modified over three steps. 
\begin{description}
\item[Step 1] We modify the allocation of the large goods of well-behaved agents to create a fractional allocation where most agents have a large good amount either close to $1$ or close to $0$. 
\item[Step 2] We modify the allocation of the non-well-behaved agents by removing a small amount of goods from their bundles.
\item[Step 3] We modify the allocation of small goods of well-behaved agents via a redistribution procedure. We move small goods from agents with a large good amount close to $1$ to agents with a large good amount close to $0$.
\end{description}

\subsection{Step 1: Dependent Partial Rounding for Well-Behaved Agents}\label{sec:step-1}
In this section, we describe the first part of our allocation modification procedure where we change the allocation of the large goods in $\x$. We call our routine {\em dependent partial rounding}. This routine is similar to the dependent rounding procedure of \cite{gandhi2006dependent}. However, the procedure will technically not be a `rounding' procedure since it likely will not allocate the large goods integrally.

Let $N_{wb}$ denote the well-behaved agents in $\x$ and $N_{nwb}$ denote the set of non-well-behaved agents. We will construct a fractional allocation $\z$ using $\x$ such that the two allocations only differ in the allocation of large goods to well-behaved agents. As a result, agents may have different large good amounts in $\x$ and $\z$. We therefore use $\ell^{\x}_i$ and $\ell^{\z}_i$ to denote the large good amounts of $\x$ and $\z$ respectively. We emphasize here that even though $\z$ may have a different set of large goods than $L_i$ (using Definition \ref{def:large-goods-fractional}), we measure $\ell^{\x}_i$ and $\ell^{\z}_i$ using the set $L_i$ --- the set of large goods in the allocation $\x$ for agent $i$. Formally, $\ell^{\z}_i = \sum_{g \in G}\mathbbm{1}\{g \in L_i\}z_{ig}$ and $\ell^{\x}_i = \sum_{g \in G}\mathbbm{1}\{g \in L_i\}x_{ig}$. The dependent partial rounding procedure satisfies the following properties:
\begin{description}
\descitem{(D1)}{d1} \textbf{Expectation.} For each $i \in N_{wb}$ and $g \in L_i$, $\E[z_{ig}] = x_{ig}$.
\descitem{(D2)}{d2} \textbf{Negative Correlation}. For any set of agents $S \subseteq N_{wb}$, we have $\E[\prod_{i \in S} \ell^{\z}_i] \le (1 - 1/e + \gamma)^{|S|}$. 
\descitem{(D3)}{d3} \textbf{Preservation}. For all goods $g$, $\sum_{i \in N_{wb}} \mathbbm{1}\{g \in L_i\} z_{ig} = \sum_{i \in N_{wb}} \mathbbm{1}\{g \in L_i\} x_{ig}$.
\descitem{(D4)}{d4} \textbf{High Variance}. $\sum_{i \in N_{wb}}\left (\ell^{\z}_i\right )^2 \ge \left [\left (1 - \frac1e \right ) - 15\gamma \right ]n$.
\end{description}

(D4) is the property that replaces integrality in the rounding process. If we were to run dependent rounding to allocate the large goods, roughly $(1-1/e)$ fraction of the agents would receive a large good, which means $\sum_{i \in N_{wb}} \left (\ell^{\z}_i\right )^2$ would be roughly around $(1-1/e)$. (D4) shows that our approach comes close (in some loose sense) to allocating large goods integrally. In contrast, note that $\sum_{i \in N_{wb}} \left (\ell^{\x}_i\right )^2 \approx (1-1/e)^2 n$.

We describe the routine below. 
The allocation $\z$ is initialized as the allocation $\x$. We iteratively modify the allocation $\z$, maintaining the following invariants:
\begin{enumerate}[(1)]
\item $z_{ig}$ is never modified if $g \notin L_i$ or $i \notin N_{wb}$ --- only large good allocations of well-behaved agents are modified.
\item $\ell^{\z}_i \in [0, 1]$ throughout for each agent $i$. Moreover, once it reaches $0$ or $1$, it is no longer modified.
\end{enumerate}

This algorithm constructs a bipartite graph $\cal H = (N_{wb} \cup G, E)$ with well-behaved agents on one side and goods on another. Between any well-behaved agent $i$ and good $g$, an edge is created if $\mathbbm1\{g \in L_i\}z_{ig} > 0$ and this edge has weight set to $z_{ig}$. Once this graph is constructed, edges are classified into two types: {\em variable} or {\em fixed}. We start by labelling all edges as variable and proceed by fixing edges based on the following two rules:
\begin{enumerate}[(1)]
\item If there is any good $g$ in the graph with exactly one variable edge incident on it, change the label of this edge to fixed.
\item If there is any agent $i$ with $\ell^{\z}_i = 1$, label all of its incident edges as fixed. 
\end{enumerate}

We apply the above two rules to fix edges till it is no longer possible. Moreover, throughout the algorithm, we will repeatedly apply these two rules every time edge weights in the graph are modified. This results in a graph with a set of fixed edges and a set of variable edges. As given by its name, fixed edges are no longer modified; only variable edges are modified. Moreover, once an edge is fixed it is never changed back to variable. At a high level, we run dependent rounding with the constraint that no agent gets a large good amount greater than one. 

Formally, each iteration consists of two parts: weight update and edge fixing. 

\textbf{Weight Update Phase:}
we find either a cycle $C$ or a maximal path $P$ (if a cycle does not exist) of {\em variable edges} in this graph $\cal H$, which we then decompose into two matchings $M_1$ and $M_2$. We define the following two values:
\begin{align*}
\alpha &= \min\left \{\delta > 0 : (\exists(i,g) \in M_1 : z_{ig} + \delta = 1) \bigvee (\exists(i,g) \in M_2 : z_{ig} - \delta = 0)\right \}, \\
\beta &= \min \left \{\delta > 0 : (\exists(i,g) \in M_1 : z_{ig} - \delta = 0) \bigvee (\exists(i,g) \in M_2 : z_{ig} + \delta = 1)\right\}.
\end{align*}

It is easy to see that the positive reals $\alpha$ and $\beta$ exist whenever there is at least one variable edge in the graph. To add the constraint that no agent ends up with a large good amount greater than one, we refine the $\alpha$ and $\beta$ values. Specifically, assume a maximal path $P$ is chosen by the algorithm and decomposed into matchings $M_1$ and $M_2$. By the rules of our edge fixing procedure, no maximal path of variable edges can have a good $g$ at one of its endpoints. Therefore, there are two agents $i_1$ and $i_2$ at either endpoint of the path. Let $(i_1, g_1)$ denote the (unique) edge in $P$ incident on $i_1$ and let $(i_2, g_2)$ denote the (unique) edge in $P$ incident on $i_2$. Since the graph is bipartite, the path $P$ must have an even number of edges. Therefore, $(i_1, g_1)$ and $(i_2, g_2)$ must be present in different matchings $M_1$ and $M_2$. Assume without loss of generality that $(i_1, g_1) \in M_1$ and $(i_2, g_2) \in M_2$. We define the following refinement of $\alpha$ and $\beta$:

\begin{align*}
\alpha^* &= \min\left \{\alpha, 1 - \ell^{\z}_{i_1} \right \}, \\
\beta^* &= \min \left \{\beta, 1 - \ell^{\z}_{i_2} \right\}.
\end{align*}

These values are positive since $\alpha$ and $\beta$ are positive, and $\ell^{\z}_{i_1}$ and $\ell^{\z}_{i_2}$ are strictly less than $1$ by the rules of our edge fixing procedure. The above update applies only to the case where we choose a maximal path $P$. For the other case where we choose a cycle $C$, we set $\alpha^* = \alpha$ and $\beta^* = \beta$.
We then execute the following random step.

\begin{quote}
With probability $\beta^*/(\alpha^* + \beta^*)$, set $z_{ig} := z_{ig} + \alpha^*$ for all $(i,g) \in M_1$, and $z_{ig} := z_{ig} - \alpha^*$ for all $(i,g) \in M_2$; with the complementary probability of $\alpha^*/(\alpha^* + \beta^*)$, set $z_{ig} := z_{ig} - \beta^*$ for all $(i,g) \in M_1$, and $z_{ig} := z_{ig} + \beta^*$ for all $(i, g) \in M_2$.
\end{quote}

\textbf{Edge Fixing Phase:} At the end of the weight update phase of an iteration, it could be that an edge weight is rounded to either $0$ or $1$. If an edge is rounded to $0$ weight, we remove the edge from the graph. If an edge is rounded to $1$ weight, we apply the edge fixing rules to fix the edge. In both cases, we continue to apply the edge fixing rules till it is no longer possible.

If no edge is rounded to $0$ or $1$, it must be the case that we pick a maximal path $P$ with agents $i_1$ and $i_2$ at the endpoints and do one of the following:
\begin{inparaenum}[(a)]
\item Set $\alpha^* = 1 - \ell^{\z}_{i_1}$ and update $z_{ig} := z_{ig} + \alpha^*$ for all $(i,g) \in M_1$, or
\item Set $\beta^* = 1 - \ell^{\z}_{i_2}$ and update $z_{ig} := z_{ig} + \beta^*$ for all $(i,g) \in M_2$.
\end{inparaenum}
In both cases, at least one agent $i$ has $\ell^{\z}_i$ set to $1$. In this case, we apply the edge fixing rules to fix all the edges incident on $i$, and all other edges that can be fixed. 

The algorithm terminates when there are no remaining variable edges in the graph. At each iteration, the number of variable edges reduces by at least $1$.  Therefore, the algorithm terminates in a polynomial number of iterations, and therefore, polynomial time. 

We show that it satisfies the four required properties. To this end, we define the following notation. We define $z_{ig, k}$ to denote the value of $z_{ig}$ at the start of the $k$-th iteration. We use $z_{ig, \infty}$ to denote the final value of $z_{ig}$. Similarly, we define $\ell^{\z}_{i, k}$ as the value of $\ell^{\z}_i$ at the start of the $k$-th iteration.

The proofs for \ref{d1} and \ref{d2} are very similar to \cite{gandhi2006dependent}, and are therefore relegated to Appendix \ref{apdx:main-algo}.
\begin{restatable}{lemma}{lemdependentroundingexpectation}\label{lem:dependent-rounding-expectation}
The dependent partial rounding procedure satisfies \ref{d1}.
\end{restatable}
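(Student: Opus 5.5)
We will show that each single iteration of the dependent partial rounding procedure preserves every entry in expectation, that is, $\E[z_{ig,k+1} \mid z_{\cdot,k}] = z_{ig,k}$ for every well-behaved agent $i$, every $g \in L_i$, and every iteration $k$. Taking expectations and using the tower property, this gives $\E[z_{ig,k}] = x_{ig}$ for all $k$. Since the number of iterations is bounded by the initial number of variable edges (a polynomial deterministic bound), we then get $\E[z_{ig,\infty}] = x_{ig}$, which is \ref{d1}. Here $z_{ig,\infty}$ is the value at the final iteration, and edges removed at weight $0$ keep the value $0$ from then on.

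\textbf{The one-step martingale property.} Fix an iteration $k$ and condition on the entire history up to its start; in particular, the cycle or path chosen, the matchings $M_1, M_2$, and the values $\alpha^*, \beta^*$ are all determined by this history. We then split into three cases.

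\begin{enumerate}
\item If $(i,g)$ is not an edge of the chosen cycle or path, then $z_{ig}$ is unchanged. This covers fixed edges, entries with $g \notin L_i$ or $i \notin N_{wb}$, and variable edges outside the structure.
\item If $(i,g) \in M_1$, then $z_{ig}$ increases by $\alpha^*$ with probability $\beta^*/(\alpha^*+\beta^*)$ and decreases by $\beta^*$ with probability $\alpha^*/(\alpha^*+\beta^*)$. The expected change is
\[
\frac{\alpha^*\beta^* - \beta^*\alpha^*}{\alpha^*+\beta^*} = 0.
\]
\item If $(i,g) \in M_2$, the computation is symmetric and the expected change is again $0$.
\end{enumerate}

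The only point needing care is that the refinements $\alpha^* \le \alpha$ and $\beta^* \le \beta$ in the path case do not break this computation. They do not, because the probabilities are defined from the refined values $\alpha^*, \beta^*$ themselves, and both values are positive. Positivity holds because $\alpha,\beta > 0$ and the endpoint agents satisfy $\ell^{\z}_{i_1}, \ell^{\z}_{i_2} < 1$, as argued in the description of the procedure.

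\textbf{Termination and assembly.} The edge fixing phase changes labels but not weights, except that edges reaching weight $0$ are deleted and keep value $0$; it therefore does not affect the martingale property. The procedure runs for at most $T$ iterations, where $T$ is the initial number of variable edges. Extending the process by $z_{ig,k} = z_{ig,\infty}$ for $k > T$, the sequence $\{z_{ig,k}\}_k$ is a martingale that becomes constant after $T$ steps, so $\E[z_{ig,\infty}] = z_{ig,1} = x_{ig}$.

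The main (mild) obstacle is bookkeeping rather than analysis. One must make sure that the random choices depend only on the history, so that conditioning is legitimate, and that every weight change in an iteration comes from the weight update step with its zero-mean increments, and not from the fixing rules.
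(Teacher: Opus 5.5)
Your proposal is correct and follows essentially the same route as the paper, which also proves $\E[z_{ig,k+1}] = \E[z_{ig,k}]$ by splitting on whether $(i,g)$ lies on the chosen path or cycle and checking that the update built from the refined values $\alpha^*, \beta^*$ has zero mean. Your additions (conditioning on the history, handling $M_2$ explicitly, and using the deterministic bound on the number of iterations) only make explicit what the paper leaves implicit.
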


\ref{d1} also has the following straightforward but incredibly useful corollary.

\begin{corr}\label{cor:d1}
For any $i \in N_{wb}$, $\E[\ell^{\z}_i] = \ell^{\x}_i \in \left [1 - \frac1e - \gamma, 1-\frac1e + \gamma \right ]$.
\end{corr}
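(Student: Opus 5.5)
This follows directly from linearity of expectation applied to \ref{d1}, combined with Lemma \ref{lem:fractional-property-a}.

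Fix a well-behaved agent $i \in N_{wb}$. By definition, $\ell^{\z}_i = \sum_{g \in G}\mathbbm{1}\{g \in L_i\} z_{ig}$, where $L_i$ is the fixed, deterministic set of large goods of $i$ in the original allocation $\x$. The indicator is therefore not random, and linearity of expectation gives
\begin{align*}
\E[\ell^{\z}_i] = \sum_{g \in L_i} \E[z_{ig}] = \sum_{g \in L_i} x_{ig} = \ell^{\x}_i .
\end{align*}
The middle equality is exactly \ref{d1} (Lemma \ref{lem:dependent-rounding-expectation}), applied to each pair $(i,g)$ with $i \in N_{wb}$ and $g \in L_i$. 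No other coordinates of $\z_i$ enter the sum, so invariant (1) of the procedure is not needed here.

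For the interval statement, I would apply Lemma \ref{lem:fractional-property-a}. Since $i$ is well-behaved in $\x$ and $\ell^{\x}_i$ is precisely the large good amount $\ell_i$ of agent $i$ in $\x$, that lemma gives $\ell^{\x}_i \in \left[1 - \frac1e - \gamma, 1 - \frac1e + \gamma\right]$.

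There is no real obstacle. The one point worth stating explicitly is that $\ell^{\z}_i$ is measured with respect to $L_i$ from $\x$, and not with respect to the (possibly different) large goods of $\z$. This is what makes the sum range over a deterministic index set, so that linearity of expectation applies directly.
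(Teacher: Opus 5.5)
Your proposal is correct and is the same argument as the paper's: apply \ref{d1} (Lemma \ref{lem:dependent-rounding-expectation}) to each $g \in L_i$ and use linearity of expectation over the fixed index set $L_i$ to get $\E[\ell^{\z}_i] = \ell^{\x}_i$. Then invoke Lemma \ref{lem:fractional-property-a} for the interval. You spell out the linearity step and the fact that $L_i$ is deterministic, which the paper leaves implicit.
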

\begin{proof}
This follows from combining Lemma \ref{lem:dependent-rounding-expectation} with Lemma \ref{lem:fractional-property-a}.
\end{proof}

\begin{restatable}{lemma}{lemdependentroundingcorrelation}\label{lem:dependent-rounding-correlation}
The dependent partial rounding procedure satisfies \ref{d2}.
\end{restatable}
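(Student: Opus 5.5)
We need to prove (D2): for any $S \subseteq N_{wb}$, $\E[\prod_{i\in S}\ell^{\z}_{i,\infty}] \le (1-1/e+\gamma)^{|S|}$. The plan is the standard Gandhi et al.\ supermartingale argument. We show that for every iteration $k$,
\begin{align*}
\E\Big[\prod_{i\in S}\ell^{\z}_{i,k+1}\,\Big|\, \text{history up to } k\Big] \le \prod_{i\in S}\ell^{\z}_{i,k}.
\end{align*}
Chaining by the tower property then gives $\E[\prod_{i\in S}\ell^{\z}_{i,\infty}] \le \prod_{i\in S}\ell^{\z}_{i,1}=\prod_{i\in S}\ell^{\x}_i$. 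Since agents are well-behaved, Lemma \ref{lem:fractional-property-a} gives $\ell^{\x}_i\le 1-1/e+\gamma$, which yields the bound. The edge-fixing phase changes no weights, so only the weight update matters.

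Fix an iteration, and condition on the current state and the chosen cycle $C$ or maximal path $P$ with matchings $M_1,M_2$. Only agents on the cycle or path can change their $\ell^{\z}_i$, and we examine how each such agent's large good amount moves.

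\textbf{Interior agents.} An agent that is an interior vertex of $P$, or any agent on $C$, has exactly two incident edges in the structure, one from $M_1$ and one from $M_2$. The $+\delta$ and $-\delta$ cancel, so its $\ell^{\z}_i$ is unchanged in both outcomes. Goods are never endpoints of a maximal variable path: a good with exactly one variable edge has that edge fixed, so it cannot terminate a path. Hence only the two endpoint agents $i_1,i_2$ of a path change. In the cycle case nothing in $\prod_{i\in S}\ell^{\z}_i$ changes, and the claim holds with equality.

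\textbf{Path endpoints.} Take $(i_1,g_1)\in M_1$ and $(i_2,g_2)\in M_2$, and write $a=\ell^{\z}_{i_1}$, $b=\ell^{\z}_{i_2}$. With probability $p=\beta^*/(\alpha^*+\beta^*)$ we get $(a+\alpha^*, b-\alpha^*)$, and with probability $1-p$ we get $(a-\beta^*, b+\beta^*)$. Each endpoint is then a martingale individually, since
\[
p\alpha^*-(1-p)\beta^*=0 .
\]
If at most one of $i_1,i_2$ lies in $S$, the conditional expectation of the product equals the current product by linearity. If both lie in $S$, the expected product of the two factors is
\[
ab + p\alpha^*(b-a-\alpha^*) + (1-p)\beta^*(a-b-\beta^*) = ab - \alpha^*\beta^* \le ab ,
\]
using $p\alpha^*=(1-p)\beta^*=\alpha^*\beta^*/(\alpha^*+\beta^*)$. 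Multiplying by the unchanged factors of the other agents in $S$, which are nonnegative, gives the supermartingale inequality. We also need $i_1\neq i_2$. If the path's endpoints coincided it would form a cycle, and the algorithm prefers cycles, so this does not arise. The caps $\alpha^*\le 1-\ell^{\z}_{i_1}$ and $\beta^*\le 1-\ell^{\z}_{i_2}$ keep all values in $[0,1]$, but they are not needed for the inequality itself.

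\textbf{Main obstacle.} The delicate step is making rigorous that only the endpoints change. This requires that every internal vertex of the chosen path has its two path edges in opposite matchings, which holds by the alternating decomposition. It also requires that fixed edges never change weight, which holds because only variable edges are updated. Finally, since the process terminates after polynomially many steps, the chaining argument needs no convergence issues.
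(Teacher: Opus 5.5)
Your proposal is correct and follows the paper's proof essentially step for step. It uses the same one-step inequality for $\prod_{i\in S}\ell^{\z}_{i,k}$: no change on cycles, only the two path endpoints move, the identity $\E[\ell^{\z}_{i_1,k+1}\ell^{\z}_{i_2,k+1}] = ab-\alpha^*\beta^*$ when both endpoints lie in $S$, and the martingale property when only one does. You then chain down to $\prod_{i\in S}\ell^{\x}_i \le (1-1/e+\gamma)^{|S|}$ via Lemma \ref{lem:fractional-property-a}, exactly as the paper does; your conditioning on the history and your remark on nonnegativity of the unchanged factors are small extra care, not a different argument.
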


\begin{lemma}\label{lem:dependent-rounding-preservation}
The dependent partial rounding procedure satisfies \ref{d3}.
\end{lemma}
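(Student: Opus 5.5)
We prove (D3) by showing that the quantity $\sum_{i \in N_{wb}} \mathbbm{1}\{g \in L_i\} z_{ig}$ is invariant for every good $g$ over each iteration of the procedure. Since $\z$ is initialized at $\x$, the claim then follows by induction on the iteration count. By invariant (1), only entries $z_{ig}$ with $i \in N_{wb}$ and $g \in L_i$ are ever modified. These are exactly the edges of the bipartite graph $\cal H$, so for each good $g$ the relevant sum equals the total weight of edges of $\cal H$ incident on $g$. Removing an edge whose weight has become $0$ does not change this total. The edge fixing phase changes only labels, not weights. It therefore suffices to check the weight update phase.

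In the weight update phase, only the variable edges on the chosen cycle $C$ or maximal path $P$ change. The edges are partitioned into matchings $M_1$ and $M_2$; in either random outcome, edges of $M_1$ all change by $+\delta$ and edges of $M_2$ by $-\delta$, or vice versa, where $\delta \in \{\alpha^*, \beta^*\}$. Now fix a good $g$.

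\begin{itemize}
\item If $g$ is not on the cycle or path, none of its incident edge weights change.
\item If $g$ lies on a cycle $C$, it has exactly two incident edges in $C$. Since $C$ alternates between $M_1$ and $M_2$, one edge is in each matching, so the changes $+\delta$ and $-\delta$ cancel.
\item If $g$ lies on the maximal path $P$, we use the observation made in the description of the procedure: by fixing rule (1), no maximal path of variable edges can end at a good. Indeed, a good endpoint with a single incident variable edge would have had that edge fixed, and otherwise the path could be extended. Hence every good on $P$ is an interior vertex, with exactly two incident edges of $P$, one in $M_1$ and one in $M_2$, and the changes again cancel.
\end{itemize}

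The only step needing care is this last point, that goods are never endpoints of $P$. Maximality alone is not enough: it is maximality combined with the edge fixing rules, which the algorithm re-applies after every weight change before the next path is chosen. The refinement to $\alpha^*$ and $\beta^*$ does not affect the argument, because the same $\delta$ is still applied with opposite signs on $M_1$ and $M_2$. It alters only the agent endpoints, which are irrelevant to the per-good sums. Combining the three cases, every iteration preserves $\sum_{i \in N_{wb}} \mathbbm{1}\{g \in L_i\} z_{ig}$ for every $g$, and therefore (D3) holds at termination.
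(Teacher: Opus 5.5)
Your proof is correct and follows the same route as the paper: you show the per-good sum is unchanged in every iteration, because goods interior to the chosen cycle or path have one edge in each of $M_1$ and $M_2$ (so the $\pm\delta$ changes cancel), and because fixing rule (1), reapplied after every weight change, leaves every good with zero or at least two variable edges, so no good can be an endpoint of a maximal path. Your extra remarks — that removing zero-weight edges and relabeling are harmless, and that the refinement to $\alpha^*,\beta^*$ only affects agent endpoints — simply spell out what the paper's three-line proof leaves implicit.
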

\begin{proof}
Consider some good $g$ and some iteration $k$. The only way $\sum_{i \in N_{wb}} \mathbbm1\{g \in L_i\}z_{ig, k}$ changes during iteration $k$ is if $g$ is at the end of the maximal path $P$ chosen at iteration $k$. However, by the rules of the edge fixing phase, all goods $g$ either have zero variable edges incident on it or at least two variable edges incident on it. This implies $g$ can never be at the end of a maximal path of variable edges. Therefore, $\sum_{i \in N_{wb}} \mathbbm1\{g \in L_i\}z_{ig, k}$ remains unchanged during iteration $k$. Since $k$ was chosen arbitrarily, it must hold that for all goods $g$, $\sum_{i \in N_{wb}}\mathbbm{1}\{g \in L_i\}z_{ig, \infty} = \sum_{i \in N_{wb}}\mathbbm{1}\{g \in L_i\}z_{ig,1} = \sum_{i \in N_{wb}}\mathbbm{1}\{g \in L_i\}x_{ig}$. 
\end{proof}

\begin{lemma}\label{lem:dependent-rounding-variance}
The dependent partial rounding procedure satisfies \ref{d4}.
\end{lemma}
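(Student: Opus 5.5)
The plan is to show that \ref{d4} holds \emph{deterministically}, for every execution of the dependent partial rounding procedure, by analysing the fixed-edge graph when the procedure terminates. Set $T := \sum_{i \in N_{wb}} \ell^{\z}_i$. By \ref{d3},
\[
T = \sum_{i \in N_{wb}} \ell^{\x}_i \ge \Big(1 - \tfrac1e - \gamma\Big)(1-\gamma^3)n ,
\]
using Lemma \ref{lem:fractional-property-a} and the fact that at most $\gamma^3 n$ agents are non-well-behaved. Partition $N_{wb}$ into \emph{full} agents ($\ell^{\z}_i = 1$), \emph{near-full} agents ($\ell^{\z}_i \in [1-\gamma, 1)$) and \emph{deficient} agents ($\ell^{\z}_i < 1-\gamma$). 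Full and near-full agents satisfy $(\ell^{\z}_i)^2 \ge (1-2\gamma)\ell^{\z}_i$. It therefore suffices to show that the total mass $D := \sum_{i \text{ deficient}} \ell^{\z}_i$ is at most roughly $12\gamma n$. Then $\sum_i (\ell^{\z}_i)^2 \ge (1-2\gamma)(T - D) \ge (1 - \frac1e - 15\gamma)n$.

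To bound $D$, I would first record the structure at termination. No variable edges remain. Every fixed edge incident to a non-full agent was fixed by rule (1), because rule (2) only fixes edges at agents with $\ell^{\z}_i = 1$. Rule (1) fixes the \emph{last} variable edge at a good, and afterwards that good has no variable edges, so each good $g$ carries at most one rule-(1) edge. Write $w_g$ for its weight. Then every non-full agent's large mass consists solely of such edges, and $\sum_{i \text{ non-full}} \ell^{\z}_i \le \sum_g w_g$. The property I expect to use is the following. At the moment $g$'s last edge is fixed, all its other edges are either deleted (weight $0$) or fixed at full agents. By \ref{d3}, $w_g = m_g - (\text{mass of } g \text{ held by full agents})$, where $m_g := \sum_{i \in N_{wb}} \mathbbm{1}\{g \in L_i\} x_{ig}$.

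Next I would split goods using the large good consistency property. For $g \in G_{nwb}$, the total $\sum_{g \in G_{nwb}} m_g \le 10\gamma n$ directly. For $g \in G_{wb}$ with no edges to full agents, $w_g = m_g \ge 1-\gamma$. Then the single agent holding that edge has $\ell^{\z} \ge 1-\gamma$, so it is not deficient. The remaining case is $g \in G_{wb}$ that is shared between full agents and one non-full agent. Here I would use that cycles of variable edges are always broken, so the final support among partially filled agents is a forest, to charge each such shared good to a distinct full agent. I would then argue, via Corollary \ref{cor:well-behaved-large-good-lower-bound} together with the upper bound $\ell^{\x}_i \le 1 - \frac1e + \gamma$, that the total number of large goods barely exceeds the total mass. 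In particular, only $O(\gamma n)$ mass can sit in goods that are split between a full and a deficient agent.

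The main obstacle is this last charging step: bounding the deficient mass coming from well-behaved goods that are partly absorbed by full agents. My first attempt would be a potential argument. Since $|G_{wb}| \ge (1-\frac1e-12\gamma)n$ and each $g \in G_{wb}$ has $m_g \ge 1-\gamma$, while $T \le (1-\frac1e+\gamma)n$, the ``slack'' $\sum_{g \in G_{wb}}(m_g - (1-\gamma))$ is $O(\gamma n)$. I would then try to show that every deficient agent forces at least a constant fraction of its deficiency $1-\ell^{\z}_i$ to appear as such slack, or as a good of $G_{nwb}$. If this fails, I would fall back to the weaker target $\sum_i \ell^{\z}_i(1-\ell^{\z}_i) = O(\gamma n)$, which already implies \ref{d4}, and try to prove it directly from the refinement $\alpha^* = \min\{\alpha, 1-\ell^{\z}_{i_1}\}$ by tracking how endpoint agents of maximal paths are driven to $1$.
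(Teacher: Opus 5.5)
\textbf{There is a genuine gap, exactly at the step you flag.} Your preliminary bookkeeping is correct. By (D3) the mass $T$ is preserved. Agents with $\ell^{\z}_i\ge 1-\gamma$ satisfy $(\ell^{\z}_i)^2\ge(1-2\gamma)\ell^{\z}_i$. Non-full agents hold only rule-(1) edges, and each good carries at most one such edge. A good of $G_{wb}$ with no edge at a full agent leaves mass at least $1-\gamma$ on one agent.

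The claim that carries the whole proof is only asserted: that goods of $G_{wb}$ split between full agents and a deficient agent contribute $O(\gamma n)$ to $D$. The reason you give for charging each such good to a \emph{distinct} full agent is not the operative fact. The final support on partially filled agents is a forest trivially (rule-(1) edges form a star forest), and this says nothing about whether one full agent can absorb pieces of two shared goods.

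Your first fallback cannot work. At most $T/(1-\gamma)$ agents can end with $\ell^{\z}_i\ge 1-\gamma$, so at least $(1/e-O(\gamma))n$ agents are deficient. If $D=O(\gamma n)$ (your own target, which is true), their total deficiency $\sum(1-\ell^{\z}_i)$ is about $n/e$. That cannot be paid for, even up to a constant factor, by $O(\gamma n)$ slack plus at most $10\gamma n$ of $G_{nwb}$-mass. What must be controlled is the \emph{mass} of deficient agents, not their deficiency. The second fallback, $\sum_i\ell^{\z}_i(1-\ell^{\z}_i)=O(\gamma n)$, is essentially (D4) restated, with no argument behind it.

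\textbf{The missing idea is a structural property of the procedure.} A path is chosen only when the variable-edge graph is acyclic, so the endpoints of a maximal path are leaves. Hence any agent whose $\ell^{\z}_i$ has ever changed has at most one variable edge from then on. This includes every agent that becomes full, and every agent that rises above its initial $\ell^{\x}_i\le 1-1/e+\gamma<1-\gamma$. Two consequences follow:
\begin{itemize}
\item Rule (2) fixes exactly one edge at each full agent. This is what would make your charging injective.
\item An agent reaching $1-\gamma$ via a rule-(1) edge is frozen afterwards.
\end{itemize}

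The paper turns this into a direct pairing. Call $g$ desirable while its variable edges carry weight at least $1-\gamma$.
\begin{itemize}
\item Every $g\in G_{wb}$ starts desirable: it has degree at least $2$, since $x_{ig}\le 1-1/e+\gamma$.
\item Weight updates preserve this total at interior goods, so $g$ stops being desirable exactly when some edge $(i,g)$ is fixed.
\item At that moment $(i,g)$ is $i$'s unique variable edge, and $i$ ends with $\ell^{\z}_i\ge 1-\gamma$. Either $i$ just became full, or the last remaining edge carries at least $1-\gamma$.
\item Afterwards $i$ has no variable edges, so it is never paired again.
\end{itemize}
This gives an injection $G_{wb}\to\{i:\ell^{\z}_i\ge 1-\gamma\}$. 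Hence $\sum_i(\ell^{\z}_i)^2\ge(1-\gamma)^2|G_{wb}|$, and (D4) follows from Corollary~\ref{cor:well-behaved-large-good-lower-bound}, with no need to bound $D$ at all.

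With the leaf property in hand, your route also closes. The injection shows the number of full plus near-full agents is at least $|G_{wb}|$, so $T-D\ge(1-\gamma)|G_{wb}|$. As written, however, that step is missing.
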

\begin{proof}
Let $G_{wb}$ denote the set of well-behaved large goods of the instance. Recall that a large good is well-behaved if $\sum_{i \in N_{wb}} \mathbbm{1}\{g \in L_i\}x_{ig} \ge 1 - \gamma$. There are at least $(1 - 1/e - 12\gamma)n$ well-behaved large goods $|G_{wb}|$ in the instance since it satisfies the large good consistency property (Corollary \ref{cor:well-behaved-large-good-lower-bound}). We define a good $g$ as desirable at the start of iteration $k$ if
\begin{align*}
\sum_{i \in N_{wb}} \mathbbm{1}\{g \in L_i\} \mathbbm{1}\{\text{the $(i, g)$ edge is variable at the start of iteration $k$}\}z_{ig} \ge 1 - \gamma.
\end{align*}

Each well-behaved agent $i$ has $\ell^{\x}_i \le (1 - 1/e + \gamma)$ (Lemma \ref{lem:fractional-property-a}). This implies that each well-behaved large good has a degree of at least $2$, and is therefore desirable at the start of the algorithm. Since no good is desirable at the end of the algorithm (no edge is variable), there is a point during the algorithm for each well-behaved large good $g$ where it stops being desirable.

For each well-behaved large good $g$, we pair it with a unique well-behaved agent $i$ such that $\ell^{\z}_{i, \infty} \ge (1 - \gamma)$. This implies
\begin{align*}
\sum_{i \in N_{wb}} \left (\ell^{\z}_{i, \infty} \right )^2 \ge |G_{wb}|(1-\gamma)^2 \ge \left (1 - \frac1e - 12\gamma \right )n(1 -\gamma)^2 \ge \left (1 - \frac{1}{e} - 15\gamma \right )n,
\end{align*}
which proves the lemma.

In the remainder of this proof, we describe our agent-good pairing process. Consider the point in the algorithm where the well-behaved large good $g$ stops being desirable. The exact point where the good $g$ stops being desirable must correspond to the point where some variable edge incident on $g$ is fixed. This must either be because some agent $i$ has $\ell^{\z}_i$ set to $1$ during the iteration, or the good $g$ has exactly one variable edge incident to it at some point in the algorithm. 

Consider the first case where $g$ stops being desirable because a well-behaved agent $i$ has $\ell^{\z}_i$ set to $1$ during the $k$-th iteration, and the $(i, g)$ edge has positive weight at the end of the $k$-th iteration. 
We pair $g$ with this agent $i$. It clearly satisfies our criteria since $\ell^{\z}_{i, \infty} = \ell^{\z}_{i, k+1} = 1$. 
To show uniqueness, we observe that $i$ must be an endpoint of the path $P$ chosen by the algorithm during the $k$-th iteration. Since all paths are maximal, this must mean that the agent $i$ has exactly one variable edge incident to it and this must correspond to the $(i, g)$ edge that gets fixed at the end of this iteration. This implies that the only edge fixed by $\ell^{\z}_{i, k+1}$ being set to $1$ is the $(i, g)$ edge. 


Next, consider the case where $g$ stops being desirable because it has one variable edge $(i, g)$ incident to it during the edge fixing phase of some iteration $k$, and this edge gets fixed. Let $i$ be the agent that this singular variable edge is incident on. This must mean that $z_{ig, k+1} \ge 1 - \gamma$ since the good $g$ was desirable at the end of the weight update phase of the $k$-th iteration. We pair $g$ with the agent $i$. 
We need to show two things to complete this proof: first, $\ell^{\z}_{i, \infty} \ge 1-\gamma$, and second, no other good is paired with agent $i$. Both of these follow from the observation that $\ell^{\z}_{i, k+1} \ge 1 - \gamma$ and $\ell^{\z}_{i, 1} \le (1 - 1/e + \gamma)$; this implies that $\ell^{\z}_{i, k+1} > \ell^{\z}_{i, 1}$. That is, the large good amount of agent $i$ increased from its initial value of $\ell^{\z}_{i, 1}$. This is only possible if at some iteration, the algorithm chose a path $P$ in the bipartite graph $\cal H$ with the agent $i$ as one of its endpoints during some iteration $k' \le k$. Since paths are chosen maximally, this implies that the agent $i$ has at most one variable edge incident to it at the end of the weight update phase of iteration $k$. This edge is the edge $(i, g)$ that gets fixed at the end of this iteration. Therefore, once the edge is fixed, $i$ no longer has any incident variable edges and will not be updated in future iterations. We can therefore, conclude that $\ell^{\z}_{i, \infty} = \ell^{\z}_{i, k+1} \ge 1- \gamma$. 

In both cases, when we pair a good $g$ with an agent $i$, the agent $i$ has exactly one variable edge (the $(i,g)$ edge) incident on it at the moment before $g$ stops being desirable, and this edge is fixed when $g$ stops being desirable. Therefore, it cannot be that two goods get paired with the same agent. This completes the proof.
\end{proof}

\subsection{Step 2: Shrinking for Non-Well-Behaved Agents}\label{sec:step-2}
In this step, we modify the bundles of non-well-behaved agents using an approach similar to Section \ref{sec:large-good-consistency}. Essentially, we bucket the goods as we did in the ST rounding procedure such that each bucket has a fractional size of at most one. Then we shrink all buckets other than the first bucket by half. Formally, for each agent $i \in N_{nwb}$ and good $g \in G$, we use $f^{\x}_{ig}$ to denote the fractional amount of $g$ in the first bucket of agent $i$ in the allocation $\x$. We do the following for each $i \in N_{nwb}$ and $g \in G$:

\begin{align*}
z_{ig} \gets \frac{x_{ig} + f^{\x}_{ig}}{2}.
\end{align*}

\subsection{Step 3: Small Good Redistribution}\label{sec:step-3}
In this final step, we modify the small good allocations of the well-behaved agents. We do the following for each well-behaved agent $i \in N_{wb}$ and good $g \in G \setminus L_i$:
\begin{align*}
z_{ig} \gets x_{ig}\left [1 - \max \left \{\frac{\ell^{\z}_i}{100} - \gamma, 0 \right \} + \sum_{j \in N_{wb}}\max \left \{\frac{\ell^{\z}_j}{100} - \gamma, 0 \right \}\mathbbm{1}\{g \notin L_j\}x_{jg} + \sum_{j \in N_{nwb}} \frac{x_{jg} - f^{\x}_{jg}}{2} \right ]
\end{align*}

Note that this procedure only modifies the small good allocations of the agents. Therefore, the value $\ell^{\z}_i$ remains unchanged throughout this modification. Intuitively, small goods $g$ are removed in proportion to the large good amount $\ell^{\z}_i$ and redistributed among the agents in proportion to the value $x_{ig}$. This results in agents with high large good amounts losing small goods and agents with low large good amounts gaining small goods.
\subsection{Analysis}
It is easy to see that all three steps can be carried out in polynomial time. Therefore, the analysis consists of only two parts: we first show that $\z$ is a valid allocation, and then we lower bound $\E[\ST(\z, i)]$ for each agent $i$. 

\begin{lemma}
The allocation $\z$ is a valid fractional allocation. That is, $\sum_{i \in N}z_{ig} \le \sum_{i \in N}x_{ig}$ for each good $g \in G$, and $z_{ig} \ge 0$ for each agent $i \in N$ and good $g \in G$.
\end{lemma}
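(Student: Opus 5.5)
Fix a good $g$. We split the $z_{ig}$ entries into three groups according to the step that last modified them, and argue non-negativity first, then mass conservation.

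\textbf{Non-negativity.}
\begin{itemize}
\item \emph{Large goods of well-behaved agents} ($i\in N_{wb}$, $g\in L_i$). These entries are set only in Step 1. Dependent partial rounding keeps every edge weight in $[0,1]$, since $\alpha^*,\beta^*$ are bounded by the distances of the weights to $0$ and $1$.
\item \emph{Non-well-behaved agents.} Step 2 sets $z_{ig}=\frac{x_{ig}+f^{\x}_{ig}}{2}\ge 0$. Since $f^{\x}_{ig}\le x_{ig}$, we also have $z_{ig}\le x_{ig}$.
\item \emph{Small goods of well-behaved agents} ($i\in N_{wb}$, $g\notin L_i$). Step 3 multiplies $x_{ig}$ by a bracket. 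The subtracted term is at most $\ell^{\z}_i/100\le 1/100$, and the added terms are non-negative, so the bracket is at least $0.99$. Here we use invariant (2) of Step 1, namely $\ell^{\z}_i\in[0,1]$.
\end{itemize}

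\textbf{Mass conservation.} Write $\lambda_j=\max\{\ell^{\z}_j/100-\gamma,0\}$ and $a_g=\sum_{j\in N_{nwb}}\frac{x_{jg}-f^{\x}_{jg}}{2}$. Then
\[
\sum_i z_{ig}=\underbrace{\sum_{i\in N_{wb}}\mathbbm{1}\{g\in L_i\}z_{ig}}_{(1)}+\underbrace{\sum_{i\in N_{nwb}}z_{ig}}_{(2)}+\underbrace{\sum_{i\in N_{wb}}\mathbbm{1}\{g\notin L_i\}z_{ig}}_{(3)}.
\]
\begin{itemize}
\item By \ref{d3}, term (1) equals $\sum_{i\in N_{wb}}\mathbbm{1}\{g\in L_i\}x_{ig}$.
\item Term (2) equals $\sum_{i\in N_{nwb}}x_{ig}-a_g$.
\item Set $s=\sum_{i\in N_{wb}}\mathbbm{1}\{g\notin L_i\}x_{ig}$ and $r=\sum_{j\in N_{wb}}\lambda_j\mathbbm{1}\{g\notin L_j\}x_{jg}$. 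Expanding the bracket, term (3) equals $s-r+s\,(r+a_g)$.
\end{itemize}
Adding the three terms, it remains to show
\[
s\,(r+a_g)\le r+a_g.
\]
This holds because $s\le\sum_i x_{ig}=1$ by Observation~\ref{obs:x-complete}. So the amount re-added in Step 3 is a sub-unit fraction of what Steps 2 and 3 removed, and $\sum_i z_{ig}\le\sum_i x_{ig}$.

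The main obstacle is bookkeeping rather than any inequality. We must confirm that Step 3 uses the final Step 1 value $\ell^{\z}_i$, and that Step 3 never touches entries with $g\in L_i$, so \ref{d3} still applies after Step 3. We must also confirm that the removed amount $a_g$ counts exactly what Step 2 took away. Once these are checked, the proof reduces to the displayed inequality.
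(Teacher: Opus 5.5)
Your proof is correct and follows essentially the same route as the paper. You use the same three-way split of $\sum_i z_{ig}$ by the step that sets each entry, and (D3) for the large-good part. Your key inequality $s\,(r+a_g)\le r+a_g$, which uses $s\le 1$, is exactly the step the paper takes when bounding the Step 3 sum. Non-negativity is handled the same way in both: Step 1 keeps weights in $[0,1]$, Step 2 averages nonnegative quantities, and the Step 3 bracket is bounded below using $\ell^{\z}_i\in[0,1]$.
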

\begin{proof}
Fix a good $g \in G$. We can compute $\sum_{i \in N}z_{ig}$ by partitioning it into three parts.
\begin{align}
\sum_{i \in N}z_{ig} = \sum_{i \in N_{wb}}\mathbbm{1}\{g \in L_i\}z_{ig} + \sum_{i \in N_{wb}}\mathbbm{1}\{g \notin L_i\}z_{ig} + \sum_{i \in N_{nwb}}z_{ig}. \label{eq:valid-1}
\end{align}

In the right hand side, the first term is decided by Step 1 (Section \ref{sec:step-1}), the second term is decided by Step 3 (Section \ref{sec:step-3}) and the third term is decided by Step 2 (Section \ref{sec:step-2}).
From the Preservation property \ref{d3} of Step 1, we have:
\begin{align}
\sum_{i \in N_{wb}}\mathbbm{1}\{g \in L_i\}z_{ig} = \sum_{i \in N_{wb}}\mathbbm{1}\{g \in L_i\}x_{ig}.\label{eq:valid-2}
\end{align}

From the modifications carried out in Step 2, we have:
\begin{align}
\sum_{i \in N_{nwb}}z_{ig} = \sum_{i \in N_{nwb}} \frac{x_{ig} + f^{\x}_{ig}}{2}. \label{eq:valid-3}
\end{align}

Finally, from the modification carried out in Step 3, we have:
\begin{align}
\sum_{i \in N_{wb}}\mathbbm{1}\{g \notin L_i\}z_{ig} &= \sum_{i \in N_{wb}} \mathbbm{1}\{g \notin L_i\}x_{ig} \left [1 - \max \left \{\frac{\ell^{\z}_i}{100} - \gamma, 0 \right \} \right . \notag \\&\qquad \qquad \left .+ \sum_{j \in N_{wb}}\max \left \{\frac{\ell^{\z}_j}{100} - \gamma, 0 \right \}\mathbbm{1}\{g \notin L_j\}x_{jg} + \sum_{j \in N_{nwb}} \frac{x_{jg} - f^{\x}_{jg}}{2} \right ] \notag \\
&\le \sum_{i \in N_{wb}}\mathbbm{1}\{g \notin L_i\}x_{ig}  - \sum_{i \in N_{wb}}\mathbbm{1}\{g \notin L_i\} \max \left \{\frac{\ell^{\z}_i}{100} - \gamma, 0 \right \}x_{ig} \notag \\
&\qquad \qquad +  \sum_{j \in N_{wb}}\max \left \{\frac{\ell^{\z}_j}{100} - \gamma, 0 \right \}\mathbbm{1}\{g \notin L_j\}x_{jg} + \sum_{j \in N_{nwb}}\frac{x_{jg} - f^{\x}_{jg}}{2}  \notag \\
&= \sum_{i \in N_{wb}}\mathbbm{1}\{g \notin L_i\}x_{ig} + \sum_{j \in N_{nwb}}\frac{x_{jg} - f^{\x}_{jg}}{2}. \label{eq:valid-4}
\end{align}

Plugging in \eqref{eq:valid-2}, \eqref{eq:valid-3} and \eqref{eq:valid-4} into \eqref{eq:valid-1}, we get
\begin{align*}
\sum_{i \in N}z_{ig} &= \sum_{i \in N_{wb}}\mathbbm{1}\{g \in L_i\}z_{ig} + \sum_{i \in N_{wb}}\mathbbm{1}\{g \notin L_i\}z_{ig} + \sum_{i \in N_{nwb}}z_{ig} \\
&\le \sum_{i \in N_{wb}}\mathbbm{1}\{g \in L_i\}x_{ig} + \sum_{i \in N_{wb}}\mathbbm{1}\{g \notin L_i\}x_{ig} + \sum_{j \in N_{nwb}}\frac{x_{jg} - f^{\x}_{jg}}{2} + \sum_{i \in N_{nwb}} \frac{x_{ig} + f^{\x}_{ig}}{2} \\
&= \sum_{i \in N}x_{ig}.
\end{align*}

Finally, we show that $z_{ig} \ge 0$ for all $i \in N$ and $g \in G$. This is easy to see if $i \in N_{wb}$ and $g \in L_i$ since the dependent partial rounding scheme ensures that $z_{ig}$ remains non-negative. If $i \in N_{nwb}$, then by Step 2, $z_{ig} = \frac{x_{ig} + f^{\x}_{ig}}2 \ge 0$. If $i \in N_{wb}$ and $g \notin L_i$, then
\begin{align*}
z_{ig} &= x_{ig}\left [1 - \max \left \{\frac{\ell^{\z}_i}{100} - \gamma, 0 \right \} + \sum_{j \in N_{wb}}\max \left \{\frac{\ell^{\z}_j}{100} - \gamma, 0 \right \}\mathbbm{1}\{g \notin L_j\}x_{jg} + \sum_{j \in N_{nwb}} \frac{x_{jg} - f^{\x}_{jg}}{2}\right ] \\
&\ge x_{ig} \left [1 - \max \left \{\frac{\ell^{\z}_i}{100} - \gamma, 0 \right \} \right ] \\
&\ge 0.
\end{align*} 
The last inequality follows from $\ell^{\z}_i \in [0, 1]$. Therefore, the allocation $\z$ is valid. 
\end{proof}

Next, we lower bound $\ST(\z, i)$ for each agent $i$. To analyze $\ST(\z, i)$, we consider the instances $\cal I^{i, \x}(\Delta)$ and $\cal I^{i, \z}(\Delta)$ for some $\Delta$ that is feasible with respect to both $\x$, $\z$ and the configuration LP solution $y$. We let $W^i$ denote the round robin allocation of $\cal I^{i, \x}(\Delta)$ and $U^i$ denote the round robin allocation of $\cal I^{i, \z}(\Delta)$. The main idea here is we lower bound $\logNSW(U^i)$ for each agent $i$ using the techniques developed in Section \ref{sec:add-remove}, and then we lower bound $\ST(\z, i)$ using $\logNSW(U^i)$ (Lemma \ref{lem:w-worst}). 

If $i \in N_{nwb}$, then the only change made to the bundle $\x_i$ is from Step 2. Therefore, the instance $\cal I^{i, \z}(\Delta)$ can be constructed by removing a few goods from $\cal I^{i, \x}(\Delta)$. We can bound $\ST(\z, i)$ using Theorem \ref{thm:non-well-behaved-removal}. We made this exact argument in Section \ref{sec:large-good-consistency} as well but we repeat it here for completeness. 

Let $G^i$ be the set of goods of the instance $\cal I^{i, \x}(\Delta)$. To use Theorem \ref{thm:non-well-behaved-removal}, we need to construct a set of $\Delta$ goods $G^{\Delta}$ which are weakly more valuable than all the other goods of the instance. We construct $G^{\Delta}$ as the set which contains $\Delta f^{\x}_{ig}$ copies of each good $g$. By the definition of our bucketing procedure, this creates a set of $\Delta$ goods $G^{\Delta}$ such that each good in $G^{\Delta}$ is weakly more valuable than each good in $G^i \setminus G^{\Delta}$.

We partition the goods $G^i$ based on the copy of the good $g \in G$ they are. Note that there are $\Delta x_{ig}$ copies of each good $g \in G$ in the instance $\cal I^{i, \x}(\Delta)$. We obtain $\cal I^{i, \z}(\Delta)$ from $\cal I^{i, \x}(\Delta)$ by removing $\Delta \frac{x_{ig} - f_{ig}}{2}$ copies of each $g \in G$. Therefore, this removal amount satisfies the constraints of Theorem \ref{thm:non-well-behaved-removal} and we can use its Nash welfare bound. 
This gives us for each $i \in N_{nwb}$,
\begin{align*}
\logNSW(U^i) \ge \logNSW(\cal I^{i, \x}(\Delta)) - \frac1e - 2.
\end{align*}

We can upper bound $\logNSW(U^i)$ using Lemma \ref{lem:w-worst}, and we can lower bound $\logNSW(\cal I^{i, \x}(\Delta))$ using Lemma \ref{lem:y-substitute}. This gives us
\begin{align}
\ST(\z, i) \ge \sum_{S \subseteq G}y_{i, S}\ln v_i(S) - \frac1e - 2. \label{eq:final-nwb-bound}
\end{align}

Next, we analyze the case where $i \in N_{wb}$. We again compare $\cal I^{i, \z}(\Delta)$ and $\cal I^{i, \x}(\Delta)$. However in this case, there are more significant differences between the two instances. The set of large goods of $\cal I^{i, \x}(\Delta)$ is replaced with a new set of large goods decided by Step 1, where the instance gets $\Delta z_{ig}$ copies of each good $g \in L_i$\footnote{We remind the reader again that the set of large goods of the allocation $\z$ is still defined with respect to the allocation $\x$. Therefore, we count the copies of the goods in $L_i$ and do not include any other goods. }. Additionally, some small goods are added and removed based on Step 3. Specifically, for each good $g \in G \setminus L_i$, we remove $\Delta x_{ig} \max \left \{\frac{\ell^{\z}_i}{100} - \gamma, 0 \right \}$ copies and add $\Delta x_{ig}\left [\sum_{j \in N_{wb}}\max \left \{\frac{\ell^{\z}_j}{100} - \gamma, 0 \right \}\mathbbm{1}\{g \notin L_j\}x_{jg} + \sum_{j \in N_{nwb}} \frac{x_{jg} - f^{\x}_{jg}}{2} \right ]$ copies to the instance $\cal I^{i, \x}(\Delta)$.

To bound $\ST(\z, i)$, we invoke Theorems \ref{thm:small-good-addition-large-good-removal} and \ref{thm:general-add-removal}. We let $R^i$ denote the set of removed small goods, $A^i$ denote the added small goods, and $L^*$ denote the set of new large goods. 
Note that $|L^*| = \sum_{g \in L_i}\Delta z_{ig} = \ell^{\z}_i \Delta$. We consider two different cases depending on whether $\ell^{\z}_i \ge 100\gamma$. 

Let $N_1$ denote the set of well-behaved agents where $\ell^{\z}_i > 100\gamma$ and let $N_2$ denote the set of well-behaved agents where $\ell^{\z}_i \le 100\gamma$. We deal with agents in $N_1$ first. 

To bound $\logNSW(U^i)$, we apply Theorem \ref{thm:general-add-removal}. There are three conditions required by the theorem. 
First, $|L^*|$ must lie in the interval $[100\gamma\Delta, \Delta]$. This holds because $|L^*| = \ell^{\z}_i{\Delta}$ and $\ell^{\z}_i$ lies in the interval $(100\gamma, 1]$. Second, we must have $v_i(R^i) \le \frac{|L^*|\mu}{100}$. This holds due to the following observation.

\begin{obs}\label{obs:ri-upper-bound}
$v_i(R^i) \le \frac{|L^*|\mu}{100} = \frac{\ell^{\z}_i \mu^i \Delta}{100}$.
\end{obs}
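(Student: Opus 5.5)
The plan is a direct computation from the Step 3 update rule. For the well-behaved agent $i$ and each small good $g \in G \setminus L_i$, Step 3 removes exactly
\[
\Delta x_{ig}\max\left\{\frac{\ell^{\z}_i}{100}-\gamma,0\right\}
\]
copies of $g$ from $\cal I^{i,\x}(\Delta)$. Each removed copy has value $v_i(g)$. Summing over all small goods therefore gives
\[
v_i(R^i)=\Delta\max\left\{\frac{\ell^{\z}_i}{100}-\gamma,0\right\}\sum_{g\in G\setminus L_i}x_{ig}v_i(g).
\]

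Next, identify the sum $\sum_{g\in G\setminus L_i}x_{ig}v_i(g)$. By Definition \ref{def:mean-fractional} it is exactly $\mu^i$. By Lemma \ref{lem:large-good-connection}, $\mu^i$ coincides with the mean value $\mu$ of $\cal I^{i,\x}(\Delta)$. Hence
\[
v_i(R^i)=\Delta\mu^i\max\left\{\frac{\ell^{\z}_i}{100}-\gamma,0\right\}.
\]

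It remains to drop the $-\gamma$ and the $\max$. Since $\gamma>0$ and $\ell^{\z}_i\ge 0$ (invariant (2) of Step 1), we have $\max\{\ell^{\z}_i/100-\gamma,0\}\le \ell^{\z}_i/100$. This gives
\[
v_i(R^i)\le \frac{\ell^{\z}_i\mu^i\Delta}{100}.
\]
Finally, since $|L^*|=\ell^{\z}_i\Delta$ and $\mu=\mu^i$, this equals $\frac{|L^*|\mu}{100}$, which is the claimed bound.

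The only point needing care is bookkeeping. Removed copies must be counted only over goods that are small with respect to $L_i$, the large set of $\x$. They must also be valued using copies in $\cal I^{i,\x}(\Delta)$, where there are $\Delta x_{ig}$ copies of $g$ and the removal is a fraction of these. Step 1 never touches small goods and Step 2 does not affect well-behaved agents, so $R^i$ is exactly as described. Beyond this, there is no real obstacle; the statement is a one-line consequence of the update rule.
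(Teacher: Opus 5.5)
Your proposal is correct and follows the paper's proof: compute $v_i(R^i)$ directly from the number of copies removed in Step 3, bound $\max\{\ell^{\z}_i/100-\gamma,0\}$ by $\ell^{\z}_i/100$, and identify $\sum_{g\in G\setminus L_i}x_{ig}v_i(g)$ with $\mu^i=\mu$ via Lemma~\ref{lem:large-good-connection}. The last step, $|L^*|=\ell^{\z}_i\Delta$, is also the same as in the paper.
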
 
\begin{proof}
$R^i$ consists of $\Delta x_{ig} \max \left \{\frac{\ell^{\z}_i}{100} - \gamma, 0 \right \}$ copies of each good $g \in G \setminus L_i$. Therefore, 
\begin{align*}
v_i(R^i) = \sum_{g \in G \setminus L_i} \Delta x_{ig} v_i(g)\max \left \{\frac{\ell^{\z}_i}{100} - \gamma, 0 \right \} \le \frac{\ell^\z_i \Delta}{100} \sum_{g \in G \setminus L_i} x_{ig}v_i(g) = \frac{\ell^{\z}_i\mu^i \Delta}{100}.
\end{align*}

The observation follows from noting that $\ell^{\z}_i\Delta = |L^*|$ and $\mu^i = \mu$ (Lemma \ref{lem:large-good-connection}).
\end{proof}
Third and final, we must have $v(A^i) \le \mu^i \Delta$. This holds since
\begin{align*}
v_i(A^i) &= \sum_{g \in G \setminus L_i} \Delta x_{ig}\left [\sum_{j \in N_{wb}}\max \left \{\frac{\ell^{\z}_j}{100} - \gamma, 0 \right \}\mathbbm{1}\{g \notin L_j\}x_{jg} + \sum_{j \in N_{nwb}} \frac{x_{jg} - f^{\x}_{jg}}{2} \right ] v_i(g) \\
&\le \sum_{g \in G \setminus L_i} \Delta x_{ig} \left [\frac{\sum_{j \in N} x_{jg}}{2} \right ] v_i(g) \\
&\le \Delta \sum_{g \in G \setminus L_i} x_{ig}v_i(g) = \Delta \mu^i.
\end{align*}
In the first inequality, we use the fact that $\ell^{\z}_j \le 1$, which implies, $\sum_{j \in N_{wb}}\frac{\ell^{\z}_j}{100} \mathbbm{1}\{g \notin L_j\}x_{jg} \le \frac{1}{100}$.
Therefore, we can apply Theorem \ref{thm:general-add-removal} to conclude that
\begin{align*}
\logNSW(U^{i}) \ge \frac{1}{\Delta} \left [\sum_{g \in L^*} \ln v_i(g) + (\Delta - |L^*|)\ln \mu \right ] - \frac{3(\Delta - |L^*|)v_i(R^i)}{\Delta^2 \mu} + \frac{(\Delta - |L^*|)v_i(A^i)}{10\Delta^2 \mu} - 15\gamma.
\end{align*}

To translate this result to the terminology of $\z$, we use the fact that $\mu = \mu^i$, $|L^*| = \ell^{\z}_i \Delta$ and $L^*$ consists of $\Delta z_{ig}$ copies of each good $g \in L_i$. This gives us for each $i \in N_1$,
\begin{align*}
\logNSW(U^{i}) \ge \sum_{g \in L_i} z_{ig}\ln v_i(g) + (1 - \ell^{\z}_i)\ln \mu^i - \frac{3(1- \ell^{\z}_i)v_i(R^i)}{\Delta \mu^i} + \frac{(1 - \ell^{\z}_i)v_i(A^i)}{10\Delta \mu^i} - 15\gamma.
\end{align*} 

We upper bound $\logNSW(U^{i})$ using Lemma \ref{lem:w-worst}, and we upper bound $v_i(R^i)$ using Observation \ref{obs:ri-upper-bound}. This gives us:
\begin{align}
\ST(\z, i) \ge \sum_{g\in L_i} z_{ig}\ln v_i(g) + (1 - \ell^{\z}_i)\ln \mu^i - \frac{(1 - \ell^{\z}_i)\ell^{\z}_i}{30}  + \frac{(1- \ell^{\z}_i)v_i(A^i)}{10\Delta \mu^i} - 15\gamma.\label{eq:n1-before-expectation}
\end{align}

Next, we turn to agents in $N_2$. Consider some agent $i \in N_2$. Since $\ell^{\z}_i \le 100\gamma$, we have $\max\{\frac{\ell^{\z}_i}{100}- \gamma, 0\} = 0$. This means the set of removed goods $R^i$ is empty. We use Theorem \ref{thm:small-good-addition-large-good-removal} to bound $\logNSW(U^i)$. It is easy to see from the arguments of the previous case that the conditions of Theorem \ref{thm:small-good-addition-large-good-removal} hold ($v_i(A^i) \le \Delta \mu$ and $|L^*| \le \Delta$).

Therefore, we can apply Theorem \ref{thm:small-good-addition-large-good-removal} to conclude that
\begin{align*}
\logNSW(U^{i}) \ge \frac{1}{\Delta} \left [\sum_{g \in L^*} \ln v_i(g) + (\Delta - |L^*|)\ln \mu \right ] + \frac{(\Delta - |L^*|)v_i(A^i)}{5\Delta^2 \mu} - 12\gamma.
\end{align*}

To translate this result to the terminology of $\z$, we use the fact that $\mu = \mu^i$, $|L^*| = \ell^{\z}_i \Delta$ and $L^*$ consists of $\Delta z_{ig}$ copies of each good $g \in L_i$. This gives us for each $i \in N_2$,
\begin{align*}
\logNSW(U^{i}) \ge \sum_{g \in L_i} z_{ig}\ln v_i(g) + (1 - \ell^{\z}_i)\ln \mu^i + \frac{(1 - \ell^{\z}_i)v_i(A^i)}{5\Delta \mu^i} - 12\gamma.
\end{align*} 

We upper bound $\logNSW(U^{i})$ using Lemma \ref{lem:w-worst}. This gives us for each $i \in N_{2}$:
\begin{align}
\ST(\z, i) \ge \sum_{g\in L_i} z_{ig}\ln v_i(g) + (1 - \ell^{\z}_i)\ln \mu^i  + \frac{(1- \ell^{\z}_i)v_i(A^i)}{5\Delta \mu^i} - 12\gamma. \label{eq:n2-before-expectation}
\end{align}

We add up all the $\ST(\z, i)$ values before simplifying it further. Specifically, we combine \eqref{eq:n1-before-expectation} and \eqref{eq:n2-before-expectation} to get
\begin{align*}
\sum_{i \in N_{wb}} \ST(\z, i) &\ge \sum_{i \in N_{wb}}\sum_{g\in L_i} z_{ig}\ln v_i(g) + \sum_{i \in N_{wb}}(1 - \ell^{\z}_i)\ln \mu^i \\&\qquad \qquad - \sum_{i \in N_{wb}}\frac{(1 - \ell^{\z}_i)\ell^{\z}_i}{30}  + \sum_{i \in N_{wb}}\frac{(1- \ell^{\z}_i)v_i(A^i)}{10\Delta \mu^i} - 15\gamma n.
\end{align*}

Next, we take an expectation on both sides. This gives us:
\begin{align*}
\sum_{i \in N_{wb}}\E[\ST(\z, i)] &\ge \sum_{i \in N_{wb}} \sum_{g\in L_i} \E[z_{ig}\ln v_i(g)] + \sum_{i \in N_{wb}} \E[(1 - \ell^{\z}_i)\ln \mu^i] \\ & \qquad \qquad - \sum_{i \in N_{wb}}\frac{\E[(1 - \ell^{\z}_i)\ell^{\z}_i]}{30}  + \sum_{i \in N_{wb}}\frac{\E[(1- \ell^{\z}_i)v_i(A^i)]}{10\Delta \mu^i} - 15\gamma n.
\end{align*}
This expectation is over the randomness in Step 1.
To simplify this expression, we apply $\E[z_{ig}] = x_{ig}$ and $\E[\ell^{\z}_i] = \E[\ell^{\x}_i]$ from the Expectation property \ref{d1} and Corollary \ref{cor:d1}. This gives us
\begin{align*}
\sum_{i \in N_{wb}} \E[\ST(\z, i)] &\ge \sum_{i \in N_{wb}}\sum_{g\in L_i} x_{ig}\ln v_i(g) + \sum_{i \in N_{wb}} (1 - \ell^{\x}_i)\ln \mu^i \\ &\qquad \qquad - \sum_{i \in N_{wb}} \frac{\E[(1 - \ell^{\z}_i)\ell^{\z}_i]}{30} + \sum_{i \in N_{wb}} \frac{\E[(1 - \ell^{\z}_i)v_i(A^i)]}{10\Delta \mu^i} - 15\gamma n. 
\end{align*}

From Lemma \ref{lem:fractional-nash-upper-bound}, we have $\sum_{g\in L_i} x_{ig}\ln v_i(g) + (1 - \ell^{\x}_i)\ln \mu^i \ge \sum_{S \subseteq G} y_{i, S}\ln v_i(S) - \frac1e$ for each agent $i \in N$. Plugging this in,
\begin{align}
\sum_{i \in N_{wb}} \E[\ST(\z, i)] \ge \sum_{i \in N_{wb}, S \subseteq G}y_{i, S}\ln v_i(S) - \frac{|N_{wb}|}{e} - \sum_{i \in N_{wb}} \frac{\E[(1 - \ell^{\z}_i)\ell^{\z}_i]}{30} + \sum_{i \in N_{wb}} \frac{\E[(1 - \ell^{\z}_i)v_i(A^i)]}{10\Delta \mu^i} - 15\gamma n. \label{eq:before-combine}
\end{align}

We can now lower bound $\E[\ST(\z)]$ by combining \eqref{eq:final-nwb-bound} and \eqref{eq:before-combine}. 
\begin{align}
\E[\ST(\z)] \ge \frac1n \sum_{i \in N, S \subseteq G}y_{i, S}\ln v_i(S) - \frac{1}{e} - \sum_{i \in N_{wb}} \frac{\E[(1 - \ell^{\z}_i)\ell^{\z}_i]}{30n} + \sum_{i \in N_{wb}} \frac{\E[(1 - \ell^{\z}_i)v_i(A^i)]}{10\Delta \mu^i n} - 15\gamma - \frac{2|N_{nwb}|}{n}. \label{eq:stz-initial-lowerbound}
\end{align}

To simplify this further, we need to analyze each of the last four terms separately. To start with, we have $|N_{nwb}| \le \gamma^3 n$. Next, we lower bound, $v_i(A^i)$ using the following lemma. 
\begin{lemma}\label{lem:ai-upper-bound}
For each agent $i \in N_{wb}$, the following holds:
\begin{align*}
\frac{\E[(1 - \ell^{\z}_i)v_i(A^i)]}{\Delta \mu^i} \ge 10^{-4} - \sum_{g \in G \setminus L_i} x_{ig} \sum_{j \in N_{wb}} \frac{\mathbbm{1}\{g \in L_j\}x_{jg}}{30} - 3 \sum_{g \in G \setminus L_i} x_{ig} \sum_{j \in N_{nwb}} f^{\x}_{jg}.
\end{align*}
\end{lemma}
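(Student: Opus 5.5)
The plan is to expand $v_i(A^i)$ as a weighted average over the small goods of $i$, lower bound each weight's bracket linearly in the random quantities $\ell^{\z}_j$, and then take expectations using \ref{d1}, \ref{d2} and Corollary \ref{cor:d1}.

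\textbf{Rewriting the left-hand side.} From the description of $A^i$ in Step 3,
\begin{align*}
\frac{v_i(A^i)}{\Delta\mu^i} = \sum_{g \in G\setminus L_i} w_g\, T_g, \qquad w_g := \frac{x_{ig}v_i(g)}{\mu^i},
\end{align*}
where
\begin{align*}
T_g := \sum_{j\in N_{wb}} \max\left\{\frac{\ell^{\z}_j}{100}-\gamma,0\right\}\mathbbm{1}\{g\notin L_j\}x_{jg} + \sum_{j\in N_{nwb}}\frac{x_{jg}-f^{\x}_{jg}}{2}.
\end{align*}
The weights satisfy $\sum_{g\in G\setminus L_i} w_g = 1$ by Definition \ref{def:mean-fractional}. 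It therefore suffices to lower bound $\E[(1-\ell^{\z}_i)T_g]$ for each small good $g$ and then average with the weights $w_g$.

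\textbf{Linearizing and taking expectations.} I will use $\max\{\ell^{\z}_j/100-\gamma,0\}\ge \ell^{\z}_j/100-\gamma$ for $j\neq i$, and simply drop the $j=i$ term, which is nonnegative. The resulting $-\gamma$ contributions total at most $\gamma\sum_j x_{jg}\le\gamma$, and multiplying by $1-\ell^{\z}_i\le 1$ keeps them at most $\gamma$.

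For $j\neq i$ both well-behaved, I need a lower bound on $\E[(1-\ell^{\z}_i)\ell^{\z}_j]=\E[\ell^{\z}_j]-\E[\ell^{\z}_i\ell^{\z}_j]$. Corollary \ref{cor:d1} gives $\E[\ell^{\z}_j]\ge 1-1/e-\gamma$, and \ref{d2} applied to $S=\{i,j\}$ gives $\E[\ell^{\z}_i\ell^{\z}_j]\le(1-1/e+\gamma)^2$. Together these yield $\E[(1-\ell^{\z}_i)\ell^{\z}_j]\ge (1-1/e)/e - 3\gamma\ge 0.23$. So each such $j$ contributes at least $0.0023\, x_{jg}\mathbbm{1}\{g\notin L_j\}$.

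For the non-well-behaved part, $\E[1-\ell^{\z}_i]\ge 1/e-\gamma$ by Corollary \ref{cor:d1}, so that part contributes at least $0.18\sum_{j\in N_{nwb}}(x_{jg}-f^{\x}_{jg})$. This is at least $0.0023\sum_{j\in N_{nwb}} x_{jg} - \sum_{j\in N_{nwb}} f^{\x}_{jg}$.

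Next I use $\sum_j x_{jg}=1$ (Observation \ref{obs:x-complete}). This gives
\begin{align*}
\sum_{j\in N_{wb}\setminus\{i\}}\mathbbm{1}\{g\notin L_j\}x_{jg}+\sum_{j\in N_{nwb}}x_{jg} \ge 1-x_{ig}-\sum_{j\in N_{wb}}\mathbbm{1}\{g\in L_j\}x_{jg}.
\end{align*}
Putting the pieces together,
\begin{align*}
\E[(1-\ell^{\z}_i)T_g]\ge 0.0023(1-x_{ig}) - 0.0023\sum_{j\in N_{wb}}\mathbbm{1}\{g\in L_j\}x_{jg} - \sum_{j\in N_{nwb}}f^{\x}_{jg}-\gamma.
\end{align*}
The coefficients $0.0023\le 1/30$ and $1\le 3$ match those in the statement after averaging with the weights $w_g$.

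\textbf{The main step: bounding $\sum_g w_g(1-x_{ig})$ from below.} It remains to show this sum is a constant. Since $v_i(g)\le\frac{\mu^i}{1-\ell_i}\le 3\mu^i$ for small goods, I have $w_g\le 3x_{ig}$, which converts the weighted terms into the unweighted sums appearing in the statement. Here I use the small good fraction property (Definition \ref{def:small-good-fraction}), which holds because $i$ is well-behaved. It says goods with $x_{ig}>0.5$ carry total weight less than $\gamma$, so
\begin{align*}
\sum_g w_g(1-x_{ig})\ge \tfrac12(1-\gamma).
\end{align*}
This gives $0.0023\cdot 0.49-\gamma\ge 10^{-4}$, proving the lemma.

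The delicate points are the correlation estimate for $\E[(1-\ell^{\z}_i)\ell^{\z}_j]$ and checking that the $j=i$ and $x_{ig}$-large cases are absorbed by the small good fraction property. Everything else is bookkeeping of constants.
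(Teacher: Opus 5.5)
Your proof is correct and follows essentially the same route as the paper: expand $v_i(A^i)/(\Delta\mu^i)$ as a $\frac{x_{ig}v_i(g)}{\mu^i}$-weighted sum over small goods, linearize the $\max$ and drop the $j=i$ term, bound $\E[(1-\ell^{\z}_i)\ell^{\z}_j]$ via \ref{d2} and Corollary \ref{cor:d1}, use $\sum_j x_{jg}=1$ and $v_i(g)\le 3\mu^i$, and finish with the small good fraction property. The only differences are cosmetic: you keep the deterministic indicator $\mathbbm{1}\{g\notin L_j\}$ through the expectation and apply the complement identity afterwards, where the paper splits it first, and you get the sharper constant $0.0023$ where the paper uses $1/1000$.
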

\begin{proof}
$A^i$ consists of $\Delta x_{ig}\left [\sum_{j \in N_{wb}}\max \left \{\frac{\ell^{\z}_j}{100} - \gamma, 0 \right \}\mathbbm{1}\{g \notin L_j\}x_{jg} + \sum_{j \in N_{nwb}} \frac{x_{jg} - f^{\x}_{jg}}{2} \right ]$ copies of each good $g \in G \setminus L_i$. Therefore,
\begin{align*}
\frac{v_i(A^i)}{\Delta\mu^i} &= \sum_{g \in G \setminus L_i} \frac{x_{ig}v_i(g)}{\mu^i}\left [\sum_{j \in N_{wb}}\max \left \{\frac{\ell^{\z}_j}{100} - \gamma, 0 \right \}\mathbbm{1}\{g \notin L_j\}x_{jg} + \sum_{j \in N_{nwb}} \frac{x_{jg} - f^{\x}_{jg}}{2} \right ] \\
&\ge \sum_{g \in G \setminus L_i} \frac{x_{ig}v_i(g)}{\mu^i}\left [\sum_{j \in N_{wb} \setminus \{i\}}\left (\frac{\ell^{\z}_j}{100} - \gamma \right )x_{jg} - \sum_{j \in N_{wb}}\left (\frac{\ell^{\z}_j}{100} - \gamma \right ) \mathbbm{1}\{g \in L_j\} x_{jg} + \sum_{j \in N_{nwb}} \frac{x_{jg} - f^{\x}_{jg}}{2} \right ]  \\
\end{align*}

To simplify this expression, we separate the terms with negative coefficients and upper bound them. 
\begin{align*}
\frac{v_i(A^i)}{\Delta \mu^i} &\ge \sum_{g \in G \setminus L_i} \frac{x_{ig}v_i(g)}{\mu^i}\left [\sum_{j \in N_{wb} \setminus \{i\}}\left (\frac{\ell^{\z}_j}{100}\right )x_{jg} + \sum_{j \in N_{nwb}} \frac{x_{jg}}{2} -\gamma \right ] \\
& \qquad \qquad - \sum_{g \in G \setminus L_i} \frac{x_{ig}v_i(g)}{\mu^i} \left [\sum_{j \in N_{wb}} \left (\frac{\ell^{\z}_j}{100} \right )\mathbbm{1}\{g \in L_j\}x_{jg} + \sum_{j \in N_{nwb}} \frac{f^{\x}_{jg}}{2} \right ]
\end{align*}

We use the fact that $v_i(g) \le \frac{\mu^i}{1-\ell^{\x}_i} \le 3\mu^i$ for all small goods $g$ since $i$ is well-behaved and Lemma \ref{lem:fractional-property-a} holds. This yields
\begin{align*}
\frac{v_i(A^i)}{\Delta \mu^i} &\ge \sum_{g \in G \setminus L_i} \frac{x_{ig}v_i(g)}{\mu^i}\left [\sum_{j \in N_{wb} \setminus \{i\}}\left (\frac{\ell^{\z}_j}{100}\right )x_{jg} + \sum_{j \in N_{nwb}} \frac{x_{jg}}{2} - \gamma \right ] \\
& \qquad \qquad - \sum_{g \in G \setminus L_i} 3x_{ig} \left [\sum_{j \in N_{wb}} \frac{\mathbbm{1}\{g \in L_j\}x_{jg}}{100} + \sum_{j \in N_{nwb}} f^{\x}_{jg}\right ]. \\
&\ge \sum_{g \in G \setminus L_i} \frac{x_{ig}v_i(g)}{\mu^i}\left [\sum_{j \in N_{wb} \setminus \{i\}}\left (\frac{\ell^{\z}_j}{100}\right )x_{jg} + \sum_{j \in N_{nwb}} \frac{x_{jg}}{2} -\gamma\right ] \\
& \qquad \qquad  - \sum_{g \in G \setminus L_i} x_{ig} \sum_{j \in N_{wb}} \frac{\mathbbm{1}\{g \in L_j\}x_{jg}}{30} - 3 \sum_{g \in G \setminus L_i} x_{ig} \sum_{j \in N_{nwb}} f^{\x}_{jg}.
\end{align*}

We use $C$ to denote the term $\sum_{g \in G \setminus L_i} x_{ig} \sum_{j \in N_{wb}} \frac{\mathbbm{1}\{g \in L_j\}x_{jg}}{30} + 3 \sum_{g \in G \setminus L_i} x_{ig} \sum_{j \in N_{nwb}} f^{\x}_{jg}$. Note that $C$ is independent of the randomness of Step 1. If we multiply both sides of the above inequality by $(1-\ell^{\z}_i)$ and take an expectation, we get
\begin{align*}
\frac{\E[(1-\ell^{\z}_i) v_i(A^i)]}{\Delta \mu^i} &\ge \sum_{g \in G \setminus L_i} \frac{x_{ig}v_i(g)}{\mu^i}\left [\sum_{j \in N_{wb} \setminus \{i\}}\left (\frac{\E[(1-\ell^{\z}_i)\ell^{\z}_j]}{100}\right )x_{jg} + \sum_{j \in N_{nwb}} \frac{\E[1-\ell^{\z}_i]x_{jg}}{2} - \E[1-\ell^{\z}_i] \gamma\right ] \\
&\qquad \qquad - \E[1-\ell^{\z}_i]C
\end{align*}

Here, we use Negative Correlation \ref{d2} to upper bound $\E[\ell^{\z}_i \ell^{\z}_j]$ and we use Corollary \ref{cor:d1} to bound $\E[\ell^{\z}_i]$. This simplifies our expression to

\begin{align*}
\frac{\E[(1-\ell^{\z}_i) v_i(A^i)]}{\Delta \mu^i} &\ge \sum_{g \in G \setminus L_i} \frac{x_{ig}v_i(g)}{\mu^i}\left [\sum_{j \in N \setminus \{i\}}\frac{x_{jg}}{1000} -  \gamma\right ] - C
\end{align*}

We separate out the $\gamma$ term and write $\sum_{j \in N \setminus \{i\}}x_{jg} = 1 - x_{ig}$, which holds due to Observation \ref{obs:x-complete}. Using $\mu^i = \sum_{g \in G \setminus L_i}x_{ig}v_i(g)$, we get

\begin{align*}
\frac{\E[(1-\ell^{\z}_i) v_i(A^i)]}{\Delta \mu^i} &\ge \sum_{g \in G \setminus L_i} \frac{x_{ig}(1-x_{ig})v_i(g)}{1000\mu^i} - \gamma - C
\end{align*}

We complete our analysis by only considering goods in $G \setminus L_i$ where $x_{ig} \le 0.5$. For all of these goods $g$, $(1 - x_{ig}) \ge 1/2$. Moreover, $\sum_{g \in G\setminus L_i}\mathbbm{1}\{x_{ig} \le 0.5\} x_{ig}v_i(g) \ge (1-\gamma)\mu^i$ via the small good fraction property (Definition \ref{def:small-good-fraction}). Therefore, 

\begin{align*}
\frac{\E[(1-\ell^{\z}_i) v_i(A^i)]}{\Delta \mu^i} &\ge \sum_{g \in G \setminus L_i} \mathbbm{1}\{x_{ig} \le 0.5\}\frac{x_{ig}(1-x_{ig})v_i(g)}{1000\mu^i} - \gamma - C \\
&\ge \sum_{g \in G \setminus L_i} \mathbbm{1}\{x_{ig} \le 0.5\}\frac{x_{ig}v_i(g)}{2000\mu^i} -\gamma- C \\
&\ge \frac{(1-\gamma)}{2000} - \gamma - C \\
&\ge 10^{-4} - C.
\end{align*}
This completes the proof.
\end{proof}

We apply Lemma \ref{lem:ai-upper-bound} to \eqref{eq:stz-initial-lowerbound} and note that $|N_{wb}| \in \left [(1-\gamma^3)n, n \right ]$. This gives us:

\begin{align*}
\E[\ST(\z)] &\ge \frac{1}{n}\sum_{i \in N, S \subseteq G}y_{i, S}\ln v_i(S)  - \frac1e + 10^{-5} -  \sum_{i \in N_{wb}}\frac{\E[\ell^{\z}_i(1 - \ell^{\z}_i)]}{30 n} - 17\gamma \\ &\qquad \qquad -  \sum_{i \in N_{wb}} \sum_{g \in G \setminus L_i} x_{ig} \sum_{j \in N_{wb}} \frac{\mathbbm{1}\{g \in L_j\}x_{jg}}{30 n} - \frac{3}{n} \sum_{i \in N_{wb}} \sum_{g \in G \setminus L_i} x_{ig} \sum_{j \in N_{nwb}} f^{\x}_{jg}
\end{align*}

While this may initially seem scary, all of the previous sections of this paper have been written solely to ensure that every term in the above inequality with a negative coefficient (with the exception of $\frac1e$) is at most $\gamma$.

To start with, $\sum_{i \in N_{wb}} \sum_{g \in G \setminus L_i} x_{ig} \sum_{j \in N_{wb}}\mathbbm{1}\{g \in L_j\}x_{jg} \le 11\gamma n$ from Corollary \ref{cor:weird-upper-bound}. This simplifies the above expression slightly to
\begin{align}
\E[\ST(\z)] &\ge \frac{1}{n}\sum_{i \in N, S \subseteq G}y_{i, S}\ln v_i(S)  - \frac1e + 10^{-5} \notag\\ & \qquad \qquad - \sum_{i \in N_{wb}}\frac{\E[\ell^{\z}_i(1 - \ell^{\z}_i)]}{30n} - 18\gamma  - \frac{3}{n} \sum_{i \in N_{wb}} \sum_{g \in G \setminus L_i} x_{ig} \sum_{j \in N_{nwb}} f^{\x}_{jg}. \label{eq:final-bound-0}
\end{align}

To upper bound the remaining two scary terms, we prove the following lemmas.
\begin{lemma}\label{lem:n1-loss-upper-bound}
$\sum_{i \in N_{wb}}\frac{\E[\ell^{\z}_i(1 - \ell^{\z}_i)]}{30 n} \le \gamma.$
\end{lemma}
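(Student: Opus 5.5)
We need $\sum_{i \in N_{wb}} \E[\ell^{\z}_i(1-\ell^{\z}_i)] \le 30\gamma n$. The plan is to write
\[
\E[\ell^{\z}_i(1-\ell^{\z}_i)] = \E[\ell^{\z}_i] - \E[(\ell^{\z}_i)^2]
\]
and handle the two pieces separately. The first piece is controlled by the Expectation property, and the second by the High Variance property.

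\textbf{First piece.} By Corollary \ref{cor:d1}, $\E[\ell^{\z}_i] = \ell^{\x}_i \le 1 - \frac1e + \gamma$ for every $i \in N_{wb}$. Summing gives
\[
\sum_{i\in N_{wb}} \E[\ell^{\z}_i] \le \left(1-\tfrac1e+\gamma\right)n .
\]

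\textbf{Second piece.} The High Variance property \ref{d4} (Lemma \ref{lem:dependent-rounding-variance}) is used pointwise. Its proof pairs each well-behaved large good with a distinct agent whose final $\ell^{\z}_i \ge 1-\gamma$. This holds on every run of the algorithm, so the bound $\sum_{i \in N_{wb}} (\ell^{\z}_i)^2 \ge (1 - \frac1e - 15\gamma)n$ holds deterministically. Taking expectations,
\[
\sum_{i\in N_{wb}}\E[(\ell^{\z}_i)^2] \ge \left(1-\tfrac1e-15\gamma\right)n .
\]

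\textbf{Combining.} Subtracting the two bounds gives
\[
\sum_{i \in N_{wb}} \E[\ell^{\z}_i(1-\ell^{\z}_i)] \le \left(1-\tfrac1e+\gamma\right)n - \left(1-\tfrac1e-15\gamma\right)n = 16\gamma n .
\]
Dividing by $30n$ yields at most $\frac{16}{30}\gamma \le \gamma$, which is the claim.

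\textbf{Main obstacle.} The only delicate point is confirming that \ref{d4} is a sure (almost-sure) statement and not merely an expectation statement. This is clear from the pairing argument in Lemma \ref{lem:dependent-rounding-variance}, which depends only on Corollary \ref{cor:well-behaved-large-good-lower-bound} and the deterministic mechanics of the procedure. Even if \ref{d4} held only in expectation, the argument above would go through unchanged, since only $\E[(\ell^{\z}_i)^2]$ is used.
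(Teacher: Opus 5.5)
Your proof is correct and follows the paper's argument: you split $\E[\ell^{\z}_i(1-\ell^{\z}_i)]$ into $\E[\ell^{\z}_i]-\E[(\ell^{\z}_i)^2]$, bound the first term with Corollary~\ref{cor:d1} and the second with the High Variance property~\ref{d4}, which gives $16\gamma n/(30n)\le\gamma$. Your extra remark that \ref{d4} holds on every run of the algorithm (and that only its expectation is needed anyway) is accurate, and the paper uses it implicitly.
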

\begin{proof}
This holds essentially due to the High Variance property \ref{d4} and Corollary \ref{cor:d1}.
\begin{align*}
\sum_{i \in N_{wb}}\frac{\E[\ell^{\z}_i(1 - \ell^{\z}_i)]}{30 n} &\le \frac1{30 n}\E\left [ \sum_{i \in N_{wb}}\ell^{\z}_i - \sum_{i \in N_{wb}}\left (\ell^{\z}_i \right)^2 \right ] \\
&\le \frac1{30 n}\left [ \left (1 - \frac1e + \gamma \right )n - \left (1-\frac1e - 15\gamma \right )n \right ] \\
&\le \gamma.
\end{align*}
In the second inequality, we use the High Variance property \ref{d4} and Corollary \ref{cor:d1}. 
\end{proof}
\begin{lemma}\label{lem:fj-upper-bound}
$\frac{3}{n} \sum_{i \in N_{wb}} \sum_{g \in G \setminus L_i} x_{ig} \sum_{j \in N_{nwb}} f^{\x}_{jg} \le \gamma.$
\end{lemma}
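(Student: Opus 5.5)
The plan is to swap the order of summation and bound the inner sum by a crude quantity. The key idea is that $\sum_{g} f^{\x}_{jg} \le 1$ for each agent $j$, since the first bucket has size at most one, and that there are only $|N_{nwb}| \le \gamma^3 n$ non-well-behaved agents.

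First, I drop the restriction $g \in G\setminus L_i$ and sum over all goods, which is valid because every term is nonnegative. Next, I interchange the sums over $i$ and $g$. For each fixed good $g$, Observation \ref{obs:x-complete} gives $\sum_{i \in N_{wb}} x_{ig} \le \sum_{i\in N} x_{ig} = 1$. This yields
\begin{align*}
\sum_{i \in N_{wb}} \sum_{g \in G \setminus L_i} x_{ig} \sum_{j \in N_{nwb}} f^{\x}_{jg} \le \sum_{g \in G} \Big(\sum_{i \in N_{wb}} x_{ig}\Big) \sum_{j \in N_{nwb}} f^{\x}_{jg} \le \sum_{j \in N_{nwb}} \sum_{g \in G} f^{\x}_{jg}.
\end{align*}

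By property (P1) of the bucketing construction, the first bucket of each agent has $\ell_1$-norm at most $1$. Hence $\sum_{g} f^{\x}_{jg} \le 1$, and the right-hand side is at most $|N_{nwb}| \le \gamma^3 n$. Multiplying by $3/n$ gives a bound of $3\gamma^3$, which is at most $\gamma$ because $\gamma = 10^{-10}$.

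I do not expect a real obstacle here. The only point that needs care is confirming that $f^{\x}_{jg}$ is defined through agent $j$'s own bucketing of $\x_j$, so that its total over $g$ is at most one. That is exactly how Step 2 defines it.
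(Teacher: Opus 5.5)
Your proposal is correct and follows the paper's proof essentially line for line: drop the restriction to $G \setminus L_i$, bound $\sum_{i \in N_{wb}} x_{ig}$ by $1$, use that each first bucket has size at most $1$, and finish with $|N_{nwb}| \le \gamma^3 n$. Your explicit remark that the bound is $3\gamma^3 \le \gamma$ just spells out the last inequality, which the paper leaves implicit.
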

\begin{proof}
The lemma holds because $|N_{nwb}| \le \gamma^3 n$ and the following sequence of inequalities.
\begin{align*}
\frac{3}{n} \sum_{i \in N_{wb}} \sum_{g \in G \setminus L_i} x_{ig} \sum_{j \in N_{nwb}} f^{\x}_{jg} &\le \frac{3}{n} \sum_{g \in G} \sum_{i \in N_{wb}} x_{ig} \sum_{j \in N_{nwb}} f^{\x}_{jg} \\
&\le \frac{3}{n}  \sum_{g \in G} \sum_{j \in N_{nwb}} f^{\x}_{jg} \\
&\le \frac{3}{n}  |N_{nwb}| \le \gamma.
\end{align*}
In the second inequality we upper bound $\sum_{i \in N_{wb}} x_{ig}$ using $1$. In the third inequality, we use the fact that the first bucket has size at most $1$ for any agent $i$. In the final inequality, we use $|N_{nwb}| \le \gamma^3 n$.
\end{proof}

These lemmas simplify \eqref{eq:final-bound-0} to 
\begin{align*}
\E[\ST(\z)] &\ge \frac{1}{n}\sum_{i \in N, S \subseteq G}y_{i, S}\ln v_i(S)  - \frac1e + 10^{-5} - 20\gamma \\
&\ge \frac{1}{n}\sum_{i \in N, S \subseteq G}y_{i, S}\ln v_i(S)  - \frac1e + 10^{-6}.
\end{align*}

This completes the proof of Theorem \ref{thm:final-algo}.

\section{Putting the Pieces Together}\label{sec:main-theorem}

We are finally ready to prove the main result of this paper:
\thmnash*
\begin{proof}
Let $\vare > 0$ be some constant that we will decide later. Let $\cal I$ be an instance of the fair allocation problem with additive valuations. 
Given the instance $\cal I$, let $y$ be a rational configuration LP solution which is optimal up to a $\ln{(1 + \vare)}$ additive error (computed using Theorem \ref{thm:configuration-lp}). We can assume that this solution $y$ has a well-defined objective value due to Remark \ref{rem:well-defined-objective}. Let $\x$ be its corresponding fractional allocation. 

We compute a fractional allocation $\z$ such that for some constant $c^* > 0$,
\begin{align*}
\E[\ST(\z)] &\ge \frac1n \sum_{i \in N, S\subseteq G} y_{i, S} \ln v_i(S) - \frac{1}{e} + c^*, \\
&\ge \logNSW(\cal I) - \ln(1 + \vare) - \frac{1}{e} + c^*.
\end{align*} 
If more than $\gamma^3 n$ agents are not well-behaved in $\x$, then $\z = \x$ (Theorem \ref{thm:non-well-behaved-algo}). If at most $\gamma^3 n$ agents are not well-behaved in $\x$ but $\x$ does not satisfy the large good consistency, we compute $\z$ using Theorem \ref{thm:no-large-good-consistency}. If at most $\gamma^3 n$ agents are not well-behaved in $\x$ and $\x$ satisfies the large good consistency property, we compute $\z$ using Theorem \ref{thm:final-algo}. 

Then we round $\z$ using the ST rounding procedure to create an integral allocation $X$. Via Fact \ref{fact:log-nash-welfare},

\begin{align*}
\E[\NSW(X)] \ge e^{\E[\logNSW(X)]} = e^{\E[\ST(\z)]} \ge \exp\left \{\logNSW(\cal I) - \ln(1 + \vare) - \frac{1}{e} + c^*\right \}.
\end{align*}

Setting $\vare = \frac{c^*}{2}$ and upper bounding $\ln(1 + \vare)$ by $\vare$, we get:
\begin{align*}
\E[\NSW(X)] \ge e^{\E[\logNSW(X)]} \ge \exp\left \{\logNSW(\cal I) - \frac{1}{e} + \frac{c^*}{2} \right \} = \frac{\NSW(\cal I)}{e^{\frac{1}{e} - \frac{c^*}{2}}}.
\end{align*}

Every single step of the above process can be implemented in polynomial time. We start by finding a configuration LP solution using Theorem \ref{thm:configuration-lp} and its corresponding fractional allocation $\x$. Then, we decide which algorithm to use to compute $\z$ --- this can be done in polynomial time since the set of well-behaved agents can be found in polynomial time (Corollary \ref{cor:well-behaved-identification}) and the large good consistency property can be verified in polynomial time. Then we compute $\z$; all the algorithms we present for computing $\z$ are polynomial time algorithms. Finally, Shmoys-Tardos rounding can be implemented in polynomial time using any of the standard techniques \cite{birkhoff1946bvndecomposition,gandhi2006dependent}.
\end{proof}

\begin{remark}
A loose estimate of the constant $c$ puts it at around $10^{-80}$. Throughout this paper, we have made no attempt to optimize this constant, choosing instead to simplify the proof as much as possible. Even if optimized, it seems unlikely that our technique will result in a constant better than $0.001$. 
\end{remark}

\paragraph{Acknowledgements} The author is funded by the National Science Foundation (NSF) Career Award 2441296 and Grant RI-2327057.

\bibliographystyle{alpha}
\bibliography{abb,references}

\newpage
\appendix

\section{Missing Proofs from Section \ref{sec:round-robin-guarantees}}\label{apdx:round-robin-guarantees}

\obslargegoodsunique*
\begin{proof}
We first show that there always exists a set $L$ satisfying the properties of Definition \ref{def:large-goods}. This set $L$ can be constructed iteratively. We start by initializing $L$ as the empty set, and while there is some good $g \in G \setminus L$ such that $v(g) > \frac{v(G \setminus L)}{\Delta - \ell}$ (where $\ell = |L|$), we add this $g$ to $L$. Note that this addition only reduces the value of $\frac{v(G \setminus L)}{\Delta - \ell}$ and therefore, does not violate the largeness of the goods already in $L$. 

\begin{align*}
\frac{v(G \setminus (L \cup \{g\}))}{\Delta - \ell - 1} = \frac{v(G \setminus L) - v(g)}{\Delta - \ell - 1} < \frac{v(G \setminus L) - \frac{v(G \setminus L)}{\Delta - \ell}}{\Delta - \ell - 1} = \frac{v(G \setminus L)}{\Delta - \ell}.
\end{align*}

This iterative procedure stops when $|L| = \Delta - 1$ because $v(g) \le v(G \setminus L)$ for any $g \in G \setminus L$. Therefore, $L$ satisfies Definition \ref{def:large-goods}. 

We show uniqueness next. Assume there are two different sets of large goods $L$ and $L'$ satisfying the properties of Definition \ref{def:large-goods}. Note that if any good $g$ is in $L$ (resp. $L'$), all goods with value at least $v(g)$ must also be in $L$ (resp. $L'$). Therefore, let $L = \{g_1, \dots, g_{\ell}\}$ and let $L' = \{g_1, \dots, g_{\ell'}\}$, and assume $\ell < \ell'$. By applying the properties of large goods with the set $L$, we get for any $g \in L' \setminus L$, 
\begin{align*}
v(g) \le \frac{v(G \setminus L)}{\Delta - \ell} = \frac{v(G \setminus L') + v(L' \setminus L)}{\Delta - \ell}.
\end{align*}

Adding the above inequality up for all $g \in L' \setminus L$, we get
\begin{align*}
v(L' \setminus L) \le \left (\frac{\ell' - \ell}{\Delta - \ell} \right ) (v(G \setminus L') + v(L' \setminus L)).
\end{align*}

Re-arranging terms gives us
\begin{align*}
v(L' \setminus L) \le \frac{(\ell' - \ell)v(G \setminus L')}{\Delta - \ell'}.
\end{align*}

This contradicts the set $L'$ being a set of large goods since each good $g \in L'\setminus L$ satisfies
\begin{equation*}
v(g) > \frac{v(G \setminus L')}{\Delta - \ell'}. \qedhere
\end{equation*}
\end{proof}

\section{Missing Proofs from Section \ref{sec:add-remove}}\label{apdx:add-remove}

\thmsmallgoodadditionlargegoodremoval*
\begin{proof}
Similar to the previous proof, this proof does not explicitly use the large goods of the instance $\cal I^{mod}$. When we use the value $\mu$, we mean the value $\frac{v(G \setminus L)}{\Delta}$, where $L$ is the set of large goods in the instance $\cal I$. In the allocation $W^{mod}$, we use $S^{mod}_i$ to denote the set $W^{mod}_i \setminus L'$ for each agent $i$.  

Like all the previous proofs, we partition the agents to lower bound the log Nash welfare. Let $N_{L'}$ be the set of agents who receive a good in $L'$ in the allocation $W^{mod}$.
Each of the large goods in the new set $L'$ is more valuable than all the small goods of the instance. Therefore, the set $N_{L'} = \{1, 2, \dots, |L'|\}$. Let $N_S$ denote the remaining agents. Of the agents in $N_S$, let $N_1$ denote the set of agents $i$ such that $v(W^{mod}_i) \ge (1-\gamma)\mu$. Let $N_2$ denote the remaining agents $i$ where $v(W^{mod}_i) < (1-\gamma)\mu$. 

If $N_S$ is empty (or equivalently $|L'| = \Delta$), then the theorem trivially follows because
\begin{align*}
\logNSW(W^{mod}) \ge \frac1\Delta \left [\sum_{g \in L'}\ln v(g) \right ],
\end{align*}
which proves the theorem for the special case where $\Delta = |L'|$. From here on, we assume $|L'| < \Delta$; that is, $N_S$ is non-empty.

If all agents $i$ in $N_S$ satisfy $v(W^{mod}_i) \ge 2\mu$, then the log Nash welfare of $W^{mod}$ can be easily lower bounded as follows:
\begin{align*}
\logNSW(W^{mod}) &\ge \frac1\Delta \left [\sum_{g \in L'} \ln v(g) + (\Delta - |L'|) \ln(2\mu)\right ] \\
&\ge \frac1\Delta \left [\sum_{g \in L'} \ln v(g) + (\Delta - |L'|) \ln\left (\mu \left (1 + \frac{v(A)}{\Delta \mu} \right )\right )\right ] \\
&\ge \frac1\Delta \left [\sum_{g \in L'} \ln v(g) + (\Delta - |L'|) \ln\mu \right ] + \frac{(\Delta - |L'|)v(A)}{2\Delta^2\mu}.
\end{align*} 
In the final inequality, we use Fact \ref{fact:identity-inequality}. This proves the theorem. So from here on, we assume $v(W^{mod}_{\Delta}) < 2\mu$. That is, the worst bundle of the allocation has utility at most $2\mu$. We make the following observations.

\begin{obs}\label{obs:value-upper-bound}
For each $i \in N_S$, $v(W^{mod}_{i}) \le 5\mu$. 
\end{obs}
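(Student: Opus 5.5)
The plan is to combine the EF1 property of round robin allocations with the fact that every good held by an agent in $N_S$ is small. We use the standing assumption, made just before this observation, that $v(W^{mod}_{\Delta}) < 2\mu$.

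First, I identify the goods an agent $i \in N_S$ can hold. By definition of $N_S$, agent $i$ receives no good of $L'$ in $W^{mod}$. The goods of $\cal I^{mod}$ are the goods of $L'$, the original small goods $G \setminus L$ of $\cal I$, and the added goods $A$; the original large goods $L$ have been replaced by $L'$. So every good in $W^{mod}_i$ lies in $(G \setminus L) \cup A$. Each original small good $g$ satisfies $v(g) \le \frac{v(G\setminus L)}{\Delta - \ell} = \frac{\Delta\mu}{\Delta-\ell}$ by Definition \ref{def:large-goods}, and each $g \in A$ satisfies the same bound by hypothesis of the theorem. Since $\cal I$ is well-behaved, \ref{prop:a} gives $\frac{\Delta-\ell}{\Delta} \ge \frac1e - \gamma > \frac13$. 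Hence $\max_{g \in W^{mod}_i} v(g) \le \frac{\Delta\mu}{\Delta-\ell} \le 3\mu$.

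Second, I apply Fact \ref{fact:round-robin-ef1} to the round robin allocation $W^{mod}$. All agents share the valuation $v$, so taking agent $\Delta$ as the envier and $i$ as the envied agent gives
\begin{align*}
v(W^{mod}_i) \le v(W^{mod}_{\Delta}) + \max_{g \in W^{mod}_i} v(g) < 2\mu + 3\mu = 5\mu .
\end{align*}
The step $v(W^{mod}_{\Delta}) < 2\mu$ is the standing assumption, and the bound on the maximum is the first step. The bound holds trivially if $W^{mod}_i$ is empty.

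The only point that needs care is the first step. We must confirm that no agent in $N_S$ can hold a good whose value exceeds $\frac{\Delta\mu}{\Delta-\ell}$. This holds because the goods of $\cal I^{mod}$ outside $L'$ are exactly the old small goods together with $A$, and both classes obey that value cap. The rest is arithmetic with \ref{prop:a}.
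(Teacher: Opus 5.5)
Your proof is correct and matches the paper's argument. You apply the EF1 property of round robin with agent $\Delta$ as the envier, use the standing assumption $v(W^{mod}_{\Delta}) < 2\mu$, and bound every good outside $L'$ by $\frac{\Delta\mu}{\Delta-\ell} \le 3\mu$ via Property A. You also spell out why agents in $N_S$ hold only goods from $(G\setminus L)\cup A$, which the paper leaves implicit.
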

\begin{proof}
This follows from Fact \ref{fact:round-robin-ef1}. 
\begin{equation*}
v(W^{mod}_i) \le v(W^{mod}_{\Delta}) + \max_{g \in W^{mod}_i} v(g) \le 2\mu + \frac{\Delta\mu}{\Delta - \ell} \le 5\mu. 
\end{equation*}
The final inequality holds due to \ref{prop:a}.
\end{proof}

\begin{obs}\label{obs:round-robin-better-off}
For each $i \in [\Delta - |L'|]$, $v(S^{mod}_{|L'| + i}) \ge v(S_{\ell + i})$ if $\ell + i \le \Delta$ and $v(S^{mod}_{|L'| + i}) \ge v(S_{\ell + i - \Delta})$ otherwise. 
\end{obs}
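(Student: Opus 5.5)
The plan is to mirror the proof of Observation \ref{obs:round-robin-worse-off}, with the inclusion of ground sets reversed. First I would describe both small-bundle allocations as round robin allocations on their small goods. In $W$, the bundles $(S_1,\dots,S_\Delta)$ are the round robin allocation of $G\setminus L$ with picking sequence $(\ell+1,\dots,\Delta,1,\dots,\ell)$. This holds because the $\ell$ large goods are the top goods and are consumed by agents $1,\dots,\ell$ in the first round, so agent $\ell+1$ picks the first small good.

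Similarly, the goods in $L'$ are strictly more valuable than every small good of $\cal I^{mod}$: each has value $>\frac{\Delta\mu}{\Delta-\ell}$, while the old small goods and those in $A$ have value $\le \frac{\Delta\mu}{\Delta-\ell}$. Since $|L'|\le\Delta$, they go to agents $1,\dots,|L'|$ in the first round. Hence $(S^{mod}_1,\dots,S^{mod}_\Delta)$ is the round robin allocation of $(G\setminus L)\cup A$ with picking sequence $(|L'|+1,\dots,\Delta,1,\dots,|L'|)$.

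Next I relabel by position in the picking sequence. The agent in position $p$ of the first sequence is $\ell+p$ if $\ell+p\le\Delta$ and $\ell+p-\Delta$ otherwise. In the second sequence, the agent in position $p$ is $|L'|+p$ for $p\le\Delta-|L'|$. So for $i\in[\Delta-|L'|]$, both $S^{mod}_{|L'|+i}$ and the claimed comparison bundle ($S_{\ell+i}$ or $S_{\ell+i-\Delta}$) belong to the $i$-th picker in their respective sequences.

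The key step, and the only non-bookkeeping one, is a monotonicity lemma for round robin with identical additive valuations. If $H\subseteq H'$ are sorted by value, the $i$-th picker's bundle under $H'$ is weakly more valuable than under $H$. I would prove it via the $t$-th good taken by the $i$-th picker. Under $H$ this is the $((t-1)\Delta+i)$-th most valuable good of $H$; under $H'$ it is the $((t-1)\Delta+i)$-th most valuable good of $H'$. The $k$-th largest element of a superset is at least the $k$-th largest of the subset, and is present whenever the subset's is. Summing over $t$ gives the domination.

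Applying this lemma with $H=G\setminus L$ and $H'=(G\setminus L)\cup A$ finishes the proof. I expect the main obstacle to be carefully justifying that the first-round allocation of large goods is as claimed when $|L'|$ differs from $\ell$, including the wrap-around indexing. The value comparison itself is routine.
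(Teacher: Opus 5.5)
Your proposal is correct and takes the same route as the paper: both write $(S_1,\dots,S_\Delta)$ and $(S^{mod}_1,\dots,S^{mod}_\Delta)$ as round robin allocations of $G\setminus L$ and $(G\setminus L)\cup A$ under the shifted picking sequences, and then compare the $i$-th pickers using the superset relation. The only difference is that you prove the monotonicity step explicitly (the $k$-th most valuable good of a superset is at least the $k$-th most valuable good of the subset), whereas the paper simply asserts it.
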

\begin{proof}
The bundle $S^{mod}_{|L'| + i}$ can be constructed using the round robin algorithm over the goods $(G \setminus L) \cup A$ with picking sequence $(|L'| + 1, \dots, \Delta, 1, \dots, |L'|)$. The small bundles $(S_1, \dots, S_{\Delta})$ can be constructed using the round robin algorithm over the goods $G \setminus L$ with picking sequence $(\ell +  1, \dots, \Delta, 1, \dots, \ell)$. Since $(G \setminus L) \cup A$ is a superset of $G \setminus L$, $v(S^{mod}_{|L'| + i}) \ge v(S_j)$ where $j$ is the $i$-th agent in the picking sequence $(\ell +  1, \dots, \Delta, 1, \dots, \ell)$. The observation follows from noting that $j$ is $\ell + i$ if $\ell + i \le \Delta$ and $\ell + i - \Delta$ otherwise.
\end{proof}

Since $\cal I$ is well-behaved, the above observation implies that the set $N_2$ has size at most $\gamma \Delta$ (\ref{prop:b}), and each agent $i$ in $N_2$ satisfies $v(W^{mod}_i) \ge 10^{-4}\mu$ (\ref{prop:e}).
For each agent $i \in N_1$, $v(W^{mod}_i) \in [(1-\gamma)\mu, 5\mu]$ (using Observation \ref{obs:value-upper-bound}). We can therefore define a value $\alpha_i \in [0, 1]$ for each $i \in N_1$ that satisfies $v(W^{mod}_i) = \alpha_i (5\mu) + (1-\alpha_i)\mu(1-\gamma)$. Define $\alpha = \sum_{i \in N_1} \alpha_i$. 

We can lower bound the log Nash welfare of $W^{mod}$ using the concavity of the $\log$ function.
\begin{align*}
\logNSW(W^{mod}) &\ge \frac{1}{\Delta} \left [ \sum_{g \in L'} \ln v(g) + \sum_{i \in N_1} \ln v(W^{mod}_i) + \sum_{i \in N_2} \ln v(W^{mod}_i)\right ] \\
&\ge \frac{1}{\Delta} \left [ \sum_{g \in L'} \ln v(g) + \alpha\ln(5\mu) + (|N_1| - \alpha)\ln((1-\gamma)\mu) + |N_2| \ln(10^{-4} \mu)\right ] \\
&\ge \frac{1}{\Delta} \left [ \sum_{g \in L'} \ln v(g) + (\Delta - |L'|)\ln(\mu) \right ] + \frac{\alpha}{\Delta}\ln(5) + \frac{|N_1|}{\Delta}\ln(1-\gamma) + \frac{|N_2|}{\Delta} \ln(10^{-4})
\end{align*}

We can simplify this further using $|N_1| \le \Delta$ and $|N_2| \le \gamma \Delta$. 
\begin{align}
\logNSW(W^{mod}) &\ge \frac{1}{\Delta} \left [ \sum_{g \in L'} \ln v(g) + (\Delta - |L'|)\ln(\mu) \right ] + \frac{\alpha}{\Delta} - 12\gamma. \label{eq:small-good-addition-large-good-removal-0}
\end{align}

To complete this proof, we need to lower bound $\alpha$. By the definition of the round robin algorithm, $\{S^{mod}_{|L'|+1}, \dots, S^{mod}_{\Delta}\}$ are the $\Delta - |L'|$ highest valued small bundles in the allocation. Therefore, 

\begin{align}
\sum_{i = |L'| + 1}^{\Delta} v(W^{mod}_{i}) \ge \frac{\Delta - |L'|}{\Delta} \sum_{i = 1}^{\Delta} v(S^{mod}_i) = (\Delta - |L'|) \mu + \frac{(\Delta - |L'|)v(A)}{\Delta}. \label{eq:small-good-addition-large-good-removal-1} 
\end{align}

We upper bound the left hand side by partitioning the agents into $N_1$ and $N_2$. 
\begin{align*}
\sum_{i = |L'| + 1}^{\Delta} v(W^{mod}_{i}) &= \sum_{i \in N_1} v(W^{mod}_i) + \sum_{i \in N_2}v(W^{mod}_i) \\
&\le \alpha (5\mu) + (|N_1| - \alpha)\mu(1-\gamma) + |N_2|\mu(1-\gamma) \\
&\le 5\alpha\mu + (\Delta - |L'|)\mu.
\end{align*}

Plugging this back into \eqref{eq:small-good-addition-large-good-removal-1}, we get:
\begin{align*}
5\alpha\mu + (\Delta - |L'|)\mu \ge (\Delta - |L'|) \mu + \frac{(\Delta - |L'|)v(A)}{\Delta}.
\end{align*}

Re-arranging this inequality gives us $\alpha \ge \frac{(\Delta - |L'|)v(A)}{5\Delta \mu}$. Plugging this back into \eqref{eq:small-good-addition-large-good-removal-0}, we conclude that
\begin{align*}
\logNSW(W^{mod}) &\ge \frac{1}{\Delta} \left [ \sum_{g \in L'} \ln v(g) + (\Delta - |L'|)\ln(\mu) \right ] + \frac{(\Delta - |L'|)v(A)}{5\Delta^2 \mu} - 12\gamma, 
\end{align*}
which proves the theorem.
\end{proof}

\thmgeneraladdremoval*
\begin{proof}
Similar to the previous proof, this proof does not explicitly use the large goods of the instance $\cal I^{mod}$. When we use the value $\mu$, we mean the value $\frac{v(G \setminus L)}{\Delta}$, where $L$ is the set of large goods in the instance $\cal I$.  Like all previous proofs, this one also partitions the agents and uses the concavity of the log function to lower bound the log Nash welfare of $W^{mod}$. However, because small goods are both added and removed, we analyze this in two phases: in the first phase, we analyze the effect of removing small goods, and in the second phase, we analyze the effect of adding small goods. 

Formally, let $\cal I^{int}$ be an intermediate instance that results from replacing $L$ with $L'$ and removing the small goods $R$ from the instance $\cal I$. Let $W^{int}$ denote the round robin allocation of this instance. We also use $S^{int}_i$ to denote for each agent $i$, the set $W^{int}_i \setminus L'$. Define $N_{L'}$ to be the set of agents who receive a good from the set $L'$ in $W^{int}$. Since $L'$ are the highest value goods of the instance, $N_{L'} = \{1, \dots, |L'|\}$. Let $N_S$ denote the remaining agents. We first show that if $N_S$ is non-empty,
\begin{align}
\frac1{\Delta}\sum_{i \in N_S} \ln v(W^{int}_i) \ge \frac{(\Delta - |L'|)}{\Delta} \ln \mu - \frac{2(\Delta - |L'|)v(R)}{\Delta^2 \mu} - 12\gamma. \label{eq:general-add-removal-phase-1}
\end{align}

The above inequality is true when each agent $i \in N_S$ satisfies $v(W^{int}_i) \ge (1 + \gamma)\mu$. Therefore, we assume $v(W^{int}_{\Delta})$ is strictly less than $(1 + \gamma)\mu$. Using this, we partition the agents in $N_S$ further. We define $N_1$ to be the set of agents $i$ that satisfy $v(W^{int}_i) > (1 + \gamma)\mu$. We define $N_2$ to be the remaining agents. Note that each agent $i \in N_2$ satisfies $v(W^{int}_i) \in [v(W^{int}_{\Delta}), (1 + \gamma)\mu]$. We therefore define $\alpha_i \in [0, 1]$ for each $i \in N_2$ where $v(W^{int}_i) = (1 - \alpha_i)((1 + \gamma) \mu) + \alpha_i v(W^{int}_{\Delta})$. We define $\alpha = \sum_{i \in N_2} \alpha_i$.

We can lower bound $\frac1{\Delta}\sum_{i \in N_S} \ln v(W^{int}_i)$ as follows: 

\begin{align}
\frac1{\Delta}\sum_{i \in N_S} \ln v(W^{int}_i)&\ge \frac{1}{\Delta} \left [|N_1|\ln(\mu) + \sum_{i \in N_2} \ln v(W^{int}_i) \right ]\notag \\
&\ge \frac{1}{\Delta} \left [|N_1|\ln(\mu) + (|N_2| - \alpha)\ln \mu + \alpha \ln v(W^{int}_{\Delta})\right ] \notag \\
&=  \frac{1}{\Delta} \left [(\Delta - |L'|)\ln(\mu)\right ] + \frac{\alpha}{\Delta}\ln \left (\frac{v(W^{int}_{\Delta})}{\mu} \right ) \label{eq:general-add-removal-1}
\end{align}

The key part left to analyze is the term $\frac{\alpha}{\Delta}\ln \left (\frac{v(W^{int}_{\Delta})}{\mu} \right )$. This is done using the following observations. 

\begin{obs}\label{obs:n1-upper-bound}
$|N_1| \le \gamma\Delta$, and for any agent $i \in N_S$, $v(W^{int}_i) \le 5\mu$. 
\end{obs}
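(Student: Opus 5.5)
We prove the two parts of Observation~\ref{obs:n1-upper-bound} separately. Both rely on comparing the intermediate round robin allocation $W^{int}$ with the original round robin allocation $W$ of the well-behaved instance $\cal I$.

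\textbf{Bound on $|N_1|$.} We would first prove an analogue of Observation~\ref{obs:round-robin-worse-off}. Since all goods in $L'$ exceed every small good in value, the agents $|L'|+1,\dots,\Delta$ receive no good of $L'$. Their small bundles $S^{int}_{|L'|+i}$ are therefore produced by round robin over $G\setminus(L\cup R)$ with picking sequence $(|L'|+1,\dots,\Delta,1,\dots,|L'|)$. The small bundles $(S_1,\dots,S_\Delta)$ of $W$ are produced by round robin over the superset $G\setminus L$ with picking sequence $(\ell+1,\dots,\Delta,1,\dots,\ell)$. Removing goods and comparing the $i$-th picker in each sequence, we get $v(S^{int}_{|L'|+i}) \le v(S_{j(i)})$, where $j(i)$ is the $i$-th agent of the sequence $(\ell+1,\dots,\Delta,1,\dots,\ell)$. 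The map $i\mapsto j(i)$ is injective.

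For $i\in N_S$ we have $W^{int}_i=S^{int}_i$, so an agent $i\in N_1$ satisfies $(1+\gamma)\mu < v(W^{int}_i) \le v(S_{j(i)})$. Then $S_{j(i)}$ falls outside the interval $[(1-\gamma)\mu,(1+\gamma)\mu]$. By \ref{prop:b}, at most $\gamma\Delta$ agents of $\cal I$ have this property, and injectivity of $j$ gives $|N_1|\le\gamma\Delta$.

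\textbf{Bound on $v(W^{int}_i)$.} We use the EF1 property, Fact~\ref{fact:round-robin-ef1}, in the identical-valuation instance $\cal I^{int}$:
\[
v(W^{int}_i) \le v(W^{int}_\Delta) + \max_{g\in W^{int}_i} v(g).
\]
For $i\in N_S$, the bundle $W^{int}_i$ contains only small goods of $\cal I$, so each of its goods has value at most $\frac{\Delta\mu}{\Delta-\ell}$. By \ref{prop:a}, $\frac{\Delta-\ell}{\Delta}\ge \frac1e-\gamma$, so this is at most $3\mu$. We are in the case $v(W^{int}_\Delta) < (1+\gamma)\mu$. Summing, $v(W^{int}_i) < (1+\gamma)\mu + 3\mu \le 5\mu$.

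The only delicate step is the monotonicity comparison between the two round robin runs, which have different ground sets and different starting agents. We handle it exactly as in Observations~\ref{obs:round-robin-worse-off} and~\ref{obs:round-robin-better-off}, using the fact that removing goods can only weakly decrease the value obtained by the $i$-th picker at each turn.
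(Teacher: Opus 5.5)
Your proof is correct and follows the paper's argument. The first part uses Property B together with round-robin monotonicity under removal of goods. The second part uses EF1, the bound $\frac{\Delta\mu}{\Delta-\ell}\le 3\mu$ from Property A, and the standing assumption $v(W^{int}_\Delta)<(1+\gamma)\mu$.

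You go slightly further than the paper by explicitly matching the $i$-th picker of the sequence $(|L'|+1,\dots,\Delta,1,\dots,|L'|)$ with the $i$-th picker of $(\ell+1,\dots,\Delta,1,\dots,\ell)$. The paper compresses this into ``small bundles only decrease in value after removing $R$.'' When you write $j(i)$ for $i\in N_1$, you should use the shifted index $i-|L'|$.
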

\begin{proof}
The first part follows from \ref{prop:b} and the fact that small bundles only decrease in value after removing the set of goods $R$. The second part follows from the fact that $v(W^{int}_i) \le v(W^{int}_{\Delta}) + \frac{\Delta\mu}{\Delta - \ell}$ by Fact \ref{fact:round-robin-ef1}. \ref{prop:a} implies that $\frac{\Delta\mu}{\Delta - \ell} \le 3\mu$ and we assume that $v(W^{int}_{\Delta}) \le (1+\gamma)\mu$. Therefore, $v(W^{int}_i) \le 5\mu$.
\end{proof}

\begin{obs}\label{obs:general-add-removal-w-delta-lower-bound}
$\frac{\alpha}{\Delta}\ln \left ( \frac{v(W^{int}_{\Delta})}{\mu} \right ) \ge - \frac{2(\Delta - |L'|)v(R)}{\Delta^2 \mu} - 12\gamma$.
\end{obs}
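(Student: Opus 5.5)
We will mirror the proof of Lemma \ref{lem:w-delta-lower-bound}, using the conservation identity for small goods in the intermediate instance $\cal I^{int}$:
\begin{align*}
\Delta\mu - v(R) = v\big((G\setminus L)\setminus R\big) = \sum_{i\in[\Delta]} v(S^{int}_i).
\end{align*}
We upper bound the right hand side group by group, in terms of $v(W^{int}_\Delta)$.

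\textbf{Agents in $N_{L'}$.} Each such agent has $v(S^{int}_i)\le v(W^{int}_\Delta)$, by the round robin structure exactly as in Observation \ref{obs:small-bundle-round-robin}.

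\textbf{Agents in $N_1$.} Here Observation \ref{obs:n1-upper-bound} gives $v(S^{int}_i)\le 5\mu$ and $|N_1|\le\gamma\Delta$, so this group contributes at most $5\gamma\Delta\mu$.

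\textbf{Agents in $N_2$.} By the definition of the $\alpha_i$,
\begin{align*}
\sum_{i\in N_2} v(W^{int}_i) = (|N_2|-\alpha)(1+\gamma)\mu + \alpha\, v(W^{int}_\Delta).
\end{align*}

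Combining the three groups and using $|N_{L'}|=|L'|$ and $|N_2|\le \Delta-|L'|$ gives
\begin{align*}
\Delta\mu - v(R) \le (|L'|+\alpha)\,v(W^{int}_\Delta) + (\Delta-|L'|-\alpha)(1+\gamma)\mu + 5\gamma\Delta\mu.
\end{align*}
Rearranging yields
\begin{align*}
\frac{v(W^{int}_\Delta)}{\mu} \ge 1 - \frac{v(R)/\mu + 7\gamma\Delta}{|L'|+\alpha}.
\end{align*}

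\textbf{Size of the correction term.} The hypotheses $v(R)\le |L'|\mu/100$ and $|L'|\ge 100\gamma\Delta$ give $v(R)/\mu \le |L'|/100$ and $7\gamma\Delta\le 0.07|L'|$. So the subtracted fraction is at most $0.08$, and Fact \ref{fact:identity-inequality} (or $\ln(1-t)\ge -2t$ for $t\le 1/2$) applies:
\begin{align*}
\ln\frac{v(W^{int}_\Delta)}{\mu} \ge -\frac{2\,v(R)/\mu + 14\gamma\Delta}{|L'|+\alpha}.
\end{align*}

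\textbf{Multiplying by $\alpha/\Delta$.} Since $\alpha\le |N_2|\le \Delta-|L'|$, we have
\begin{align*}
\frac{\alpha}{|L'|+\alpha}\le \frac{\Delta-|L'|}{\Delta},
\end{align*}
because $x\mapsto x/(|L'|+x)$ is increasing. Hence
\begin{align*}
\frac{\alpha}{\Delta}\ln\frac{v(W^{int}_\Delta)}{\mu} \ge -\frac{2(\Delta-|L'|)v(R)}{\Delta^2\mu} - \frac{14\gamma(\Delta-|L'|)}{\Delta}.
\end{align*}

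\textbf{Main obstacle.} The last term, bounded crudely, is $14\gamma$ rather than the claimed $12\gamma$. I would reduce it by dropping the $(1+\gamma)$ slack more carefully. For instance, use $(1+\gamma)(\Delta-|L'|-\alpha)\mu \le (\Delta-|L'|-\alpha)\mu+\gamma\Delta\mu$ and then bound the $N_1$ excess as $4\gamma\Delta\mu$ rather than $5\gamma\Delta\mu$, since those agents already have utility at least $\mu$. This gives an additive slack of at most $6\gamma\Delta$ and hence $12\gamma$ after the factor $2$. The real difficulty is ensuring the correction is normalized by $|L'|+\alpha$, which is what makes the hypothesis $|L'|\ge 100\gamma\Delta$ necessary and keeps the logarithm's argument bounded away from $0$.
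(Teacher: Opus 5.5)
Your proposal is correct and is essentially the paper's proof: the same conservation identity $\Delta\mu - v(R)=\sum_i v(S^{int}_i)$, the same group-by-group bound (using $|N_1|\le\gamma\Delta$ and $v\le 5\mu$ on $N_1$), the same use of $v(R)\le |L'|\mu/100$ and $\gamma\Delta\le |L'|/100$ with Fact~\ref{fact:identity-inequality}, and the same monotonicity bound $\alpha/(|L'|+\alpha)\le(\Delta-|L'|)/\Delta$. The ``main obstacle'' is not real, and no refinement of the $N_1$ term is needed: your own displayed inequality already gives slack $6\gamma\Delta$ rather than $7\gamma\Delta$, since $\Delta\mu-(\Delta-|L'|-\alpha)(1+\gamma)\mu-5\gamma\Delta\mu\ge(|L'|+\alpha)\mu-6\gamma\Delta\mu$, which is exactly the paper's accounting and yields the claimed $12\gamma$.
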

\begin{proof}
We lower bound $v(W^{int}_{\Delta})$ by using the fact that $\Delta \mu - v(R) = \sum_{i \in [\Delta]}v(S^{int}_i)$. This argument mimics Observation \ref{obs:w-mod-delta-lower-bound}.
\begin{align*}
\Delta \mu - v(R) &= \sum_{i \in [\Delta]}v(S^{int}_i) \\
&\le \sum_{i \in N_{L'}} v(S^{int}_i) + \sum_{i \in N_{1}} v(W^{int}_i) + \sum_{i \in N_2} v(W^{int}_i) \\
&\le |L'|v(W^{int}_{\Delta}) + |N_1|\left (5\mu \right ) + (|N_2| - \alpha)(1 + \gamma)\mu + \alpha v(W^{int}_{\Delta})
\end{align*}

In the final inequality, we upper bound $v(S^{int}_i)$ using $v(W^{int}_{\Delta})$ for each $i \in N_{L'}$. We also use Observation \ref{obs:n1-upper-bound} to upper bound $v(W^{int}_i)$ for each $i \in N_1$. This can be simplified further by utilitizing $|N_1| \le \gamma\Delta$. 
\begin{align*}
\Delta \mu - v(R) &\le |{L'}|v(W^{int}_{\Delta}) + \gamma\Delta\left (5\mu \right ) + (|N_2| - \alpha)(1 + \gamma)\mu + \alpha v(W^{int}_{\Delta}) \\
&\le (|{L'}| + \alpha)v(W^{int}_{\Delta}) + (|N_2|-\alpha + 6\gamma\Delta)\mu.
\end{align*}

Re-arranging terms and using $\Delta \ge |{L'}| + |N_2|$, we get
\begin{align*}
\frac{v(W^{int}_{\Delta})}{\mu} \ge \left (1 - \frac{v(R) + 6\gamma\Delta\mu}{(|{L'}| + \alpha)\mu} \right )
\end{align*}

From the statement of the theorem, $v(R) \le \frac{|L'|\mu}{100}$ and $\gamma \Delta \le \frac{|L'|}{100}$. Therefore, using Fact \ref{fact:identity-inequality}, we can simplify the above expression to:
\begin{align*}
\frac{v(W^{int}_{\Delta})}{\mu} \ge e^{\frac{-2(v(R) + 6\gamma\Delta\mu)}{(|L'| + \alpha)\mu}}.
\end{align*}

Using this, the observation follows:
\begin{align*}
\frac{\alpha}{\Delta}\ln \left ( \frac{v(W^{int}_{\Delta})}{\mu} \right ) &\ge \frac{\alpha}{\Delta} \left ( \frac{-2(v(R) + 6\gamma\Delta\mu)}{(|L'| + \alpha)\mu} \right ) \\
&\ge -\frac{2(\Delta - |L'|)v(R)}{\Delta^2 \mu} - 12\gamma.
\end{align*}
The final inequality uses the fact that $\alpha \le \Delta - |L'|$. 
\end{proof}

Plugging in Observation \ref{obs:general-add-removal-w-delta-lower-bound} into \eqref{eq:general-add-removal-1} proves \eqref{eq:general-add-removal-phase-1}. 

We now move from $\cal I^{int}$ to $\cal I^{mod}$ by adding the set of small goods $A$ to the instance. For each agent $i$, we define $S^{mod}_i$ as the set of goods $W^{mod}_i \setminus L'$. By adding the set of small goods $A$, the value of each agent's bundle can only increase. Exploiting this, we define the value $\beta_i \ge 0$ for each agent $i \in N_S$ as the value $v(W^{mod}_i) - v(W^{int}_i)$. 

First, consider the case when $|L'| = \Delta$; that is, $N_S$ is empty. Then the log Nash welfare of $W^{mod}$ can be lower bounded as
\begin{align*}
\logNSW(W^{mod}) \ge \frac1\Delta \left [\sum_{g \in L'} \ln v(g) \right ].
\end{align*}
This proves the theorem for the case where $|L'| = \Delta$. We therefore assume that $|L'| < \Delta$.

If $v(W^{mod}_{\Delta}) \ge 2\mu$, then each agent who does not receive a good in $L'$ receives a utility of at least $2\mu$. This allows us to lower bound the Nash welfare as
\begin{align*}
\logNSW(W^{mod}) &\ge \frac1{\Delta} \left [\sum_{g \in L'}\ln v(g) + (\Delta - |L'|)\ln(2\mu) \right ] \\
&\ge \frac1{\Delta} \left [\sum_{g \in L'}\ln v(g) + (\Delta - |L'|)\ln(\mu) \right ] + \frac{\Delta - |L'|}{2\Delta}.
\end{align*}

Note that $v(A) \le \Delta \mu$, which implies $1 \ge \frac{v(A) - v(R)}{\Delta \mu}$. Therefore, the above inequality can be further lower bounded as
\begin{align*}
\logNSW(W^{mod}) &\ge \frac1{\Delta} \left [\sum_{g \in L'}\ln v(g) + (\Delta - |L'|)\ln(\mu) \right ] + \frac{(\Delta - |L'|)(v(A)-v(R))}{4\Delta^2 \mu},
\end{align*}
which proves the Theorem. We therefore assume $v(W^{mod}_{\Delta}) < 2\mu$. Using this assumption, we make a simple observation.

\begin{obs}\label{obs:beta-upper-bound}
If $v(W^{mod}_{\Delta}) < 2\mu$, then for any $i \in N_S$, $v(W^{mod}_i) \le 5\mu$. Crucially, this implies that $\beta_i \le 5\mu$ for any $i \in N_S$.
\end{obs}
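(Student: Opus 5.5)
The plan is to mirror Observation \ref{obs:value-upper-bound} from the proof of Theorem \ref{thm:small-good-addition-large-good-removal}, applied now to the instance $\cal I^{mod}$ and its round robin allocation $W^{mod}$. The first step is to fix an agent $i \in N_S$. Such an agent receives no good from $L'$, so $i > |L'|$ and $W^{mod}_i$ consists entirely of small goods. These are goods of $G \setminus L$ that survived the removal of $R$, together with possibly some goods of $A$.

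The second step applies Fact \ref{fact:round-robin-ef1} (EF1 of round robin under identical additive valuations) to the pair $\Delta, i$. This gives
\[
v(W^{mod}_i) \le v(W^{mod}_{\Delta}) + \max_{g \in W^{mod}_i} v(g).
\]
Every good in $W^{mod}_i$ is either a small good of the original instance $\cal I$ or a good of $A$. By Definition \ref{def:large-goods} and the hypothesis on $A$, each such good has value at most $\frac{\Delta\mu}{\Delta - \ell}$. \ref{prop:a} gives $\frac{\Delta - \ell}{\Delta} \ge \frac1e - \gamma \ge \frac13$, so this maximum is at most $3\mu$. Combining this with the assumption $v(W^{mod}_{\Delta}) < 2\mu$ yields $v(W^{mod}_i) < 5\mu$.

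The third step gives the consequence for $\beta_i$. By definition $\beta_i = v(W^{mod}_i) - v(W^{int}_i)$, and $v(W^{int}_i) \ge 0$, so $\beta_i \le v(W^{mod}_i) \le 5\mu$. No monotonicity comparison between $W^{int}$ and $W^{mod}$ is needed for this upper bound; nonnegativity of valuations suffices.

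There is essentially no obstacle. The one point requiring care is confirming that agents in $N_S$ hold no good from $L'$ in $W^{mod}$, and not just in $W^{int}$. This holds because every good of $L'$ exceeds $\frac{\Delta\mu}{\Delta-\ell}$ and is therefore more valuable than every remaining small good. Round robin with identical valuations thus still hands the $|L'| \le \Delta$ goods of $L'$ to agents $1,\dots,|L'|$ in the first round, exactly as in the intermediate instance. Once this is settled, the bound on the maximum good value in $W^{mod}_i$ goes through.
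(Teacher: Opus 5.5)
Your proof is correct and matches the paper's argument: the EF1 bound from Fact~\ref{fact:round-robin-ef1}, combined with $v(W^{mod}_\Delta) < 2\mu$ and $\max_{g \in W^{mod}_i} v(g) \le \frac{\Delta\mu}{\Delta-\ell} \le 3\mu$ via \ref{prop:a}. You also spell out two steps the paper leaves implicit: that agents in $N_S$ still hold no good of $L'$ in $W^{mod}$, and that $\beta_i \le v(W^{mod}_i)$ because $v(W^{int}_i) \ge 0$.
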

\begin{proof}
Via Fact \ref{fact:round-robin-ef1}, $v(W^{mod}_i) \le v(W^{mod}_{\Delta}) + \max_{g \in W^{mod}_i} v(g)$. The first term is at most $2\mu$ and the second term is at most $\frac{\Delta \mu}{\Delta - \ell} \le 3\mu$. 
\end{proof}

We can now lower bound the log Nash welfare of $W^{mod}$. 
\begin{align*}
\logNSW(W^{mod}) &\ge \frac1{\Delta} \left [ \sum_{g \in L'} \ln v(g) + \sum_{i \in N_S} \ln (v(W^{mod}_i))\right ] \\
&= \frac1{\Delta} \left [ \sum_{g \in L'} \ln v(g) + \sum_{i \in N_S} \ln (v(W^{int}_i) + \beta_i)\right ] \\
&\ge \frac1{\Delta} \left [ \sum_{g \in L'} \ln v(g) + \sum_{i \in N_S} \ln \left (v(W^{int}_i) \left (1+ \frac{\beta_i}{5\mu}\right ) \right )\right ].
\end{align*}

In the last inequality, we use $v(W^{int}_i)\le 5\mu$ (Observation \ref{obs:n1-upper-bound}). Since $\beta_i \le 5\mu$, we can lower bound $\ln \left (1 + \frac{\beta_i}{5\mu} \right )$ using $\frac{\beta_i}{10\mu}$ (Fact \ref{fact:identity-inequality}). This simplifies the above expression to:
\begin{align}
\logNSW(W^{mod}) \ge \frac1{\Delta} \left [ \sum_{g \in L'} \ln v(g) + \sum_{i \in N_S} \ln \left (v(W^{int}_i) \right )\right ] + \frac{\sum_{i \in N_S} \beta_i}{10\Delta\mu}. \label{eq:general-add-removal-phase-2}
\end{align}

\eqref{eq:general-add-removal-phase-1} lowerbounds $\sum_{i \in N_S} \ln \left (v(W^{int}_i) \right )$. The following observation lower bounds $\sum_{i \in N_S} \beta_i$.
\begin{obs}\label{obs:beta-lower-bound}
$\sum_{i \in N_S} \beta_i \ge \frac{(\Delta - |L'|)(v(A) - v(R))}{\Delta} - 6\gamma\Delta\mu$.
\end{obs}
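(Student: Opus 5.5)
Since $W^{mod}$ is the round robin allocation of $\cal I^{mod}$, whose goods are $L' \cup (G\setminus (L\cup R)) \cup A$, and the goods in $L'$ are the $|L'|$ most valuable, the agents in $N_{L'}=\{1,\dots,|L'|\}$ each receive one good of $L'$ in $W^{mod}$ as well. The small parts $(S^{mod}_1,\dots,S^{mod}_\Delta)$ therefore form the round robin allocation of the small goods $(G\setminus(L\cup R))\cup A$ under the picking sequence $(|L'|+1,\dots,\Delta,1,\dots,|L'|)$. By the argument of Observation \ref{obs:small-bundle-round-robin}, the agents in $N_S$ hold the $\Delta-|L'|$ most valuable small bundles. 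Averaging then gives
\begin{align*}
\sum_{i\in N_S} v(W^{mod}_i) = \sum_{i \in N_S} v(S^{mod}_i) \ge \frac{\Delta-|L'|}{\Delta}\bigl(\Delta\mu - v(R) + v(A)\bigr),
\end{align*}
which mirrors \eqref{eq:small-good-addition-large-good-removal-1}.

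Next I need an upper bound on $\sum_{i\in N_S} v(W^{int}_i)$. I will use the partition $N_S = N_1\cup N_2$ from the first phase. If we are in the trivial case where every agent of $N_S$ has $v(W^{int}_i)\ge(1+\gamma)\mu$, I instead use only Observation \ref{obs:n1-upper-bound} together with the fact that the total small value in $\cal I^{int}$ is $\Delta\mu - v(R)$. In the main case, Observation \ref{obs:n1-upper-bound} gives $|N_1|\le\gamma\Delta$ and $v(W^{int}_i)\le 5\mu$, and every $i\in N_2$ has $v(W^{int}_i)\le(1+\gamma)\mu$. Hence
\begin{align*}
\sum_{i\in N_S} v(W^{int}_i) \le 5\gamma\Delta\mu + (\Delta-|L'|)(1+\gamma)\mu \le (\Delta-|L'|)\mu + 6\gamma\Delta\mu.
\end{align*}

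Subtracting the two estimates gives
\begin{align*}
\sum_{i\in N_S}\beta_i \ge \frac{(\Delta-|L'|)(v(A)-v(R))}{\Delta} - 6\gamma\Delta\mu,
\end{align*}
which is the claim, using $\frac{\Delta-|L'|}{\Delta}\Delta\mu=(\Delta-|L'|)\mu$.

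The main obstacle is the trivial case in which $N_2$ may be empty and $v(W^{int}_\Delta)\ge(1+\gamma)\mu$, where the bound $(1+\gamma)\mu$ on $N_2$ is unavailable. There I would instead bound $\sum_{i\in N_S}v(W^{int}_i)$ by the total small value $\Delta\mu - v(R)$ minus the small value held by $N_{L'}$. Since $N_{L'}$ holds the smallest small bundles, that value is at least $\frac{|L'|}{\Delta}(\Delta\mu-v(R))$, so
\begin{align*}
\sum_{i\in N_S} v(W^{int}_i) \le (\Delta\mu-v(R)) - \frac{|L'|}{\Delta}\bigl(\Delta\mu-v(R)\bigr) = \frac{\Delta-|L'|}{\Delta}\bigl(\Delta\mu - v(R)\bigr).
\end{align*}
This gives the claim without the $6\gamma\Delta\mu$ slack. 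If this comparison with the smallest bundles proves delicate, I would fall back on $|N_1|\le\gamma\Delta$, which holds in every case by Property B, and recover the same $6\gamma\Delta\mu$ loss as in the main case.
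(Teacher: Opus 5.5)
Your main-case argument is exactly the paper's. You lower-bound $\sum_{i\in N_S}v(W^{mod}_i)$ by averaging over the top $\Delta-|L'|$ small bundles, upper-bound $\sum_{i\in N_S}v(W^{int}_i)$ by $5\gamma\Delta\mu+(\Delta-|L'|)(1+\gamma)\mu$ via the $N_1/N_2$ split and Observation~\ref{obs:n1-upper-bound}, and subtract. The paper silently carries the phase-one assumption $v(W^{int}_\Delta)<(1+\gamma)\mu$ into this observation, so you are right that the complementary case needs a word. Your treatment of it, however, is wrong. Since $N_{L'}$ holds the \emph{smallest} $|L'|$ small bundles of $W^{int}$, their total is at most, not at least, $\frac{|L'|}{\Delta}(\Delta\mu-v(R))$. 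What actually holds is $\sum_{i\in N_S}v(W^{int}_i)\ge\frac{\Delta-|L'|}{\Delta}(\Delta\mu-v(R))$, the reverse of your display, so the ``claim without slack'' does not follow. Your fallback is also incomplete as stated. Recovering the $6\gamma\Delta\mu$ loss needs the per-agent bound $v(W^{int}_i)\le 5\mu$, and Observation~\ref{obs:n1-upper-bound} derives that bound from $v(W^{int}_\Delta)\le(1+\gamma)\mu$, which is precisely what fails in this case.

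The correct fix is short. Suppose $v(W^{int}_\Delta)>(1+\gamma)\mu$. Agent $\Delta$ holds the least valuable bundle among $N_S$, so every agent of $N_S$ lies in $N_1$. The bound $|N_1|\le\gamma\Delta$ uses only Property~B and the position-wise comparison of small bundles, not the assumption, so $\Delta-|L'|=|N_S|\le\gamma\Delta$. Using $v(A)\le\Delta\mu$ and $v(R)\ge 0$, the right-hand side of the observation is then at most $(\Delta-|L'|)\mu-6\gamma\Delta\mu\le-5\gamma\Delta\mu<0$. Every $\beta_i\ge 0$, because adding goods can only increase each round robin bundle, so the observation holds trivially. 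When $v(W^{int}_\Delta)=(1+\gamma)\mu$ exactly, your main-case argument goes through unchanged, since EF1 still gives $v(W^{int}_i)\le(1+\gamma)\mu+3\mu\le 5\mu$.
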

\begin{proof}
By the definition of the round robin algorithm, $\{S^{mod}_{|L'|+1}, \dots, S^{mod}_{\Delta}\}$ are the $(\Delta - |L'|)$ highest valued bundles of $\{S^{mod}_1, \dots, S^{mod}_\Delta\}$. Therefore, 
\begin{align*}
\sum_{i \in N_S}v(W^{mod}_i) \ge \frac{\Delta - |L'|}{\Delta} \sum_{i \in [\Delta]} v(S^{mod}_i) \ge \frac{\Delta - |L'|}{\Delta} \left [\Delta \mu + v(A) - v(R) \right ].
\end{align*}

We can upper bound the left hand side as follows:
\begin{align*}
\sum_{i \in N_S} v(W^{mod}_i) &= \sum_{i \in N_S}v(W^{int}_i) + \sum_{i \in N_S} \beta_i \\
&= \sum_{i \in N_1} v(W^{int}_i) + \sum_{i \in N_2} v(W^{int}_i) + \sum_{i \in N_S}\beta_i \\
&\le 5|N_1|\mu + |N_2|\mu(1 + \gamma) + \sum_{i \in N_S}\beta_i \\
&\le 5\gamma\Delta\mu + (\Delta - |L'|)\mu(1 + \gamma) + \sum_{i \in N_S}\beta_i.
\end{align*}

Here, we use both the grouping of the agents in $N_S$ and the properties of $N_1$ given by Observation \ref{obs:n1-upper-bound}. This implies
\begin{align*}
5\gamma\Delta\mu + (\Delta - |L'|)\mu(1 + \gamma) + \sum_{i \in N_S}\beta_i \ge \frac{(\Delta - |L'|)(\Delta\mu + v(A) - v(R))}{\Delta}.
\end{align*}

Re-arranging terms gives us 
\begin{align*}
\sum_{i \in N_S}\beta_i \ge \frac{(\Delta - |L'|)(v(A) - v(R))}{\Delta} - 6\gamma\Delta\mu.
\end{align*}
This proves the Observation.
\end{proof}

Plugging in Observation \ref{obs:beta-lower-bound} and \eqref{eq:general-add-removal-phase-1} into \eqref{eq:general-add-removal-phase-2}, we prove the Theorem.
\end{proof}

\thmnonwellbehavedremoval*
\begin{proof}
Let $W$ be the round robin allocation in the instance $\cal I$. We show that 
\begin{align*}
\logNSW(W^{mod}) \ge \logNSW(W) - 2.
\end{align*}

Combining this with the fact that $\frac{\NSW(\cal I)}{\NSW(W)} \le e^{1/e}$ (Lemma \ref{lem:barman-eonebye}), this proves the theorem. 

Let the goods of $\cal I^{mod}$ be the set $\{g'_1, g'_2, \dots, g'_{m'}\}$. Let the classes of goods $G_1, \dots, G_{k'}$ be ordered such that each good in $G_1$ has value weakly greater than each good in $G_2$, and so on. 
To compare the allocation $W^{mod}$ and $W$, we compare the allocated bundles for each agent. Fix some agent $i \in [\Delta]$. 
We have $W_i = \{g_i, g_{\Delta + i}, g_{2\Delta + i}, \dots\}$ for each agent $i$ by our notation that $G = \{g_1, \dots, g_m\}$ is the set of goods of the original instance $\cal I$. Similarly, assuming $v(g'_1) \ge v(g'_2) \ge \dots \ge v(g'_{m'})$, we have $W^{mod}_i = \{g'_{i}, g'_{\Delta + i}, g'_{2\Delta + i} \dots\}$ for each agent $i$ by the definition of the round robin allocation (assuming ties are broken in favor of lower index). We prove the following claims to relate the two bundles.

\begin{obs}\label{obs:iprime-equals-i}
$v(g'_{i}) \ge v(g_i)$.
\end{obs}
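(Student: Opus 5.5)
We must show $v(g'_i) \ge v(g_i)$, where $g'_i$ is the $i$-th most valuable good of $\cal I^{mod}$ and $g_i$ is the $i$-th most valuable good of $\cal I$, for $i \in [\Delta]$. The plan is a counting argument: it suffices to show that $\cal I^{mod}$ still contains at least $i$ goods of value at least $v(g_i)$. Then the $i$-th largest value in $\cal I^{mod}$ is at least $v(g_i)$.

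Fix $i \le \Delta$ and let $t = v(g_i)$. Let $H = \{g \in G : v(g) \ge t\}$, the goods of $\cal I$ with value at least $t$; clearly $|H| \ge i$. We need $|H \setminus R| \ge i$, so we bound how many goods of $H$ can be removed. We group by classes. Each class $G_k$ with $G_k \cap H \neq \emptyset$ has common value at least $t$, so $G_k \subseteq H$. The removal constraint gives
\begin{align*}
|G_k \setminus R| \ge |G_k| - \tfrac12 |G_k \setminus G^{\Delta}| \ge |G_k \cap G^{\Delta}|,
\end{align*}
since $|G_k| - \tfrac12|G_k\setminus G^\Delta| = |G_k\cap G^\Delta| + \tfrac12|G_k\setminus G^\Delta|$. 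Summing over these classes yields $|H \setminus R| \ge |H \cap G^{\Delta}|$.

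It therefore remains to show $|H \cap G^{\Delta}| \ge i$ when $|H| < \Delta$, or else handle the case $|H| \ge \Delta$ directly. Since every good of $G^{\Delta}$ is weakly more valuable than every good outside it, $G^{\Delta}$ consists of $\Delta$ goods of maximal value up to ties. This is the delicate point, and I expect it to be the main obstacle: ties at the threshold value could in principle place equal-valued goods outside $G^{\Delta}$. That is harmless, though, because we compare only values.

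We split into two cases. If $|H| \ge \Delta$, then every good of $G^{\Delta}$ has value at least $t$. Indeed, if some good of $G^\Delta$ had value $< t$, then all $|H| \ge \Delta$ goods of $H$ would lie in $G^{\Delta}$ by the dominance property, which is too many. Hence $|H\cap G^\Delta| = \Delta \ge i$. If $|H| < \Delta$, then every good of $H$ must lie in $G^{\Delta}$. Otherwise some $h \in H \setminus G^{\Delta}$ forces all $\Delta$ goods of $G^{\Delta}$ to have value $\ge v(h) \ge t$, so $G^\Delta \subseteq H$ and $|H| \ge \Delta$, a contradiction. Hence $|H\cap G^\Delta| = |H| \ge i$.

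In both cases $|H \setminus R| \ge i$, so $\cal I^{mod}$ has at least $i$ goods of value at least $v(g_i)$, which gives $v(g'_i) \ge v(g_i)$.
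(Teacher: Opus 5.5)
Your proof is correct and is essentially the paper's argument. Both count the surviving goods at or above the threshold value, using that $|R\cap G_k| \le \frac12|G_k\setminus G^{\Delta}|$ leaves at least $|G_k\cap G^{\Delta}|$ goods of each class, and that $G^{\Delta}$ contains every good strictly above the cutoff.

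The paper splits into two cases. If $v(g_i) > v(g_{\Delta+1})$, it notes that nothing is removed from those classes. If $v(g_i) = v(g_{\Delta+1})$, it counts the survivors of the tie class to get $v(g'_\Delta)\ge v(g_i)$. Your single per-class inequality handles both cases at once. It also proves the structural facts about $G^{\Delta}$ that the paper only asserts, and it does not need $g_{\Delta+1}$ to exist.
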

\begin{proof}
Let $\widetilde{G}$ be the set of goods which have the same value as $g_{\Delta + 1}$. Let $G'_{1}, \dots, G'_{t}$ be the classes whose goods are present in $\widetilde{G}$. The classes $G'_{1}, \dots, G'_{t}$ are the only classes which can have goods in $G^{\Delta}$ and outside $G^{\Delta}$. All the classes whose goods have value greater than $v(g_{\Delta + 1})$ are entirely contained within $G^{\Delta}$ and therefore, we do not remove any goods from these classes. 

This implies if $v(g_i) > v(g_{\Delta + 1})$ we have $v(g'_i) \ge v(g_i)$. If $g_i \in \widetilde{G}$, then $\widetilde{G}$ intersects with $G^{\Delta}$, and the number of goods we remove from $\widetilde{G}$ is at most $\frac12\left (|\widetilde{G}| - |\widetilde{G} \cap G^{\Delta}| \right )$. This means that after removal, $g'_{\Delta}$ is a good in $\widetilde{G}$ since there are at least $|\widetilde{G} \cap G^{\Delta}|$ goods in the set $\widetilde{G}$ which are not removed. This implies $v(g'_i) \ge v(g'_{\Delta}) = v(g_i)$.  
\end{proof}
\begin{obs}\label{obs:tdelta-bound}
For each integer $t\ge 0$, $v(g'_{t\Delta + i}) \ge v(g_{(2t+1)\Delta + i})$.
\end{obs}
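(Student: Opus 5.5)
The plan is a counting argument. The point is that the removal constraint lets us delete at most half of each class outside the protected top-$\Delta$ set $G^{\Delta}$. So the number of surviving goods that are at least as valuable as a given threshold is at least half of (the original count plus $\Delta$).

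Fix $t \ge 0$ and set $j = t\Delta + i$ and $p = (2t+1)\Delta + i$. If $p > m$, I interpret $v(g_p) = 0$, and the inequality is trivial. Otherwise, let $U$ be the set of all goods of $\cal I$ with value at least $v(g_p)$. Then $\{g_1,\dots,g_p\} \subseteq U$, so $|U| \ge p$.

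Two observations about $U$ are needed.
\begin{itemize}
\item Since all goods in a class have the same value, $U$ is a union of whole classes $G_k$.
\item Since $p \ge \Delta$ and every good of $G^{\Delta}$ is weakly more valuable than every good outside $G^{\Delta}$, we have $G^{\Delta} \subseteq U$. Indeed, if some good $h \in G^{\Delta}$ had $v(h) < v(g_p)$, then the $p \ge \Delta$ goods $g_1,\dots,g_p$, all of value at least $v(g_p) > v(h)$, would each have to lie in $G^{\Delta}$. Together with $h$ this gives more than $\Delta$ goods in $G^{\Delta}$, a contradiction.
\end{itemize}

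Next I count the goods of $\cal I^{mod}$ with value at least $v(g_p)$. These are exactly the survivors of the classes contained in $U$. Using $|R \cap G_k| \le \frac12 |G_k \setminus G^{\Delta}|$ for each such class, the number of survivors is at least
\begin{align*}
\sum_{k : G_k \subseteq U} \left(|G_k| - \tfrac12 |G_k \setminus G^{\Delta}|\right) = |U| - \tfrac12 |U \setminus G^{\Delta}| = \tfrac12 |U| + \tfrac12 |U \cap G^{\Delta}| \ge \tfrac12 (p + \Delta).
\end{align*}
The last step uses $|U| \ge p$ and $|U \cap G^{\Delta}| = |G^{\Delta}| = \Delta$.

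Now $\frac12(p+\Delta) = t\Delta + \Delta + \frac{i}{2} \ge t\Delta + i = j$, because $i \le \Delta$ gives $\Delta + \frac{i}{2} \ge i$. Hence $\cal I^{mod}$ contains at least $j$ goods of value at least $v(g_p)$. Since $g'_1,\dots,g'_{m'}$ are sorted in non-increasing order, the $j$-th good satisfies $v(g'_j) \ge v(g_p)$, which is exactly $v(g'_{t\Delta+i}) \ge v(g_{(2t+1)\Delta+i})$.

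The only delicate point is the second bullet, the containment $G^{\Delta} \subseteq U$ when there are ties at the value boundary of $G^{\Delta}$. The argument given there handles ties, because it only uses the weak comparison between goods in and out of $G^{\Delta}$. Ties among the $g'$ are irrelevant, since only values are compared.
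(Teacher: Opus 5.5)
Your proof is correct and uses the same counting argument as the paper: count the survivors among goods at least as valuable as $g_{(2t+1)\Delta+i}$. The paper uses only the cruder bound $|R \cap G_k| \le \frac12 |G_k|$, which already leaves at least $\frac12\bigl((2t+1)\Delta+i\bigr) \ge t\Delta+i$ survivors because $i \le \Delta$; so your step showing $G^{\Delta} \subseteq U$ is valid but not needed.
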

\begin{proof}
Let $G_{t^*}$ be the class that $g_{(2t+1)\Delta + i}$ belongs to. This implies there are at least $(2t+1)\Delta + i$ goods in the first $t^*$ classes. For each of these classes $G_{t'}$, we remove at most $\frac{|G_{t'}|}{2}$ goods. Therefore, the number of goods from the first $t^*$ classes that are {\em not} removed is at least $\frac12\left((2t + 1)\Delta + i \right ) \ge t\Delta + i$. This implies the $g'_{t\Delta + i}$ belongs to one of the first $t^*$ classes. Therefore, $v(g'_{t\Delta + i}) \ge v(g_{(2t+1)\Delta + i})$.
\end{proof}

We can use Observations \ref{obs:iprime-equals-i} and \ref{obs:tdelta-bound} to upper bound $v(W_i)$.
Using the observations, we can lower bound $v(W^{mod}_i)$ as follows:
\begin{align*}
v(W_i) &= v(g_i) + v(g_{\Delta + i}) + v(g_{2\Delta + i}) + \dots \\
&\le v(g'_i) + v(g_{\Delta + i}) + v(g_{2\Delta + i}) + \dots \\
&= v(g'_i) + \left ( v(g_{\Delta + i}) + v(g_{2\Delta + i}) \right ) + \left ( v(g_{3\Delta + i}) + v(g_{4\Delta + i}) \right ) + \left ( v(g_{5\Delta + i}) + v(g_{6\Delta + i}) \right ) + \dots \\
&\le v(g'_i) + \left (v(g'_i) + v(g'_i) \right ) + \left (v(g'_{\Delta + i}) + v(g'_{\Delta + i}) \right ) +\left (v(g'_{2\Delta + i}) + v(g'_{2\Delta + i}) \right ) + \dots \\
&\le 3v(g'_i) + 3v(g'_{\Delta + i}) + 3v(g'_{2\Delta + i}) + \dots \\
&\le 3v(W^{mod}_i)
\end{align*}
In the first inequality, we used Observation \ref{obs:iprime-equals-i} and in the second inequality, we used Observation \ref{obs:tdelta-bound}.

We have shown that for each agent $i$, $v(W_i)\le 3v(W^{mod}_i)$. This immediately gives us
\begin{align*}
\logNSW(W^{mod}) = \frac1{\Delta} \sum_{i \in [\Delta]} \ln(v(W^{mod}_i)) \ge \frac1{\Delta} \sum_{i \in [\Delta]} \left (\ln v(W_i) - \ln(3) \right ) \ge \frac1{\Delta} \sum_{i \in [\Delta]}  \ln v(W_i) - 2.
\end{align*}
This completes the proof.
\end{proof}

\section{Missing Proofs from Section \ref{sec:st-rounding}}\label{apdx:st-rounding}

\lemfractionalnashlowerbound*
\begin{proof}
Let $\cal I^{i, \x}(\Delta)$ be the artificial instance of agent $i$ for some $\Delta$ that is feasible with respect to both $\x$ and $y$. Let $W$ be the round robin allocation of this instance. Theorem \ref{thm:round-robin-nash-lower-bound} proves the following:
\begin{align*}
\logNSW(W) \ge  \frac{1}{\Delta}\left [ \sum_{g \in L} \ln{v_i(g)} + (\Delta - \ell) \ln\left ( \frac{\Delta\mu}{\Delta - \ell} \right ) \right ] - \frac{1}{e}.
\end{align*}

Combining this with Lemma \ref{lem:w-worst}, we get:
\begin{align}
\ST(\x, i) \ge \frac{1}{\Delta}\left [ \sum_{g \in L} \ln{v_i(g)} + (\Delta - \ell) \ln\left ( \frac{\Delta\mu}{\Delta - \ell} \right ) \right ] - \frac{1}{e}. \label{eq:fractional-nash-lower-bound-1}
\end{align}

We use Lemma \ref{lem:large-good-connection} which states that $\ell = \Delta \ell_i$, $\mu^i = \mu$ and $L$ consists of $\Delta x_{ig}$ copies of each $g \in L_i$. This simplifies \eqref{eq:fractional-nash-lower-bound-1} to
\begin{align*}
\ST(\x, i) \ge \sum_{g \in L_i} x_{ig}\ln{v_i(g)} + (1 - \ell_i) \ln\left ( \frac{\mu^i}{1 - \ell_i} \right )  - \frac{1}{e}.
\end{align*}

This proves the first inequality. To prove the second inequality, we use Lemma \ref{lem:nash-upper-bound} which states that
\begin{align*}
\logNSW(\cal I^{i, \x}(\Delta)) \le \frac{1}{\Delta}\left [ \sum_{g \in L} \ln{v_i(g)} + (\Delta - \ell) \ln\left ( \frac{\Delta\mu}{\Delta - \ell} \right ) \right ]
\end{align*}

Plugging this into \eqref{eq:fractional-nash-lower-bound-1} gives us
\begin{align*}
\ST(\x, i) \ge \logNSW(\cal I^{i, \x}(\Delta)) - \frac{1}{e}.
\end{align*}

Finally we use Lemma \ref{lem:y-substitute} to lower bound $\logNSW(\cal I^{i, \x}(\Delta))$ using $\sum_{S \subseteq G} y_{i, S}\ln v_i(S)$. This proves the lemma. 
\end{proof}

\lemfractionalpropertya*
\begin{proof}
Let $\cal I^{i, \x}(\Delta)$ be the artificial instance for agent $i$ for some $\Delta$ feasible with respect to $\x$. Let $\ell$ be the number of large goods of the instance $\cal I^{i, \x}(\Delta)$. Since $i$ is well-behaved in $\x$, $\cal I^{i, \x}(\Delta)$ must be well-behaved, which implies $\frac{\ell}{\Delta} \in \left [1 - \frac1e - \gamma, 1 - \frac1e+\gamma \right ]$ (\ref{prop:a}). The lemma follows by noting that $\ell_i = \frac{\ell}{\Delta}$ (Lemma \ref{lem:large-good-connection}). 
\end{proof}

\lemfractionalnashupperbound*
\begin{proof}
Let $\cal I^{i, \x}(\Delta)$ be the artificial instance of agent $i$ for some $\Delta$ that is feasible with respect to $\x$ and $y$. Lemma \ref{lem:nash-upper-bound} shows the following
\begin{align*}
\logNSW(\cal I^{i, \x}(\Delta)) \le \frac{1}{\Delta} \left [\sum_{g \in L}\ln v_i(g) + (\Delta - \ell)\ln \mu \right ] + \frac1e.
\end{align*}

We use Lemma \ref{lem:large-good-connection} which states that $\ell = \Delta \ell_i$, $\mu^i = \mu$ and $L$ consists of $\Delta x_{ig}$ copies of each $g \in L_i$. This simplifies the above inequality to
\begin{align*}
\logNSW(\cal I^{i, \x}(\Delta)) \le \sum_{g \in L_i} x_{ig}\ln{v_i(g)} + (1 - \ell_i) \ln \mu^i   + \frac{1}{e}. 
\end{align*}

Finally, we use Lemma \ref{lem:y-substitute} to lower bound $\logNSW(\cal I^{i, \x}(\Delta))$ using $\sum_{S \subseteq G} y_{i, S}\ln v_i(S)$, and prove the lemma.
\end{proof}

\section{Missing Proofs from Section \ref{sec:main-algo}}\label{apdx:main-algo}

\lemdependentroundingexpectation*
\begin{proof}
We show that $\E[z_{ig, k+1}] = \E[z_{ig, k}]$ for all $k \ge 1$. This implies that $\E[z_{ig, \infty}] = \E[z_{ig, 1}]$ and since $\E[z_{ig, 1}] = x_{ig}$, the lemma is proved. 

We consider two events at iteration $k$. 

\textbf{Event A:} The edge $(i, g)$ is not in the path $P$ or the cycle $C$ chosen by the algorithm. In this case, $\E[z_{ig, k+1}] = \E[z_{ig, k}]$ since it remains unchanged during iteration $k$. 

\textbf{Event B:} The edge $(i, g)$ is in the path $P$ or cycle $C$ chosen by the algorithm. In this case, assume without loss of generality that $(i, g) \in M_1$. The proof for the other case is similar. We can compute $\E[z_{ig, k+1}]$ using the definition of our update rule:

\begin{align*}
\E[z_{ig, k+1}] = \E\left [ \frac{\beta^*}{\beta^* + \alpha^*}(z_{ig, k} + \alpha^*) + \frac{\alpha^*}{\beta^* + \alpha^*}(z_{ig, k} - \beta^*) \right ] = \E[z_{ig, k}].
\end{align*}

Since the two events $A$ and $B$ cover all possible cases, this proves the lemma.
\end{proof}

\lemdependentroundingcorrelation*
\begin{proof}
This proof is similar to the previous lemma. We show that for all $k \ge 1$,
\begin{align}
\E \left [\prod_{i \in S} \ell^{\z}_{i, k+1} \right ] \le \E \left [\prod_{i \in S} \ell^{\z}_{i, k} \right ]. \label{eq:negative-correlation} 
\end{align}

This implies that $\E \left [\prod_{i \in S} \ell^{\z}_{i, \infty} \right ] \le \E \left [\prod_{i \in S} \ell^{\z}_{i, 1} \right ]$ and the lemma is proved due to
\begin{align*}
\E \left [\prod_{i \in S} \ell^{\z}_{i, 1} \right ] = \prod_{i \in S} \ell^{\x}_{i}\le \left (1 - \frac1e + \gamma \right )^{|S|}.
\end{align*}
The final inequality holds due to Lemma \ref{lem:fractional-property-a}. 

Let us consider iteration $k$ (for some $k$). In this iteration, if a cycle $C$ is chosen by the algorithm, $\ell^{\z}_{i, k+1} = \ell^{\z}_{i, k}$ for all agents $i$ since no agent sees their large good amount modified. Therefore, \eqref{eq:negative-correlation} holds. If a maximal path $P$ is chosen by the algorithm, only agents at the endpoints of the path see a change in their $\ell^{\z}_i$ values. 

Let $i_1, i_2$ be the two agents at the endpoint of the path $P$ chosen by the algorithm at iteration $k$. Note that for all agents $i \in S \setminus \{i_1, i_2\}$, we have $\ell^{\z}_{i, k+1} = \ell^{\z}_{i, k}$. Therefore, if neither $i_1$ nor $i_2$ are in $S$, then \eqref{eq:negative-correlation} holds.
If both $i_1$ and $i_2$ are in $S$, \eqref{eq:negative-correlation} holds if $\E[\ell^{\z}_{i_1, k+1} \ell^{\z}_{i_2, k+1}] \le \E[\ell^{\z}_{i_1, k} \ell^{\z}_{i_2, k}]$. This is true because

\begin{align*}
\E[\ell^{\z}_{i_1, k+1} \ell^{\z}_{i_2, k+1}] &= \E \left [\frac{\beta^*}{\beta^* + \alpha^*}(\ell^{\z}_{i_1, k} + \alpha^*)(\ell^{\z}_{i_2, k} - \alpha^*) + \frac{\alpha^*}{\beta^* + \alpha^*}(\ell^{\z}_{i_1, k} - \beta^*)(\ell^{\z}_{i_2, k} + \beta^*)\right ] \\
&\le \E \left [\ell^{\z}_{i_1, k}\ell^{\z}_{i_2, k} - \alpha^* \beta^* \right ] \\
&\le \E \left [\ell^{\z}_{i_1, k}\ell^{\z}_{i_2, k} \right ].
\end{align*}
The above proof assumes that the edges of $P$ incident on $i_1$ and $i_2$ fall into different matchings. Note that this holds because the graph is bipartite; therefore, if the two endpoints of a path are two agents, then the path must have even length. This implies the two edges at either end of the path must fall into different matchings. 

If one agent (say $i_1$) at the endpoint of the path $P$ is in the set $S$, then using the arguments of Lemma \ref{lem:dependent-rounding-expectation}, we can show that $\E[\ell^{\z}_{i_1, k+1}] = \E[\ell^{\z}_{i_1, k}]$. This proves \eqref{eq:negative-correlation} for this case. 

This covers all possible cases, and therefore the lemma holds.
\end{proof}

\end{document}